\documentclass[11pt,letterpaper]{amsart}
\usepackage[margin=1in]{geometry}
\usepackage{mathpazo,times}
\usepackage{amssymb,xcolor,colortbl,diagbox,comment,graphicx,cite}
\usepackage{MnSymbol}
\usepackage{color,float,fullpage,mathrsfs,mathtools}
\usepackage{enumerate,enumitem}
\usepackage[colorlinks=true, pdfstartview=FitV, linkcolor=blue, citecolor=blue, urlcolor=blue,unicode]{hyperref}
\usepackage[pageref]{backref}
\usepackage{multirow,array}
\usepackage{makecell}


\pdfsuppresswarningpagegroup=1

\def\Xint#1{\mathchoice
    {\XXint\displaystyle\textstyle{#1}}%
    {\XXint\textstyle\scriptstyle{#1}}%
    {\XXint\scriptstyle\scriptscriptstyle{#1}}%
    {\XXint\scriptscriptstyle\scriptscriptstyle{#1}}%
    \!\int}
\def\XXint#1#2#3{{\setbox0=\hbox{$#1{#2#3}{\int}$}
        \vcenter{\hbox{$#2#3$}}\kern-.5\wd0}}

\def\dashint{\Xint-}

\def\pvint{\mathop{\int\kern-1.1em-\kern0.2em}\limits}

\makeatletter
\DeclareFontFamily{U}{mathx}{\hyphenchar\font45}
\DeclareFontShape{U}{mathx}{m}{n}{
      <5> <6> <7> <8> <9> <10>
      <10.95> <12> <14.4> <17.28> <20.74> <24.88>
      mathx10
      }{}
\DeclareSymbolFont{mathx}{U}{mathx}{m}{n}
\DeclareFontSubstitution{U}{mathx}{m}{n}
\DeclareMathAccent{\widecheck}{0}{mathx}{"71}
\DeclareMathAccent{\wideparen}{0}{mathx}{"75}

\def\bpm{\begin{pmatrix}}
\def\epm{\end{pmatrix}}
\def\gl{\gtrless}
\def\lg{\lessgtr}
\def\Real{\mathbb{R}}
\def\Complex{\mathbb{C}}
\def\Re{\mathop{\rm Re}\nolimits}
\def\Im{\mathop{\rm Im}\nolimits}
\def\arg{\mathop{\rm arg}\nolimits}

\def\bg{\mathrm{bg}}
\def\bchi{\boldsymbol{\chi}}
\def\d{\mathrm{d}}
\def\rdo{\mathrm{do}}
\def\e{\mathrm{e}}
\def\ext{\mathrm{ext}}
\def\dd{\mathrm{dd}}
\def\diag{\mathrm{diag}}

\def\ii{\mathrm{i}}
\def\m{\mathbf{m}}
\def\bmu{\boldsymbol{\mu}}

\def\o{\mathrm{o}}
\def\bphi{\boldsymbol{\phi}}

\def\brho{\boldsymbol{\rho}}
\def\bvarrho{\boldsymbol{\varrho}}
\def\sech{\mathrm{sech}}

\def\sign{\mathrm{sign}}
\def\tr{\mathrm{tr}}

\def\A{\mathbf{A}}
\def\B{\mathbf{B}}
\def\C{\mathbf{C}}
\def\E{\mathbf{E}}

\def\bGamma{\boldsymbol{\Gamma}}
\def\H{\mathcal{H}}
\def\bH{\mathbf{H}}
\def\I{\mathrm{I}}
\def\bbI{\mathbb{I}}
\def\II{\mathrm{II}}
\def\III{\mathrm{III}}
\def\J{\mathbf{J}}
\def\L{\mathbf{L}}
\def\bLambda{\boldsymbol{\Lambda}}
\def\M{\mathbf{M}}

\def\O{\mathcal{O}}

\def\P{\mathcal{P}}
\def\bPhi{\boldsymbol{\Phi}}
\def\bPi{\boldsymbol{\Pi}}
\def\Q{\mathbf{Q}}
\def\R{\mathbf{R}}
\def\Res{\mathop{\mathrm{Res}}}
\def\S{\mathbf{S}}
\def\U{\mathbf{U}}
\def\V{\mathbf{V}}

\def\X{\mathbf{X}}
\def\Y{\mathbf{Y}}
\def\Z{\mathbf{Z}}

\let\^=\hat
\let\==\bar
\let\@=\mathbf

\let\le=\leq
\let\ge=\geq

\newtheorem{definition}{Definition}
\newtheorem{lemma}{Lemma}
\newtheorem{remark}{Remark}
\newtheorem{rhp}{Riemann-Hilbert problem}
\newtheorem{theorem}{Theorem}
\newtheorem{proposition}{Proposition}

\def\~#1{\widetilde{\mathbf{#1}}}
\makeatother

\def\bse{\begin{subequations}}
\def\ese{\end{subequations}}

\definecolor{darkred}{rgb}{0.9,0,0}
\definecolor{darkblue}{rgb}{0,0,0.8}
\definecolor{darkorange}{rgb}{0.7,0.2,0}

\let\ul=\underline

\begin{document}

\title{On the coupled Maxwell-Bloch system of equations with non-decaying fields at infinity}
\author{Sitai Li$^*$}
\address[Sitai Li]{Xiamen University, School of Mathematical Sciences, Xiamen, Fujian 361005, P. R. China}
\email{sitaili@xmu.edu.cn}
\author{Gino Biondini}
\address[Gino Biondini]{State University of New York at Buffalo, Department of Mathematics, Buffalo, NY 14260, USA}
\email{biondini@buffalo.edu}
\author{Gregor Kova\v{c}i\v{c}}
\address[Gregor Kova\v{c}i\v{c}]{Rensselaer Polytechnic Institute, Department of Mathematical Sciences, Troy, NY 12180, USA}
\email{kovacg@rpi.edu}

\begin{abstract}
We study an initial-boundary-value problem (IBVP) for a system of coupled Maxwell-Bloch equations (CMBE) that model two colors or polarizations of light resonantly interacting with a degenerate, two-level, active optical medium with an excited state and a pair of degenerate ground states. We assume that the electromagnetic field approaches non-vanishing plane waves in the far past and future.
This type of interaction has been found to underlie nonlinear optical phenomena including electromagnetically induced transparency, slow light, stopped light, and quantum memory.
Under the assumptions of unidirectional, lossless propagation of slowly-modulated plane waves, the resulting CMBE become completely integrable in the sense of possessing a Lax Pair.
In this paper, we formulate an inverse scattering transform (IST) corresponding to these CMBE and their Lax pair, allowing for the spectral line of the atomic transitions in the active medium to have a finite width.
The scattering problem for this Lax pair is the same as for the Manakov system.
The main advancement in this IST for CMBE is calculating the nontrivial spatial propagation of the spectral data and determining the state of the optical medium in the distant future from that in the distant past, which is needed for the complete formulation of the IBVP.
The Riemann-Hilbert problem is used to extract the spatio-temporal dependence of the solution from the evolving spectral data.
We further derive and analyze several types of solitons and determine their velocity and stability, as well as find dark states of the medium which fail to interact with a given pulse.

\end{abstract}

\maketitle

\tableofcontents

\section{Introduction}
\label{s:intro}

This paper considers nonlinear resonant interaction between a light beam and an active optical medium with three working levels, arranged in what is suggestively known as the $\Lambda$-configuration.  This configuration consists of two ground levels and an excited level, with a forbidden atomic dipole transition between the two ground levels.  The two allowed dipole transitions interact with light of either two different colors or opposite circular polarizations.  In the latter case, the medium can be considered as having two working levels, with the ground level being degenerate, as shown in Figure~\ref{f:lambdaddiagram}.   This type of interaction underlies laser operation~\cite{shimoda86} and a version of self-induced transparency (SIT)~\cite{bm1984}.    It is also believed to underlie phenomena such as  inversionless laser operation~\cite{ih1989,szg1989},
electromagnetically induced transparency (EIT)~\cite{Fleischhauer00,wr2006,w2009},
slow light~\cite{rvb2005-2,rvb2005-3,rvb2005-4,rvb2005-1,lr2006},  and quantum memory~\cite{fim2005}.

After assuming lossless, unidirectional light propagation and slow modulation of the plane carrier light waves, we find the interaction between light of two different colors or opposite circular polarizations and a $\Lambda$-configuration active optical medium to be described by the following $\Lambda$-configuration coupled Maxwell-Bloch equations (CMBE)~\cite{Konopnicki81a,Konopnicki81,bm1984,maimistov84,%
maimistov85a,basharov88,Basharov90}:
\begin{equation}
\label{e:cmbe}
\displaystyle
\begin{aligned}
\brho_t
 & = \big[\ii k \J + \Q,\brho\big],\qquad &
\Q_z
 & = -\frac{1}{2}\int_{-\infty}^\infty \big[\J,\brho\big]\,g(k)\d k,\qquad
(t,z,k) \in \Real\times\Real^+\times\Real,\\
\Q(t,z)
 & \coloneq \bpm0 & -\E^\top\\ \E^* & \@O_{2\times2} \epm,\qquad &
\E(t,z)
 & \coloneq (E_1(t,z),E_2(t,z))^\top\in\Complex^2,\\
\brho(t,z,k)^\dagger
 & = \brho(t,z,k)\in\Complex^{3\times3},\quad &
\J & \coloneq \diag(1,-1,-1),
\end{aligned}
\end{equation}
where the subscripts $t$ and $z$ denote differentiation with respect to $t$ and $z$, respectively;
$\@O_{2\times2}$ is the $2\times2$ zero matrix;
the superscripts $\top$, $*$, and $\dagger$ denote the matrix transpose, complex conjugate, and conjugate transpose, respectively; and
$[\mathbf{A},\mathbf{B}] \coloneq \mathbf{A}\mathbf{B}-\mathbf{B}\mathbf{A}$ is the matrix commutator.
The variable $z = z_\mathrm{lab}$ is the propagation distance and the
variable $t = t_\mathrm{lab} - z_\mathrm{lab}/c$ is the retarded time; $c$ denotes the speed of light.
The complex vector $\E(t,z)$ contains the envelopes of the two modulated, plane-wave, light (electric field) components interacting with the dipole transitions between each ground state and the excited state.
The density matrix $\brho(t,z,k)$ is a $3\times3$ Hermitian matrix representing the state of the medium.
The diagonal entries of $\brho(t,z,k)$ denote the populations of the atoms in the excited state and the two ground states, respectively.
The off-diagonal elements $\rho_{12}$ and $\rho_{13}$ denote the complex-valued envelopes of the medium polarizability contributions corresponding to the two dipole transitions, while $\rho_{23}$ denotes the average coherence between the two ground states.
The probability density function $g(k)$ describes the shape of the spectral line.   In particular, $g(k)\,dk$ gives the proportion of the atoms in the medium whose transition frequencies are detuned from the resonance with the carrier frequencies of the two impinging light components by the amount $k$.    This spectral-line shape corresponds to the \emph{inhomogeneous broadening} of the spectral line and is caused by phenomena such as the Doppler effect (if the medium is a rarefied gas)~\cite{allen87}.       Frequently, it is modeled by a Lorentzian function, which we will do in this paper whenever a specific $g(k)$ is required.
Equations~\eqref{e:cmbe} were shown to be completely integrable in the sense of being derived from a Lax pair~\cite{bm1984,maimistov84,maimistov85a}, as reviewed in Section~\ref{s:lax}.

\begin{figure}[t!]
\kern-\medskipamount
\centering
\includegraphics[scale=0.45]{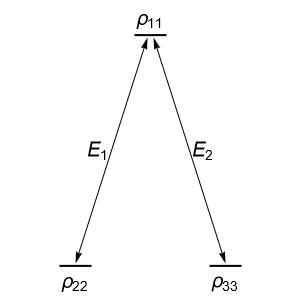}
\caption{
    Quantum transition diagram for the $\Lambda$-configuration CMBE~\eqref{e:cmbe} in the case of a degenerate, two-level, active optical medium interacting with two opposite circular polarizations of monocromatic light.
Please see the discussion following Equations~\eqref{e:cmbe} for an explanation of the variables.
}
\label{f:lambdaddiagram}
\end{figure}

Equations~\eqref{e:cmbe} are a generalization of the classic, two-level Maxwell-Bloch equations (MBE), which have been studied extensively for over three quarters of a century~\cite{Feynman57,Jaynes63,ab1965,risken:4662}.  (See~\cite{allen87} and references therein for many of the optical effects this model describes.)
The integrable nature of this model was gradually revealed in~\cite{LambJr1967181,lamb69,Lamb71,lamb73,lamb73a,Lamb74}.  The Lax pair was discovered in~\cite{Ablowitz74}, where the SIT was explained from the viewpoint of the IST, with additional explanations of physical effects following in~\cite{kaup77}.  (See also the description in~\cite{as1981}.)  Improvements of the IST used for describing more general physical phenomena were developed for photon echo~\cite{zm1982}, nonlinear amplification~\cite{m1982,z1980},
and superfluorescence~\cite{gzm1983,gzm1984}.   The development of the complete  IST for MBE with vanishing asymptotic values of the electric field in the far past and future was accomplished in~\cite{gzm1985}.   Additionally, a self-similar Bessel-function solution related to superfluorescence was discovered in~\cite{gzm1983,gzm1984}.    Self similar solutions of MBE and CMBE belonging to families of Painlev\'{e}-III functions and related to Bessel functions were also studied in~\cite{chk2003,lm2021}.      In~\cite{lbkg2018,bgkl2019}, we developed the IST to study the IBVP for the classic MBE with symmetric, non-vanishing asymptotic values of the electric-field envelope in the distant past and future, and described several families of soliton solutions.

The nontrivial nature of the phenomena described by the CMBE versus MBE is exemplified by the fact that even the single soliton described by the CMBE need not be a traveling wave.   In fact, typically, when the asymptotic values of the electric field in the distant past and future vanish, the interaction induced by a soliton will switch from one transition to the other and thus exhibit an internal degree of freedom~\cite{Maimistov85b,bgk2003}.

The IST for the $\Lambda$-configuration CMBE with vanishing asymptotic values of the electric-field envelopes in the far past and future was gradually developed in~\cite{maimistov84,Basharov90,bgk2003,cpa2014}.
Meanwhile, special solutions of CMBE with vanishing and non-vanishing asymptotic values in the past and future were obtained using symmetry-based methods such as Darboux transformations or dressing, primarily in the sharp-line limit of $g(k)=\delta(k)$, the Dirac delta.   These include descriptions of SIT~\cite{steudel88,Clader08}, slow light~\cite{rvb2005-2,rvb2005-3,rvb2005-4,rvb2005-1,lr2006}, and soliton and rational solutions~\cite{wwwgk2021,wwlgw2022}.

In this paper, we develop the IST for CMBE with symmetric, non-vanishing asymptotic values of the electric-field envelopes in the far past and future.   Typically, non-vanishing asymptotic values in IBVPs for integrable problems are given at spatial infinity, and so such problems are described as having non-zero boundary conditions (NZBC).    This is not the case for CMBE, and the best way to interpret the physical phenomena described here is as optical pulses riding on top of a non-zero (continuous-wave) background (NZBG), which is the terminology we will use in the rest of this paper.  (The terminology for the zero background will be ZBG.)  Additionally, in most of our discussion, we will maintain an arbitrary, unspecified shape of the spectral line, $g(k)$, since our results do not depend on it.

For the scattering and inverse-scattering steps of the IST we have devised, a close relation exists between CMBE and the focusing coupled nonlinear Schr\"{o}dinger (NLS) equation (the focusing Manakov system).
Namely, both systems share an identical scattering problem,
the $3\times3$ non-self-adjoint Zakharov-Shabat problem described in Section~\ref{s:lax} below, albeit in the spatial variable for the Manakov system and the temporal variable in CMBE.
As a consequence, the same initial/input data yield
the same initial/input spectra for both systems.
Moreover, since the scattering problem also governs the inverse-scattering step of the IST, we see
that the calculation procedure used in this step is the same for both systems.
We therefore simply reuse the scattering/inverse-scattering approach developed for the IBVP for the focusing Manakov system with NZBC at infinity~\cite{kbk2015}.

In contrast to the scattering problem, the propagation/evolution problems in the Lax pairs for CMBE and the focusing Manakov systems differ dramatically, due to the non-trivial initial values of the matrix~$\brho$, representing the initial state of the medium in the CMBE. As a result, the evolution/propagation of the spectral data (reflection coefficients and norming constants) in the ISTs for the two systems differ in a fundamental way.   For the Manakov system, the (temporal) evolution of the spectral data is trivial, and is thus often skipped altogether by finding simultaneous eigenfunctions of both equations in the Lax pair at the very beginning of the IST procedure. On the other hand, the (spatial) propagation (along the $z$-variable) of the spectral data for CMBE is complicated and can only be found using a rather involved set of calculations.  Note that this complication occurs even for MBE and CMBE with ZBG~\cite{Ablowitz74,gzm1985,cpa2014} and for MBE with NZBG~\cite{bgkl2019}, and that the presence of an inhomogeneously-broadened spectral line $g(k)$ with finite width helps in deciding how to take crucial steps in the derivation of the propagation equations.   Therefore, a large portion of this work focuses on deriving the equations for the propagation of the spectral data for CMBE, and this includes some detailed calculations that are relegated to the Appendix.

A byproduct of the IST developed in this paper is a general, explicit formula for $N$-soliton solutions valid for any  shape of the spectral line, of which formulae previously derived by Darboux transformations and dressing are special cases.   Moreover, even single CMBE solitons display a variety of different dynamical behaviors, and we find a rich family of both soliton solutions and solutions that can be obtained as their limits.   In particular, we show that there exist three types of solitons,   whose properties
depend on the loci of the corresponding discrete eigenvalues in the scattering problem.
The first type is a direct generalization of the classic two-level MBE solitons,
but the other two types exhibit entirely new features.
In addition, we find limiting rational solutions, periodic solutions, and a new plane-wave solution, different from the one used in our initial formulation of the IST. We emphasize that carrying out the calculations using a general, inhomogeneously-broadened spectral-line shape $g(k)$ is crucial in establishing the forms of these solitons;  calculations performed directly in the sharp-line limit may inadvertently lead to unphysical solutions. (Cf. the discussion in~\cite{lbkg2018,bgkl2019}.)
Finally, we carry out parameter studies of the soliton dynamics and their limits for the case of the Lorenztian shape of the spectral line.    In particular, we investigate how the (subluminal or superluminal) pulse velocity and stability depend on the input values of the spectral data and the initial state of the medium in the distant past.

The remainder of this paper is organized as follows:
In Section~\ref{s:setup}, we set up the IBVP, review the Lax pair for the CMBE, and discuss the background solutions used in the IBVP.
In Section~\ref{s:directproblem}, we solve the direct scattering problem, which includes defining the Jost eigenfunctions and scattering data, investigating the auxiliary direct problem, and calculating the symmetries and asymptotics of the eigenfunctions.
In Section~\ref{s:propagation}, we discuss the propagation of the spectral data along the medium, i.e., the propagation stage of the IST.   In Section~\ref{s:inverseproblem}, we cast the inverse-scattering problem in terms of an RHP and present implicit integral formulas for the solutions.
In Section~\ref{s:generalsoliton}, we discusses a number of exact solutions of CMBE with NZBG and an arbitrary spectral-line shape, including three types of soliton solutions, rational solutions, periodic solutions and a nontrivial plane-wave solution.
In Section~\ref{s:IB} we focus on the case of the inhomogeneous broadening for which the spectral-line shape is a Lorentzian, and investigate numerical examples of the exact solutions we had obtained in detail.   Appendices contain the lengthier calculations and proofs.

\section{Formulation and background}\label{s:setup}

In this section, we present the formulation of IBVP for CMBE~\eqref{e:cmbe} with NZBG to lay the groundwork for the IST. In particular, we describe the Lax pair, the Riemann surface and the uniformization variable, the background solutions with correct asymptotic behavior in the distant past and future, reductions of the NZBG problem to the ZBG problem, and dark states that do not interact with a given light beam.


\subsection{Initial-boundary values and the Lax pair}
\label{s:lax}

In this section, we present the Lax pair and the formulation of the IBVP for CMEB~\eqref{e:cmbe} with NZBC. This Lax pair is given by the equations
\bse
\label{e:laxpair}
\begin{gather}
\label{e:laxpair1}
\bphi_t = \X\,\bphi\,,\qquad
\X(t,z,k) \coloneq  \ii k \J + \Q(t,z)\,,\\
\label{e:laxpair2}
\bphi_z = \V\,\bphi\,,\qquad
\V(t,z,k) \coloneq \frac{\ii \pi}{2}\H_{k}[\brho(t,z,k)g(k)]\,,
\end{gather}
\ese
where $\bphi=\bphi(t,z,k)$ is the (auxiliary) wave function and $\H_{k}$ denotes the Hilbert transform
\begin{equation}
    \label{e:hilbert}
    \H_{k}[f(k)] \coloneq \frac1\pi \pvint\nolimits_{\!\!-\infty}^\infty \frac{f(k')}{k'-k}\d k'\,.
\end{equation}

The physical problem we consider is that of a beam consisting of two colors or polarizations of light, described by the complex vector envelope $\E(t, z)$, traveling inside a narrow, semi-infinite tube filled with the active optical medium located along the non-negative $z$-axis, $z \ge 0$.
The light beam is injected into the medium at $z = 0$, and propagates along it as $z$ increases. In the reference frame co-moving with the speed of light, $t\to-\infty$ denotes the distant past and $t\to+\infty$ denotes the distant future. Without loss of generality, we take the light cone as $t > 0$ and $z > 0$.

In CMBE~\eqref{e:cmbe}, the physical meaning of $t$ and $z$ is that of a temporal and spatial variable, respectively. However, reflecting the initial-value-signaling nature of the problem we are investigating for CMBE~\eqref{e:cmbe}, in the formulation of IST governed by the Lax pair~\eqref{e:laxpair}, $t$ plays the role usually reserved for the spatial variable and $z$ that for the temporal variable.
Moreover, we adopt a change in the terminology describing the IST.
In particular, the three major stages of IST are typically referred to as the direct problem, the evolution, and the inverse problem. Here, we rename the second stage as the propagation, in order to reflect the fact that, in a signaling problem, $z$ represents the propagation distance of the light into the medium.

To finalize the problem formulation as well as to avoid certain technical difficulties in the further development, we make the following assumptions:
\begin{enumerate}
    \item
    \textbf{Symmetric nonzero backgroud amplitudes}:
    The electric field has the same intensity in the distant past and in the distant future, i.e.,
    $\|\E(t,z)\|_2\to E_0 > 0$ as $t\to\pm\infty$.
    In particular,
    we write $\E(t,z)\to\E_{\pm}(z)$ as $t\to\pm\infty$,
    where $\E_{\pm}(z) \coloneq (E_{\pm,1}(z),E_{\pm,2}(z))^\top$.
    \item
    \textbf{Initial conditions (injected light beam)}:
    At the injection point $z = 0$, the vector-valued electric-field envelope $\E(t,0)$ decays towards the background as $t\to\pm\infty$ sufficiently fast.
    In other words,
    $\E(t,0) - \E_{\pm}(z)\to0$ sufficiently fast as $t\to\pm\infty$,
    respectively.
    \item
    \textbf{Discrete spectrum of the scattering operator $\X$ in Equation~\eqref{e:laxpair1}}:
    \begin{enumerate}
        \item
        There are no spectral singularities and no embedded eigenvalues on the continuous spectrum.
        \item
        All discrete eigenvalues are simple.
    \end{enumerate}
\end{enumerate}

We now show that, without loss of generality, the asymptotic value $\E_-(0)$ of the injected electric-field envelope can be chosen as a real vector. To do so, we present the following:
\begin{lemma}
\label{thm:CMBE-U-invariance}
Let $\U$ be a constant $2\times2$ unitary matrix,
i.e., $\U \U^\dagger = \bbI$.
Define a $3\times3$ unitary matrix $\U_\ext$
\begin{equation}
\nonumber
\U_\ext = \bpm
1 & \@0 \\ \@0 & \U^*
\epm\,,\qquad
\U_\ext^{-1} = \bpm
1 & \@0 \\ \@0 & \U^\top
\epm\,.
\end{equation}
Let $\E(t,z)$ and $\brho(t,z,k)$ form a solution of CMBE~\eqref{e:cmbe}.
Then the functions
\begin{equation}
\label{e:solphaseinvariance}
\widetilde \E(t,z) \coloneqq \U^{-1} \E(t,z)\,,\qquad
\widetilde \Q(t,z) \coloneqq \U_\ext \Q(t,z) \U_\ext^{-1}\,,\qquad
\widetilde \brho(t,z,k) \coloneqq \U_\ext^{-1} \brho(t,z,k) \U_\ext\,,
\end{equation}
also form a solution of CMBE~\eqref{e:cmbe}.
\end{lemma}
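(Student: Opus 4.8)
The plan is to recognize the map in~\eqref{e:solphaseinvariance} as conjugation by a single constant unitary matrix and to verify that it is a symmetry of the two equations in~\eqref{e:cmbe} by direct substitution. Because $\U_\ext$ does not depend on $t$, $z$, or $k$, almost everything will pass through trivially; the only genuine content is checking that the transformed fields still satisfy the algebraic constraints built into the definition of a CMBE solution (the block form of $\widetilde\Q$ and the Hermiticity of $\widetilde\brho$), and it is there that one must keep the conjugation direction consistent between $\Q$ and $\brho$.

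First I would record three elementary facts. Since $\U_\ext$ is constant, it commutes with $\partial_t$, with $\partial_z$, and (by linearity) can be pulled through the $k$-integral in the $\Q_z$ equation. Since $\U_\ext=\diag(1,\U^*)$ and $\J=\diag(1,-1,-1)$ are block diagonal and the lower block of $\J$ is the scalar $-\bbI$, the two commute, so $\U_\ext^{-1}\J\,\U_\ext=\J$. Finally, conjugation by any fixed invertible matrix is an algebra homomorphism, hence preserves commutators: $\U_\ext^{-1}[\A,\B]\,\U_\ext=[\U_\ext^{-1}\A\,\U_\ext,\U_\ext^{-1}\B\,\U_\ext]$. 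Hermiticity of $\widetilde\brho$ is then immediate from unitarity of $\U_\ext$, namely $\widetilde\brho^{\,\dagger}=\U_\ext^{\,\dagger}\brho^{\dagger}(\U_\ext^{-1})^{\dagger}=\U_\ext^{-1}\brho\,\U_\ext=\widetilde\brho$, using $\U_\ext^{\,\dagger}=\U_\ext^{-1}$.

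Next I would carry out the structural check on $\widetilde\Q$ blockwise, which I regard as the only step requiring care. Conjugating $\Q$ by the block-diagonal $\U_\ext$ leaves the $(1,1)$ entry and the vanishing lower-right $2\times2$ block untouched, while the off-diagonal blocks are hit by $\U^\top$ on the left and $\U^*$ on the right. Tracking the interplay of $*$, $\top$, and $\dagger$ together with the unitarity relation $\U^*\U^\top=\bbI$, the column block becomes $\U^\top\E^*=(\U^{-1}\E)^*$ and the row block becomes $-\E^\top\U^*=-(\U^{-1}\E)^\top$, so that
\[
\widetilde\Q=\bpm 0 & -\widetilde\E^\top\\ \widetilde\E^* & \@O_{2\times2}\epm,\qquad \widetilde\E=\U^{-1}\E,
\]
which is exactly the required form. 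The subtlety here is precisely the bookkeeping of conjugate/transpose conventions; it also pins down that $\Q$ and $\brho$ must be conjugated on the same side (i.e. both as $\U_\ext^{-1}(\cdot)\U_\ext$) so that the commutator in the $\brho$-equation transforms as a single unit.

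Finally I would substitute into the two equations. For the $\brho$-equation, constancy of $\U_\ext$ gives $\widetilde\brho_t=\U_\ext^{-1}\brho_t\,\U_\ext$, and applying the commutator identity together with $\U_\ext^{-1}\J\,\U_\ext=\J$ yields $\widetilde\brho_t=[\,\ii k\J+\widetilde\Q,\widetilde\brho\,]$. For the $\Q$-equation, pulling the constant $\U_\ext$ through $\partial_z$ and inside the integral and again using the commutator identity gives $\widetilde\Q_z=-\tfrac12\int_{-\infty}^\infty[\J,\widetilde\brho]\,g(k)\,\d k$. Both transformed relations then coincide with~\eqref{e:cmbe}, so $(\widetilde\E,\widetilde\brho)$ is again a solution. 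The argument is entirely mechanical apart from the blockwise verification in the previous step.
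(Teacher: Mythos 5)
Your verification is correct, and it is exactly the direct-substitution argument the paper leaves implicit: the lemma is stated there without proof, as an immediate check, and your three preliminary facts ($\U_\ext$ constant and unitary, $\U_\ext^{-1}\J\,\U_\ext=\J$ because the lower block of $\J$ is scalar, and conjugation preserving commutators and Hermiticity) are precisely what is needed to push the transformation through both equations of~\eqref{e:cmbe}.

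One point deserves emphasis, since you flagged it yourself without quite saying so. Your blockwise computation --- $\U^\top$ hitting the column block and $\U^*$ hitting the row block, yielding $\U^\top\E^*=(\U^{-1}\E)^*$ and $-\E^\top\U^*=-(\U^{-1}\E)^\top$ --- is the evaluation of $\U_\ext^{-1}\Q\,\U_\ext$, \emph{not} of $\U_\ext\Q\,\U_\ext^{-1}$ as the statement is printed. Indeed, $\U_\ext\Q\,\U_\ext^{-1}$ has off-diagonal blocks $-\E^\top\U^\top$ and $\U^*\E^*$, i.e.\ it corresponds to the envelope $\U\E$ rather than $\U^{-1}\E$; the two conjugations agree only when $\U^2=\bbI$. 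The internally consistent version of the lemma --- the one compatible with $\widetilde\brho=\U_\ext^{-1}\brho\,\U_\ext$, with your observation that $\Q$ and $\brho$ must be conjugated on the same side so the commutator $[\ii k\J+\Q,\brho]$ transforms as a unit, and with the Remark that follows, where $\U=\diag(\e^{\ii\alpha_1},\e^{\ii\alpha_2})$ strips the phases via $\U^{-1}\E_-(0)$ --- is $\widetilde\Q\coloneqq\U_\ext^{-1}\Q\,\U_\ext$, which is what you actually proved. So the printed direction of conjugation for $\Q$ is a typo (equivalently, the displayed formulas for $\U_\ext$ and $\U_\ext^{-1}$ are interchanged), and your proof establishes the corrected statement; apart from making that correction explicit, nothing in your argument needs repair.
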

\begin{remark}[On the simplified distant-past value of the injected light beam]
In general, one can write the distant-past value of the injected optical field as
$\E_-(0) = (E_0 \e^{\ii\alpha_1}\cos\alpha,E_0\e^{\ii\alpha_2}\sin\alpha)$
with $\alpha_j\in[0,2\pi)$ and $\alpha\in[0,\pi/2]$.
Applying Lemma~\ref{thm:CMBE-U-invariance} with $\U = \diag(\e^{\ii \alpha_1},\e^{\ii \alpha_2})$ and dropping the tilde,
the above value $\E_-(0)$ can be transformed into
\begin{equation}
\label{e:bc-q}
\E_-(0) = (E_0\cos\alpha,E_0\sin\alpha)^\top\,,\qquad
\mbox{with }\alpha\in[0,\pi/2].
\end{equation}
Hence,
without loss of generality,
in the rest of the work we take the condition~\eqref{e:bc-q} as the asymptotic injected background value in the distant past.
\end{remark}

\subsection{Riemann surface and uniformization variable}
\label{s:uniformization}

Similarly to other integrable systems with NZBC such as the non-self-adjoint Zakharov-Shabat spectral problem,
the formulation of IST can be simplified by utilizing a uniformization variable $\zeta(k)$ in the spectral plane.
We only present a quick review of this approach here,
and also take this opportunity to discuss two equivalent formulations of IST.
In particular, a complex-valued square-root function, $\lambda(k)$, appears in the IST,
\begin{equation}
\label{e:lambda-def}
\lambda(k) \coloneq (E_0^2 + k^2)^{\frac12}\,,
\qquad k\in\Complex\,.
\end{equation}
To complete the definition of $\lambda(k)$,
one needs to specify the branch cut corresponding to this square root in the complex plane.
For this cut, we can use two topologically distinct choices:
\begin{itemize}
\item
\textbf{The interval $\ii[-E_0, E_0]$ along the imaginary axis:}
On the real line, one can then write $\lambda(k) = \sign(k)\sqrt{E_0^2 + k^2}$, with $k\in\Real$.
This definition of $\lambda(k)$ is discontinuous at $k = 0$,
but has the advantage that $\lambda(k)\to k$ as $E_0\to0$.
It can be shown that this choice of the branch cut allows one to take the limit $E_0\to0$ directly and continuously throughout the formulation of IST to recover the IST for CMBE~\eqref{e:cmbe} with ZBG~\cite{bgkl2019}.

\item
\textbf{The interval union $\ii(-\infty,-E_0]\cup \ii[E_0,\infty)$ (or any other pair of non-intersecting curves connecting $\pm \ii E_0$ and $\infty$):}
On the real line, we write $\lambda(k) = \sqrt{E_0^2 + k^2}$ with $k\in\Real$.
This definition of $\lambda(k)$ is continuous on the entire real line,
but in the limit $E_0\to0$, does not recover $k$ from $\lambda(k)$.
Without showing details, to recover the ZBG case of IST,
more care must be taken in the limiting process with this choice of the branch cut~\cite{bgkl2019}.
\end{itemize}
\begin{remark}[Equivalence between choices of branch cut]
Despite the previous discussion and comparison, it is worth pointing out that the two choices of the branch cut for $\lambda(k)$ in Equation~\eqref{e:lambda-def}, and the two corresponding versions of IST, can be shown to be equivalent,
just as in the classic two-level case~\cite{bgkl2019}.
\end{remark}

Because of this equivalence and its advantages, in this paper, we use the first definition of the branch cut,
i.e., along the interval $\ii[-q_0,q_0]$.
This choice allows us to take the limit $E_0\to0$ and recover the case of ZBG in a trivial way.

Once we have settled on the branch cut,
we naturally introduce two sheets of the complex $k$-plane,
$\Complex_\I$ and $\Complex_\II$,
with the square root in $\lambda(k)$ taking different signs on each sheet.
We introduce subscripts I and II to denote a quantity evaluated on the first and second $k$-sheet, respectively.

In order to simplify calculations on two separate copies of the complex plane,
we further introduce a two-sheeted Riemann surface by defining a uniformization variable $\zeta(k)$,
\begin{equation}
\label{e:zeta-def}
\zeta(k) \coloneq k + \lambda(k)\,,\qquad k\in\Complex\,.
\end{equation}
The Riemann surface $\zeta\in\Complex$ is obtained by gluing the two copies of the complex $k$-plane together along the branch cut $\ii[-E_0,E_0]$.
Correspondingly, the origin of the first sheet is mapped to $\infty$ in the $\zeta$-plane,
i.e., $0_\I\mapsto\infty$,
whereas the origin of the second sheet is mapped to the origin of the $\zeta$-plane,
i.e., $0_\II\mapsto0$.
The corresponding inverse transforms are given by
\begin{equation}
\label{e:k-lambda}
k(\zeta) = \frac{\zeta + \hat\zeta}{2}\,,\qquad
\lambda(\zeta) = \frac{\zeta - \hat\zeta}{2}\,,\quad
\mbox{where }
\hat\zeta \coloneq -\frac{E_0^2}{\zeta}\,,\qquad
\zeta\in\Complex\backslash\{0\}\,.
\end{equation}
We point out that the above transforms are only valid on the punctured complex plane,
because $\infty_\II$ is mapped to the origin in the $\zeta$-plane.


In the formulation of the IST,
we use the uniformization variable $\zeta$ instead of the original spectral variable $k$ for most of the calculations.

\subsection{Background solutions}

We present the simplest solution of CMBE~\eqref{e:cmbe} corresponding to the IBVP with NZBG~\eqref{e:bc-q} at infinity.
The calculations are presented in Appendix~\ref{s:background}.
This solution is given by the formulas
\begin{equation}
\label{e:background}
\everymath{\displaystyle}
\begin{aligned}
E_{\bg,1}(z)
 & \coloneqq E_0\e^{\tfrac\ii2 \int_0^z w(s)\d s}\cos\alpha\,,\\
E_{\bg,2}(z)
 & \coloneqq E_0\e^{\tfrac\ii2 \int_0^z w(s)\d s}\sin\alpha\,,\\
\brho_\bg(t,z,k)
 & \coloneqq \varrho_{\bg,1,1}\brho_{\bg}^{(1)}(t,z,k) + \varrho_{\bg,2,2}\brho_{\bg}^{(2)}(t,z,k) + \varrho_{\bg,3,3}\brho_{\bg}^{(3)}(t,z,k)\,,
\end{aligned}
\end{equation}
which contain three free parameters, $\varrho_{\bg,j,j}\ge 0$, for $j = 1,2,3$.
The quantity $w(z)$ is given by
\begin{equation}
w(z)\coloneqq \int_{-\infty}^\infty (\varrho_{\bg,1,1}(z) - \varrho_{\bg,3,3}(z))\frac{g(k)}{\lambda(k)}\d k\in\Real\,,
\end{equation}
and the three matrices $\brho_\bg^{(j)}(t,z,k)$ are given by
\begin{equation}
\everymath{\displaystyle}
\begin{aligned}
\brho_\bg^{(1)}(t,z,k)
& \coloneqq \frac{1}{2\lambda}\bpm
\zeta & \ii E_{\bg,1} & \ii E_{\bg,2}
\\
-\ii E_{\bg,1}^* & (\lambda - k)\cos^2\alpha &
(\lambda - k) \sin\alpha\cos\alpha
\\
-\ii E_{\bg,2}^* & (\lambda - k)\sin\alpha\cos\alpha &
(\lambda - k)\sin^2\alpha
\epm\,,\\
\brho_\bg^{(2)}(t,z,k)
& \coloneqq \bpm
0 & 0 & 0
\\
0 & \sin^2\alpha &
-\sin\alpha\cos\alpha
\\
0 & -\sin\alpha\cos\alpha &
\cos^2\alpha
\epm\,,\\
\brho_\bg^{(3)}(t,z,k)
& \coloneqq \frac{1}{2\lambda}\bpm
\lambda - k & -\ii E_{\bg,1} & -\ii E_{\bg,2}
\\
\ii E_{\bg,1}^* &  \zeta\cos^2\alpha &
 \zeta\sin\alpha\cos\alpha
\\
\ii E_{\bg,2}^* & \zeta\sin\alpha\cos\alpha &
\zeta\sin^2\alpha
\epm\,,
\end{aligned}
\end{equation}


\begin{remark}
The background solution $\brho_\bg$ for the state of the medium is a linear combination of the three components $\brho_\bg^{(j)}(t,z,k)$ in general. Note that for each component $\tr\brho_\bg^{(j)}(t,z,k) = 1$ for all $t$, $z$ and $k$. All diagonal entries for $\brho_\bg^{(j)}(t,z,k)$ are always nonnegative. Therefore, $\tr\brho_\bg(t,z,k) = \varrho_{\bg,1,1} + \varrho_{\bg,2,2} + \varrho_{\bg,3,3}$, which is the total population of atoms in the optical medium.
\end{remark}

\subsection{Zero boundary condition reductions of the background solutions}

In order to better understand the physical meaning of the background solution and the roles played by the coefficients $\varrho_{\bg,j,j}$,
we relate the solution~\eqref{e:background} to its counterpart in the case of ZBG.
Thus, we next consider the limiting case as $\E_\bg\to0$, i.e., $E_0\to0$.
This also enables us to relate the two $k$-sheets and explore any potential symmetries.

We know from the definition that $\lambda_\I\to k$ as $E_0\to0$, and $\lambda_\II\to-k$.
We now examine the background solution~\eqref{e:background} in the same limit.
\begin{itemize}[leftmargin = *]
\item
On sheet I, the background density matrix reduces to
\begin{equation}
\nonumber
\brho_\bg \to \bpm
  \varrho_{\bg,1,1} & 0 & 0 \\
  0 & \varrho_{\bg,2,2}\sin^2\alpha + \varrho_{\bg,3,3}\cos^2\alpha  & (\varrho_{\bg,3,3} - \varrho_{\bg,2,2})\sin\alpha\cos\alpha \\
  0 & (\varrho_{\bg,3,3} - \varrho_{\bg,2,2})\sin\alpha\cos\alpha & \varrho_{\bg,2,2}\cos^2\alpha  + \varrho_{\bg,3,3}\sin^2\alpha
\epm\,,\qquad E_0\to0\,,
\end{equation}
Recall that $\rho_{1,1}$ corresponds to the population of the atoms in the excited state,
and $\rho_{2,2}$ and $\rho_{3,3}$ correspond to the populations of the two ground states, respectively.
Thus, in the limit $E_0\to0$,
$\varrho_{\bg,1,1}$ indicates the population of excited atoms,
and $\varrho_{\bg,2,2}$ and $\varrho_{\bg,3,3}$ relate to the populations in the two ground states.
This situation is identical to the case of ZBG.

\item
On sheet II, the density matrix becomes
\begin{equation}
\nonumber
\brho_\bg \to \bpm
  \varrho_{\bg,3,3} & 0 & 0 \\
  0 & \varrho_{\bg,2,2}\sin^2\alpha + \varrho_{\bg,1,1}\cos^2\alpha  & (\varrho_{\bg,1,1} - \varrho_{\bg,2,2})\sin\alpha\cos\alpha  \\
  0 & (\varrho_{\bg,1,1} - \varrho_{\bg,2,2})\sin\alpha\cos\alpha & \varrho_{\bg,2,2}\cos^2\alpha  + \varrho_{\bg,1,1}\sin^2\alpha
\epm\,,\qquad E_0\to0\,.
\end{equation}
By a similar discussion as in the previous case,
we find that $\varrho_{\bg,3,3}$ indicates the population in the excited state,
and $\varrho_{\bg,1,1}$ and $\varrho_{\bg,2,2}$ correspond to the population in the two ground states.
\end{itemize}

We conclude that \textit{the roles of $\varrho_{\bg,1,1}$ and $\varrho_{\bg,3,3}$ are interchanged between the two $k$-sheets}.
We will use this symmetry later to define proper asymptotics for the density matrix $\brho(t,z,\zeta)$ as $t\to\pm\infty$ (cf. Section~\ref{s:boundary}).
Also recall that a similar symmetry occurs between the two sheets in the classic two-level case with NZBG~\cite{bgkl2019}.

\begin{remark}
The coefficients $\varrho_{\bg,j,j}$ are not the atomic level populations in each state observed from Equation~\eqref{e:background}.
Instead, the atomic level population in each state $\rho_{j,j}$ can be written as a linear combination of the coefficients $\varrho_{\bg,j,j}$ using Equation~\eqref{e:background} with additional parameters such as $k$ and $E_0$. This complicated situation is quite different from the case of zero asymptotic values, in which $\rho_{j,j} = \varrho_{\bg,j,j}$ holds due to the above calculations. (One needs to enforce additional conditions $\alpha = 0$ or $\pi/2$ due to the additional internal freedom from NZBG.)
In other words, the nontrivial relation between $\rho_{j,j}$ and $\varrho_{\bg,j,j}$ is a direct consequence of the nonzero background, hence a novel feature of this work.
\end{remark}

\subsection{Nonzero background solution in a dark state}
\label{s:dark-state-background}

We note from the background solution~\eqref{e:background} that if we choose $\varrho_{\bg,1,1} = \varrho_{\bg,3,3} = 0$ and $\varrho_{\bg,2,2}\ne0$, i.e., $\brho_\bg(t,z,k) = \brho_\bg^{(2)}(t,z,k)$, then the excited state is never occupied.
The electric field envelopes become $\E_{\bg,1}(t,z) = E_0\cos\alpha$ and $\E_{\bg,2}(t,z) = E_0\sin\alpha$.
In this particular situation, the electric field and the medium do not interact at all, and medium is in a ``dark state"~\cite{fim2005}.   Despite a slight risk of clashing with this established physics terminology, we refer to  entire such configurations as \textit{dark-state solutions} of the CMBE~\eqref{e:cmbe}.

The above special background solution is the first and the simplest dark-state solution appearing in this work. We shall look for other kinds of dark-state solutions later, in Section~\ref{s:dark-state}.

\subsection{Notation}

In the rest of the paper, we will use the following notation.
The superscript $\bot$ is defined for every $\@v=(v_1,v_2)^\top \in \Complex^2$ as
\begin{equation}
\label{e:perp}
\@ v^\bot \coloneq (v_2,-v_1)^\dagger\,,\qquad
\end{equation}
with the superscript $\dagger$ denoting conjugate transpose.

We introduce the notation $\A_{j}$ as the $j$-th column of a matrix $\A$.

For every complex quantity $\xi$,
we define the following shorthand notation,
\begin{equation}
\hat\xi \coloneqq -\frac{E_0^2}{\xi}\,,\qquad \xi\in\Complex\backslash\{0\}\,.
\end{equation}

For an arbitrary $3\times3$ matrix $\C = (c_{i,j})$,
we define the following subscripted matrices
\begin{equation}
\label{e:dodef}
\begin{aligned}
\C_{\d} & \coloneq \bpm
c_{1,1} & 0 & 0 \\ 0 & c_{2,2} & c_{2,3} \\ 0 & c_{3,2} & c_{3,3}
\epm\,,\qquad&
\C_{\o} & \coloneq \bpm
0 & c_{1,2} & c_{1,3} \\ c_{2,1} & 0 & 0 \\ c_{3,1} & 0 & 0
\epm\,,\\
\C_{\dd} & \coloneq \bpm
c_{1,1} & 0 & 0 \\ 0 & c_{2,2} & 0 \\ 0 & 0 & c_{3,3}
\epm\,,\qquad&
\C_{\rdo} & \coloneq \bpm
0 & 0 & 0 \\ 0 & 0 & c_{2,3} \\ 0 & c_{3,2} & 0
\epm\,,\qquad&
\C_{[1,1]} \coloneq \bpm
 c_{2,2} & c_{2,3} \\ c_{3,2} & c_{3,3}
\epm\,,
\end{aligned}
\end{equation}
where the subscripts $\d$ and $\o$ indicate the block-diagonal and block-off-diagonal parts of the matrix, respectively,
as well as subscript $\dd$ indicates the main diagonal of the matrix.
Using this notation,
one can show that for two given $3\times3$ matrices $\A$ and $\B$ the following identities hold
\begin{equation}
\label{e:matrixidentity}
\begin{aligned}
[\A\B]_{\d} & = \A_{\d}\B_{\d} + \A_{\o}\B_{\o}\,,\qquad&
[\A\B]_{\o} & = \A_{\d}\B_{\o} + \A_{\o}\B_{\d}\,,\\
[\A_{\d}\B_{\d}]_{\dd} & = \A_{\dd}\B_{\dd} + \A_{\rdo}\B_{\rdo}\,,\qquad&
[\A_{\d}\B_{\d}]_{\rdo} & = \A_{\dd}\B_{\rdo} + \A_{\rdo}\B_{\dd}\,.
\end{aligned}
\end{equation}

\section{IST: Direct scattering problem}
\label{s:directproblem}

We now begin formulating the IST for CMBE~\eqref{e:cmbe} with NZBGs.
This section focuses on the direct problem,
which consists of analyzing Equation~\eqref{e:laxpair1} in the Lax pair
and obtaining the scattering data at the injection point $z = 0$.

We recall that CMBE and the focusing Manakov system share the same scattering problem. Hence, the direct problem is formulated mainly following what has been done in~\cite{kbk2015}. Nonetheless, certain differences still appear between the current work and that concerning the focusing Manakov system. One of the main differences is that~\cite{kbk2015} utilized a simultaneous solution of both equations in the Lax pair, so that the scattering data only depends on the spectral parameter $\zeta$. However, due to the complicated propagation in $z$ of CMBE solutions with NZBG, we only consider a solution of the scattering problem, and so the scattering data must depend on both $z$ and $\zeta$. In other words, while the evolution stage of the IST for the focusing Manakov system in~\cite{kbk2015} is trivial, the propagation stage of CMBE with NZBG is highly nontrivial, and so is discussed in a separate section.

We recall the discussion in Section~\ref{s:uniformization} on the uniformization variable.
Starting from this section,
the original complex variable $k$ will be interpreted as a function of this variable, $\zeta$, via Equation~\eqref{e:k-lambda}.
The explicit $\zeta$-dependence is frequently omitted for brevity.

As usual, we first consider the asymptotic scattering problem as $t\to\pm\infty$
\begin{equation}
\label{e:asymp-scattering}
\bphi_t = \X_\pm\,\bphi\,,\qquad
\X_\pm(z,\zeta)
 \coloneqq \ii k\J+\Q_{\pm}(z)\,,\qquad
\Q_{\pm}(z)
 \coloneqq \bpm 0 & -\E_{\pm}(z)^\top \\
\E_{\pm}(z)^* & \@O \epm.
\end{equation}
Recall $\E_\pm(z) = \lim_{t\to\pm\infty}\E(t,z)$ are the asymptotic conditions in the distant past and future.

The eigenvalues of $\X_{\pm}(z,\zeta)$ are $\{-\ii k, \pm \ii\lambda\}$ with $\lambda$ given in Equation~\eqref{e:lambda-def}.
The continuous spectrum $\Sigma$ is defined as the set of $\zeta\in\Complex$ in which all the eigenvalues are purely imaginary,
i.e., all the eigenfunctions are bounded.
From Equation~\eqref{e:asymp-scattering}, we find $\Sigma$ to be
\begin{equation}
\Sigma \coloneq \Real\cup \Sigma_\circ\,,\quad
\mbox{where }
\Sigma_\circ \coloneq \{\zeta\in\Complex: |\zeta| = E_0\}\,.
\end{equation}
The continuous spectrum $\Sigma$ with its orientation is shown in Figure~\ref{f:analyticity} (left), and is similar to its analog in~\cite{kbk2015}.


%
\begin{figure}[t!]
\centering
\includegraphics[width = 0.375\textwidth]{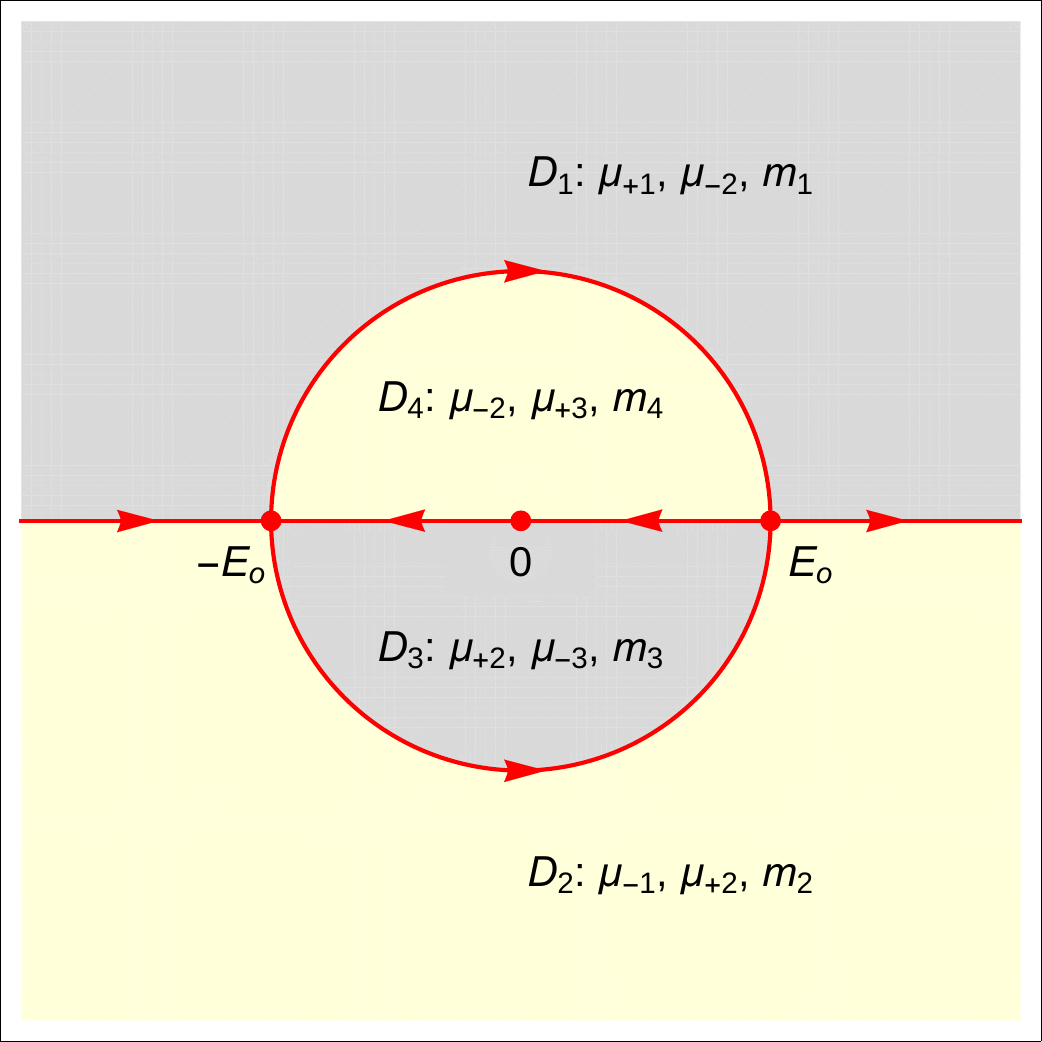}\qquad
\includegraphics[width = 0.375\textwidth]{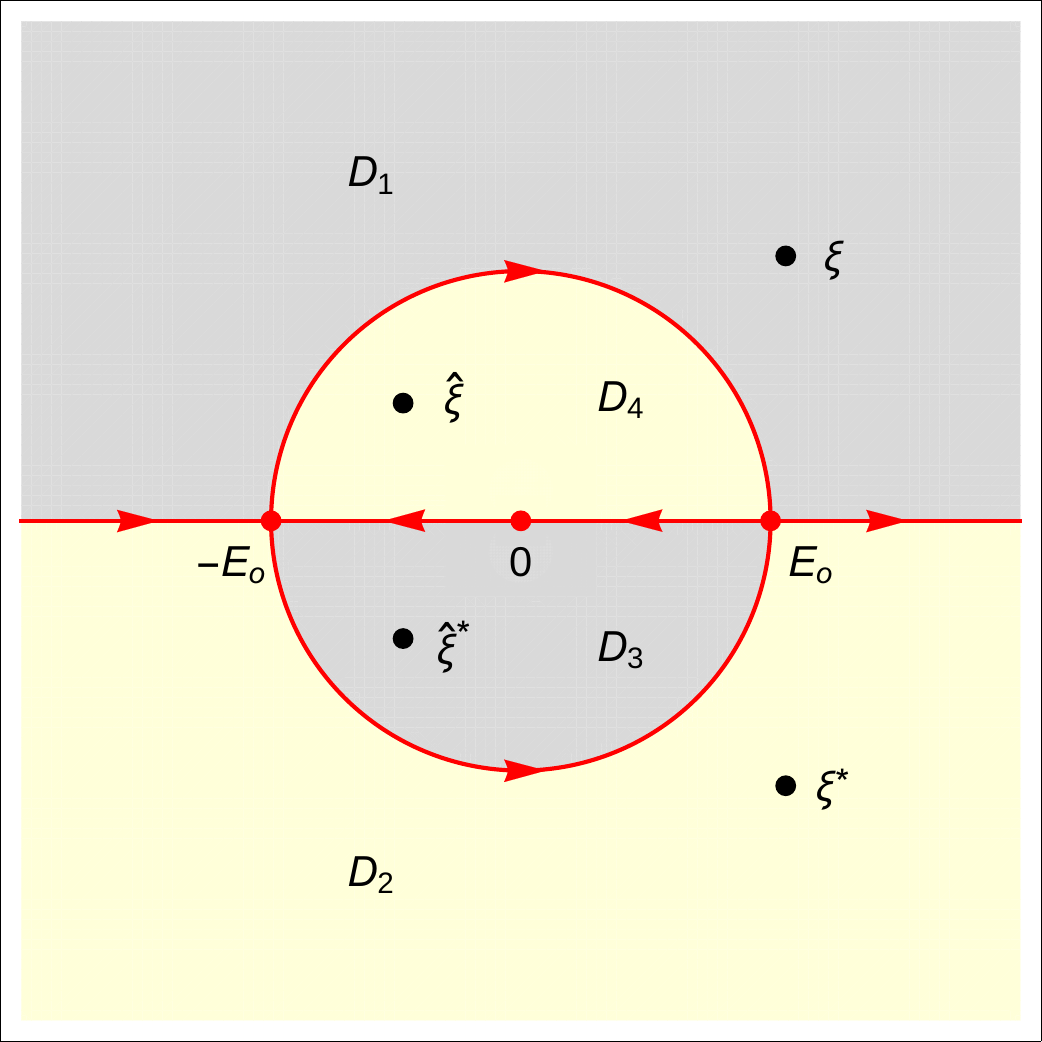}
\caption{
\textbf{Left:} Eigenfunctions in their analytic regions.
\textbf{Right:} The quartet of a discrete eigenvalue $\zeta = \xi$ in the complex $\zeta$-plane.
}
\label{f:analyticity}
\end{figure}

For future use, we also define the continuous spectrum without the two points $\pm\ii E_0$,
\begin{equation}
\label{e:Sigma*-def}
\Sigma^*\coloneqq \Sigma\backslash\{\pm\ii E_0\}\,.
\end{equation}
These two points are excluded because certain quantities become singular there, as we will see in the next section. Similar singularities have has also appeared in other integrable systems with NZBC/NZBG at infinity, e.g., the focusing NLS equation and the focusing Manakov system~\cite{kbk2015}. The local behavior of various quantities near these singular points in the IST will not be discussed in this work. In the context of the focusing NLS equation, this phenomenon was studied extensively and shown to be closely related to rogue-wave solutions~\cite{bm2019}.

\subsection{Jost solutions and scattering matrix}
\label{s:Jost}

\subsubsection{Jost solution.}

The asymptotic scattering problem~\eqref{e:asymp-scattering} can be solved by diagonalizing the matrix $\X_\pm(z,\zeta)$, which yields the following eigenvalue problem:
\begin{equation}
\label{e:Xpm-eigen}
\X_\pm \Y_\pm = \ii\Y_\pm \bLambda\,,\qquad \zeta\in\Sigma^*\,,
\end{equation}
where $\bLambda(\zeta)$ is the diagonal matrix containing the eigenvalues multiplied by the factor $-\ii$, and $\Y_\pm(z,\zeta)$ contains the corresponding eigenvectors:
\begin{equation}
\label{e:eigen}
\bLambda(\zeta) \coloneq \diag(\lambda,-k,-\lambda)\,,\qquad
\Y_{\pm}(z,\zeta) \coloneq \bpm1 & 0 & -\ii E_0/\zeta \\
-\ii\E_\pm^*/\zeta & (\E_\pm^\bot)^*/E_0 & \E_\pm^*/E_0
\epm\,.
\end{equation}
The inverse of the eigenvector matrix can be expressed as
\begin{equation}
\nonumber
\Y_\pm^{-1}(z,\zeta) = \frac{1}{\gamma(\zeta)}
\bpm
1 & \ii\E_\pm^\top/\zeta \\
0 & \gamma(\zeta)(\E_\pm^\bot)^\top/E_0 \\
\ii E_0/\zeta & \E_\pm^\top/E_0
\epm
\end{equation}
where $\gamma(\zeta)$ is defined as
\begin{equation}
\label{e:gamma-def}
\gamma(\zeta) \coloneq \det \Y_\pm(z,\zeta) =1+\frac{E_0^2}{\zeta^2}\,.
\end{equation}
The eigenvalue problem~\eqref{e:Xpm-eigen} is only valid on $\Sigma^*$ because the eigenvector matrix $\Y(z,\zeta)$ is singular at $\zeta = \pm\ii E_0$.

From the asymptotic scattering problem~\eqref{e:asymp-scattering}, one expects that the scattering problem~\eqref{e:laxpair1} admits solutions $\bphi_\pm(t,z,\zeta)$ with the asymptotic behavior
\begin{equation}
\label{e:Jostsol}
\begin{aligned}
\bphi_-(t,z,\zeta)
 & \coloneqq \Y_-(z,\zeta)\e^{\ii \bLambda t}+o(1)\,,\qquad && t\to-\infty\,,\\
\bphi_+(t,z,\zeta)
 & \coloneqq \Y_+(z,\zeta)\e^{\ii \bLambda t}+o(1)\,,\qquad && t\to\infty\,.
\end{aligned}
\end{equation}
The solutions $\bphi_\pm(t,z,\zeta)$ are called Jost solutions. In order to further study the Jost solutions, we introduce the modified eigenfunctions by removing oscillations,
\begin{equation}
\label{e:mudef}
\bmu_\pm(t,z,\zeta) \coloneq \bphi_\pm(t,z,\zeta) \e^{-\ii \bLambda t}\,,\qquad
\zeta\in\Sigma\,.
\end{equation}
These new functions solve the following integral equations on the continuous spectrum $\Sigma$,
\begin{equation}
\label{e:mu-integral}
\begin{aligned}
\bmu_-(t,z,\zeta)
 & = \Y_-(z,\zeta) + \int_{-\infty}^t \Y_-(z,\zeta)\e^{\ii \bLambda(t-\tau)}\Y_-(z,\zeta)^{-1}\Delta \Q_-(\tau,z)\bmu_-(\tau,z,\zeta) \e^{-\ii \bLambda(t-\tau)}\d \tau\,,\\
\bmu_+(t,z,\zeta)
 & = \Y_+(z,\zeta) - \int_{t}^\infty \Y_+(z,\zeta)\e^{\ii \bLambda(t-\tau)}\Y_+(z,\zeta)^{-1}\Delta \Q_+(\tau,z)\bmu_+(\tau,z,\zeta) \e^{-\ii \bLambda(t-\tau)}\d \tau\,,
\end{aligned}
\end{equation}
where we use the shorthand notation
\begin{equation}
\Delta \Q_\pm(t,z) \coloneq \Q(t,z) - \Q_\pm(z)\,.
\end{equation}
The existence of $\bmu_\pm(t,z,\zeta)$ can be derived from the above integral equations assuming sufficiently fast decay of $\Delta\Q_\pm(t,z)$ as $t\to\pm\infty$, respectively.

Correspondingly, the existence of the Jost eigenfunctions $\bphi_\pm(t,z,\zeta)$ on the continuous spectrum $\Sigma$ under certain assumptions on the injected pulse $\E(t,0)$ is obtained from Equation~\eqref{e:mudef}.

\begin{proposition}
\label{thm:Jostfundamental}
The two Jost eigenfunctions $\bphi_\pm(t,z,\zeta)$ are two fundamental matrix solutions on $\Sigma^*$ of the scattering problem.
In particular, $\det\bphi_\pm(t,z,\zeta) = \e^{\ii kt}\gamma(\zeta)$ on $\Sigma$.
\end{proposition}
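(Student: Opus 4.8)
The plan is to derive both assertions from a single Wronskian computation, treating $\bphi_-$ and $\bphi_+$ in parallel. \textbf{Step 1 (matrix solutions).} First I would confirm that each $\bphi_\pm$ is a genuine matrix solution of the scattering equation~\eqref{e:laxpair1} on $\Sigma$. The Volterra equations~\eqref{e:mu-integral} are solvable under the assumed decay of $\Delta\Q_\pm$, and differentiating them shows that $\bmu_\pm$ satisfies $\partial_t\bmu_\pm = \X\bmu_\pm - \ii\bmu_\pm\bLambda$; hence, by~\eqref{e:mudef}, every column of $\bphi_\pm = \bmu_\pm\e^{\ii\bLambda t}$ solves $\bphi_t = \X\bphi$, so that $\det\bphi_\pm$ is a bona fide Wronskian.

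\textbf{Step 2 (Abel/Liouville).} Next I would apply Liouville's formula to $\bphi_t=\X\bphi$: the determinant obeys $\partial_t\det\bphi_\pm = (\tr\X)\,\det\bphi_\pm$, and since $\tr\X = \ii k\,\tr\J + \tr\Q = -\ii k$ is independent of $t$ (using $\tr\J = -1$ and $\tr\Q = 0$), integration gives $\det\bphi_\pm(t,z,\zeta) = c_\pm(z,\zeta)\,\e^{t\,\tr\X}$ for some factor $c_\pm$ that is constant in $t$.

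\textbf{Step 3 (fix the constant and conclude).} To pin down $c_\pm$ I would use the boundary behavior~\eqref{e:Jostsol}: as $t\to\pm\infty$, $\bphi_\pm = (\Y_\pm + o(1))\,\e^{\ii\bLambda t}$, so $\det\bphi_\pm = \det(\Y_\pm + o(1))\,\e^{\ii t\,\tr\bLambda}$. Because $\tr\bLambda = \lambda - k - \lambda = -k$, the exponential factor $\e^{\ii t\,\tr\bLambda}=\e^{t\,\tr\X}$ matches Step 2 exactly, while $\det(\Y_\pm + o(1))\to\det\Y_\pm = \gamma(\zeta)$ by~\eqref{e:gamma-def}. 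Letting $t\to\pm\infty$ therefore forces $c_\pm = \gamma(\zeta)$, yielding $\det\bphi_\pm(t,z,\zeta) = \gamma(\zeta)\,\e^{t\,\tr\X}$ on $\Sigma$, which is the asserted determinant. Finally, since $\gamma(\zeta) = (\zeta^2+E_0^2)/\zeta^2$ vanishes only at $\zeta = \pm\ii E_0$ --- precisely the points deleted in $\Sigma^* = \Sigma\backslash\{\pm\ii E_0\}$ by~\eqref{e:Sigma*-def} --- the determinant is nonzero throughout $\Sigma^*$, so the three columns of each $\bphi_\pm$ are linearly independent there and each $\bphi_\pm$ is a fundamental matrix solution.

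\textbf{Main obstacle.} The delicate point is not the algebra but justifying that $c_\pm$ may legitimately be read off in the limit $t\to\pm\infty$, i.e.\ controlling the $o(1)$ remainder in~\eqref{e:Jostsol} uniformly enough that $\det(\Y_\pm + o(1))\to\det\Y_\pm$. This needs care on the circle $\Sigma_\circ$, where $k$ is purely imaginary so one of the exponentials in $\e^{\ii\bLambda t}$ is non-oscillatory, and near the endpoints $\pm\ii E_0$, where $\Y_\pm$ degenerates and $\gamma$ vanishes. Restricting the fundamental-solution claim to $\Sigma^*$ is exactly what removes this degeneracy.
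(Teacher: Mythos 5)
Your argument is correct and follows essentially the same route as the paper: Abel/Liouville applied to $\bphi_t=\X\bphi$ with $\tr\X=-\ii k$, then fixing the $t$-independent factor from the asymptotics $\bphi_\pm=\Y_\pm\e^{\ii\bLambda t}+o(1)$ and $\det\Y_\pm=\gamma(\zeta)$, with nonvanishing of $\gamma$ off $\pm\ii E_0$ giving the fundamental-solution claim on $\Sigma^*$; your closing remark about controlling the $o(1)$ term is a fair point that the paper's proof passes over silently. One correction to your Step 3, though: your computation yields $\det\bphi_\pm=\gamma(\zeta)\,\e^{t\tr\X}=\gamma(\zeta)\,\e^{-\ii kt}$, which is \emph{not} literally ``the asserted determinant'' $\e^{\ii kt}\gamma(\zeta)$ --- the exponent differs by a sign. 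Your value is in fact the correct one (it agrees with the paper's own intermediate identity $\det(\bphi_\pm\e^{-\ii\bLambda t})=\gamma(\zeta)$, since $\det\e^{-\ii\bLambda t}=\e^{\ii kt}$, and with the later determinant formulas~\eqref{e:det-Phi}, which carry $\e^{-\ii kt}$); the proposition's displayed formula contains a sign typo. You should state the exponent you obtain explicitly and flag the discrepancy rather than asserting agreement.
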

\begin{proof}
Besides the aforementioned existence of $\bphi_\pm(t,z,\zeta)$,
it is sufficient to show $\det\bphi_\pm(t,z,\zeta)\ne0$ on $\Sigma^*$.
Since $\bphi_\pm(t,z,\zeta)$ solves the scattering problem,
Abel's identity yields
\begin{equation}
\nonumber
\frac{\partial}{\partial t}(\det \bphi_\pm)
 = \tr\X \det\bphi_\pm
 = -\ii k\det\bphi_\pm\,,
\end{equation}
which can be rewritten as
\begin{equation}
\nonumber
\frac{\partial}{\partial t}(\det(\bphi_\pm \e^{-\ii \bLambda t})) = 0.
\end{equation}
In other words, the determinant of $\bphi_\pm(t,z,\zeta)\e^{-\ii \bLambda t}$ is independent of $t$.
Combining this fact with the asymptotic behavior $\bphi_\pm= \Y_\pm \e^{\ii \bLambda t} + o(1)$ as $t\to\pm\infty$, respectively,
we obtain
\begin{equation}
\nonumber
\det(\bphi_\pm \e^{-\ii \bLambda t}) = \gamma(\zeta)\,,\qquad
\zeta\in\Sigma\,.
\end{equation}
Therefore, we conclude that
\begin{equation}
\nonumber
\det\bphi_\pm(t,z,\zeta) = \e^{\ii kt}\gamma(\zeta) \ne 0\,,\qquad
\zeta\in\Sigma^*.
\end{equation}
\end{proof}

Similarly to what happens in the focusing Manakov system with NZBC~\cite{kbk2015},
the eigenfunctions can be analytically extended off the continuous spectrum.
Recall that $\bmu_{\pm,j}$ denotes the $j$-th column of the matrix $\bmu_\pm$.
The columns of the eigenfunctions then exhibit the following analyticity properties:
\bse
\label{e:muanalyticity}
\begin{gather}
\bmu_{+,1}:\zeta\in D_1,\qquad
\bmu_{+,2}:\Im \zeta< 0,\qquad
\bmu_{+,3}:\zeta\in D_4\,,\\
\bmu_{-,1}:\zeta\in D_2,\qquad
\bmu_{-,2}:\Im \zeta> 0,\qquad
\bmu_{-,3}:\zeta\in D_3\,,
\end{gather}
\ese
where the domains of analyticity are given by
\bse
\label{e:region}
\begin{gather}
D_1 \coloneq \{\zeta:\Im \zeta> 0\wedge|\zeta|> E_0\},\qquad
D_2 \coloneq \{\zeta:\Im \zeta< 0\wedge |\zeta|> E_0\},\\
D_3 \coloneq \{\zeta:\Im \zeta< 0\wedge|\zeta|< E_0\},\qquad
D_4 \coloneq \{\zeta:\Im \zeta> 0\wedge |\zeta|< E_0\}.
\end{gather}
\ese
These regions are shown in Figure~\ref{f:analyticity} (left) together with the oriented continuous spectrum $\Sigma$.
The same analyticity properties also hold for the columns of the matrix $\bphi_\pm(t,z,\zeta)$ via the relation~\eqref{e:mudef}.

\begin{remark}
In each region $D_j$ there are only two analytic eigenfunctions,
so they are not enough to formulate the inverse problem by building a $3\times3$ Riemann-Hilbert problem (RHP).
Therefore, unlike in the two-level MBE case,
we need to seek additional eigenfunctions.
These can be formed using the adjoint problem,
which is presented in Section~\ref{s:adjoint}.
\end{remark}

\subsubsection{Scattering matrix.}

Proposition~\ref{thm:Jostfundamental} implies that there exists an invertible $3\times3$ matrix $S(z,\zeta)$ such that
\begin{equation}
\label{e:Sdef}
\bphi_+(t,z,\zeta) = \bphi_-(t,z,\zeta)\, \S(z,\zeta)\,,\qquad
\zeta\in\Sigma^*.
\end{equation}
Proposition~\ref{thm:Jostfundamental} also implies the equation
\begin{equation}
\nonumber
\det \S(z,\zeta) = 1\,,\qquad \zeta\in\Sigma^*\,.
\end{equation}
Let us write the entries of the scattering matrix as $\S(z,\zeta) = (a_{i,j})_{3\times 3}$ and of its inverse as $\S^{-1}(\zeta,z)=(b_{i,j})_{3\times3}$.
Following the same procedure as in~\cite{kbk2015},
the scattering coefficients can be analytically extended off the continuous spectrum $\Sigma$ into the following regions:
\begin{equation}
\label{e:abanalyticity}
\begin{aligned}
a_{1,1} & :\,\zeta\in D_1\,,\qquad
a_{2,2}:\,\Im\zeta<0\,,\qquad
a_{3,3}:\,\zeta\in D_4\,,\\
b_{1,1} & :\,\zeta\in D_2\,,\qquad
b_{2,2}:\,\Im \zeta>0\,,\qquad
b_{3,3}:\,\zeta\in D_3\,.
\end{aligned}
\end{equation}

\subsection{Adjoint problem and auxiliary eigenfunctions}
\label{s:adjoint}

In this subsection, we construct additional eigenfunctions analytic in each region $D_j$ in order to formulate a $3\times3$ RHP later on.
To do so, let us consider the adjoint scattering problem
\begin{equation}
\label{e:adjoinscattering}
\widetilde \bphi_t=\widetilde \X\,\widetilde \bphi\,,\qquad
\widetilde \X(t,z,\zeta) \coloneq -\ii k\J + \Q(t,z)^*\,.
\end{equation}
Similarly to what was done in~\cite{kbk2015},
it can be shown that if $\~v(t,z,\zeta)$ and $\~w(t,z,\zeta)$ are two solutions of the adjoint problem~\eqref{e:adjoinscattering},
then the following combination
\begin{equation}
\@u(t,z,\zeta) \coloneq \e^{-\ii kt}[\~v \times \~w](t,z,\zeta)\,,
\end{equation}
is a solution of the original scattering problem~\eqref{e:laxpair1}.
We use this relation to construct additional eigenfunctions of the original scattering problem.
One thus needs to solve the adjoint scattering problem~\eqref{e:adjoinscattering} first, and find its eigenfunctions. Following the same procedure as before,
we know that the matrix $\widetilde \X_{\pm}(z,\zeta)$ of the asymptotic adjoint scattering problem has eigenvalues $\ii k$ and $\pm \ii\lambda$.
The eigenvalue matrix can be written as $-\ii\bLambda(\zeta)$ where $\bLambda(\zeta)$ is given in Equation~\eqref{e:eigen}.
One can choose the eigenvector matrix as
$\widetilde \Y_\pm(z,\zeta) \coloneq \Y_\pm(z,\zeta^*)^*$ with $\Y_\pm(z,\zeta)$ given in Equation~\eqref{e:eigen}.
Note that $\det\widetilde \Y_\pm(z,\zeta) = \gamma(\zeta)$ as well.

As in Section~\ref{s:Jost}, for all $\zeta\in\Sigma$, we define the Jost solutions of the adjoint scattering problem with the asymptotic behavior
\begin{equation}
\widetilde\bphi_\pm(t,z,\zeta)
 = \widetilde \Y_\pm(z,\zeta)\,\e^{-\ii \bLambda(\zeta) t}+o(1)\,,\qquad t\to\pm\infty\,.
\end{equation}
Then, we introduce the modified adjoint eigenfunctions by removing the oscillations
\begin{equation}
\label{e:tildemu-def}
\widetilde\bmu_\pm(t,z,\zeta)
 \coloneq \widetilde\bphi_\pm(t,z,\zeta) \,\e^{\ii \bLambda(\zeta) t}\,,\qquad
\zeta\in\Sigma\,,
\end{equation}
so that $\widetilde\bmu_\pm(t,z,\zeta) = \widetilde \Y_\pm(z,\zeta) + o(1)$ as $t\to\pm\infty$, respectively.
Again, similarly to what we did in Section~\ref{s:Jost},
one can show that the columns of $\widetilde\bmu_\pm(t,z,\zeta)$ can be analytically extended into the following regions of the complex plane:
\begin{gather*}
\widetilde\bmu_{-,1}:\,\zeta\in D_1\,,\qquad
\widetilde\bmu_{-,2}:\,\Im\zeta<0\,,\qquad
\widetilde\bmu_{-,3}:\,\zeta\in D_4\,,\\
\widetilde\bmu_{+,1}:\,\zeta\in D_2\,,\qquad
\widetilde\bmu_{+,2}:\,\Im\zeta>0\,,\qquad
\widetilde\bmu_{+,3}:\,\zeta\in D_3\,.
\end{gather*}

We then define the adjoint scattering matrix $\widetilde \S(z,\zeta)$ as
\begin{equation}
\widetilde \bphi_+(t,z,\zeta) = \widetilde\bphi_-(t,z,\zeta)\,\widetilde \S(z,\zeta)\,,\qquad
\zeta\in\Sigma\,.
\end{equation}
Following the same notation $\widetilde\S = (\widetilde a_{i,j})_{3\times3}$ and $\widetilde\S^{-1} = (\widetilde b_{i,j})_{3,3}$ and arguments similar to those in Section~\ref{s:Jost},
the scattering coefficients can be analytically extended off the continuous spectrum:
\begin{gather*}
\widetilde b_{1,1}:\,\zeta\in D_1\,,\qquad
\widetilde b_{2,2}:\,\Im\zeta<0\,,\qquad
\widetilde b_{3,3}:\,\zeta\in D_4\,,\\
\widetilde a_{1,1}:\,\zeta\in D_2\,,\qquad
\widetilde a_{2,2}:\,\Im\zeta>0\,,\qquad
\widetilde a_{3,3}:\,\zeta\in D_3\,.
\end{gather*}
Finally, we define four new solutions of the original Lax pair,
\begin{gather*}
\bchi_1(t,z,\zeta) \coloneq \e^{-\ii kt}[\widetilde\bphi_{-,1}\times\widetilde\bphi_{+,2}](t,z,\zeta)\,,\qquad
\bchi_2(t,z,\zeta) \coloneq \e^{-\ii kt}[\widetilde\bphi_{+,1}\times\widetilde\bphi_{-,2}](t,z,\zeta)\,,\\
\bchi_3(t,z,\zeta) \coloneq \e^{-\ii kt}[\widetilde\bphi_{-,2}\times\widetilde\bphi_{+,3}](t,z,\zeta)\,,\qquad
\bchi_4(t,z,\zeta) \coloneq \e^{-\ii kt}[\widetilde\bphi_{+,2}\times\widetilde\bphi_{-,3}](t,z,\zeta)\,.
\end{gather*}
\begin{remark}
\label{rmk:adjoint-analyticity}
The new eigenfunctions $\bchi_j(t,z,\zeta)$ with $j = 1,\dots,4$ are called auxiliary eigenfunctions, and can be shown to be analytic in $D_j$,
using the relation~\eqref{e:tildemu-def} and the analyticity of $\widetilde \bmu_{\pm,j}(t,z,\zeta)$ from above.
\end{remark}
Moreover, the scattering matrix $\S(z,\zeta)$ and the adjoint scattering matrix $\widetilde \S(z,\zeta)$ are related by
\begin{equation}
\widetilde \S(z,\zeta)^{-1}
 = \bGamma(\zeta)\, \S(z,\zeta)^\top\,\bGamma(\zeta)^{-1}\,,\qquad
\zeta\in\Sigma^*\,,
\end{equation}
where $\bGamma(\zeta) \coloneq \diag\left(1,\gamma(\zeta),1\right)$.

For all $\zeta\in\Sigma$, the Jost eigenfunctions have the following decompositions:
\begin{equation}
\label{e:phidecompose}
\begin{aligned}
\bphi_{-,1}(t,z,\zeta) & = \frac{1}{b_{2,2}(z,\zeta)}\left[\bchi_4(t,z,\zeta)+b_{2,1}(z,\zeta)\bphi_{-,2}(t,z,\zeta)\right]\\
& =\frac{1}{b_{3,3}(z,\zeta)}[b_{3,1}(z,\zeta)\bphi_{-,3}(t,z,\zeta)+\bchi_3(t,z,\zeta)]\,,\\
\bphi_{+,1}(t,z,\zeta) & =\frac{1}{a_{2,2}(z,\zeta)}\left[\bchi_3(t,z,\zeta)+a_{2,1}(z,\zeta)\bphi_{+,2}(t,z,\zeta)\right]\\
& =\frac{1}{a_{3,3}(z,\zeta)}[a_{3,1}(z,\zeta)\bphi_{+,3}(t,z,\zeta)+\bchi_4(t,z,\zeta)]\,,\\
\bphi_{-,3}(t,z,\zeta) & =\frac{1}{b_{2,2}(z,\zeta)}\left[\bchi_1(t,z,\zeta)+b_{2,3}(z,\zeta)\bphi_{-,2}(t,z,\zeta)\right]\\
& =\frac{1}{b_{1,1}(z,\zeta)}[b_{1,3}(z,\zeta)\bphi_{-,1}(t,z,\zeta)+\bchi_2(t,z,\zeta)]\,,\\
\bphi_{+,3}(t,z,\zeta) & =\frac{1}{a_{2,2}(z,\zeta)}\left[\bchi_2(t,z,\zeta)+a_{2,3}(z,\zeta)\bphi_{+,2}(t,z,\zeta)\right]\\
& =\frac{1}{a_{1,1}(z,\zeta)}[a_{1,3}(z,\zeta)\bphi_{+,1}(t,z,\zeta)+\bchi_1(t,z,\zeta)]\,.
\end{aligned}
\end{equation}

In addition, we remove the exponential oscillations and define the modified auxiliary eigenfunctions
\begin{equation}
\label{e:mdef}
\begin{aligned}
\m_j(t,z,\zeta) & \coloneq \bchi_j(t,z,\zeta)\e^{\ii \lambda t}\,,\qquad& j=1,2,\\
\m_j(t,z,\zeta) & \coloneq \bchi_j(t,z,\zeta)\e^{-\ii \lambda t}\,,\qquad& j=3,4.
\end{aligned}
\end{equation}
Next, we list the asymptotic behavior of the modified auxiliary eigenfunctions as $t\to\pm\infty$.
For all $\zeta$ in their corresponding domain of analyticity,
the modified auxiliary eigenfunctions have the following asymptotic behavior
as $t\to\pm\infty$:
\begin{equation}
\label{e:masym}
\begin{aligned}
\lim_{t\to-\infty}\m_1(t,z,\zeta) & = b_{1,1}(z,\zeta)\Y_{-3}(z,\zeta)\,,\qquad&
\lim_{t\to\infty}\m_1(t,z,\zeta) & = a_{2,2}(z,\zeta)\Y_{+3}(z,\zeta)\,,\\
\lim_{t\to-\infty}\m_2(t,z,\zeta) & = b_{2,2}(z,\zeta)\Y_{-3}(z,\zeta)\,,\qquad&
\lim_{t\to\infty}\m_2(t,z,\zeta) & = a_{1,1}(z,\zeta)\Y_{+3}(z,\zeta)\,,\\
\lim_{t\to-\infty}\m_3(t,z,\zeta) & = b_{2,2}(z,\zeta)\Y_{-1}(z,\zeta)\,,\qquad&
\lim_{t\to\infty}\m_3(t,z,\zeta) & = a_{3,3}(z,\zeta)\Y_{+1}(z,\zeta)\,,\\
\lim_{t\to-\infty}\m_4(t,z,\zeta) & = b_{3,3}(z,\zeta)\Y_{-1}(z,\zeta)\,,\qquad&
\lim_{t\to\infty}\m_4(t,z,\zeta) & = a_{2,2}(z,\zeta)\Y_{+1}(z,\zeta)\,.
\end{aligned}
\end{equation}

Now, for each region $D_j$, there are three independent eigenfunctions as shown in Figure~\ref{f:analyticity} (left). The independence can be verified directly by calculating Wronskians.

\subsection{Symmetries}

Here we discuss the symmetries of all the eigenfunctions and scattering coefficients derived from the scattering problem.
These symmetries play an important role throughout the entire IST procedure.
There are two symmetries to be discussed in this section.

\subsubsection{First symmetry.}
\label{s:first-symmetry}

Let us consider the transform $\zeta\mapsto\zeta^*$,
implying $(k,\lambda)\mapsto(k^*,\lambda^*)$.
This transform maps a point from upper half plane into the lower half plane or vice versa in the $\zeta$-plane or on the same sheet of the $k$-plane.

\begin{lemma}
\label{thm:symmetry1}
If $\bphi(t,z,\zeta)$ is a non-singular square matrix solution of the Lax pair~\eqref{e:laxpair},
then so is $\bphi(t,z,\zeta^*)^{-\dagger}$,
where the superscript $-\dagger$ denotes the inversion and conjugate transpose.
\end{lemma}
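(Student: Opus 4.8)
The plan is to reduce the statement to two algebraic symmetries of the Lax matrices $\X$ and $\V$ under the composition of $\zeta\mapsto\zeta^*$ with conjugate transposition, and then to apply a routine fact about how the inverse--conjugate--transpose of a fundamental solution propagates. Concretely, suppose $\bphi(t,z,\zeta)$ is a non-singular solution, so that $\bphi_t=\X\bphi$ and $\bphi_z=\V\bphi$. Differentiating $\bphi\,\bphi^{-1}=\bbI$ gives $(\bphi^{-1})_t=-\bphi^{-1}\X$ and $(\bphi^{-1})_z=-\bphi^{-1}\V$; taking the conjugate transpose (which commutes with $\partial_t$ and $\partial_z$) shows that $\bphi^{-\dagger}$ solves the pair with $\X,\V$ replaced by $-\X^\dagger,-\V^\dagger$. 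Applying this with $\zeta$ replaced by $\zeta^*$ throughout, so that $\bphi(t,z,\zeta^*)$ is the non-singular solution at parameter $\zeta^*$, it then suffices to establish the two identities
\begin{equation}
\nonumber
\X(t,z,\zeta^*)^\dagger=-\X(t,z,\zeta)\,,\qquad
\V(t,z,\zeta^*)^\dagger=-\V(t,z,\zeta)\,,
\end{equation}
since these convert the coefficients $-\X(\zeta^*)^\dagger,-\V(\zeta^*)^\dagger$ obtained above back into $\X(\zeta),\V(\zeta)$.

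The first identity is elementary. Since $\J=\diag(1,-1,-1)$ is real, $\J^\dagger=\J$, and the block structure of $\Q$ in~\eqref{e:cmbe} shows at once that $\Q^\dagger=-\Q$, i.e.\ $\Q$ is skew-Hermitian. Moreover, from the inverse map~\eqref{e:k-lambda} together with $E_0\in\Real$ one has $\hat{\zeta^*}=(\hat\zeta)^*$ and hence $k(\zeta^*)=k(\zeta)^*$. Therefore
\begin{equation}
\nonumber
\X(t,z,\zeta^*)^\dagger=\big(\ii\,k(\zeta)^*\J+\Q\big)^\dagger=-\ii\,k(\zeta)\J-\Q=-\X(t,z,\zeta)\,,
\end{equation}
as required.

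The second identity is the main obstacle, because it requires reconciling the Hilbert transform defining $\V$ with complex conjugation. Here $\V$ must be read as the analytic continuation of $\tfrac{\ii\pi}{2}\H_k[\brho g]$ off the real axis, i.e.\ as the Cauchy-type integral $\tfrac{\ii}{2}\int_{-\infty}^\infty \brho(t,z,k')g(k')/(k'-k)\,\d k'$, which is analytic for $\Im k\ne0$. Using that $\brho$ is Hermitian and $g(k')$ is real, one has $\big(\brho(t,z,k')g(k')\big)^\dagger=\brho(t,z,k')g(k')$, while for real $k'$ the kernel obeys $(k'-k(\zeta)^*)^*=k'-k(\zeta)$; conjugating the integrand term by term then yields $\V(\zeta^*)^\dagger=-\V(\zeta)$. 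The only delicate point is that conjugation interchanges the continuations from the upper and lower half-planes, but precomposing with $\zeta\mapsto\zeta^*$ (hence $k\mapsto k^*$) returns the evaluation point to its original half-plane, so the identity holds as an equality of analytic functions; on the continuous spectrum $\Sigma$, where $k$ is real, the same computation goes through with the principal value. With both identities in hand, the reduction in the first paragraph shows that $\bphi(t,z,\zeta^*)^{-\dagger}$ solves the Lax pair~\eqref{e:laxpair}, completing the proof.
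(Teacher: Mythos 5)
Your proof is correct and is essentially the paper's own argument: the paper disposes of this lemma with the single remark that it ``can be proved by direct substitution,'' and your two identities $\X(t,z,\zeta^*)^\dagger=-\X(t,z,\zeta)$ and $\V(t,z,\zeta^*)^\dagger=-\V(t,z,\zeta)$, combined with the standard evolution equations for $\bphi^{-\dagger}$, are precisely that substitution carried out in detail. Your handling of the Hilbert-transform term (using $\brho(t,z,k')^\dagger=\brho(t,z,k')$ for real $k'$, the realness of $g$ and of $E_0$ so that $k(\zeta^*)=k(\zeta)^*$, and the observation that conjugating after evaluating at $\zeta^*$ returns the Cauchy-type integral to the original half-plane) correctly fills in the only point the paper leaves implicit.
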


Lemma~\ref{thm:symmetry1} can be proved by direct substitution.
Applying Lemma~\ref{thm:symmetry1} to the Jost eigenfunctions $\bphi_\pm(t,z,\zeta)$,
and noticing the uniqueness of solutions,
one concludes that there exits an invertible matrix $\C$ such that
\begin{equation}
\label{e:phisymmetry1}
\bphi_\pm(t,z,\zeta^*)^{-\dagger}\C(\zeta)=\bphi_\pm(t,z,\zeta)\,, \qquad
\zeta\in\Sigma^*\,.
\end{equation}
It is not obvious that (i) the connection matrix $\C(\zeta)$ is the same for both Jost eigenfunction matrices $\bphi_\pm(t,z,\zeta)$;
(ii) $\C(\zeta)$ is independent of $z$,
but one can use the asymptotic behavior of $\bphi_\pm(t,z,\zeta)$ as $t\to\pm\infty$ and obtain $\C(\zeta)$ explicitly as
\begin{equation}
\C(\zeta) = \diag(\gamma(\zeta),1,\gamma(\zeta))\,.
\end{equation}

Furthermore, the identity
\begin{equation}
\bphi_\pm(t,z,\zeta)^{-\top}
 = \frac{1}{\det \bphi_\pm(t,z,\zeta)}(\bphi_{\pm,2}\times\bphi_{\pm,3},\bphi_{\pm,3}\times\bphi_{\pm,1},\bphi_{\pm,1}\times\bphi_{\pm,2})(t,z,\zeta)\,,
\end{equation}
together with the symmetry~\eqref{e:phisymmetry1}, imply
\begin{equation}
\begin{aligned}
\bphi_{-,1}(t,z,\zeta^*)^* & = \frac{\e^{\ii kt}}{b_{2,2}(z,\zeta)}[\bphi_{-,2}\times\bchi_1](t,z,\zeta)\,,\\
\bphi_{+,1}(t,z,\zeta^*)^* & = \frac{\e^{\ii kt}}{a_{2,2}(z,\zeta)}[\bphi_{+,2}\times\bchi_2](t,z,\zeta)\,,\\
\bphi_{-,2}(t,z,\zeta^*)^* & = \frac{\e^{\ii kt}}{\gamma(\zeta)b_{1,1}(z,\zeta)}[\bchi_2\times\bphi_{-,1}](t,z,\zeta)
 = \frac{\e^{\ii kt}}{\gamma(\zeta)b_{3,3}(z,\zeta)}[\bphi_{-,3}\times\bchi_3](t,z,\zeta)\,,\\
\bphi_{+,2}(t,z,\zeta^*)^* & = \frac{\e^{\ii kt}}{\gamma(\zeta)a_{1,1}(z,\zeta)}[\bchi_1\times\bphi_{+,1}](t,z,\zeta)
 = \frac{\e^{\ii kt}}{\gamma(\zeta)a_{3,3}(z,\zeta)}[\bphi_{+,3}\times\bchi_3](t,z,\zeta)\,,\\
\bphi_{-,3}(t,z,\zeta^*)^* & = \frac{\e^{\ii kt}}{b_{2,2}(z,\zeta)}[\bchi_4\times\bphi_{-,2}](t,z,\zeta)\,,\\
\bphi_{+,3}(t,z,\zeta^*)^* & = \frac{\e^{\ii kt}}{a_{2,2}(z,\zeta)}[\bchi_3\times\bphi_{+,2}](t,z,\zeta)\,.
\end{aligned}
\end{equation}
These relations involving $\bchi_j(t,z,\zeta)$ are also valid for $\zeta\in D_j$, with $j=1,\dots,4$, besides the continuous spectrum $\Sigma^*$.

Thereafter, by using the definition of the scattering matrix~\eqref{e:Sdef},
we also conclude that the scattering matrix and its inverse satisfy the symmetry relation:
\begin{equation}
\S(z,\zeta^*)^{-\dagger}
 = \C(\zeta) \S(z,\zeta) \C(\zeta)^{-1}\,,\qquad \zeta\in\Sigma^*\,.
\end{equation}
Componentwise, for all $\zeta\in\Sigma^*$, the above equation yields
\begin{equation}
\label{e:asymmetry1}
\begin{aligned}
a_{1,1}(z,\zeta) & = b_{1,1}(z,\zeta^*)^*\,,\qquad&
a_{1,2}(z,\zeta) & = \frac{b_{2,1}(z,\zeta^*)^*}{\gamma(\zeta)}\,,\qquad&
a_{1,3}(z,\zeta) & = b_{3,1}(z,\zeta^*)^*\,,\\
a_{2,1}(z,\zeta) & = \gamma(\zeta)b_{1,2}(z,\zeta^*)^*\,,\qquad&
a_{2,2}(z,\zeta) & = b_{2,2}(z,\zeta^*)^*\,,\qquad&
a_{2,3}(z,\zeta) & = \gamma(\zeta)b_{3,2}(z,\zeta^*)^*\,,\\
a_{3,1}(z,\zeta) & = b_{1,3}(z,\zeta^*)^*\,,\qquad&
a_{3,2}(z,\zeta) & = \frac{b_{2,3}(z,\zeta^*)^*}{\gamma(\zeta)}\,,\qquad&
a_{3,3}(z,\zeta) & = b_{3,3}(z,\zeta^*)^*\,.
\end{aligned}
\end{equation}
The Schwarz reflection principle then allows us to conclude
\begin{equation}
\begin{aligned}
a_{1,1}(z,\zeta) & = b_{1,1}(z,\zeta^*)^*\,,\qquad && \zeta\in D_1\,,\\
a_{2,2}(z,\zeta) & = b_{2,2}(z,\zeta^*)^*\,,\qquad && \Im \zeta<0\,,\\
a_{3,3}(z,\zeta) & = b_{3,3}(z,\zeta^*)^*\,,\qquad && \zeta\in D_4\,.
\end{aligned}
\end{equation}

We can also obtain symmetry relations for the auxiliary eigenfunctions.
It can be shown that the auxiliary eigenfunctions satisfy the following symmetry relations~\cite{kbk2015},
\begin{equation}
\begin{aligned}
\bchi_1(t,z,\zeta^*)^*
 & =\e^{\ii kt}[\bphi_{-,1}\times\bphi_{+,2}](t,z,\zeta)\,,\qquad && \zeta\in D_2\,,\\
\bchi_2(t,z,\zeta^*)^*
 & =\e^{\ii kt}[\bphi_{+,1}\times\bphi_{-,2}](t,z,\zeta)\,,\qquad && \zeta\in D_1\,,\\
\bchi_3(t,z,\zeta^*)^*
 & =\e^{\ii kt}[\bphi_{-,2}\times\bphi_{+,3}](t,z,\zeta)\,,\qquad && \zeta\in D_4\,,\\
\bchi_4(t,z,\zeta^*)^*
 & =\e^{\ii kt}[\bphi_{+,2}\times\bphi_{-,3}](t,z,\zeta)\,,\qquad && \zeta\in D_3\,.
\end{aligned}
\end{equation}
In other words, we have
\begin{equation}
\bphi_{\pm,j}(t,z,\zeta^*)^*
 = \e^{\ii kt}[\bphi_{\pm,l}\times\bphi_{\pm,m}](t,z,\zeta)/\gamma(\zeta)\,,
\end{equation}
where $j,l$ and $m$ are cyclic indices.

\subsubsection{Second symmetry.}

Next, we consider the transformation $\zeta\mapsto \hat \zeta$ in the scattering problem,
mapping the exterior of the circle $\Sigma_\circ$ into its interior,
and vice versa,
and implying $(k,\lambda)\mapsto(k,-\lambda)$.

\begin{lemma}
\label{thm:symmetry2}
If $\bphi(t,z,\zeta)$ is a matrix solution of the Lax pair~\eqref{e:laxpair},
so is $\bphi(t,z,\hat\zeta)$.
\end{lemma}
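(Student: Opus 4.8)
The plan is to exploit the fact that the coefficient matrices of the Lax pair~\eqref{e:laxpair} depend on the spectral variable \emph{only} through $k$, together with the observation that the involution $\zeta\mapsto\hat\zeta$ leaves $k$ fixed. Since $\X$ and $\V$ are literally the same functions of $(t,z)$ at $\zeta$ and at $\hat\zeta$, any solution evaluated at $\hat\zeta$ is automatically a solution at $\zeta$, with no connection matrix or transformation of $\bphi$ required. This contrasts sharply with the first symmetry in Lemma~\ref{thm:symmetry1}, where the action on $\bphi$ (inversion and conjugate transpose) forced a nontrivial matrix $\C(\zeta)$; here the map is an honest invariance of the equations themselves.

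First I would record, directly from the inverse transforms~\eqref{e:k-lambda}, that $\zeta\mapsto\hat\zeta$ is an involution (since $\hat{\hat\zeta}=\zeta$) for $\zeta\in\Complex\backslash\{0\}$, and that it satisfies $k(\hat\zeta)=\tfrac12(\hat\zeta+\zeta)=k(\zeta)$ while $\lambda(\hat\zeta)=\tfrac12(\hat\zeta-\zeta)=-\lambda(\zeta)$, consistent with the stated action $(k,\lambda)\mapsto(k,-\lambda)$. Next I would check the invariance of each Lax-pair matrix. For the $t$-part, $\X(t,z,k)=\ii k\J+\Q(t,z)$ from~\eqref{e:laxpair1} depends on the spectral variable solely through $k$, so $\X(t,z,k(\hat\zeta))=\X(t,z,k(\zeta))$ is immediate. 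For the $z$-part, $\V(t,z,k)=\tfrac{\ii\pi}{2}\H_{k}[\brho(t,z,k)g(k)]$ from~\eqref{e:laxpair2}, I would note that the Hilbert transform~\eqref{e:hilbert} is taken in the real integration variable and returns a function of $k$ alone, so $\V$ likewise depends on the spectral variable only through $k$, giving $\V(t,z,k(\hat\zeta))=\V(t,z,k(\zeta))$. With both invariances in hand, setting $\bpsi(t,z,\zeta)\coloneqq\bphi(t,z,\hat\zeta)$ and substituting $\hat\zeta$ for the spectral argument in the equations that $\bphi$ satisfies yields $\partial_t\bpsi=\X(t,z,k(\hat\zeta))\,\bpsi=\X(t,z,k(\zeta))\,\bpsi$ and $\partial_z\bpsi=\V(t,z,k(\hat\zeta))\,\bpsi=\V(t,z,k(\zeta))\,\bpsi$, which is precisely the claim.

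The only point requiring genuine care—and hence the sole ``obstacle''—is the verification that $\V$ carries no hidden dependence on $\lambda$ or on $\zeta$ itself. Unlike the eigenvector matrices $\Y_\pm$ of~\eqref{e:eigen}, which involve $\zeta$ explicitly through entries such as $-\ii E_0/\zeta$, the fundamental coefficient $\V$ is built purely from the physical density matrix $\brho(t,z,k)$ and the line shape $g(k)$ via a transform in the real variable $k$; once this is made explicit, the $\lambda$-sign flip under $\zeta\mapsto\hat\zeta$ plays no role and the invariance is exact. I would therefore phrase the proof as the short substitution argument above, preceded by this remark isolating why $\X$ and $\V$ are functions of $k$ only.
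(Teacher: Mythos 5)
Your proof is correct and takes essentially the same route as the paper: the paper disposes of Lemma~\ref{thm:symmetry2} with the single remark that it ``can be proved by direct substitution,'' and your argument is precisely that substitution spelled out, with the key enabling observation---that both $\X$ and $\V$ depend on the spectral parameter only through $k$, which is fixed under the involution $\zeta\mapsto\hat\zeta$ while $\lambda\mapsto-\lambda$ plays no role---made explicit. Your closing remark correctly isolates why no connection matrix arises here, in contrast to Lemma~\ref{thm:symmetry1}; nothing further is needed.
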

This lemma can be proved by direct substitution.
Applying Lemma~\ref{thm:symmetry2} to the Jost eigenfunctions and recalling the uniqueness of solutions,
as proved in~\cite{kbk2015}, it can be shown that there exists an invertible matrix $\bPi(\zeta)$ such that the Jost eigenfunctions satisfy
\begin{equation}
\label{e:phisymmetry2}
\bphi_\pm(t,z,\zeta) = \bphi_\pm(t,z,\hat\zeta)\bPi(\zeta)\,,\qquad
\zeta\in\Sigma\,.
\end{equation}
Similarly to what has been done in discussing the first symmetry,
we consider the asymptotic behavior of $\bphi_\pm(t,z,\zeta)$ as $t\to\pm\infty$, respectively.
By comparison, we obtain an explicit form of the matrix
\begin{equation}
\label{e:Pi}
\bPi(\zeta)
 = \bpm 0 & 0 & -\ii E_0/\zeta \\ 0 & 1 & 0 \\ -\ii E_0/\zeta & 0 & 0 \epm\,.
\end{equation}

As before, the analyticity properties of the eigenfunctions allow us to analytically extend all of the above relations off the continuous spectrum:
\begin{equation}
\label{e:eigenfunctionsymmetry1}
\begin{aligned}
\bphi_{\pm,1}(t,z,\zeta) & = -\frac{\ii E_0}{\zeta}\bphi_{\pm,3}(t,z,\hat\zeta)\,,\qquad
&&\Im \zeta\gl0\wedge|\zeta|>E_0\,,\\
\bphi_{\pm,2}(t,z,\zeta) & = \bphi_{\pm, 2}(t,z,\hat\zeta)\,,\qquad &&\Im\zeta\lg0\,,\\
\bphi_{\pm,3}(t,z,\zeta) & = -\frac{\ii E_0}{\zeta}\bphi_{\pm,1}(t,z,\hat\zeta)\,,\qquad
&&\Im\zeta\lg0\wedge|\zeta|<E_0\,.
\end{aligned}
\end{equation}
The definition of the scattering matrix~\eqref{e:Sdef} yields the corresponding symmetry
\begin{equation}
\S(z,\hat\zeta) = \bPi(\zeta)\S(z,\zeta)\bPi^{-1}(\zeta)\,,\qquad \zeta\in\Sigma^*\,.
\end{equation}
Componentwise, on the continuous spectrum $\Sigma^*$, we have the equations
\begin{equation}
\label{e:asymmetry2}
\begin{aligned}
a_{1,1}(z,\zeta) & =a_{3,3}(z,\hat\zeta)\,,\qquad&
a_{1,2}(z,\zeta) & =-\frac{\ii E_0}{\zeta}a_{3,2}(z,\hat\zeta)\,,\qquad&
a_{1,3}(z,\zeta) & =a_{3,1}(z,\hat\zeta)\,,\\
a_{2,1}(z,\zeta) & =\frac{\ii\zeta}{E_0}a_{2,3}(z,\hat\zeta)\,,\qquad&
a_{2,2}(z,\zeta) & =a_{2,2}(z,\hat\zeta)\,,\qquad&
a_{2,3}(z,\zeta) & =\frac{\ii\zeta}{E_0}a_{2,1}(z,\hat\zeta)\,,\\
a_{3,1}(z,\zeta) & =a_{1,3}(z,\hat\zeta)\,,\qquad&
a_{3,2}(z,\zeta) & =-\frac{\ii E_0}{\zeta}a_{1,2}(z,\hat\zeta)\,,\qquad&
a_{3,3}(z,\zeta) & =a_{1,1}(z,\hat\zeta)\,.
\end{aligned}
\end{equation}

The analyticity of the scattering coefficients~\eqref{e:abanalyticity} allows us to conclude
\begin{equation*}
\begin{aligned}
b_{1,1}(z,\zeta) & = b_{3,3}(z,\hat\zeta)\,,\qquad&& \zeta\in D_2\,,\\
a_{1,1}(z,\zeta) & = a_{3,3}(z,\hat\zeta)\,,\qquad&& \zeta\in D_1\,,\\
a_{2,2}(z,\zeta) & = a_{2,2}(z,\hat\zeta)\,,\qquad&& \Im\zeta\le0\,,\\
b_{2,2}(z,\zeta) & = b_{2,2}(z,\hat\zeta)\,,\qquad&& \Im\zeta\ge0\,,
\end{aligned}
\end{equation*}
where we recall that $a_{i,j}$ and $b_{i,j}$ are the entries of the scattering matrix and its inverse, respectively.
Moreover, repeating the procedure in the discussion of the first symmetry in Section~\ref{s:first-symmetry},
we also conclude that the auxiliary eigenfunctions satisfy the relations
\begin{equation}
\label{e:eigenfunctionsymmetry2}
\begin{aligned}
\bchi_1(t,z,\zeta)
 & = -\frac{\ii E_0}{\zeta}\bchi_4(t,z,\hat\zeta)\,,\qquad&& \zeta\in D_1\,,\\
\bchi_2(t,z,\zeta)
 & = -\frac{\ii E_0}{\zeta}\bchi_3(t,z,\hat\zeta)\,,\qquad&& \zeta\in D_2\,.
\end{aligned}
\end{equation}

\subsubsection{Combined symmetry and reflection coefficients.}

We define the following quantities $r_j(\zeta)$ with $j = 1,2,3$ on the continuous spectrum,
which are the reflection coefficients appearing in the inverse problem:
\bse
\label{e:reflection-def}
\begin{equation}
\begin{aligned}
r_1(z,\zeta)
 & \coloneq \frac{a_{2,1}(z,\zeta)}{a_{1,1}(z,\zeta)}
 = \gamma(\zeta)\frac{b_{1,2}(z,\zeta^*)^*}{b_{1,1}(z,\zeta^*)^*}\,,\\
r_2(z,\zeta)
 & \coloneq \frac{a_{3,1}(z,\zeta)}{a_{1,1}(z,\zeta)}
 = \frac{b_{1,3}(z,\zeta^*)^*}{b_{1,1}(z,\zeta^*)^*}\,,\\
r_3(z,\zeta)
 & \coloneq \frac{a_{3,2}(z,\zeta)}{a_{2,2}(z,\zeta)}
 = \frac{1}{\gamma(\zeta)}\frac{b_{2,3}(z,\zeta^*)^*}{b_{2,2}(z,\zeta^*)^*}\,.
\end{aligned}
\end{equation}
The symmetries~\eqref{e:asymmetry2} of the scattering matrix yield
\begin{equation}
\begin{aligned}
r_1(z,\hat\zeta)
 & = -\frac{\ii E_0}{\zeta}\frac{a_{2,3}(z,\zeta)}{a_{3,3}(z,\zeta)}
 = -\gamma(\zeta)\frac{\ii E_0}{\zeta}\frac{b_{3,2}(z,\zeta^*)^*}{b_{3,3}(z,\zeta^*)^*}\,,\\
r_2(z,\hat\zeta)
 & = \frac{a_{1,3}(z,\zeta)}{a_{3,3}(z,\zeta)}
 = \frac{b_{3,1}(z,\zeta^*)^*}{b_{3,3}(z,\zeta^*)^*}\,,\\
r_3(z,\hat\zeta)
 & = \frac{\ii\zeta}{E_0}\frac{a_{1,2}(z,\zeta)}{a_{2,2}(z,\zeta)}
 = \frac{\ii\zeta}{E_0\gamma(\zeta)}\frac{b_{2,1}(z,\zeta^*)^*}{b_{2,2}(z,\zeta^*)^*}\,.
\end{aligned}
\end{equation}
\ese
Furthermore, the definition of $\S(z,\zeta)$ yields the following relations
\begin{equation}
a_{3,2}(z,\zeta) = b_{1,2}(z,\zeta)b_{3,1}(z,\zeta) - b_{1,1}(z,\zeta)b_{3,2}(z, \zeta)\,,\qquad
\zeta\in\Sigma^*\,.
\end{equation}
This equation shows that, in fact, the three reflection coefficients are not independent,
and they are related by the equation
\begin{equation}
r_3(z,\zeta) = \frac{b_{1,1}(z,\zeta)b_{1,1}(z,\hat\zeta)}{a_{2,2}(z,\zeta)\gamma(\zeta)}
\left[r_1(z,\zeta^*)^* r_2(z,\hat\zeta^*)^* + \frac{\ii\zeta}{E_0}r_1(z,\hat\zeta^*)^*\right]\,,\qquad
\zeta\in\Sigma^*\,.
\end{equation}

\subsection{Discrete spectrum, norming constants and their symmetries}
\label{s:eigen}

With the eigenfunctions and their behavior on the continuous spectrum established,
we next discuss their behavior on the discrete spectrum.
Recall that we have assumed there to be no spectral singularities or embedded eigenvalues,
so the continuous and discrete spectra of the current problem are well separated.

\subsubsection{Discrete spectrum and norming constants.}

The discrete spectrum of the direct problem is the set of all values $\zeta\in\Complex\backslash\Sigma$ for which bounded eigenfunctions exist.
In order to characterize the discrete spectrum,
it is convenient to introduce the following $3\times3$ matrices:
\begin{equation}
\label{e:bPhi(j)-def}
\begin{aligned}
\bPhi^{(1)}(t,z,\zeta) & \coloneq (\bphi_{+,1}(t,z,\zeta),\bphi_{-,2}(t,z,\zeta),\bchi_1(t,z,\zeta))\,,\qquad&&
 \zeta\in D_1\,,\\
\bPhi^{(2)}(t,z,\zeta) & \coloneq
(\bphi_{-,1}(t,z,\zeta),\bphi_{+,2}(t,z,\zeta),\bchi_2(t,z,\zeta))\,,\qquad&&
 \zeta\in D_2\,,\\
\bPhi^{(3)}(t,z,\zeta) & \coloneq
(\bchi_3(t,z,\zeta),\bphi_{+,2}(t,z,\zeta),\bphi_{-,3}(t,z,\zeta))\,,\qquad&&
 \zeta\in D_3\,,\\
\bPhi^{(4)}(t,z,\zeta) & \coloneq
(\bchi_4(t,z,\zeta),\bphi_{-,2}(t,z,\zeta),\bphi_{+,3}(t,z,\zeta))\,,\qquad&&
 \zeta\in D_4\,.
\end{aligned}
\end{equation}
It follows immediately that
\begin{equation}
\label{e:det-Phi}
\begin{aligned}
\det\bPhi^{(1)}(t,z,\zeta) & = b_{2,2}(z,\zeta)a_{1,1}(z,\zeta)\e^{-\ii kt}\,,\qquad&& \zeta\in D_1\,,\\
\det\bPhi^{(2)}(t,z,\zeta) & = b_{1,1}(z,\zeta)a_{2,2}(z,\zeta)\e^{-\ii kt}\,,\qquad&&\zeta\in D_2\,,\\
\det\bPhi^{(3)}(t,z,\zeta) & = b_{3,3}(z,\zeta)a_{2,2}(z,\zeta)\e^{-\ii kt}\,,\qquad&&\zeta\in D_3\,,\\
\det\bPhi^{(4)}(t,z,\zeta) & = b_{2,2}(z,\zeta)a_{3,3}(z,\zeta)\e^{-\ii kt}\,,\qquad&&\zeta\in D_4\,.
\end{aligned}
\end{equation}
If the above determinants are zero,
then some of the eigenfunctions are linearly dependent.
Similarly to~\cite{kbk2015},
it can be shown that in these cases bounded eigenfunctions exist.
Therefore, the zeros of the determinants, i.e., the zeros of the scattering coefficients from Equation~\eqref{e:det-Phi},
are the discrete eigenvalues.

Therefore, we now discuss the zeros of the scattering coefficients.
For simplicity, we assume that these zeros are simple.
Due to the symmetries of the problem, it is unnecessary to discuss separately the zeros of all five scattering coefficients.
In fact, there are only two equivalence classes of zeros,
in particular:
\begin{itemize}
\item
Let $\zeta_0\in D_1$ be a zero of $a_{1,1}(z,\zeta)$;
then by symmetries~\eqref{e:asymmetry1} and~\eqref{e:asymmetry2} we obtain
\begin{equation}
b_{1,1}(z,\zeta_0^*) = 0
\Longleftrightarrow a_{1,1}(z,\zeta_0) = 0
\Longleftrightarrow a_{3,3}(z,\hat\zeta_0) = 0
\Longleftrightarrow b_{3,3}(z,\hat\zeta_0^*) = 0\,.
\end{equation}
\item
Let $\zeta_0\in \Complex^+$ be a zero of $b_{2,2}(z,\zeta)$;
then by symmetries~\eqref{e:asymmetry1} and~\eqref{e:asymmetry2} we obtain
\begin{equation}
a_{2,2}(z,\zeta_0^*) = 0
\Longleftrightarrow b_{2,2}(z,\zeta_0) = 0
\Longleftrightarrow b_{2,2}(z,\hat\zeta_0) = 0
\Longleftrightarrow a_{2,2}(z,\hat\zeta_0^*) = 0\,.
\end{equation}
\end{itemize}

\begin{remark}
(i)
It is sufficient to study the zeros of $a_{1,1}(z,\zeta)$ and $b_{2,2}(z,\zeta)$ in only one region, $\zeta \in D_1$.
(ii)
For each discrete eigenvalue in $D_1$ when $a_{1,1}(z,\zeta)$ or $b_{2,2}(z,\zeta)$ vanishes,
there are three corresponding zeros in the other three regions $D_j$ with $j = 2,3,4$, respectively.
Hence, each discrete eigenvalue in $D_1$ generates a quartet of discrete eigenvalues in the entire complex plane.
(iii)
A discrete eigenvalue in $D_1$ can be categorized by probing which quantity, $a_{1,1}(z,\zeta)$ or $b_{2,2}(z,\zeta)$, vanishes there.
\end{remark}

The above remark implies that all the discrete eigenvalues in the complex plane can be classified based on which scattering coefficient $a_{1,1}$ or $b_{2,2}$ vanishes in a single region, $D_1$.

\begin{definition}
\label{def:eigenvalue}
There are three kinds of eigenvalue quartets corresponding to a given eigenvalue in $D_1$:
\begin{itemize}
\item
\textbf{Eigenvalue of the first kind, $w_n$ for $n = 1,\dots,N_\I$}:
$a_{1,1}(z,w_n) = 0$ and $b_{2,2}(z,w_n)\ne0$.
\item
\textbf{Eigenvalue of the second kind, $z_n$ for $n = 1,\dots,N_\II$}:
$a_{1,1}(z,z_n)\ne0$ and $b_{2,2}(z,z_n) = 0$.
\item
\textbf{Eigenvalue of the third kind, $\zeta_n$ for $n = 1,\dots,N_\III$}:
$a_{1,1}(z,\zeta_n) = b_{2,2}(z,\zeta_n) = 0$.
\end{itemize}
\end{definition}
Then we have
\begin{proposition}
For every eigenvalue $\zeta_0\in D_1$ the following statements are equivalent:
\begin{enumerate}
\item
$\bchi_2(t,z,\zeta_0^*) = \@0$.
\item
$\bchi_3(t,z,\hat\zeta_0^*) = \@0$.
\item
There exists a constant $b_0\in\Complex$ such that
$\bphi_{-,2}(t,z,\zeta_0) = b_0\,\bphi_{+,1}(t,z,\zeta_0)$.
\item
These exists a constant $\widehat b_0\in\Complex$ such that $\bphi_{-,2}(t,z,\hat\zeta_0) = \widehat b_0\,\bphi_{+,3}(t,z,\hat\zeta_0)$.
\end{enumerate}
\end{proposition}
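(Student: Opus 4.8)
The plan is to prove the chain $(1)\Leftrightarrow(3)\Leftrightarrow(4)\Leftrightarrow(2)$, invoking the two symmetry relations for the eigenfunctions already established together with the elementary observation that, for $\@a,\@b\in\Complex^3$, one has $\@a\times\@b=\@0$ if and only if $\@a$ and $\@b$ are linearly dependent. Throughout I would use that each relevant column $\bphi_{\pm,j}$ is a nowhere-vanishing solution of the scattering problem $\bphi_t=\X\bphi$ in its region of analyticity, so that if two such columns are parallel at one value of $t$ they remain parallel for all $t$ with a $t$-independent proportionality constant (by uniqueness for the linear system).

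Both $(1)\Leftrightarrow(3)$ and $(2)\Leftrightarrow(4)$ follow the same template from the first-symmetry relations for the auxiliary eigenfunctions. For $(1)\Leftrightarrow(3)$, since $\zeta_0\in D_1$ I would evaluate $\bchi_2(t,z,\zeta^*)^*=\e^{\ii kt}[\bphi_{+,1}\times\bphi_{-,2}](t,z,\zeta)$ (valid for $\zeta\in D_1$) at $\zeta=\zeta_0$, noting that $\zeta_0^*\in D_2$ is where $\bchi_2$ is analytic. Because $\e^{\ii kt}\ne0$, the condition $\bchi_2(t,z,\zeta_0^*)=\@0$ is equivalent to the vanishing of the cross product, i.e.\ to $\bphi_{+,1}$ and $\bphi_{-,2}$ being parallel at $\zeta_0$, which is exactly the existence of $b_0$ with $\bphi_{-,2}(t,z,\zeta_0)=b_0\,\bphi_{+,1}(t,z,\zeta_0)$. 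For $(2)\Leftrightarrow(4)$, the involution $\zeta\mapsto\hat\zeta$ inverts through the circle $\Sigma_\circ$ while preserving the half-plane, so it maps $D_1$ to $D_4$; hence $\hat\zeta_0\in D_4$ and $\hat\zeta_0^*\in D_3$, the domain of analyticity of $\bchi_3$. Evaluating $\bchi_3(t,z,\zeta^*)^*=\e^{\ii kt}[\bphi_{-,2}\times\bphi_{+,3}](t,z,\zeta)$ (valid for $\zeta\in D_4$) at $\zeta=\hat\zeta_0$, the same cross-product argument gives $\bchi_3(t,z,\hat\zeta_0^*)=\@0$ iff $\bphi_{-,2}$ and $\bphi_{+,3}$ are parallel at $\hat\zeta_0$, i.e.\ iff statement $(4)$ holds.

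For $(3)\Leftrightarrow(4)$ I would transport the relation at $\zeta_0$ to $\hat\zeta_0$ via the second symmetry~\eqref{e:eigenfunctionsymmetry1}, which yields $\bphi_{-,2}(t,z,\zeta_0)=\bphi_{-,2}(t,z,\hat\zeta_0)$ and $\bphi_{+,1}(t,z,\zeta_0)=-\frac{\ii E_0}{\zeta_0}\bphi_{+,3}(t,z,\hat\zeta_0)$. Substituting these into statement $(3)$ and absorbing the scalar $-\ii E_0/\zeta_0$ into the constant produces $(4)$ with $\widehat b_0=-\frac{\ii E_0}{\zeta_0}b_0$, while reading the same relations backwards recovers $(3)$ from $(4)$. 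This closes the cycle and establishes the mutual equivalence of all four statements.

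The only genuinely delicate points are bookkeeping ones: tracking the image of $\zeta_0$ under $\zeta\mapsto\zeta^*$ and $\zeta\mapsto\hat\zeta$ so that each symmetry identity is applied strictly within its stated region $D_j$, and justifying that a vanishing cross product of two eigenfunction columns yields a proportionality constant independent of $t$ rather than a $t$-dependent scalar. The latter follows immediately, since both columns solve the same first-order linear system and are nowhere zero, so their ratio is forced to be constant in $t$.
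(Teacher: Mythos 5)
Your proof is correct, and it follows essentially the same route as the paper: the paper states this proposition without an explicit proof, deferring to the analogous result for the focusing Manakov system in~\cite{kbk2015}, where precisely your argument is used — the first-symmetry cross-product representations of $\bchi_2$ and $\bchi_3$ reduce statements (1) and (2) to linear dependence of the indicated Jost columns, and the second symmetry~\eqref{e:eigenfunctionsymmetry1} transports the proportionality between $\zeta_0$ and $\hat\zeta_0$ with $\widehat b_0=-\ii E_0 b_0/\zeta_0$. Your bookkeeping of the regions ($\zeta_0^*\in D_2$, $\hat\zeta_0\in D_4$, $\hat\zeta_0^*\in D_3$) and the observation that nowhere-vanishing solutions of the same first-order linear system force a $t$-independent proportionality constant are exactly the points that make the argument airtight.
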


\begin{proposition}
For every eigenvalue $\zeta_0\in D_1$ the following statements are equivalent:
\begin{enumerate}
\item
$\bchi_1(t,z,\zeta_0) = \@0$.
\item
$\bchi_4(t,z,\hat\zeta_0) = \@0$.
\item
There exists a constant $\overline b_0\in\Complex$ such that
$\overline b_0\,\bphi_{+,2}(t,z,\zeta_0^*) = \bphi_{-,1}(t,z,\zeta_0^*)$.
\item
These exists a constant $\widecheck b_0\in\Complex$ such that
$\widecheck b_0\,
\bphi_{+,2}(t,z,\hat\zeta_0^*)=\bphi_{-,3}(t,z,\hat\zeta_0^*)$.
\end{enumerate}
\end{proposition}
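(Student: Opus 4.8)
The plan is to prove the four statements pairwise equivalent by establishing three links, $(1)\Leftrightarrow(2)$, $(1)\Leftrightarrow(3)$, and $(2)\Leftrightarrow(4)$, which together close the loop $(3)\Leftrightarrow(1)\Leftrightarrow(2)\Leftrightarrow(4)$. The whole argument runs parallel to the one for the preceding proposition, with the auxiliary eigenfunctions $\bchi_1,\bchi_4$ and the scattering coefficient $a_{1,1}$ now playing the roles that $\bchi_2,\bchi_3$ and $b_{2,2}$ played there. Throughout I would use that $\zeta_0\in D_1$ forces $\zeta_0^*\in D_2$, $\hat\zeta_0\in D_4$, and $\hat\zeta_0^*\in D_3$, so that each eigenfunction below is evaluated inside its region of analyticity.

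For $(1)\Leftrightarrow(2)$ I would simply invoke the second symmetry \eqref{e:eigenfunctionsymmetry2}, $\bchi_1(t,z,\zeta)=-\tfrac{\ii E_0}{\zeta}\bchi_4(t,z,\hat\zeta)$ valid on $D_1$. Evaluating at $\zeta=\zeta_0$ and noting that the prefactor $-\ii E_0/\zeta_0$ is nonzero yields $\bchi_1(t,z,\zeta_0)=\@0\Leftrightarrow\bchi_4(t,z,\hat\zeta_0)=\@0$ at once. For $(1)\Leftrightarrow(3)$ I would use the first-symmetry relation for $\bchi_1$ from Section~\ref{s:first-symmetry}, namely $\bchi_1(t,z,\zeta^*)^*=\e^{\ii kt}[\bphi_{-,1}\times\bphi_{+,2}](t,z,\zeta)$ for $\zeta\in D_2$. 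Taking $\zeta=\zeta_0^*$ (so $\zeta^*=\zeta_0$) gives $\bchi_1(t,z,\zeta_0)^*=\e^{\ii kt}[\bphi_{-,1}\times\bphi_{+,2}](t,z,\zeta_0^*)$, hence $\bchi_1(t,z,\zeta_0)=\@0$ for all $t$ precisely when the cross product vanishes identically in $t$.

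The crux is then to turn a vanishing cross product into statement $(3)$. Since a cross product of two vectors in $\Complex^3$ vanishes iff they are linearly dependent, and $\bphi_{+,2}(t,z,\zeta_0^*)$ is a Jost column whose $t\to+\infty$ limit is a unit-norm column of $\Y_+$ (so, being a nonzero solution of the linear system $\bphi_t=\X\bphi$, it is nowhere zero), I may write $\bphi_{-,1}(t,z,\zeta_0^*)=c(t)\,\bphi_{+,2}(t,z,\zeta_0^*)$ with a smooth scalar $c(t)$. Because both columns solve the same equation $\bphi_t=\X(t,z,\zeta_0^*)\bphi$, differentiating this relation and cancelling the common term forces $c_t\,\bphi_{+,2}=\@0$, whence $c$ is the $t$-independent constant $\overline b_0$ of statement $(3)$; the converse is the same computation read backwards. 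For $(2)\Leftrightarrow(4)$ I would repeat this verbatim with the first-symmetry relation for $\bchi_4$, $\bchi_4(t,z,\zeta^*)^*=\e^{\ii kt}[\bphi_{+,2}\times\bphi_{-,3}](t,z,\zeta)$ for $\zeta\in D_3$, evaluated at $\zeta=\hat\zeta_0^*$ (so $\zeta^*=\hat\zeta_0$), which produces the constant $\widecheck b_0$ of statement $(4)$. As a cross-check one can also pass from $(3)$ to $(4)$ directly, substituting the column relations of \eqref{e:eigenfunctionsymmetry1}, $\bphi_{-,3}(t,z,\hat\zeta_0^*)=-\tfrac{\ii E_0}{\hat\zeta_0^*}\bphi_{-,1}(t,z,\zeta_0^*)$ and $\bphi_{+,2}(t,z,\zeta_0^*)=\bphi_{+,2}(t,z,\hat\zeta_0^*)$, which shows $\widecheck b_0$ is a nonzero multiple of $\overline b_0$.

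I expect the only genuine obstacle to be this constancy-in-$t$ step: promoting the pointwise proportionality of the two columns to a bona fide constant. That is exactly where one must use that both vectors solve the common first-order linear system and are nowhere vanishing (the latter by uniqueness of solutions together with Proposition~\ref{thm:Jostfundamental}). Everything else is bookkeeping with the conjugation–inversion symmetries and with verifying that $\zeta_0^*,\hat\zeta_0,\hat\zeta_0^*$ lie in $D_2,D_4,D_3$ respectively, so that each relation is applied where its eigenfunctions are analytic.
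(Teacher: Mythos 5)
Your proof is correct and takes essentially the approach the paper intends: the paper states this proposition without an explicit proof, deferring to the first- and second-symmetry cross-product relations of Section~\ref{s:first-symmetry} exactly as in the Manakov case~\cite{kbk2015}, and those are precisely the relations you chain together (with the correct region bookkeeping $\zeta_0^*\in D_2$, $\hat\zeta_0\in D_4$, $\hat\zeta_0^*\in D_3$). Your one added ingredient — upgrading the pointwise proportionality $\bphi_{-,1}=c(t)\,\bphi_{+,2}$ to a genuine constant by differentiating along the shared first-order system and using that $\bphi_{+,2}$ is nowhere zero — is the standard filling-in of the detail the paper omits, and it is sound.
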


The above two propositions together with Definition~\ref{def:eigenvalue}, yield the following result:
\begin{proposition}
Let $\zeta_0\in D_1$ be a discrete eigenvalue of the scattering problem, that is, $a_{1,1}(\zeta_0)b_{2,2}(\zeta_0)=0$.
Then the following statements are true.
\begin{enumerate}
\item
If $w_n$ is an eigenvalue of the first kind for $n = 1,\dots,N_\I$,
there exist complex constants $c_n$,
$\widehat c_n$, $\widecheck c_n$ and $\overline c_n$ such that
\begin{equation}
\label{e:norming1}
\begin{aligned}
\bchi_2(t,z,w_n^*) & = \overline c_n\, \bphi_{-,1}(t,z,w_n^*)\,,\qquad&
\bphi_{+,1}(t,z,w_n) & = c_n\,\bchi_1(t,z,w_n)\,,\\
\bphi_{+,3}(t,z,\hat w_n) & = \widehat c_n\,\bchi_4(t,z,\hat w_n)\,,\qquad&
\bchi_3(t,z,\hat w_n^*) & = \widecheck c_n\,\bphi_{-,3}(t,z,\hat\zeta_0^*)\,.
\end{aligned}
\end{equation}
\item
If $z_n$ is an eigenvalue of the second kind for $n = 1,\dots,N_\II$,
there exist complex constants $d_n$,
$\widehat d_n$, $\widecheck d_n$ and $\overline d_n$ such that
\begin{equation}
\label{e:norming2}
\begin{aligned}
\bphi_{+,2}(t,z,z_n^*) & = \overline d_n\,\bchi_2(t,z,z_n^*)\,,\qquad&
\bchi_1(t,z,z_n) & = d_n\, \bphi_{-,2}(t,z,z_n)\,,\\
\bchi_4(t,z,\hat z_n) & = \widehat d_n\, \bphi_{-,2}(t,z,\hat z_n)\,,\qquad&
\bphi_{+,2}(t,z,\hat z_n^*) & = \widecheck d_n\,\bchi_3(t,z,\hat z_n^*)\,.
\end{aligned}
\end{equation}
\item
If $\zeta_n$ is an eigenvalue of the third kind for $n = 1,\dots,N_\III$,
then there exist complex constants
$f_n$, $\widehat f_n$, $\widecheck f_n$ and $\overline f_n$ such that
\begin{equation}
\label{e:norming3}
\begin{aligned}
\bphi_{+,2}(t,z,\zeta_n^*) & = \overline f_n\,\bphi_{-,1}(t,z,\zeta_n^*)\,,\qquad&
\bphi_{+,1}(t,z,\zeta_n) & = f_n\,\bphi_{-,2}(t,z,\zeta_n)\,,\\
\bphi_{+,3}(t,z,\hat\zeta_n) & = \widehat f_n\,\bphi_{-,2}(t,z,\hat\zeta_n)\,,\qquad&
\bphi_{+,2}(t,z,\hat\zeta_n^*) & = \widecheck f_n\,\bphi_{-,3}(t,z,\hat\zeta_n^*)\,.
\end{aligned}
\end{equation}
\end{enumerate}
\end{proposition}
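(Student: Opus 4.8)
The plan is to fix a single eigenvalue $\zeta_0\in D_1$ and exploit the quartet structure described in the Remark following Definition~\ref{def:eigenvalue}: each of the four proportionalities in a given item lives at one point of the quartet $\{\zeta_0,\zeta_0^*,\hat\zeta_0,\hat\zeta_0^*\}$, so it suffices to establish one ``base'' relation at $\zeta_0$ (and one at $\zeta_0^*$) and then transport it around the quartet by the symmetries. The engine for the base relations is the determinant identity~\eqref{e:det-Phi}: at a zero of a scattering coefficient the matrix $\bPhi^{(j)}$ is singular, so its three columns are linearly dependent. Since these columns all solve the same equation~\eqref{e:laxpair1} and $\det\bPhi^{(j)}\propto\e^{-\ii kt}$ vanishes for \emph{all} $t$, Abel's identity forces the dependence to hold with coefficients independent of $t$. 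The crux is then to decide \emph{which} column drops out of this dependence, and this is exactly what distinguishes the three kinds of Definition~\ref{def:eigenvalue}.

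For an eigenvalue $w_n$ of the first kind ($a_{1,1}(w_n)=0$, $b_{2,2}(w_n)\neq0$) I would write the constant null combination $\alpha\,\bphi_{+,1}+\beta\,\bphi_{-,2}+\gamma\,\bchi_1\equiv\@0$ of the columns of $\bPhi^{(1)}$ and read off the coefficients by letting $t\to\pm\infty$. Using the Jost asymptotics, the scattering relation~\eqref{e:Sdef}, and the limits~\eqref{e:masym}, the eigendirections $\Y_{\pm,j}$ carry distinct exponentials, so each exponential's coefficient vanishes separately; the component along the fastest-growing exponential isolates the middle column $\bphi_{-,2}$ with weight proportional to $b_{2,2}(w_n)$, and since $b_{2,2}(w_n)\neq0$ this forces $\beta=0$. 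What survives is $\bphi_{+,1}(t,z,w_n)=c_n\,\bchi_1(t,z,w_n)$, the second relation of~\eqref{e:norming1}; the same weight can equivalently be extracted from the decomposition~\eqref{e:phidecompose} whose denominator is $a_{1,1}$, exhibiting $c_n$ as a ratio of scattering coefficients. For an eigenvalue $z_n$ of the second kind the roles reverse: now $a_{1,1}(z_n)\neq0$ eliminates the $\bphi_{+,1}$ direction through its $\Y_{-,1}$ component as $t\to-\infty$, leaving $\bchi_1(t,z,z_n)=d_n\,\bphi_{-,2}(t,z,z_n)$, the second relation of~\eqref{e:norming2}.

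With one base relation in hand I would generate the rest of the quartet purely by symmetry. Applying the second symmetry~\eqref{e:eigenfunctionsymmetry1}--\eqref{e:eigenfunctionsymmetry2} ($\zeta\mapsto\hat\zeta$) to the relation at $\zeta_0$ produces the one at $\hat\zeta_0$; the common factors $-\ii E_0/\zeta$ on the two sides cancel, which is why $\widehat c_n=c_n$ and gives the third relations of~\eqref{e:norming1}--\eqref{e:norming2}. Applying the first symmetry~\eqref{e:phisymmetry1} together with $a_{1,1}(\zeta)=b_{1,1}(\zeta^*)^*$ from~\eqref{e:asymmetry1} (so that $a_{1,1}(w_n)=0\Leftrightarrow b_{1,1}(w_n^*)=0$) yields the conjugate base relation at $\zeta_0^*$, e.g.\ $\bchi_2(t,z,w_n^*)=\overline c_n\,\bphi_{-,1}(t,z,w_n^*)$; a second application of~\eqref{e:eigenfunctionsymmetry1}--\eqref{e:eigenfunctionsymmetry2} then reaches $\hat\zeta_0^*$ and completes~\eqref{e:norming1}, and likewise~\eqref{e:norming2}. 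The eigenvalue $\zeta_n$ of the third kind is the degenerate case $a_{1,1}(\zeta_n)=b_{2,2}(\zeta_n)=0$, where neither partner coefficient is available to discard a column; here I would instead invoke the two preceding Propositions, both of whose hypotheses are now met, so that their items (3)--(4) deliver the purely Jost proportionalities of~\eqref{e:norming3} (with $f_n=1/b_0$, and so on), the auxiliary functions $\bchi_j$ having degenerated to zero at the quartet points.

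I expect the main obstacle to be precisely this selection step: the decomposition formulas~\eqref{e:phidecompose} are identities on $\Sigma^*$ only, while the eigenvalues sit in the open regions $D_j$, so one cannot simply substitute. The clean route around this is the constant-coefficient dependence argument above, in which the nonvanishing partner coefficient ($b_{2,2}$ for the first kind, $a_{1,1}$ for the second) is what kills the unwanted column; keeping track of the $\gamma(\zeta)$ factors introduced through the adjoint/auxiliary construction is the fiddly-but-routine bookkeeping. The genuinely different case is the third kind, where the coincidence of zeros removes this lever, and one must fall back on the equivalences furnished by the two preceding Propositions.
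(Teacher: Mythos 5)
Your proposal follows the same route the paper intends: the paper presents this proposition as a corollary of the two preceding equivalence propositions and Definition~\ref{def:eigenvalue}, with the underlying mechanics inherited from the Manakov treatment in~\cite{kbk2015} --- namely, vanishing of the determinants~\eqref{e:det-Phi} yields a $t$-independent linear dependence among the columns of $\bPhi^{(1)}$, the distinct exponentials (note $\Im\lambda>0$ and $\Im k>\Im\lambda$ in $D_1$) kill the column weighted by the nonvanishing scattering coefficient, and the symmetries transport the base relation around the quartet. Your kind-I and kind-II selection arguments are correct as far as they go, and your transport step (including $\widehat c_n=c_n$ and the first-symmetry derivation of the conjugate relations) is exactly how the norming-constant symmetries~\eqref{e:normingsymmetry} are obtained afterwards.

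Two degenerate cases are left open, however, and they are precisely where the two equivalence propositions must do actual work. (i) First kind: after forcing $\beta=0$ you must still exclude the trivial dependence $\alpha=0$, $\gamma\neq0$, i.e.\ $\bchi_1(t,z,w_n)\equiv\@0$; in that case $\det\bPhi^{(1)}(w_n)=0$ holds automatically and gives no relation, while $\bphi_{+,1}=c_n\bchi_1$ would force $\bphi_{+,1}\equiv\@0$, which is impossible. The repair: by the second equivalence proposition, $\bchi_1(w_n)=\@0$ gives $\bphi_{-,1}(t,z,w_n^*)=\overline b_0\,\bphi_{+,2}(t,z,w_n^*)$ with $\overline b_0\neq0$, and comparing growth as $t\to-\infty$ at $w_n^*\in D_2$ (the left side decays while $\bphi_{+,2}\sim a_{2,2}\Y_{-,2}\,\e^{-\ii kt}$ grows) forces $a_{2,2}(w_n^*)=b_{2,2}(w_n)^*=0$, contradicting the first-kind hypothesis. (ii) Third kind: ``both of whose hypotheses are now met'' is an assertion, not a deduction --- $a_{1,1}(\zeta_n)=b_{2,2}(\zeta_n)=0$ does not by itself give $\bchi_1(\zeta_n)=\@0$ or $\bchi_2(\zeta_n^*)=\@0$, and these vanishings are exactly what the equivalence propositions need as input. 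The missing step uses the cross-product identities of the first symmetry, valid in $D_1$: $\gamma a_{1,1}\,\bphi_{+,2}(t,z,\zeta^*)^*\,\e^{-\ii kt}=[\bchi_1\times\bphi_{+,1}]$ and $b_{2,2}\,\bphi_{-,1}(t,z,\zeta^*)^*\,\e^{-\ii kt}=[\bphi_{-,2}\times\bchi_1]$. At $\zeta_n$ both brackets vanish, so $\bchi_1(\zeta_n)$ is parallel to $\bphi_{-,2}(\zeta_n)$; but $\bphi_{-,2}=\bmu_{-,2}\,\e^{-\ii kt}$ with $\bmu_{-,2}\to\Y_{-,2}\neq\@0$ as $t\to-\infty$, while $\bchi_1=\m_1\e^{-\ii\lambda t}$ with $\m_1$ bounded and $|\e^{\ii(k-\lambda)t}|\to0$ there, so a nonzero $\bchi_1(\zeta_n)$ is impossible; hence $\bchi_1(\zeta_n)=\@0$, and the mirrored argument at $\zeta_n^*\in D_2$ (where $b_{1,1}(\zeta_n^*)=a_{2,2}(\zeta_n^*)=0$) gives $\bchi_2(\zeta_n^*)=\@0$. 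Only then do the two equivalence propositions deliver all four relations of~\eqref{e:norming3}. Both repairs are local and use only tools you already invoke, but as written the kind-III item, and the nondegeneracy needed in kind I, are genuine gaps.
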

The above twelve constants $c_n$, $\widehat c_n$, $\widecheck c_n$ and $\overline c_n$ (and similar quantities for $d_n$ and $f_n$)
are the norming constants.

We further define the modified norming constants that will be used directly in the inverse problem.
\begin{definition}
\label{def:norming}
The modified norming constants are given by
\begin{equation}
\label{e:mnormingdef}
\begin{aligned}
C_n(z) & \coloneq \frac{c_n \e^{-2\ii \lambda(w_n)t}}{a'_{1,1}(z,w_n)}\,,&
\overline C_n(z) & \coloneq \frac{\overline c_n \e^{2\ii \lambda(w_n^*)t}}{b'_{1,1}(z,w_n^*)}\,,&
\widehat C_n(z) & \coloneq \frac{\widehat c_n \e^{-2\ii \lambda(w_n)t}}{a'_{3,3}(z,\widehat w_n)}\,,&
\widecheck C_n(z) & \coloneq \frac{\widecheck c_n \e^{2\ii \lambda (w_n^*)t}}{b'_{3,3}(z,\hat w_n^*)}\,,\\
D_n(z) & \coloneq \frac{d_n \e^{-\ii  \hat z_n t}}{b'_{2,2}(z,z_n)}\,,&
\overline D_n(z) & \coloneq \frac{\overline d_n \e^{\ii  \hat z_n^* t}}{a'_{2,2}(z,z_n^*)}\,,&
\widehat D_n(z) & \coloneq \frac{\widehat d_n \e^{-\ii  \hat z_n t}}{b'_{2,2}(z,\hat z_n)}\,,&
\widecheck D_n(z) & \coloneq \frac{\widecheck d_n \e^{\ii  \hat z_n^* t}}{a'_{2,2}(z,\hat z_n^*)}\,,\\
F_n(z) & \coloneq \frac{f_n \e^{-\ii  \zeta_n t}}{a'_{1,1}(z,\zeta_n)}\,,&
\overline F_n(z) & \coloneq \frac{\overline f_n \e^{\ii  \zeta_n^* t}}{a'_{2,2}(z,\zeta_n^*)}\,,&
\widehat F_n(z) & \coloneq \frac{\widehat f_n \e^{-\ii  \zeta_n t}}{a'_{3,3}(z,\hat \zeta_n)}\,,&
\widecheck F_n(z) & \coloneq \frac{\widecheck f_n \e^{\ii  \zeta_n^* t}}{a'_{2,2}(z,\hat \zeta_n^*)}\,,
\end{aligned}
\end{equation}
where the primes denote partial differentiation with respect to $\zeta$,
$n=1,\dots,N_\I$, for the four quantities in the first row,
$n=1,\dots, N_\II$, for the second row,
and $n=1,\dots, N_\III$, for the last row.
\end{definition}

\subsubsection{Symmetries of the norming constants.}

Let $\{w_n\}_{n=1}^{N_\I}$ be the set of all the eigenvalues of the first kind.
With the definitions of the modified eigenfunctions~\eqref{e:mudef} and~\eqref{e:mdef},
the definitions of the first kind of the norming constants~\eqref{e:norming1} reduce to
\bse
\label{e:normingdef2}
\begin{equation}
\begin{aligned}
\m_2(t,z,w_n^*)
 & = \overline c_n\, \bmu_{-,1}(t,z,w_n^*)\e^{2\ii \lambda(w_n^*) t}\,,\qquad&
\bmu_{+,1}(t,z,w_n)
 & = c_n\, \m_1(t,z,w_n)\e^{-2\ii \lambda(w_n) t}\,,\\
\bmu_{+,3}(t,z,\hat w_n)
 & = \widehat c_n\, \m_4(t,z,\hat w_n)\e^{-2\ii \lambda(w_n) t}\,,\qquad&
\m_3(t,z,\hat w_n^*)
 & = \widecheck c_n \, \bmu_{-,3}(t,z,\hat w_n^*)\e^{2\ii \lambda(w_n^*) t}\,.
\end{aligned}
\end{equation}
Let $\{z_n\}_{n=1}^{N_\II}$ be the set of all the eigenvalues of the second kind.
Using definitions~\eqref{e:mudef} and~\eqref{e:mdef}, the definitions of the second type of the norming constants~\eqref{e:norming2} reduce to
\begin{equation}
\begin{aligned}
\bmu_{+,2}(t,z,z_n^*)
 & = \overline d_n\, \m_2(t,z,z_n^*)\e^{\ii  \hat z_n^* t}\,,\qquad&
\m_1(t,z,z_n)
 & = d_n\, \bmu_{-,2}(t,z,z_n)\e^{-\ii \hat z_n t}\,,\\
\m_4(t,z,\hat z_n)
 & = \widehat d_n\, \bmu_{-,2}(t,z,\hat z_n)\e^{-\ii  \hat z_n t}\,,\qquad&
\bmu_{+,2}(t,z,\hat z_n^*)
 & = \widecheck d_n\, \m_3(t,z,\hat z_n^*)\e^{\ii  \hat z_n^* t}\,.
\end{aligned}
\end{equation}
Let $\{\zeta_n\}_{n=1}^{N_\III}$ be the set of all the eigenvalues of the third kind.
Using definitions~\eqref{e:mudef} and~\eqref{e:mdef}, the definitions of the third type of the norming constants~\eqref{e:norming3} reduce to
\begin{equation}
\begin{aligned}
\bmu_{+,2}(t,z,\zeta_n^*)
 & = \overline f_n\, \bmu_{-,1}(t,z,\zeta_n^*)\e^{\ii  \zeta_n^* t}\,,\qquad&
\bmu_{+,2}(t,z,\hat\zeta_n^*)
 & = \widecheck f_n\, \bmu_{-,3}(t,z,\hat\zeta_n^*)\e^{\ii \zeta_n^* t}\,,\\
\bmu_{+,1}(t,z,\zeta_n)
 & = f_n\, \bmu_{-,2}(t,z,\zeta_n)\e^{-\ii  \zeta_n t}\,,\qquad&
\bmu_{+,3}(t,z,\hat\zeta_n)
 & = \widehat f_n\, \bmu_{-,2}(t,z,\hat\zeta_n)\e^{-\ii  \zeta_n t}\,.
\end{aligned}
\end{equation}
\ese

We are now ready to discuss the symmetries of the norming constants.
By using the symmetries of the eigenfunctions~\eqref{e:eigenfunctionsymmetry1} and~\eqref{e:eigenfunctionsymmetry2},
we obtain the symmetries among all the norming constants:
\begin{equation}
\label{e:normingsymmetry}
\begin{aligned}
\widecheck c_n & = \overline c_n\,,\qquad&
c_n & =\widehat c_n = -\frac{\overline c_n^*}{b_{2,2}(z,w_n)}\,,\\
\overline d_n & = \frac{\ii z_n^*}{E_0}\widecheck d_n\,,\qquad&
\overline d_n^* & = -\frac{d_n}{\gamma(z_n)a_{1,1}(z,z_n)}\,,\qquad&
\overline d_n^* & = \frac{\ii E_0}{z_n}\frac{\widehat d_n}{\gamma(z_n)b_{1,1}(z,z_n^*)^*}\,,\\
\overline f_n & = \frac{\ii  \zeta_n^*}{E_0}\widecheck f_n\,,\qquad&
\overline f_n^* & = -\frac{b_{2,2}'(z,\zeta_n)}{a_{1,1}'(z,\zeta_n)}\frac{f_n}{\gamma(\zeta_n)}\,,\qquad&
\overline f_n^* & = \frac{\ii  E_0}{\zeta_n}\frac{b_{2,2}'(z,\zeta_n)}{a_{1,1}'(z,\zeta_n)}\frac{\widehat f_n}{\gamma(\zeta_n)}\,.
\end{aligned}
\end{equation}
Using the symmetries~\eqref{e:normingsymmetry} and Definition~\ref{def:norming},
we also obtain the following symmetries:
\begin{equation}
\label{e:mnormingsymmetry}
\begin{aligned}
C_n & =-\frac{\overline C_n^*}{b_{2,2}(z,w_n)} \,,\qquad&
\widehat C_n & =-\frac{E_0^2}{w_n^2}\frac{\overline C_n^*}{b_{2,2}(z,w_n)} \,,\qquad&
\widecheck C_n & = \frac{E_0^2}{(\bar w_n^*)^2}C_n\,,\\
D_n & = -\gamma(z_n)a_{1,1}(z,z_n) \overline D_n^*\,,\qquad&
\widehat D_n & = -\frac{\ii E_0}{z_n}\gamma(z_n)a_{1,1}(z,z_n)\overline D_n^*\,,\qquad&
\widecheck D_n & = -\frac{\ii E_0^3}{(z_n^*)^3}\overline D_n\,,\\
\widecheck F_n & = -\frac{\ii E_0^3}{(\zeta_n^*)^3}\overline F_n\,,\qquad&
\widehat F_n & = -\frac{\ii E_0}{\zeta_n}\gamma(\zeta_n) \overline F_n^*\,,\qquad&
F_n & = -\gamma(\zeta_n) \overline F_n^*\,.
\end{aligned}
\end{equation}

\subsection{Asymptotic behavior as \texorpdfstring{$\zeta\to\infty$}{ζ→∞} and \texorpdfstring{$\zeta\to0$}{ζ→0}}
\label{s:asympbehavior}

The asymptotic behavior of all eigenfunctions and scattering data will be used in both the propagation and the inverse problem stages of the IST.
Therefore, in this section we study the asymptotic behavior of the eigenfunctions and scattering data as $k\to\infty$.
In terms of the uniformization variable $\zeta = k+\lambda$,
this requires studying the behavior both as $\zeta\to\infty$ and as $\zeta\to0$.

Consider the following formal expansion for $\bmu_+(t,z,\zeta)$:
\begin{equation}
\bmu_+(t,z,\zeta) = \sum_{n=0}^\infty\bmu_n(t,z,\zeta)\,,
\end{equation}
where Equation~\eqref{e:mu-integral} implies
\begin{equation}
\label{e:muexpansion}
\begin{aligned}
\bmu_0(t,z,\zeta) & = \Y_+(z,\zeta)\,,\\
\bmu_{n+1}(t,z,\zeta) & = -\int_t^\infty \Y_+(z,\zeta)\e^{\ii (t-s)\bLambda(\zeta)}\Y_+(z,\zeta)^{-1}
\Delta \Q_+(s,z)\bmu_n(t,z,\zeta)\e^{-\ii (t-s)\bLambda(\zeta)}\d s\,.
\end{aligned}
\end{equation}
For all $n\ge 0$,
Equation~\eqref{e:muexpansion} provides an asymptotic expansion for the columns of $\bmu_+(t,z,\zeta)$ as
$\zeta\to\infty$ in the appropriate region of the complex $\zeta$-plane, with
\begin{equation}
\label{e:bmu-asymptotic-infty}
\begin{aligned}
[\bmu_{2n}]_{\d} &  = \O(\zeta^{-n})\,,\qquad&
[\bmu_{2n}]_{\o} & = \O(\zeta^{-(n+1)})\,,\\
[\bmu_{2n+1}]_{\d} & = \O(\zeta^{-(n+1)})\,,\qquad&
[\bmu_{2n+1}]_{\o} & = \O(\zeta^{-(n+1)})\,,
\end{aligned}
\end{equation}
where the subscripts $\d$ and $\o$ are defined in Equation~\eqref{e:dodef}.
Moreover, for all $n\ge 0$, Equation~\eqref{e:muexpansion} provides an asymptotic expansion for the columns of $\bmu_+(t,z,\zeta)$
as $\zeta\to0$ in the appropriate region of the complex $\zeta$-plane, with
\begin{gather}
\label{e:bmu-asymptotic-0}
[\bmu_{2n}]_{\d} = \O(\zeta^n)\,,\qquad
[\bmu_{2n}]_{\o} = \O(\zeta^{n-1})\,,\qquad
[\bmu_{2n+1}]_{\d} = \O(\zeta^{n})\,,\qquad
[\bmu_{2n+1}]_{\o} = \O(\zeta^{n})\,.
\end{gather}

As a result of Equation~\eqref{e:bmu-asymptotic-infty},
as $\zeta\to\infty$ in the appropriate regions of the $\zeta$-plane,
the leading order terms of the eigenfunctions $\bmu_{\pm,j}(t,z,\zeta)$ are given by
\begin{equation}
\label{e:muinfty}
\everymath{\displaystyle}
\begin{aligned}
\bmu_{\pm,1}(t,z,\zeta)
 & = \bpm 1 \\ 0 \epm
  - \frac{\ii}{\zeta} \bpm 0 \\ \E(t,z)^* \epm + \O(\zeta^{-2})\,,\\
\bmu_{\pm,2}(t,z,\zeta)
 & = \frac{1}{E_0}\bpm 0 \\ \E_\pm(z)^{\bot,*}\epm
  + \frac{\ii}{E_0\zeta}\bpm \E(t,z)^\top\E_\pm(z)^{\bot,*} \\ 0 \epm + \O(\zeta^{-2})\,,\\
\bmu_{\pm,3}(t,z,\zeta)
 & = \frac{1}{E_0}\bpm 0 \\ \E_\pm(z)^*\epm
  - \frac{\ii}{E_0\zeta}\bpm \E(t,z)^\top\E_\pm(z)^* \\ 0 \epm + \O(\zeta^{-2})\,.
\end{aligned}
\end{equation}
Similarly,
as the result of Equation~\eqref{e:bmu-asymptotic-0},
as $\zeta\to0$ in the appropriate regions of the $\zeta$-plane,
the leading order terms of $\bmu_{\pm,j}(t,z,\zeta)$ are
\begin{equation}
\begin{aligned}
\bmu_{\pm,1}(t,z,\zeta)
 & = -\frac{\ii}{\zeta}\bpm 0 \\ \E_\pm(z)^*\epm
  + \frac{1}{E_0^2} \bpm \E(t,z)^\top\E_\pm(z)^* \\ 0 \epm + \O(\zeta)\,,\\
\bmu_{\pm,2}(t,z,\zeta)
 & = \frac{1}{E_0} \bpm 0 \\ \E_\pm(z)^{\bot,*} \epm + \O(\zeta)\,,\\
\bmu_{\pm,3}(t,z,\zeta)
 & = -\frac{\ii}{\zeta}\bpm E_0 \\ 0 \epm + \frac{1}{E_0} \bpm 0 \\ \E(t,z)^* \epm + \O(\zeta)\,.
\end{aligned}
\end{equation}

Similar arguments yield that as $\zeta\to\infty$ in the appropriate regions of the $\zeta$-plane, the leading order terms of the auxiliary eigenfunctions $\m_j(t,z,\zeta)$ are
\begin{equation}
\begin{aligned}
\m_1(t,z,\zeta)
 & = \frac{1}{E_0}\bpm 0 \\ \E_+(z)^*\epm
  - \frac{\ii}{E_0\zeta}\bpm \E(t,z)^\top\E_+(z)^* \\ 0 \epm + \O(\zeta^{-2})\,,\\
\m_2(t,z,\zeta)
 & = \frac{1}{E_0} \bpm 0 \\ \E_-(z)^*\epm + \frac{1}{E_0\zeta}\bpm-\ii\E(t,z)^\top\E_-(z)^*\\ 0 \epm + \O(\zeta^{-2})\,,\\
\m_3(t,z,\zeta)
 & = \frac{1}{E_0^2}\bpm\E_+(z)^\top\E_-(z)^*\\ 0 \epm + \frac{\ii}{E_0^2\zeta}\bpm 0 \\ \E_+(z)^\top\E_-(z)^*\E(t,z)\epm + \O(\zeta^{-2})\,,\\
\m_4(t,z,\zeta)
 & = \frac{1}{E_0^2}\bpm\E_-(z)^\top\E_+(z)^* \\ 0 \epm + \frac{\ii}{E_0^2\zeta}\bpm 0 \\ \E_-(z)^\top\E_+(z)^*\E(t,z)\epm + \O(\zeta^{-2})\,.
\end{aligned}
\end{equation}
Similarly, as $\zeta\to0$ in the appropriate regions of the $\zeta$-plane,
the leading order terms of the auxiliary eigenfunctions $\m_j(t,z,\zeta)$ are
\begin{equation}
\begin{aligned}
\m_1(t,z,\zeta)
 & = \frac{1}{E_0\zeta}\bpm-\ii\E_-(z)^\top\E_+(z)^*\\0\epm + \O(1)\,,&\quad
\m_2(t,z,\zeta)
 & = \frac{1}{E_0\zeta}\bpm-\ii\E_+(z)^\top\E_-(z)^*\\ 0\epm + \O(1)\,,\\
\m_3(t,z,\zeta)
 & = -\frac{\ii}{\zeta}\bpm 0 \\ \E_-(z)^* \epm + \O(1)\,,&\quad
\m_4(t,z,\zeta)
 & = -\frac{\ii}{\zeta}\bpm 0 \\ \E_+(z)^* \epm + \O(1)\,.
\end{aligned}
\end{equation}

One can also compute the expansions of the scattering coefficients,
as $\zeta\to\infty$, in the appropriate regions of the $\zeta$-plane,
\begin{equation}
\begin{aligned}
a_{1,1}(z,\zeta) & = 1 + \O(\zeta^{-1})\,,\quad&
a_{2,2}(z,\zeta) & = \E_+^\top\E_-^*/E_0^2 + \O(\zeta^{-1})\,,\quad&
a_{3,3}(z,\zeta) & = \E_-^\top\E_+^*/E_0^2 + \O(\zeta^{-1})\,,\\
b_{1,1}(z,\zeta) & = 1 + \O(1/\zeta)\,,\quad&
b_{2,2}(z,\zeta) & = \E_-^\top\E_+^*/E_0^2 + \O(\zeta^{-1})\,,\quad&
b_{3,3}(z,\zeta) & = \E_+^\top\E_-^*/E_0^2 + \O(\zeta^{-1})\,.
\end{aligned}
\end{equation}
Similarly, as $\zeta\to0$ in the appropriate regions of the $\zeta$-plane, we find
\begin{equation}
\begin{aligned}
a_{1,1}(z,\zeta) & =\E_-^\top\E_+^*/E_0^2 + \O(\zeta)\,,\quad&
a_{2,2}(z,\zeta) & =\E_+^\top\E_-^*/E_0^2 + \O(\zeta)\,,\quad&
a_{3,3}(z,\zeta) & =1 + \O(\zeta)\,,\\
b_{1,1}(z,\zeta) & =\E_+^\top\E_-^*/E_0^2 + \O(\zeta)\,,\quad&
b_{2,2}(z,\zeta) & =\E_-^\top\E_+^*/E_0^2 + \O(\zeta)\,,\quad&
b_{3,3}(z,\zeta) & =1 + \O(\zeta)\,.
\end{aligned}
\end{equation}

\subsection{Asymptotics of the density matrix (part I)}
\label{s:boundary}

Recall that this work addresses the IBVP for CMBE with NZBG.
Similarly to the classic two-level MBE with ZBG or NZBG and similarly to the CMBE with ZBG,
it is not possible to solve the problem by using only two quantities, $q(t,0)$ and $q_-(z)$.
In fact,
additional quantities are needed,
which turn out to be the asymptotic expressions for the elements of the density matrix as $t\to-\infty$.
Therefore, we analyze these asymptotic expressions in the current section.
We will discuss $q_-(z)$ in the next section.

We first derive the asymptotics for the density matrix $\brho(t,z,\zeta)$ as $t\to\pm\infty$.
From the CMBE~\eqref{e:cmbe} and the scattering problem~\eqref{e:laxpair1}, it is easy to show that
\begin{equation}
\frac{\partial}{\partial t}\big[\bphi^{-1}(t,z,\zeta)\brho(t,z,\zeta)\bphi(t,z,\zeta)\big]=0\,,\qquad
\zeta\in\Real\,.
\end{equation}
Consequently, the quantity inside the square brackets is independent of $t$,
so naturally we define
\begin{equation}
\label{e:defrhopm}
\bvarrho_\pm(z,\zeta) \coloneq \bphi_\pm^{-1}(t,z,\zeta)\brho(t,z,\zeta)\bphi_\pm(t,z,\zeta)\,,\qquad
\zeta\in\Real\,,
\end{equation}
where $\bphi_\pm(t,z,\zeta)$ are the Jost eigenfunction.
Conversely, one can write
\begin{equation}
\label{e:rhorhopm}
\brho(t,z,\zeta) = \bphi_\pm(t,z,\zeta)\bvarrho_\pm(z,\zeta)\bphi_\pm^{-1}(t,z,\zeta)\,.
\end{equation}
Note that the density matrix $\brho(t,z,\zeta)$ itself has no limits as $t\to\pm\infty$,
but $\bvarrho_\pm(z,\zeta)$ obviously does.

\begin{remark}
In this work, the matrices $\bvarrho_\pm(z,\zeta)$ are regarded as the limiting values underlying the limiting expressions for the density matrix $\brho(t,z,\lambda)$ in the limits $t\to\pm\infty$.
Besides Equation~\eqref{e:defrhopm},
the following equivalent expression is more useful to compute $\bvarrho_\pm(z,\zeta)$ in practice:
\begin{equation}
\label{e:rhopminfty}
\bvarrho_\pm(z,\zeta)
 = \lim_{t\to\pm\infty}\e^{-\ii \bLambda(\zeta) t}\Y_\pm^{-1}(z,\zeta)\brho(t,z,\zeta)\Y_\pm(z,\zeta)\e^{\ii \bLambda(\zeta) t}\,.
\end{equation}
This expression is obtained by taking the limit of Equation~\eqref{e:defrhopm} as $t\to\pm\infty$ and using the asymptotics~\eqref{e:Jostsol}.
\end{remark}

\begin{remark}
Similarly to the quantities $\varrho_{\bg,j,j}$ in the background solution,
the entries $\{\varrho_{-,j,j}\}$ play an important role in the formulation of IST.
In fact, as we show later, these parameters appear in many spectral data,
and control the behavior of solutions.
Similarly to Equation~\eqref{e:background},
$\varrho_{-,j,j}$ are not the initial atomic level populations $\rho_{-,j,j}$ of the medium in each state,
but they determine these initial populations via a complicated combination, discussed in the rest of this section.
Consequently, even if one assumes that all the coefficients $\{\varrho_{-,j,j}\}$ are independent of the spectral parameter $k$,
the matrix $\rho_-(z,k)$ still depends on $k$ in a nontrivial way.
\end{remark}

In order to form a complete data set of initial-boundary values for CMBE~\eqref{e:cmbe}, in addition to the asymptotic initial condition $\E_\pm(z)$ for the optical pulse and the input condition $E(t,0)$ for the injected pulse,
one needs to impose a proper asymptotic initial condition for the medium,
i.e., the density matrix $\brho(t,z,\lambda)$.
One natural choice is the newly found asymptotics $\bvarrho_\pm(z,\zeta)$.
However, due to the first-order nature of the differential equations for $\brho(t,z,\lambda)$ in CMBE~\eqref{e:cmbe},
one has to determine:
(i) whether $\bvarrho_\pm(z,\zeta)$ are independent of each other and;
(ii) how to pick the proper (and mathematically correct) asymptotic condition.

To address these concerns,
we first show that $\bvarrho_-(z,\zeta)$ and $\bvarrho_+(z,\zeta)$ are dependent,
meaning that only one of them needs to be specified for the solution of the initial-boundary value problem.
Since $\bphi_+(t,z,\zeta)=\bphi_-(t,z,\zeta) \S(z,\zeta)$, by Equation~\eqref{e:rhorhopm}, one sees that
$\S(z,\zeta) \bvarrho_+(z,\zeta) = \bvarrho_-(z,\zeta) \S(z,\zeta)$,
or equivalently,
\begin{equation}
\label{e:rhopmsymmetry1}
\bvarrho_+(z,\zeta) = \S^{-1}(z,\zeta)\bvarrho_-(z,\zeta)\S(z,\zeta)\,,\qquad
\zeta\in\Real\,,
\end{equation}
which relates the asymptotic behaviors of the density matrix as $t\to\pm\infty$.
This relation allows one to obtain the asymptotic behavior $\bvarrho_+(z,\zeta)$ from $\bvarrho_-(z,\zeta)$ using the scattering matrix $S(z,\zeta)$,
which can be calculated from the optical pulse envelope $\Q(t,z)$.
Hence, only one of $\bvarrho_\pm(z,\zeta)$ can be chosen at will.
The causality of CMBE thus ensures that only $\bvarrho_-(z,\zeta)$ is needed to solve the initial-boundary value problem considered in this work.

\begin{definition}[Asymptotic conditions for the density matrix $\brho(t,z,\lambda)$]
\label{def:rho-BC}
In this work, the quantity $\bvarrho_-(z,\zeta)$ from Equation~\eqref{e:rhopminfty} is chosen as the asymptotic value associated with the behavior of the density matrix $\brho(t,z,\lambda)$ in the distant past,
i.e., in the limits $t\to-\infty$,
in order to form a complete data set of the initial-boundary problem for CMBE~\eqref{e:cmbe}.
\end{definition}

\subsubsection{Properties of $\bvarrho_\pm(z,\zeta)$}

We proceed to discuss the properties of the newly found asymptotic initial data $\bvarrho_-(z,\zeta)$.
It turns out that the properties of both $\bvarrho_\pm(z,\zeta)$ are almost identical,
so we instead discuss them together for completeness.
However, one should always remember that only $\bvarrho_-(z,\zeta)$ is required in the IST.

Looking at Equation~\eqref{e:rhopminfty},
we note that $\bvarrho_\pm(z,\zeta)$ is not necessarily Hermitian,
even though $\brho(t,z,\zeta)$ is.
Since the solution $\brho(t,z,\zeta)$ only depends on the real values of $k$,
we can consider the Schwarz reflection of the matrix $\brho(t,z,\zeta)$ and easily verify that $\brho(t,z,\zeta^*)^\dagger = \brho(t,z,\zeta)$ for all
$\zeta\in\Real$.
Similarly, it can also be verified that $\Y_\pm^\dagger(z,\zeta^*) = \bPi_0(\zeta) \Y_\pm^{-1}(z,\zeta)$ from Equation~\eqref{e:eigen} with
\begin{equation}
\bPi_0(\zeta) \coloneq \diag(\gamma(\zeta),1,\gamma(\zeta))\,.
\end{equation}
Correspondingly,
we obtain a symmetry for the asymptotic value $\bvarrho_\pm(z,\zeta)$ given by
\begin{equation}
\label{e:rhopmdagger}
\bvarrho_\pm^\dagger(z,\zeta) = \bPi_0(\zeta^*) \bvarrho_\pm(z,\zeta^*) \bPi_0^{-1}(\zeta^*)\,,\qquad
\zeta\in\Real\,.
\end{equation}

Besides the above symmetry~\eqref{e:rhopmdagger},
one more connection between $\bvarrho_\pm(z,\zeta)$ exists,
which is the relation between the data sets on the two $k$-sheets,
i.e., between $\zeta$ and $\hat\zeta$.
Applying the second symmetry of the eigenfunctions~\eqref{e:phisymmetry2},
we have
\begin{equation}
\label{e:rhopmsymmetry2}
\bvarrho_\pm(z,\zeta) = \bPi^{-1}(\zeta)\bvarrho_\pm(z,\hat\zeta)\bPi(\zeta)\,,\qquad
\zeta\in\Real\,,
\end{equation}
where the matrix $\bPi(\zeta)$ is defined in Equation~\eqref{e:Pi}.

\begin{remark}
\label{rmk:rho-BC-symmetry}
The symmetry~\eqref{e:rhopmsymmetry2} implies that one cannot choose the data $\bvarrho_-(z,\zeta)$ arbitrarily as a function of $\zeta$ on the entire real line.
Indeed, one only has the freedom to pick $\bvarrho_-(z,\zeta)$ with $k\in\Real$ on the first sheet,
or $\zeta\in(-\infty,-E_0]\cup[E_0,\infty)$,
after which the full expression of $\bvarrho_-(z,\zeta)$ is determined on the entire real $zeta$ line.
\end{remark}

Following the above discussion,
Definition~\ref{def:rho-BC} and Remark~\ref{rmk:rho-BC-symmetry},
we thus write the proper asymptotic data for the density matrix as $t\to-\infty$,
in the $k$-plane, in the element-wise form, as
\begin{equation}
\label{e:BC}
\begin{aligned}
\bvarrho_\pm(z,k_\I)
 & = (\varrho_{\pm,i,j}(z,k_\I))_{3\times3}\,,\\
\bvarrho_\pm(z,k_\II)
 & =
 \bpm
\varrho_{\pm,3,3}(z,k_\I) & -\ii E_0\varrho_{\pm,3,2}(z,k_\I)/\zeta & \varrho_{\pm,3,1}(z,k_\I)\\
\ii\zeta\varrho_{\pm,2,3}(z,k_\I)/E_0 & \varrho_{\pm,2,2}(z,k_\I) & \ii \zeta\varrho_{\pm,2,1}(z,k_\I)/E_0\\
\varrho_{\pm,1,3}(z,k_\I) & -\ii E_0\varrho_{\pm,1,2}(z,k_\I)/\zeta & \varrho_{\pm,1,1}(z,k_\I)
\epm\,,
\end{aligned}
\end{equation}
where the subscript $\I$ or $\II$ denotes evaluation on sheet I or II, respectively.
Also, Equation~\eqref{e:BC} needs to satisfy the symmetry~\eqref{e:rhopmdagger}.
One can thus also write an equivalent form of $\bvarrho_\pm(z,\zeta)$ in terms of $\zeta$ as follows:
\bse
\label{e:BC2}
\begin{equation}
\begin{aligned}
\varrho_{\pm,1,1}(z,\zeta) & = \varrho_{\pm,3,3}(z,\hat\zeta)\,,&
\varrho_{\pm,1,2}(z,\zeta) & = \frac{\ii\zeta}{E_0}\varrho_{\pm,3,2}(z,\hat\zeta)\,,&
\varrho_{\pm,1,3}(z,\zeta) & = \varrho_{\pm,3,1}(z,\hat\zeta)\,,\\
\varrho_{\pm,2,1}(z,\zeta) & = -\frac{\ii E_0}{\zeta}\varrho_{\pm,2,3}(z,\hat\zeta)\,,&
\varrho_{\pm,2,2}(z,\zeta) & = \varrho_{\pm,2,2}(z,\hat\zeta)\,,&
\varrho_{\pm,2,3}(z,\zeta) & = -\frac{\ii E_0}{\zeta}\varrho_{\pm,2,1}(z,\hat\zeta)\,,\\
\varrho_{\pm,3,1}(z,\zeta) & = \varrho_{\pm,1,3}(z,\hat\zeta)\,,&
\varrho_{\pm,3,2}(z,\zeta) & = \frac{\ii\zeta}{E_0}\varrho_{\pm,1,2}(z,\hat\zeta)\,,&
\varrho_{\pm,3,3}(z,\zeta) & = \varrho_{\pm,1,1}(z,\hat\zeta)\,,
\end{aligned}
\end{equation}
and
\begin{equation}
\begin{aligned}
\varrho_{\pm,1,1}(z,\zeta) & = \varrho_{\pm,1,1}(z,\zeta^*)^*\,,&
\varrho_{\pm,1,2}(z,\zeta) & = \frac{1}{\gamma(\zeta)}\varrho_{\pm,2,1}(z,\zeta^*)^*\,,&
\varrho_{\pm,1,3}(z,\zeta) & = \varrho_{\pm,3,1}(z,\zeta^*)^*\,,\\
\varrho_{\pm,2,1}(z,\zeta) & = \gamma(\zeta)\varrho_{\pm,1,2}(z,\zeta^*)^*\,,&
\varrho_{\pm,2,2}(z,\zeta) & = \varrho_{\pm,2,2}(z,\zeta^*)^*\,,&
\varrho_{\pm,2,3}(z,\zeta) & = \gamma(\zeta)\varrho_{\pm,3,2}(z,\zeta^*)^*\,,\\
\varrho_{\pm,3,1}(z,\zeta) & = \varrho_{\pm,1,3}(z,\zeta^*)^*\,,&
\varrho_{\pm,3,2}(z,\zeta) & = \frac{1}{\gamma(\zeta)}\varrho_{\pm,2,3}(z,\zeta^*)^*\,,&
\varrho_{\pm,3,3}(z,\zeta) & = \varrho_{\pm,3,3}^*(z,\zeta^*)\,.
\end{aligned}
\end{equation}
\ese

We have now fully determined the asymptotic behavior of the density matrix $\brho(t,z,\zeta)$ in the distant past.
We next explore the physical meaning of $\bvarrho_-(z,\zeta)$ with regard to $\brho(t,z,\zeta)$.
In particular, since we know that the diagonal entries of $\brho(t,z,\zeta)$ (namely $\brho_\dd$) denote the populations of atoms in the three states,
we would like to relate $\bvarrho_{-,\dd}$ to the initial populations of atoms in all three states.

The relation between $\bvarrho_\dd$ and $\brho_\dd$ are simple in the case of ZBG,
where one can easily show that $\bvarrho_{-,\dd}(z,k) = \lim_{t\to-\infty}\brho_\dd(t,z,k)$.
However, we will show that this relation is much more complicated for NZBG due to the presence of the nonzero background $\E_-(z)$.
Indeed this also happens in the classic two-level MBE with NZBGs~\cite{bgkl2019}.
This can be seen in the discussion below Equation~\eqref{e:background} in context of background solutions.

Next, we show that there are no simple relations between each $\rho_{i,j}$ and $\varrho_{-,i,j}$.
We write the diagonal entries of Equation~\eqref{e:rhopminfty} explicitly as
\begin{align*}
\varrho_{-,1,1}(z,\zeta) = & \frac{1}{\zeta  \lambda }\lim_{t\to-\infty}[\Re(E_1(t,z) E_2^*(t,z)\rho_{2,3}(t,z,\zeta)) -\zeta\Im(E_1(t,z)\rho_{2,1}(t,z,\zeta) + E_2(t,z) \rho_{3,1}(t,z,\zeta))]\\
    & +\frac{\zeta}{2 \lambda }\lim_{t\to-\infty}\rho_{1,1}(t,z,\zeta)
    +\frac{|E_{-1}(z)|^2}{2 \zeta  \lambda }\lim_{t\to-\infty}\rho_{2,2}(t,z,\zeta)
    +\frac{|E_{-2}(z)|^2}{2 \zeta  \lambda }\lim_{t\to-\infty}\rho_{3,3}(t,z,\zeta)\,,\\
\varrho_{-,2,2}(z,\zeta) = & -\frac{2}{E_0^2} \lim_{t\to-\infty}\Re(E_2(t,z) E_1^*(t,z) \rho_{3,2}(t,z,\zeta))
    + \frac{|E_{-,1}(z)|^2 }{E_0^2}\lim_{t\to-\infty}\rho_{3,3}(t,z,\zeta)\\
    & + \frac{|E_{-,2}(z)|^2 }{E_0^2}\lim_{t\to-\infty}\rho_{2,2}(t,z,\zeta)\,,\\
\varrho_{-,3,3}(z,\zeta) = & \frac{1}{\lambda }\lim_{t\to-\infty}[E_0^{-2}\Re(\zeta E_1(t,z) E_2^*(t,z) \rho_{2,3}(t,z,\zeta))
    -\Im(\rho_{1,2}(t,z,\zeta) E_1^*(t,z) + \rho_{1,3}(t,z,\zeta) E_2^*(t,z))]\\
    & + \frac{E_0^2 }{2 \zeta  \lambda }\lim_{t\to-\infty}\rho_{1,1}(t,z,\zeta)
    +\frac{|E_{-,1}(z)|^2 \zeta}{2 E_0^2 \lambda }\lim_{t\to-\infty}\rho_{2,2}(t,z,\zeta)
    +\frac{|E_{-,2}(z)|^2 \zeta  }{2 E_0^2 \lambda }\lim_{t\to-\infty}\rho_{3,3}(t,z,\zeta)\,.
\end{align*}
Inverting the system~\eqref{e:rhopminfty} yields $\brho(t,z,\zeta) = Y_\pm(z,\zeta) \e^{\ii \Lambda t}\bvarrho_\pm(z,\zeta) \e^{-\ii \Lambda t} Y_\pm^{-1}(z,\zeta) + o(1)$ as $t\to\pm\infty$.
Therefore, the diagonal entries $\brho_\dd$ can be expressed in the limit $t\to-\infty$ as
\begin{equation}
\label{e:rho-state}
\begin{aligned}
\rho_{1,1}(t,z,\zeta)
 = & D_{-,1}(z,\zeta) + \frac{\ii  E_0}{2 \lambda } (\varrho_{-,1,3}(z,\zeta) - \varrho_{-,3,1}(z,\zeta)) + o(1)\,,\\
\rho_{2,2}(t,z,\zeta)
 = & D_{-,2}(z,\zeta)
    +\frac{E_{-,1}^*(z) E_{-,2}^*(z)}{E_0^2}\varrho_{-,3,2}(z,\zeta)
    -\frac{\ii  E_{-,1}^*(z) E_{-,2}^*(z)}{E_0 \zeta }\varrho_{-,1,2}(z,\zeta)\\
 & +\frac{\ii  |E_{-,1}(z)|^2}{2 E_0 \lambda } (\varrho_{-,3,1}(z,\zeta)-\varrho_{-,1,3}(z,\zeta))
    +\frac{E_{-,1}(z) E_{-,2}(z) \zeta}{2 E_0^2 \lambda } \varrho_{-,2,3}(z,\zeta)\\
 & +\frac{\ii  E_{-,1}(z) E_{-,2}(z)}{2 E_0 \lambda } \varrho_{-,2,1}(z,\zeta) + o(1)\,,\\
\rho_{3,3}(t,z,\zeta)
 = & D_{-,3}(z,\zeta)
    -\frac{E_{-,1}^*(z) E_{-,2}^*(z)}{E_0^2}\varrho_{-,3,2}(z,\zeta)
    +\frac{\ii  E_{-,1}^*(z) E_{-,2}^*(z)}{E_0 \zeta }\varrho_{-,1,2}(z,\zeta)\\
 & +\frac{\ii  |E_{-,2}(z)|^2}{2 E_0 \lambda }(\varrho_{-,3,1}(z,\zeta) -\varrho_{-,1,3}(z,\zeta) )
    -\frac{E_{-,1}(z) E_{-,2}(z) \zeta}{2 E_0^2 \lambda } \varrho_{-,2,3}(z,\zeta)\\
 & -\frac{\ii  E_{-,1}(z) E_{-,2}(z)}{2 E_0 \lambda} \varrho_{-,2,1}(z,\zeta) + o(1)\,,
\end{aligned}
\end{equation}
with $D_{-,j}(z,\zeta)$ given by
\begin{equation}
\label{e:rho-D1}
\begin{aligned}
D_{-,1}(z,\zeta)
 & \coloneq \frac{\zeta}{2 \lambda }\varrho_{-,1,1}(z,\zeta)
    +\frac{E_0^2}{2 \zeta  \lambda }\varrho_{-,3,3}(z,\zeta)\,,\\
D_{-,2}(z,\zeta)
 & \coloneq \frac{|E_{-,1}(z)|^2}{2 \zeta  \lambda } \varrho_{-,1,1}(z,\zeta)
    +\frac{|E_{-,2}(z)|^2}{E_0^2} \varrho_{-,2,2}(z,\zeta)
    +\frac{|E_{-,1}(z)|^2 \zeta}{2 E_0^2 \lambda } \varrho_{-,3,3}(z,\zeta)\,,\\
D_{-,3}(z,\zeta)
 & \coloneq \frac{|E_{-,2}(z)|^2 }{2 \zeta  \lambda }\varrho_{-,1,1}(z,\zeta)
    +\frac{|E_{-,1}(z)|^2}{E_0^2} \varrho_{-,2,2}(z,\zeta)
    +\frac{|E_{-,2}(z)|^2 \zeta}{2 E_0^2 \lambda } \varrho_{-,3,3}(z,\zeta)\,.
\end{aligned}
\end{equation}

\begin{remark}[Importance of $D_{-,j}$]
\label{rmk:D-j}
We point out that if $\bvarrho(z,\zeta)$ is diagonal, then the quantities $D_{-,j}$ are precisely the asymptotic initial values of $\rho(t,z,\lambda)$ as $t\to-\infty$,
i.e., $\rho_{j,j}(t,z,\zeta) = D_{-,j}(z,\zeta)+o(1)$ as $t\to-\infty$.
This is a direct consequence of Equation~\eqref{e:rho-state}.
As we will see, a diagonal $\bvarrho(z,\zeta)$ is one of the requirements for pure soliton solutions of CMBE.
\end{remark}

For later use,
it is also useful to invert the system~\eqref{e:rho-D1}
\begin{equation}
\label{e:Drho-}
\begin{aligned}
\varrho_{-,1,1}(z,\zeta) & = \frac{\zeta }{2 k}D_{-,1}
    +\frac{\hat\zeta |E_{-,1}|^2}{2  k (| E_{-,1}| ^2-| E_{-,2}| ^2)}D_{-,2}
    -\frac{\hat\zeta |E_{-,2}| ^2}{2 k (| E_{-,1}| ^2-| E_{-,2}| ^2)}D_{-,3}\\
\varrho_{-,2,2}(z,\zeta) & = -\frac{| E_{-,2}| ^2}{| E_{-,1}| ^2-| E_{-,2}| ^2}D_{-,2}
    +\frac{| E_{-,1}| ^2}{| E_{-,1}| ^2-| E_{-,2}| ^2}D_{-,3}\,,\\
\varrho_{-,3,3}(z,\zeta) & = \frac{ \hat\zeta}{2  k}D_{-,1}
    +\frac{\zeta  | E_{-,1}| ^2}{2 k (| E_{-,1}| ^2-| E_{-,2}| ^2)}D_{-,2}
    -\frac{\zeta  | E_{-,2}| ^2}{2 k (| E_{-,1}| ^2-| E_{-,2}| ^2)}D_{-,3}\,.
\end{aligned}
\end{equation}
Note that if $|E_{-,1}| = |E_{-,2}|$, the system~\eqref{e:rho-D1} is singular and cannot be inverted.

\section{IST: Propagation}
\label{s:propagation}

In this section,
we compute the propagation of the asymptotic initial data,
norming constants and reflection coefficients.
Recall that in CMBE the propagation variable is the spatial variable $z$ instead of $t$, the usual evolution variable in equations such as KdV, nonlinear Schr\"{o}dinger equation and the Manakov systems.
Here we discuss the $z$ dependence of various quantities appearing in the IST.

\subsection{Propagaton of the limiting data in the distant past and future}
\label{s:qpmevolution}

First, let us discuss the dependence on $z$ of the asymptotic values of the background electric field envelope $q_\pm(z)$.
We assume that the limit as $t\to\pm\infty$ and the partial derivative $\partial/\partial_z$ commute.
Then the second equation in CMBE~\eqref{e:cmbe} yields
\begin{equation}
\label{e:dQpmdz}
\frac{\partial}{\partial z}\Q_\pm(z) = -\frac{1}{2}\lim_{t\to\pm\infty}\int_{-\infty}^\infty\big[\J,\brho(t,z,k)\big]g(k)\d k\,.
\end{equation}
Also, Equation~\eqref{e:rhopminfty} can be written equivalently as
\begin{equation}
\label{e:rhoinfty}
\brho(t,z,\zeta)
 = \Y_\pm(z) \e^{\ii \bLambda(\zeta) t} \bvarrho_\pm(z,\zeta)\e^{-\ii \bLambda(\zeta) t}\Y_\pm(z)^{-1} + o(1)\,,\qquad
t\to\pm\infty\,.
\end{equation}

We would like to combine Equations~\eqref{e:rhoinfty} and~\eqref{e:dQpmdz} to derive a propagation equation for the quantity $\Q_\pm(z)$.
We outline the calculations below without showing all the details. The exact calculations are cumbersome and similar to those in the classic two-level system~\cite{bgkl2019}.
\begin{enumerate}
\item
We decompose the matrix $\bvarrho_\pm(z,\zeta)$ into diagonal and off-diagonal parts:
$\bvarrho_\pm = \bvarrho_{\pm,\dd}+\bvarrho_{\pm,\o}+\bvarrho_{\pm,\rdo}$, with subscripts $\dd$, $\o$ and $\rdo$, defined in Equation~\eqref{e:dodef}.
\item
The integrand $[\J,\brho(t,z,k)]$ in Equation~\eqref{e:dQpmdz}, combined with the asymptotics~\eqref{e:rhoinfty} and the decomposition in the previous step, simplifies to become the equation
(with temporary omission of $t$, $z$ and $\zeta$ dependence in all quantities)
\begin{equation*}
\begin{aligned}
\big[\J,\brho\big]
 = \big[\J,\Y_\pm\bvarrho_{\pm,\dd}\Y_\pm^{-1}\big]
 + \big[\J,\Y_\pm\e^{\ii \bLambda t}\bvarrho_{\pm,\o}\e^{-\ii \bLambda t}\Y_\pm^{-1}\big]
 + \big[\J,\Y_\pm\e^{\ii \bLambda t}\bvarrho_{\pm,\rdo}\e^{-\ii \bLambda t}\Y_\pm^{-1}\big] + o(1)\,.
\end{aligned}
\end{equation*}
The last two terms before the error term in this equation contain oscillatory exponentials,
which cancel out as $t\to\pm\infty$ inside the integral in Equation~\eqref{e:dQpmdz} by the Riemann-Lebesgue lemma.
Hence, one only needs to consider the very first term inside the integral in the limit.
\item
Using the definition of $\Y_\pm(z,\zeta)$ from Equation~\eqref{e:eigen},
the effective part of $[\J,\brho(t,z,k)]$ from the previous step (i.e., the part that survives after integration and limits) becomes, as $t\to\pm\infty$,
\begin{equation*}
[\J,\brho(t,z,\zeta)]_\mathrm{eff}
 = \big[\J,\Y_\pm(z,\zeta)\bvarrho_{\pm,\dd}(z,\zeta)\Y_\pm^{-1}(z,\zeta)\big]
 = - \frac{\ii}{\lambda}\big(\varrho_{\pm,1,1}(z,\zeta) - \varrho_{\pm,3,3}(z,\zeta)\big)\J \Q_\pm(z)\,.
\end{equation*}
\item
The propagation equation~\eqref{e:dQpmdz} for the asymptotic conditions $\Q_\pm(z)$ becomes
\begin{equation}
\label{e:dQpmdz2}
\partial_z\Q_\pm(z)
 = \frac{\ii}{4}\int_{-\infty}^\infty \big(\varrho_{\pm,1,1}(z,k) - \varrho_{\pm,3,3}(z,k)\big)g(k)\frac{\d k}{\lambda(k)}\, \big[\J,\Q_\pm(z)\big]\,.
\end{equation}

\end{enumerate}

\begin{remark}
With known asymptotic data $\bvarrho_-(z,\zeta)$,
$\Q_-(z)$ can be obtained from Equation~\eqref{e:dQpmdz2}.
The quantity $\Q_+(z)$ turns out to be unnecessary in the formulation of IST, which is consistent with causality.
Of course, with $\bvarrho_-(z,\zeta)$ known, $\bvarrho_+(z,\zeta)$ and consequently $\Q_+(z)$ can be found using Equations~\eqref{e:rhopmsymmetry1} and~\eqref{e:dQpmdz2} as well.
\end{remark}

Before solving for $\Q_\pm(z)$ from Equation~\eqref{e:dQpmdz2},
we must discuss one potential issue:
on which $k$-sheet should we compute the integral in Equation~\eqref{e:dQpmdz2}?
Since $\lambda$ takes different values on each sheet,
it seems that the integral~\eqref{e:dQpmdz2} is ambiguous and the result depends on the sheet choice.
However, recall the definition~\eqref{e:BC} that $\bvarrho_\pm(z,\zeta)$ also take different values on each sheet as seen from Equation~\eqref{e:BC2}.
In particular, one needs to interchange the values for $\varrho_{\pm,1,1}(z,\zeta)$ and $\varrho_{\pm,3,3}(z,\zeta)$ between the sheets.
Because both quantities, $\lambda(k)$ and $\varrho_{\pm,1,1} - \varrho_{\pm,3,3}$, change signs between the two sheets,
the integral~\eqref{e:dQpmdz2} is uniquely defined and can be evaluated on either sheet.

It is convenient to define
\begin{equation}
\label{e:wpm}
w_\pm(z)
 \coloneq \frac{1}{4}\int\left(\varrho_{\pm,1,1}(z,k) - \varrho_{\pm,3,3}(z,k)\right)g(k)\frac{\d k}{\lambda(k)}\in\Real\,.
\end{equation}
This quantity simplifies the propagation equation~\eqref{e:dQpmdz2} to become
\begin{equation}
\partial_z\Q_\pm(z) = \ii w_\pm(z)\big[\J,\Q_\pm(z)\big]\,.
\end{equation}
By solving this ODE,
we obtain the $z$-dependence of the background optical pulse
\begin{equation}
\label{e:QpmQ0}
\Q_\pm(z) = \e^{\ii  W_\pm(z) \J}\Q_\pm(0)\e^{-\ii W_\pm(z) \J}\,,\qquad W_\pm(z) \coloneq \int_0^z w_\pm(z')\d z'\in\Real\,.
\end{equation}
It is evident that $W_\pm(z)$ is independent of $k$ from Equations~\eqref{e:wpm} and~\eqref{e:QpmQ0}.
Recall that $\|\E_\pm(z)\| = \|\E_\pm(0)\| = E_0$ for $z>0$ by our assumption before Lemma~\ref{thm:CMBE-U-invariance}.
We can therefore write Equation~\eqref{e:QpmQ0} explicitly, element-wise, as
\begin{equation}
\nonumber
E_{\pm,1}(z) = \e^{2\ii W_\pm(z)}E_{\pm,1}(0)\,,\qquad
E_{\pm,2}(z) = \e^{2\ii W_\pm(z)}E_{\pm,2}(0)\,.
\end{equation}
Equation~\eqref{e:bc-q} further simplifies the $E_{-,j}(z)$ to become
\begin{equation}
\label{e:E-j}
E_{-,1}(z) = E_0\e^{2\ii W_\pm(z)}\cos\alpha\,,\qquad
E_{-,2}(z) = E_0\e^{2\ii W_\pm(z)}\sin\alpha\,.
\end{equation}


\subsection{Asymptotics for the density matrix (part II)}

Before continuing discussing the propagation of the physical quantities,
we make a detour and simplify the asymptotics for the density matrix using the explicit expressions~\eqref{e:E-j}.

Combining equations~\eqref{e:rho-D1} and~\eqref{e:E-j}, we obtain
\begin{equation}
\label{e:rho-D}
\begin{aligned}
D_{-,1}(z,\zeta)
 & \coloneq \frac{\zeta}{2 \lambda }\varrho_{-,1,1}(z,\zeta)
    +\frac{E_0^2}{2 \zeta  \lambda }\varrho_{-,3,3}(z,\zeta)\,,\\
D_{-,2}(z,\zeta)
 & \coloneq \frac{E_0^2\cos^2\alpha}{2 \zeta  \lambda } \varrho_{-,1,1}(z,\zeta)
    + \varrho_{-,2,2}(z,\zeta)\sin^2\alpha
    + \frac{\zeta}{2 \lambda} \varrho_{-,3,3}(z,\zeta)\cos^2\alpha\,,\\
D_{-,3}(z,\zeta)
 & \coloneq \frac{E_0^2\sin^2\alpha}{2 \zeta  \lambda }\varrho_{-,1,1}(z,\zeta)
    + \varrho_{-,2,2}(z,\zeta)\cos^2\alpha
    + \frac{\zeta}{2 \lambda} \varrho_{-,3,3}(z,\zeta)\sin^2\alpha\,.
\end{aligned}
\end{equation}
As mentioned in Remark~\ref{rmk:D-j},
these three quantities play crucial roles in pure soliton solutions,
as they determine the initial populations of atoms in the three states of the medium.

\subsection{Propagation of reflection coefficients and norming constants}
\label{s:reflectionnorming}

The next step is to calculate the propagation of all scattering data,
including the reflection coefficients on the continuous spectrum and norming constants for discrete eigenvalues.
To do so, we need to use the $z$-dependence of the eigenfunctions $\bphi_\pm(t,z,\zeta)$ whose $t$-dependence is discussed in the direct problem.

Recall that the eigenfunctions are \textit{not} simultaneous solutions for the Lax pair, so they only solve the scattering problem.
As such, they contain nontrivial $z$-dependence, which implies that all the scattering data contain nontrivial $z$-dependence as well.
Hence, we need to address the $z$-dependence of the eigenfunctions before moving on to the scattering data. We do so by employing simultaneous solutions of the Lax pair.

\subsubsection{Simultaneous solutions of the Lax pair and auxiliary matrix.}

Because the asymptotic behavior of the eigenfunctions $\bphi_\pm(t,z,\zeta)$ as $t\to\pm\infty$ is fixed,
in general,
they are not solutions of Equation~\eqref{e:laxpair2}.
However, because both $\bphi_+(t,z,\zeta)$ and $\bphi_-(t,z,\zeta)$ are fundamental matrix solutions of the scattering problem,
every other solution $\bPhi(t,z,\zeta)$ of the scattering problem~\eqref{e:laxpair1} can be written as
\begin{equation}
\nonumber
\bPhi(t,z,\zeta)
 = \bphi_+(t,z,\zeta) \C_+(z,\zeta)
 = \bphi_-(t,z,\zeta) \C_-(z,\zeta)\,,\qquad
\zeta\in\Sigma\,,
\end{equation}
where $\C_\pm(z,\zeta)$ are $3\times3$ matrices independent of $t$.

We assume that $\bPhi(t,z,\zeta)$ is a simultaneous solution of both equations in the Lax pair~\eqref{e:laxpair},
so $\bPhi(t,z,\zeta)$ satisfies Equation~\eqref{e:laxpair2},
i.e., $\Phi_z(t,z,\zeta) = \V(t,z,\zeta)\Phi(t,z,\zeta)$ for $\zeta\in\Sigma$ as well,
where the matrix $\V(t,z,\zeta)$ is given in Equation~\eqref{e:laxpair2}.
Therefore, we find
\begin{equation}
\nonumber
\partial_z \C_\pm(z,\zeta) = \frac{\ii}{2} \R_\pm(z,\zeta) \C_\pm(z,\zeta)\,,\qquad
\zeta\in\Sigma\,,
\end{equation}
where the auxiliary matrices $\R_\pm(z,\zeta)$ are given by
\begin{equation}
\label{e:defRpm}
\frac{\ii}{2} \R_\pm(z,\zeta) \coloneq \bphi_\pm^{-1}(t,z,\zeta)\bigg[\V(t,z,\zeta)\bphi_\pm(t,z,\zeta) - \frac{\partial}{\partial z}\bphi_\pm(t,z,\zeta)\bigg]\,,\qquad
\zeta\in\Sigma.
\end{equation}
It will be shown later that, in order to determine the $z$ dependence of other scattering data,
we need to compute $\R_\pm(z,\zeta)$ explicitly.
Let us again assume that $z$-derivatives and the limits as $t\to\pm\infty$ commute,
and we know that $\R_\pm(\zeta,z)$ is independent of $t$, so we can write
\begin{equation}
\label{e:Rpm-1}
\R_\pm(z,\zeta)
 = -2\ii \lim_{t\to\pm\infty} \bigg[\bphi_\pm^{-1}(t,z,\zeta)\V(t,z,\zeta)\bphi_\pm(t,z,\zeta) - \bphi_\pm^{-1}(t,z,\zeta)\frac{\partial}{\partial z}\bphi_\pm(t,z,\zeta)\bigg]\,.
\end{equation}
Equation~\eqref{e:rhorhopm} implies that $\V(t,z,\zeta)$ becomes
\begin{equation}
\label{e:V-1}
\V(t,z,\zeta)
 = \frac{\ii \pi}{2} \H_k\big[\bphi_\pm(t,z,\zeta')\bvarrho_\pm(z,\zeta')\bphi_\pm^{-1}(t,z,\zeta')g(k')\big]\,,
\end{equation}
where we use the shorthand notation $\zeta' = \zeta(k')$,
and $\H_k(\cdot)$ is the Hilbert transform~\eqref{e:hilbert}.
After a lengthy calculation presented in Appendix~\ref{a:computeRpm},
we thus obtain the explicit expression for the matrix $\R_\pm(z,\zeta)$ as follows:
\begin{equation}
\label{e:Rpm}
\everymath{\displaystyle}
\begin{aligned}
\R_{\pm,\dd}(z,\zeta) = &
 \pi \lambda\,\H_{k}\big[\rho_\pm^{-}(k')g(k')/\lambda'\big]\diag\big( 1\,,\,	0\,,\, - 1 \big) + \{\pi\H_{k}\big[\rho_\pm^{+}(k')\,g(k')\big] + 2w_\pm(z)\} \diag(1 , 0, 1)\,,\\
 &\qquad\qquad\qquad + \{\pi\H_{k}[\varrho_{\pm,2,2}(k')g(k')]-4w_\pm(z)\}\diag(0,1,0)\,,\qquad  \zeta\in\Sigma\,,\\
\R_{\pm,\o}(z,\zeta) = & \begin{cases}
 \pm \ii\pi g(k)\diag(1,-1,-\nu)\bvarrho_{\pm,\o}(z,\zeta)\diag(1,1,\nu)\,, & \zeta\in\Real\,,\\
 \@0_{3\times3}\,, & \zeta\in \Sigma_\circ\,,
 \end{cases}\\
\R_{\pm,\rdo}(z,\zeta) = & \begin{cases}
 \pm \ii\pi g(k) \bvarrho_{\pm,\rdo}(z,\zeta)\diag(1,1,-1)\,, & \zeta\in\Real\,,\\
 \@0_{3\times3}\,, & \zeta\in \Sigma_\circ\,,
 \end{cases}
\end{aligned}
\end{equation}
where $\varrho_\pm^+$ and $\varrho_\pm^-$ are defined as,
\begin{equation}
\label{e:rhopmpm}
\begin{aligned}
\varrho_\pm^+ & \coloneq (\varrho_{\pm,1,1} + \varrho_{\pm,3,3})/2\,,\qquad&
\varrho_\pm^- & \coloneq (\varrho_{\pm,1,1} - \varrho_{\pm,3,3})/2\,,\\
\varrho_{\pm,1,1} & = \varrho_\pm^+ + \varrho_\pm^-\,,\qquad&
\varrho_{\pm,3,3} & = \varrho_\pm^+ - \varrho_{\pm}^-\,,
\end{aligned}
\end{equation}
and where we define
\begin{equation}
\label{e:nu-def}
\nu(\zeta) \coloneqq \begin{cases}
1 \,, & \zeta\in(-\infty,-E_0]\cup[E_0,\infty)\,,\\
-1\,, & \zeta\in(-E_0,E_0)\,.
\end{cases}
\end{equation}

Note that the matrix $\R_\pm(z,\zeta)$ is computed via a Hilbert transform,
meaning that in general it is discontinuous when $\zeta$ (or $k$) crosses the real axis.
Note also that the results~\eqref{e:Rpm} seemingly depend on the choice of the $k$-sheet on which one performs the calculation,
as can be seen from $\nu$ having different values on each sheet.
However, recall what happens to Equation~\eqref{e:wpm} and recall our definition~\eqref{e:BC} for the asymptotic data $\bvarrho_\pm(z,\zeta)$.
The potential ``ambiguity'' inside the above results~\eqref{e:Rpm} is resolved by interchanging the values for $\varrho_{\pm,i,j}(z,\zeta)$ between the $k$-sheets:
\begin{itemize}
\item
For the diagonal portion $\R_{\pm,\dd}(z,\zeta)$:
Recall that $\varrho_\pm^-$ is defined in Equation~\eqref{e:rhopmpm}. We find that $\varrho_\pm^-(z,\hat\zeta) = -\varrho_\pm^-(z,\zeta)$ from the symmetries~\eqref{e:BC2}. The switching of signs of $\varrho_\pm^-$ between the two $k$-sheets cancels the additional negative sign from $\lambda$ in the first integral of $\R_{\pm,\dd}(z,\zeta)$ in Equation~\eqref{e:Rpm}.
Also, $\varrho_\pm^+(z,\hat\zeta) = \varrho_\pm^+(z,\zeta)$ and $\varrho_{\pm,2,2}(z,\hat\zeta) = \varrho_{\pm,2,2}(z,\hat\zeta)$ by the symmetries~\eqref{e:BC2}. Thus, all integrals in $\R_{\pm,\dd}(z,\zeta)$ can be evaluated on either $k$-sheet with identical outcomes.
\item
For the $(1,3)$ and $(3,1)$ components:
The symmetries~\eqref{e:BC2} yield that $\varrho_{\pm,1,3}(z,\hat\zeta) = -\varrho_{\pm,1,3}(z,\zeta)$ and $\varrho_{\pm,3,1}(z,\hat\zeta) = -\varrho_{\pm,3,1}(z,\zeta)$. Thus, these two quantities evaluated on the two $k$-sheets take opposite signs,
which cancels the effect of $\nu$.
\item
For the rest of the off-diagonal entries: The symmetries~\eqref{e:BC2} yield $\varrho_{\pm,j,k}(z,\hat\zeta) = \varrho_{\pm,j,k}(z,\zeta)$, i.e., these entries evaluated on either $k$-sheet give identical results.
\end{itemize}
With the above discussion, we conclude that Equation~\eqref{e:Rpm} can be used on either $k$-sheet and yield identical results.
In other words,
the expressions~\eqref{e:Rpm} are uniquely defined on both sheets and yield consistent results.

Besides the values on the continuous spectrum $\Sigma$,
the matrices $\R_\pm(z,\zeta)$ at certain complex points are also needed when calculating the propagation of the norming constants.
Hence, we next discuss the analyticity of the entries of $\R_\pm(z,\zeta)$.

Note that the explicit expressions for the off-diagonal entries~\eqref{e:Rpm} are only valid on the continuous spectrum.
However, via the integral expression~\eqref{e:defRpm}, it is possible to extend some entries off the continuous spectrum.
More precisely,
in Appendix~\ref{a:computeRpm} the analytical extension is found by analyzing the exponential in each integral that produces each entry.
It can be shown that one integral can be analytically continued (and correspondingly becomes zero) if and only if the real part of the corresponding exponent is negative.
The explicit expressions for all the off-diagonal entries are presented in Appendix~\ref{a:computeRpm}.
Finally, we obtain
\begin{equation}
\label{e:Roff}
\begin{aligned}
R_{-,1,2}(z,\zeta) & = 0\,,\quad& \zeta\in\Complex^+\,,\qquad
R_{+,1,2}(z,\zeta) & = 0\,,\quad& \zeta\in\Complex^-\,,\\
R_{-,1,3}(z,\zeta) & = 0\,,\quad& \zeta\in D_3\cup D_4\,,\qquad
R_{+,1,3}(z,\zeta) & = 0\,,\quad& \zeta\in D_1\cup D_2\,,\\
R_{-,2,1}(z,\zeta) & = 0\,,\quad& \zeta\in\Complex^-\,,\qquad
R_{+,2,1}(z,\zeta) & = 0\,,\quad& \zeta\in\Complex^+\,,\\
R_{-,3,1}(z,\zeta) & = 0\,,\quad& \zeta\in D_1\cup D_2\,,\qquad
R_{+,3,1}(z,\zeta) & = 0\,,\quad& \zeta\in D_3\cup D_4\,,\\
R_{-,2,3}(z,\zeta) & = 0\,,\quad& \zeta\in\Complex^-\,,\qquad
R_{+,2,3}(z,\zeta) & = 0\,,\quad& \zeta\in\Complex^+\,,\\
R_{-,3,2}(z,\zeta) & = 0\,,\quad& \zeta\in\Complex^+\,,\qquad
R_{+,3,2}(z,\zeta) & = 0\,,\quad& \zeta\in\Complex^-\,.
\end{aligned}
\end{equation}
These conditions are all we need to compute the propagation of all the scattering data.

\subsubsection{Propagation equations for the reflection coefficients.}

By Equation~\eqref{e:Sdef} and the relation $\bphi_+(t,z,\zeta) \C_+(z,\zeta) = \bphi_-(t,z,\zeta) \C_-(z,\zeta)$,
we find that
\begin{equation}
\label{e:evoeqnS}
\S(z,\zeta) = \C_-(z,\zeta) \C_+^{-1}(z,\zeta)\,,\qquad\zeta\in\Sigma\,,
\end{equation}
together with
\begin{equation}
\label{e:dSdz}
\begin{aligned}
\frac{\partial}{\partial z}\S(z,\zeta)
 & = \frac{\ii}{2}\left[\R_-(z,\zeta)\S(z,\zeta) - \S(z,\zeta) \R_+(z,\zeta)\right]\,,\\
\frac{\partial}{\partial z}\S^{-1}(z,\zeta)
 & = \frac{\ii}{2}\left[\R_+(z,\zeta)\S^{-1}(z,\zeta) - \S^{-1}(z,\zeta)\R_-(z,\zeta)\right]\,.
\end{aligned}
\end{equation}
Note that the scattering coefficients $a_{i,j}(z,\zeta)$ from $\S(z,\zeta)$ and $b_{i,j}(z,\zeta)$ from $\S(z,\zeta)^{-1}$ do not appear in the inverse problem directly,
but in the form of a combination among them,
i.e., the reflection coefficients $r_j(z,\zeta)$ with $j = 1,2,3$, appear,
as defined in Equation~\eqref{e:reflection-def}.
Therefore, Equation~\eqref{e:dSdz} is not enough to complete the formulation of IST, and we need to compute the propagation equation for the reflection coefficients instead.
Note that by our definition~\eqref{e:BC},
the matrix $\R_\pm(z,\zeta)$ takes the same values on the two $k$-sheets,
but the reflection coefficients do not.
Thus, we also need to perform the calculations on $k$-sheets I and II separately.
\begin{itemize}[leftmargin=*]
\item
First, let us consider the reflection coefficients $r_1(z,\zeta)$ and $r_2(z,\zeta)$.
We define the matrix $\B^{(1)}(z,\zeta)$ such that
\begin{equation}
\label{e:B1}
\B^{(1)}(z,\zeta)
 \coloneq \S_\o(z,\zeta) \S_\d(z,\zeta)^{-1}
 = \bpm 0 & (a_{1,2}(z,\zeta), a_{1,3}(z,\zeta))\cdot\S_{[1,1]}(z,\zeta)^{-1} \\
  r_1(z,\zeta) & \@0_{1\times2} \\ r_2(z,\zeta) & \@0_{1\times2}\epm\,,
\end{equation}
where the subscripts $\d$, $\o$ and $[1,1]$ are defined in Equation~\eqref{e:dodef}.
Then,
by differentiating Equation~\eqref{e:B1} with respect to $z$,
one obtains an ODE for the matrix $\B^{(1)}(z,\zeta)$,
\begin{equation}
\nonumber
\frac{\partial \B^{(1)}}{\partial z}(z,\zeta)
 = \frac{\partial \S_{\o}}{\partial z}(z,\zeta)\S_{\o}(z,\zeta)^{-1} \B^{(1)}(z,\zeta) - \B^{(1)}(z,\zeta) \frac{\partial \S_{\d}}{\partial z}(z,\zeta) \S_{\d}(z,\zeta)^{-1}\,.
\end{equation}
By using Equation~\eqref{e:evoeqnS}, separating the block-diagonal and block-off-diagonal parts of the matrix,
applying the identity $\S(z,\zeta)\S^{-1}(z,\zeta) = \bbI$ and performing some tedious calculations,
we find the propagation equations for the two reflection coefficients when
$\zeta\in\Sigma\backslash(-E_0,E_0)$.
Let $\@r(z,\zeta) \coloneqq (r_1(z,\zeta),r_2(z,\zeta))^\top$; then the propagation equations for the two reflection coefficients can be written compactly as
\bse
\label{e:timeevoeqnB1}
\begin{gather}
\frac{\partial \@r}{\partial z}(z,\zeta)
 = \frac{\ii}{2} \A(z,\zeta)\@r(z,\zeta) + \@b(z,\zeta)\,,\qquad
\zeta\in\Sigma\backslash(-E_0,E_0)\,,
\end{gather}
where
\begin{equation}
\begin{aligned}
\A(z,\zeta)
 & \coloneq
  \bpm
  H_{2,2}(z,\zeta) - H_{1,1}(z,\zeta) & H_{2,3}(z,\zeta) \\
  0 & H_{3,3}(z,\zeta) - H_{1,1}(z,\zeta) \\
  \epm\,,\\
\@b(z,\zeta)
 & \coloneq -\nu_0 \pi g(k)
  \bpm \varrho_{-,2,1}(z,\zeta) \\ \varrho_{-,3,1}(z,\zeta) \epm\,,\\
\bH(z,\zeta) & = (H_{i,j}(z,\zeta))_{2\times2} \coloneq \R_-(z,\zeta) + \ii\nu_0 \pi g(k) \bvarrho_-(z,\zeta)\,,
\end{aligned}
\end{equation}
\ese
with
\begin{equation}
\label{e:nu0-def}
\nu_0 \coloneqq
\begin{cases}
1\,, & \zeta\in(-\infty,-E_0]\cup[E_0,\infty)\,,\\
0\,, & \zeta \in \Sigma_\circ\,.
\end{cases}
\end{equation}
Note that the two ODEs for $r_1(z,\zeta)$ and $r_2(z,\zeta)$ are coupled, which is similar to the case of ZBG.
Note also that,
unlike the two-level case with NZBG,
the propagation equation~\eqref{e:timeevoeqnB1} is only valid when $k\in\Sigma_k$ on sheet I or $\zeta\in\Sigma\backslash(-E_0,E_0)$,
not on the entire continuous spectrum $\Sigma$.
We will see that this is all we need to reconstruct our solution in the inverse problem.

In fact, one could also compute the propagation equations of $r_1(z,\zeta)$ and $r_2(z,\zeta)$ with $\zeta\in(-E_0,E_0)$.
However, in this case,
the equations are not as simple as the propagation equation~\eqref{e:timeevoeqnB1}.
This does not happen to both ZBG cases (two-level and coupled) and to the two-level system with NZBG.
As we will show later,
these equations are not necessary for the IST.
Thus, we skip this derivation and omit explicit formulas.
\item
Next, we compute the propagation equation for the last reflection coefficient $r_3(z,\zeta)$.
To do so, we define another matrix $\B^{(2)}(z,\zeta)$ as
\begin{equation}
\label{e:B2}
\B^{(2)}(z,\zeta)
 \coloneq \S_{\rdo}(z,\zeta)\S_{\dd}(z,\zeta)^{-1}
 = \bpm 0 & 0 & 0 \\ 0 & 0 & a_{2,3}(z,\zeta)/a_{3,3}(z,\zeta) \\ 0 & r_3(z,\zeta) & 0 \epm\,,\qquad
\zeta\in\Sigma\,,
\end{equation}
where the subscripts $\dd$ and $\rdo$ are defined in Equation~\eqref{e:dodef}.
By differentiating Equation~\eqref{e:B2} with respect to $z$,
we obtain the ODE
\begin{equation}
\nonumber
\frac{\partial \B^{(2)}}{\partial z}(z,\zeta)
 = \frac{\partial \S_{\rdo}}{\partial z}(z,\zeta)\S_{\rdo}(z,\zeta)^{-1} \B^{(2)}(z,\zeta) - \B^{(2)}(z,\zeta) \frac{\partial \S_{\dd}}{\partial z}(z,\zeta)\S_{\dd}(z,\zeta)^{-1}\,.
\end{equation}
By using Equation~\eqref{e:matrixidentity} to separate the diagonal and off-diagonal parts,
and using similar techniques as for computing Equation~\eqref{e:timeevoeqnB1},
the above equation reduces to
[with $\widetilde{\@ r}(z,\zeta) \coloneq (r_3(z,\zeta),r_3(z,\hat\zeta))^\top$]
\bse
\label{e:timeevoeqnB2}
\begin{equation}
\frac{\partial \widetilde{\@ r}}{\partial z}(z,\zeta)
 = \frac{\ii}{2}\widetilde \A(z,\zeta) \widetilde{\@ r}(z,\zeta) + \widetilde{\@ b}(z,\zeta)\,,\qquad
\zeta\in\Sigma\backslash(-E_0,E_0)\,,
\end{equation}
where
\begin{equation}
\begin{aligned}
\widetilde \A(z,\zeta)
 & \coloneq
  \bpm
  \widetilde H_{3,3}(z,\zeta) - \widetilde H_{2,2}(z,\zeta) &  0 \\
  \ii\zeta\, \widetilde H_{1,3}(z,\zeta)/E_0 & \widetilde H_{1,1}(z,\zeta) - \widetilde H_{2,2}(z,\zeta)
  \epm\,,\\
\widetilde{\@ b}(z,\zeta)
 & \coloneq \nu_0 \pi g(k) \bpm\varrho_{-,3,2}(z,\zeta) \\ \ii\zeta\varrho_{-,1,2}(z,\zeta)/E_0\epm\,,\\
\widetilde \bH(z,\zeta) & = (\widetilde H_{i,j}(z,\zeta))_{2\times2} \coloneq \R_-(z,\zeta) - \ii\nu_0 \pi g(k) \bvarrho_-(z,\zeta)\,,
\end{aligned}
\end{equation}
\ese
with $\nu_0$ defined in Equation~\eqref{e:nu0-def}.
Note that if $\zeta\in(-\infty,-E_0]\cup[E_0,\infty)$, then $\hat{\zeta}\in(-E_0,E_0)$,
so Equation~\eqref{e:timeevoeqnB2} gives the propagation of $r_3(z,\zeta)$ on the entire continuous spectrum $\Sigma$.
Moreover, the third reflection coefficient $r_3(z,\zeta)$ has different expressions for $k$ on sheet I or II,
and the propagation equations are coupled.
Also, if $\zeta\in \Sigma_\circ$, so is $\hat{\zeta}$.
In this case, by the choice of the asymptotic condition~\eqref{e:BC},
the system~\eqref{e:timeevoeqnB2} can be shown to be self-compatible.
\end{itemize}
Similarly to the classic two-level MBE with NZBG~\cite{bgkl2019},
the reflection coefficients take different forms depending on whether $\zeta$ is real or complex.
Note also that all the equations are determined solely by the asymptotic condition $\bvarrho_-(z,\zeta)$  via $\R_-(z,\zeta)$,
which is consistent with causality.

\subsubsection{Propagation equations for the norming constants.}

Here, we present the propagation equations for all three types of the norming constants.
The calculations can be found in the Appendix~\ref{a:norming}. These equations are
\begin{equation}
\label{e:normingtimeevolution}
\begin{aligned}
\frac{\partial \overline C_n}{\partial z}(z)
 & = \frac{\ii}{2}\left[R_{-,1,1}(w_n^*) - R_{-,3,3}(w_n^*)\right]\overline C_n(z)\,,\quad& n & = 1,2,\dots, N_\I\,,\\
\frac{\partial D_n}{\partial z}(z)
 & = \frac{\ii}{2}\left[R_{-,2,2}(z_n) - R_{-,3,3}(z_n)\right] D_n(z)\,,\quad& n & = 1,2,\dots,N_\II\,,\\
\frac{\partial \overline F_n}{\partial z}(z)
 & = \frac{\ii}{2}\left[R_{-,1,1}(\zeta_n^*) - R_{-,2,2}(\zeta_n^*)\right]\overline F_n(z)\,,\quad& n & = 1,2,\dots,N_\III\,,
\end{aligned}
\end{equation}
where $R_{-,j,j}(z,\zeta)$, with $j = 1,2,3$, are defined in Equation~\eqref{e:Rpm}.
Recall that $w_n^*$,
$z_n$ and $\zeta_n^*$ are the three types of discrete eigenvalues corresponding to the three modified norming constants $\overline C_n$,
$D_n$ and $\overline F_n$ as discussed in Section~\ref{s:eigen}.
Note that these equations only contain the asymptotic conditions as $t\to-\infty$, consistent with causality.

\begin{remark}
The three propagation equations~\eqref{e:normingtimeevolution} (one for each type of the norming constants) are sufficient for the formulation of the entire IST,
even though there are a total of twelve norming constants (c.f. Section~\ref{s:eigen}).
After one uses Equation~\eqref{e:normingtimeevolution} to compute the $z$-dependence of three norming constants,
one is able to use the symmetries~\eqref{e:mnormingsymmetry} to compute the other nine norming constants.
\end{remark}
\begin{remark}
The propagation Equation~\eqref{e:normingtimeevolution} for the norming constants can also be calculated via the trace formulae in Equation~\eqref{e:trace-formula} (derived in Section~\ref{s:trace-formula} below), similarly to what has been done for IST with asymmetric background in Ref.~\cite{abkp}. For the symmetric background case with the two-level MBEs studied in Ref.~\cite{bgkl2019}, it can be shown that the two approaches are equivalent and produce identical propagation equations. We believe the same to be true here, but we have not carried out the tedious calculation explicitly.
\end{remark}

\section{IST: Inverse problem}
\label{s:inverseproblem}

\subsection{Riemann-Hilbert problem and reconstruction formula}
\label{s:RHP}

We formulate the inverse problem by constructing a RHP,
after which the solution of CMBE~\eqref{e:cmbe} can be reconstructed from the solution of the RHP.
To achieve this, we first rearrange the eigenfunctions and divide them into four groups,
so that each group is meromorphic in a particular region of the complex $\zeta$-plane.
Then, we compute the jump matrix that characterizes the difference between two groups of eigenfunctions on the shared boundary.

\begin{definition}
Using the analycity of the eigenfunctions (cf. Equation~\eqref{e:muanalyticity} and Remark~\ref{rmk:adjoint-analyticity}) and the analyticity of the scattering data in Equation~\eqref{e:abanalyticity}, we define a sectionally meromorphic matrix $\M(t,z,\zeta)$,
\begin{equation}
\label{e:defM}
\displaystyle
\begin{aligned}
\M(t,z,\zeta)
 \coloneq \begin{cases}
 \displaystyle
 \bPhi^{(1)}\,\e^{-\ii \bLambda t}[\diag(a_{1,1},1,b_{2,2})]^{-1}
  = \left(\frac{\bmu_{+,1}}{a_{1,1}},\bmu_{-,2},\frac{\m_1}{b_{2,2}}\right)\,,&\quad
  \zeta\in D_1\,,\\
 \displaystyle
 \bPhi^{(2)}\,\e^{-\ii \bLambda t}[\diag(1,a_{2,2},b_{1,1})]^{-1}
  = \left(\bmu_{-,1},\frac{\bmu_{+,2}}{a_{2,2}},\frac{\m_2}{b_{1,1}}\right)\,,&\quad
  \zeta\in D_2\,,\\
 \displaystyle
 \bPhi^{(3)}\,\e^{-\ii \bLambda t}[\diag(b_{3,3},a_{2,2},1)]^{-1}
  = \left(\frac{\m_3}{b_{3,3}},\frac{\bmu_{+,2}}{a_{2,2}},\bmu_{-,3}\right)\,,&\quad
  \zeta\in D_3\,,\\
 \displaystyle
 \bPhi^{(4)}\,\e^{-\ii \bLambda t}[\diag(b_{2,2},1,a_{3,3})]^{-1}
  = \left(\frac{\m_4}{b_{2,2}},\bmu_{-,2},\frac{\bmu_{+,3}}{a_{3,3}}\right)\,,&\quad
  \zeta\in D_4\,,
 \end{cases}
\end{aligned}
\end{equation}
where the matrices $\bPhi^{(j)}(t,z,\zeta)$ are defined in Equation~\eqref{e:bPhi(j)-def}.
\end{definition}

\begin{definition}
\label{def:DSigmaL}
As shown in Figure~\ref{f:jump-contour},
we define two regions $D^\pm$ as $D^+ \coloneqq D_1\cup D_3$ and $D^- \coloneqq D_2\cup D_4$,
and decompose the continuous spectrum as $\Sigma = \cup_{j =1}^4\Sigma_j$,
with
$\Sigma_1 \coloneqq (-\infty, -E_0]\cup[E_0,+\infty)$,
$\Sigma_2 \coloneqq \{\lambda = E_0\e^{\ii \alpha}|\alpha\in[\pi,2\pi]\}$,
$\Sigma_3 \coloneqq \{\lambda = -z|z\in[-E_0,E_0]\}$,
and $\Sigma_4 \coloneqq \{\lambda = E_0\e^{\ii(\pi - \alpha)}|\alpha\in[0,\pi]\}$.
With the designated orientation,
$D^+$ is always to the contour $\Sigma$'s left and $D^-$ is always to its right.
Moreover, we define the matrix $\L(z,\zeta)$ as follows:
\begin{equation}
\label{e:jumpmatrix}
\begin{aligned}
\L(z,\zeta) & \coloneq
 \begin{cases}
 \e^{\ii \bLambda t}\bpm
 -\ii E_0 r_1\hat r_3/\zeta+(-r_2+r_1r_3)R_2 & -R_1/\gamma & -R_2+R_1R_3\\
 -r_1 & 0 & \gamma R_3\\
 -r_2+r_1r_3 & r_3 & -\gamma r_3R_3
 \epm\e^{-\ii \bLambda t}\,, &
 \zeta\in\Sigma_1\,,\\
 \e^{\ii \bLambda t}\bpm
 r_2\hat r_2 & 0 & \hat r_2 \\
 0 & 0 & 0 \\
 -r_2 & 0 & 0
 \epm\e^{-\ii \bLambda t}\,,&
 \zeta\in\Sigma_2\,,\\
 \e^{\ii \bLambda t}\bpm
 -\hat r_2\hat R_2 & r_3\hat r_2 + \ii E_0\hat r_3/\zeta & \hat r_2 \\
 \gamma R_3\hat R_2 - \ii E_0 \gamma \hat R_3/\zeta & -\gamma  r_3R_3-\gamma(\gamma-1)\hat r_3\hat R_3 & -\gamma R_3 \\
 \hat R_2 & -r_3 & 0
 \epm\e^{-\ii \bLambda t}\,, &
 \zeta\in\Sigma_3\,,\\
 \e^{\ii \bLambda t}\bpm
 R_2\hat R_2 & 0 & -R_2 \\
 0 & 0 & 0 \\
 \hat R_2 & 0 & 0
 \epm\e^{-\ii \bLambda t}\,,&
 \zeta\in\Sigma_4\,,
\end{cases}\\
\hat r_j & \coloneq r_j(z,\hat\zeta)\,,\qquad
R_j \coloneq r_j^*(z,\zeta^*)\,,\qquad
\hat R_j \coloneq r_j^*(z,\hat\zeta^*)\,,\qquad
\hat\zeta \coloneq -E_0^2/\zeta\,,
\end{aligned}
\end{equation}
where $r_j = r_j(z,\zeta)$ are reflection coefficients.

\end{definition}

\begin{figure}[t!]
\centering
\includegraphics[width = 0.375\textwidth]{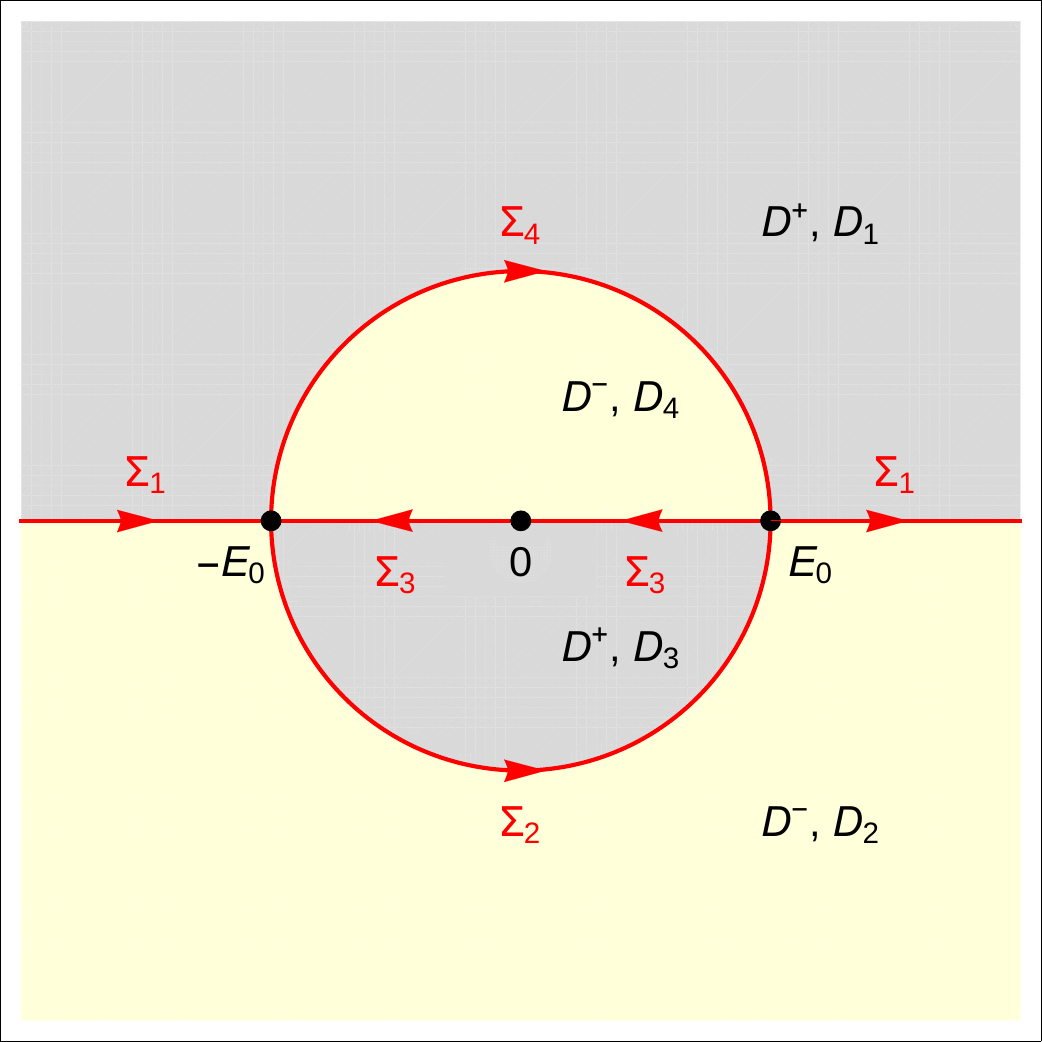}
\caption{
\textbf{Left:} The regions $D^+ = D_1\cup D_3$ and $D^- = D_2\cup D_4$,
and the decomposition of the oriented continuous spectrum $\Sigma = \cup_{j =1}^4 \Sigma_j$.
}
\label{f:jump-contour}
\end{figure}

\begin{rhp}
\label{rhp:inverse}
One seeks a $3\times3$ matrix $\M(t,z,\zeta)$ satisfying the conditions:
\begin{itemize}[align=left]

\item[\textbf{Asymptotics:}]
\begin{equation*}
\begin{aligned}
\M(t,z,\zeta) & = \M_\infty(t,z,\zeta) + \O(1/\zeta)\,,\qquad& \zeta\to\infty\,,\\
\M(t,z,\zeta) & = -\frac{\ii}{\zeta} \M_0(t,z,\zeta) + \O(1)\,,\qquad&
\zeta\to0\,,
\end{aligned}
\end{equation*}
where
\begin{equation}
\nonumber
\M_\infty(t,z,\zeta) \coloneq \bpm
1 & 0 & 0 \\ \@0 & (\E^\bot_-)^*/E_0 & \E_-^*/E_0
\epm\,,\qquad
\M_0(t,z,\zeta) \coloneq \bpm
1 & 0 & E_0 \\ \E_-^* & \@0 & \@0
\epm\,.
\end{equation}
Note that $\M_\infty + (\ii/\zeta) \M_0 = \Y_-$,
which is the asymptotic eigenfunction matrix~\eqref{e:eigen} of the scattering problem.

\item[\textbf{Jump Condition:}]
\begin{equation*}
\M^+(t,z,\zeta) =  \M^-(t,z,\zeta)[\bbI - \L(z,\zeta)]\,,\qquad
\zeta\in\Sigma\,.
\end{equation*}

\item[\textbf{Analyticity:}]
$\M(t,z,\zeta)$ is sectionally meromorphic on $\zeta\in\Complex$.

\item[\textbf{Residue Condition:}]
$\M(t,z,\zeta)$ satisfies the following conditions at the nonanalytic points:
\begin{align*}
\Res_{\zeta = w_n}\M(t,z,\zeta)
 & = \lim_{\zeta\to w_n}\M\bpm 0 & 0 & 0 \\ 0 & 0 & 0 \\ -\overline C_n^* & 0 & 0 \epm\,,\quad&
\Res_{\zeta = w_n^*}\M(t,z,\zeta)
 & = \lim_{\zeta\to w_n^*}\M\bpm 0 & 0 & \overline C_n \\ 0 & 0 & 0 \\ 0 & 0 & 0 \epm\,,\\
\Res_{\zeta = \hat w_n^*}\M(t,z,\zeta)
 & = \lim_{\zeta\to \hat w_n^*}\M \bpm 0 & 0 & 0 \\ 0 & 0 & 0 \\ \widecheck C_n & 0 & 0 \epm\,,\quad&
\Res_{\zeta = \hat w_n}\M(t,z,\zeta)
 & = \lim_{\zeta\to \widehat w_n} \M \bpm 0 & 0 & -\frac{E_0^2}{w_n^2}\overline C_n^* \\ 0 & 0 & 0 \\ 0 & 0 & 0 \epm\,,\\
\Res_{\zeta = z_n}\M(t,z,\zeta)
 & = \lim_{\zeta\to z_n}\M \bpm 0 & 0 & 0 \\ 0 & 0 & D_n \\ 0 & 0 & 0 \epm\,,\quad&
\Res_{\zeta = z_n^*}\M(t,z,\zeta)
 & = \lim_{\zeta\to z_n^*} \M \bpm 0 & 0 & 0 \\ 0 & 0 & 0 \\ 0 & -\frac{D_n}{\gamma(\zeta)} & 0 \epm\,,\\
\Res_{\zeta = \hat z_n^*}\M(t,z,\zeta)
 & = \lim_{\zeta\to \hat z_n^*}\M
    \bpm -\frac{\widehat D_n^*}{\gamma(\zeta)} & 0 & 0 \\ 0 & 0 & 0 \\ 0 & 0 & 0 \epm \,,\quad&
\Res_{\zeta = \hat z_n}\M(t,z,\zeta)
 & = \lim_{\zeta\to \hat z_n} \M \bpm 0 & 0 & 0 \\ \widehat D_n & 0 & 0 \\ 0 & 0 & 0 \epm\,,\\
\Res_{\zeta = \zeta_n}\M(t,z,\zeta)
 & = \lim_{\zeta\to\zeta_n}\M \bpm 0 & 0 & 0 \\ F_n & 0 & 0 \\ 0 & 0 & 0 \epm\,,\quad&
\Res_{\zeta = \zeta_n^*}\M(t,z,\zeta)
 & = \lim_{\zeta\to\zeta_n^*} \M\bpm 0 & \overline F_n & 0 \\ 0 & 0 & 0 \\ 0 & 0 & 0 \epm\,,\\
\Res_{\zeta = \hat\zeta_n^*}\M(t,z,\zeta)
 & = \lim_{\zeta\to\hat\zeta_n^*}\M \bpm 0 & 0 & 0 \\ 0 & 0 & 0 \\ 0 & \widecheck F_n & 0 \epm\,,\quad&
\Res_{\zeta = \hat \zeta_n}\M(t,z,\zeta)
 & = \lim_{\zeta\to\hat \zeta_n} \M \bpm 0 & 0 & 0 \\ 0 & 0 & \widehat F_n \\ 0 & 0 & 0 \epm\,,
\end{align*}
where all the modified norming constants are defined in Equation~\eqref{e:mnormingdef},
and the loci of a in appropriately chosen quartet of discrete eigenvalues are shown in Figure~\ref{f:analyticity} (right).

\end{itemize}

\end{rhp}

\begin{remark}
This RHP can be derived via direct calculations using the relation~\eqref{e:Sdef}
and the definition of the reflection coefficients~\eqref{e:reflection-def}.
For the \textbf{jump condition},
note the differences among all three reflection coefficients in the jump condition in RHP~\ref{rhp:inverse}:
i) $r_1(z,\zeta)$ only appears on $\zeta\in\Sigma_1$;
ii) $r_2(z,\zeta)$ appears on all portions of the continuous spectrum $\Sigma$,
but the portion of $\zeta\in\Sigma_3$ only requires $r_2(z,\hat\zeta)$ and $r_2^*(z,\hat{\zeta}^*)$.
Therefore, the explicit expression for $r_2(z,\zeta)$ is only needed on
$\zeta\in\Sigma\backslash\Sigma_3$;
iii) $r_3(z,\zeta)$ appears on $\Sigma_1\cup\Sigma_3$.
Combining all the parts,
we conclude that the propagation equations~\eqref{e:timeevoeqnB1} and~\eqref{e:timeevoeqnB2} are sufficient to close the inverse problem.
The \textbf{asymptotics} is exactly the leading order term in the asymptotic expansion of $M(t,z,\zeta)$ as $\zeta\to\infty$ and as $\zeta\to0$, obtained from Section~\ref{s:asympbehavior}.
The \textbf{residue condition} can be proved by using the definition~\eqref{e:mnormingdef} of the modified norming constants.
\end{remark}

One can then formally solve RHP~\ref{rhp:inverse} and obtain its solution.

\begin{theorem}
\label{thm:RHP-sol}
The solution of RHP~\ref{rhp:inverse} can be written as the implicit integral equation
\begin{multline}
\label{e:RHPsol}
\M(t,z,\zeta)
 = \Y_-(z,\zeta) + \sum_{n}\left(
  \frac{\Res_{\zeta = v_n}\M(\zeta)}{\zeta-v_n}
  + \frac{\Res_{\zeta = v_n^*}\M(\zeta)}{\zeta-v_n^*}
  + \frac{\Res_{\zeta = \hat v_n^*}\M(\zeta)}{\zeta-\hat v_n^*}
  + \frac{\Res_{\zeta = \hat v_n}\M(\zeta)}{\zeta-\hat v_n}\right)\\
 - \frac{1}{2\pi\ii }\int_\Sigma\frac{\M^-(t,z,\eta) \L(z,\eta)}{\eta-\zeta}\d\eta\,,\qquad
\zeta\in\Complex\,,
\end{multline}
where $v_n$ denotes all the discrete eigenvalues in the region $D_1$,
and the last term contains the Cauchy projector
\begin{equation}
\label{e:Cauchyprojector}
\P(f)(\zeta) \coloneq \frac{1}{2\pi\ii }
 \int_\Sigma \frac{f(\eta)}{\eta-\zeta}\d\eta\,,\qquad
\zeta\in\Complex\,.
\end{equation}
The evaluation of the Cauchy projector on the jump contour $\Sigma$ should be distinguished as from the left or right of the contour,
i.e., using the limit $\zeta\pm \ii\epsilon$ as $\epsilon\to0^+$, respectively.
\end{theorem}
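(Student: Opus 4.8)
The plan is to convert RHP~\ref{rhp:inverse} into the stated integral equation by the standard regularization-plus-Cauchy-projector argument, exactly as for analogous matrix RHPs with poles. First I would remove from $\M$ both its prescribed asymptotic part and the principal parts dictated by the residue conditions, defining the corrected matrix
\begin{equation*}
\mathbf{N}(t,z,\zeta) \coloneqq \M(t,z,\zeta) - \Y_-(z,\zeta) - \sum_n\left(\frac{\Res_{\zeta=v_n}\M}{\zeta-v_n} + \frac{\Res_{\zeta=v_n^*}\M}{\zeta-v_n^*} + \frac{\Res_{\zeta=\hat v_n^*}\M}{\zeta-\hat v_n^*} + \frac{\Res_{\zeta=\hat v_n}\M}{\zeta-\hat v_n}\right),
\end{equation*}
where each residue is read off from the residue conditions of RHP~\ref{rhp:inverse} and $v_n$ ranges over the eigenvalues in $D_1$. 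By construction the rational terms cancel the simple poles of $\M$ at each quartet of discrete eigenvalues, so $\mathbf{N}$ is analytic on $\Complex\setminus\Sigma$; the only remaining candidate singularity is the pole at $\zeta=0$, and this is removed because $\Y_-=\M_\infty+(\ii/\zeta)\M_0$ reproduces precisely the prescribed $\zeta\to0$ singular behavior $\M=-(\ii/\zeta)\M_0+\O(1)$.

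Second, I would verify the normalization that licenses a Cauchy representation with no entire part. As $\zeta\to\infty$ the prescribed asymptotics give $\M-\Y_-=\O(1/\zeta)$, and each rational correction decays like $\O(1/\zeta)$, so $\mathbf{N}\to\mathbf{0}$; together with analyticity off $\Sigma$ this is exactly what is needed to apply the Plemelj--Sokhotski formula (equivalently, the Cauchy projector~\eqref{e:Cauchyprojector}). For the jump I would note that $\Y_-$ and every rational term are analytic across $\Sigma$ (the discrete eigenvalues lie off $\Sigma$), so the additive correction contributes no jump and $\mathbf{N}^+-\mathbf{N}^-=\M^+-\M^-$. The jump condition $\M^+=\M^-[\bbI-\L]$ then gives $\mathbf{N}^+-\mathbf{N}^-=-\M^-\L$, and applying the projector to the sectionally analytic, decaying $\mathbf{N}$ yields
\begin{equation*}
\mathbf{N}(t,z,\zeta)=\frac{1}{2\pi\ii}\int_\Sigma\frac{\mathbf{N}^+(\eta)-\mathbf{N}^-(\eta)}{\eta-\zeta}\,\d\eta = -\frac{1}{2\pi\ii}\int_\Sigma\frac{\M^-(t,z,\eta)\L(z,\eta)}{\eta-\zeta}\,\d\eta .
\end{equation*}
Substituting back the definition of $\mathbf{N}$ reproduces~\eqref{e:RHPsol}. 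The representation is only \emph{implicit} because each $\Res\M$ is, via the residue conditions, a limit of $\M$ times a norming-constant matrix, and because $\M^-$ reappears inside the integrand.

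The main obstacle I anticipate is not the bookkeeping above but the careful justification of the two analytic inputs to the Cauchy representation. I would need genuine analyticity of $\mathbf{N}$ across $\zeta=0$ (matching the singular parts of $\M$ and $\Y_-$ there, including the block structure of $\M_0$ and $\M_\infty$) and, more delicately, control of the behavior near the excluded points $\pm\ii E_0\in\Sigma\setminus\Sigma^*$, where $\gamma(\zeta)$ and the eigenvector matrix $\Y_\pm$ degenerate, so that the contour integral over $\Sigma$ is well defined and the local contributions are negligible. Finally, the $\O(1/\zeta)$ decay must be established uniformly enough to rule out an additive entire matrix by Liouville's theorem; once these points are secured, the insertion of the jump and residue data is routine and produces~\eqref{e:RHPsol} verbatim.
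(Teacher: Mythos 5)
Your proposal is correct and takes essentially the same route as the paper: the paper's proof likewise normalizes the jump condition by subtracting the asymptotic part $\M_\infty - (\ii/\zeta)\M_0$ (i.e., $\Y_-$) together with the principal parts at the discrete-eigenvalue quartets (its Equation~\eqref{e:RHP}), and then applies the Cauchy projector~\eqref{e:Cauchyprojector} to obtain~\eqref{e:RHPsol}. Your extra attention to analyticity at $\zeta=0$, the branch points $\pm\ii E_0$, and the Liouville step makes explicit what the paper leaves implicit, but it does not alter the argument.
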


Theorem~\ref{thm:RHP-sol} is proved in two steps.
First, we subtract the asymptotic behavior $\M_\infty \coloneqq \M(t,z,\infty)$,
$\M_0 \coloneqq \M(t,z,0)$ and the residues at the discrete spectrum from the jump condition,
in order to normalize it:
\begin{multline}
\label{e:RHP}
\M^+ -  \M_\infty + \frac{\ii}{\zeta} \M_0
 - \sum_{n}\left(
   \frac{\Res_{\zeta = v_n}\M(\zeta)}{\zeta-v_n}
   + \frac{\Res_{\zeta = v_n^*}\M(\zeta)}{\zeta-v_n^*}
   + \frac{\Res_{\zeta = \hat v_n^*}\M(\zeta)}{\zeta-\hat v_n^*}
   + \frac{\Res_{\zeta = \hat v_n}\M(\zeta)}{\zeta-\hat v_n}\right)\\
 = \M^- - \M_\infty + \frac{\ii}{\zeta} \M_0
 - \sum_{n}\left(
   \frac{\Res_{\zeta = v_n}\M(\zeta)}{\zeta-v_n}
   + \frac{\Res_{\zeta = v_n^*}\M(\zeta)}{\zeta-v_n^*}
   + \frac{\Res_{\zeta = \hat v_n^*}\M(\zeta)}{\zeta-\hat v_n^*}
   + \frac{\Res_{\zeta = \hat v_n}\M(\zeta)}{\zeta-\hat v_n}\right)\\
 - \M^- \L\,,
\end{multline}
where $\zeta\in\Sigma$ and $v_n$ denotes all the discrete eigenvalues.
Second, we apply the Cauchy projector~\eqref{e:Cauchyprojector} to Equation~\eqref{e:RHP}.

Once the solution to RHP~\ref{rhp:inverse} is obtained,
we are able to reconstruct the solutions to CMBE~\eqref{e:cmbe},
which is formalized in the following theorem:

\begin{theorem}
The solution to CMBE~\eqref{e:cmbe} can be reconstructed from RHP~\ref{rhp:inverse} and Theorem~\ref{thm:RHP-sol} as,
\begin{equation}
\label{e:cmbe-sol}
\begin{aligned}
E_j(t,z)
 & = E_{-,j} +
\sum_{n = 1}^{N_\I}\left[\ii\overline C_n M_{j+1,3}(w_n)^* - \frac{  w_n}{E_0}\widecheck C_n^*M_{(j+1),1}(w_n^*)^*\right]
 - \ii\sum_{n=1}^{N_\II}\widehat D_n^*M_{(j+1),2}(z_n)^*\\
 &\hspace*{10em} - \ii\sum_{n=1}^{N_\III}F_n^*M_{(j+1),2}(\zeta_n)^*
 + \frac{1}{2\pi}\int_{\Sigma}[\M^-\L]_{j+1,1}^*\d \eta\,,\qquad j = 1,2,\\
\brho(t,z,\zeta)
 & = \bmu_-(t,z,\zeta)\,\e^{\ii \bLambda t}\,\bvarrho_-(\zeta,z)\,\e^{-\ii \bLambda t}\,\bmu_-^{-1}(t,z,\zeta)\,,\qquad
\zeta\in\Real\,,
\end{aligned}
\end{equation}
where the eigenfunction matrix $\bmu_-(t,z,\zeta)$ is given by
\begin{equation}
\label{e:musol}
\begin{aligned}
\bmu_-(t,z,\zeta)
 & = \Y_-(z,\zeta) + \left(\hat\bmu_1,\hat\bmu_2,\hat\bmu_3\right)
 - \frac{1}{2\pi\ii }\int_\Sigma
    \bigg(\frac{[\M^- \L]_1}{\eta-\zeta + \ii0},
    \frac{[\M^- \L]_2}{\eta-\zeta - \ii0},
    \frac{[\M^- \L]_3}{\eta-\zeta + \ii0}\bigg)\d\eta\,,\qquad
\zeta\in\Real\,,\\
\hat\bmu_1
 & \coloneq
 \sum_{n=1}^{N_\I}\left(\frac{\ii  w_n^*}{E_0}\frac{\widecheck C_n \M_{1}(w_n^*)}{\zeta-\hat w_n^*} -\frac{\overline C_n^* \M_3(w_n)}{\zeta-w_n} \right)
 + \sum_{n=1}^{N_\II}\frac{\widehat D_n \M_{2}(z_n)}{\zeta-\hat z_n}
 + \sum_{n=1}^{N_\III}\frac{F_n \M_{2}(\zeta_n)}{\zeta-\zeta_n}\,,\\
\hat\bmu_2
 & \coloneq
 - \sum_{n=1}^{N_\II}\left(\frac{1}{\zeta-z_n^*}
 + \frac{(\gamma(z_n^*)-1)}{\zeta-\hat z_n^*}\right)\frac{D_n^*}{\gamma(z_n^*)}\M_3(z_n^*)
 + \sum_{n=1}^{N_\III}\left(\frac{\overline F_n}{\zeta-\zeta_n^*}
 + \frac{\ii\zeta_n^*}{E_0}\frac{\widecheck F_n}{\zeta-\hat\zeta_n^*}\right)\M_{1}(\zeta_n^*)\,,\\
\hat\bmu_3
 & \coloneq
 \sum_{n=1}^{N_\I}\left[\frac{\overline C_n \M_{1}(w_n^*)}{\zeta-w_n^*}
 - \frac{\ii  E_0}{w_n}\frac{\overline C_n^* \M_3(w_n)}{\zeta-\hat w_n}\right]
 + \sum_{n=1}^{N_\II}\frac{D_n\M_{2}(z_n)}{\zeta-z_n}
 + \sum_{n=1}^{N_\III}\frac{\widehat F_n\M_{2}(\zeta_n)}{\zeta-\hat\zeta_n}\,,
\end{aligned}
\end{equation}
where $[\M^-\L]_j$ is the $j$th column of the matrix $\M^-\L$.
Note that $\Y_-(z,\zeta)$ is continuous across the real line.
\end{theorem}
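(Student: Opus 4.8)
The plan is to read both reconstruction formulas in \eqref{e:cmbe-sol} off the integral representation \eqref{e:RHPsol} of the RHP solution, handling the density matrix and the electric field separately. The formula for $\brho(t,z,\zeta)$ requires essentially no new work: relation \eqref{e:rhorhopm} already gives $\brho = \bphi_-\bvarrho_-\bphi_-^{-1}$ on $\zeta\in\Real$, and substituting $\bphi_- = \bmu_-\e^{\ii\bLambda t}$ from \eqref{e:mudef} produces exactly the stated conjugation once $\bmu_-$ is known. Hence the genuine content is twofold: first, extracting $\E(t,z)$ from the behavior of $\M$ as $\zeta\to\infty$; and second, expressing $\bmu_-$ on the real axis as the appropriate boundary value of the sectionally meromorphic solution, i.e.\ establishing \eqref{e:musol}.

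For the electric field I would expand \eqref{e:RHPsol} as $\zeta\to\infty$ and collect the coefficient of $1/\zeta$. The seed term $\Y_-(z,\zeta)$ supplies the background $\E_-$ through its own $1/\zeta$ contribution; each residue term $\Res_{\zeta=v}\M/(\zeta-v)$ contributes $\Res_{\zeta=v}\M$ at that order; and the Cauchy integral contributes $\tfrac{1}{2\pi\ii}\int_\Sigma \M^-\L\,\d\eta$. On the other hand, in $D_1$ the first column of $\M$ is $\bmu_{+,1}/a_{1,1}$, and since $a_{1,1}\to1$ its expansion \eqref{e:muinfty} has $1/\zeta$-coefficient equal to $-\ii(0,\E(t,z)^*)^\top$ in the lower two entries. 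Equating the $(j{+}1,1)$ entries of the two expansions yields an expression for $\E(t,z)^*$; conjugating it, and using the norming-constant symmetries \eqref{e:mnormingsymmetry} to re-express each quartet residue through a single representative in $D_1$, gives the first line of \eqref{e:cmbe-sol}. A direct inspection of the residue matrices in RHP~\ref{rhp:inverse} shows that exactly the poles at $w_n$, $\hat w_n^*$, $\hat z_n$, $\hat z_n^*$ and $\zeta_n$ feed the first column, which is why only $\overline C_n$, $\widecheck C_n$, $\widehat D_n$ and $F_n$ survive after consolidation.

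To obtain $\bmu_-$ on $\zeta\in\Real$ I would restrict \eqref{e:RHPsol} to the real axis and apply the Plemelj formula. Because the three columns of $\bmu_-$ are analytic in different domains -- $\bmu_{-,1}$ in $D_2$, $\bmu_{-,2}$ in $\Im\zeta>0$, $\bmu_{-,3}$ in $D_3$ -- the Cauchy projector must be taken from the matching side of $\Sigma$ column by column, which is the source of the column-dependent $\pm\ii0$ prescriptions in \eqref{e:musol} (the $+\ii0$ for the first and third columns, which are analytic in the lower half, and $-\ii0$ for the middle one). Sorting the residue contributions into the column into which each one lands, and again reducing the full quartet of norming constants to one representative via \eqref{e:mnormingsymmetry}, produces the explicit $\hat\bmu_1,\hat\bmu_2,\hat\bmu_3$. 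Feeding the resulting $\bmu_-$ back into \eqref{e:rhorhopm} then closes the reconstruction of $\brho$.

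The hard part will be the bookkeeping in the last two steps rather than any analytic subtlety. Each eigenvalue of $D_1$ spawns a quartet of poles whose residue matrices populate \emph{different} columns of $\M$, so one must track carefully which pole contributes where, rewrite every residue through the modified norming constants \eqref{e:mnormingdef}, and then verify that the accompanying factors -- the $\gamma(\zeta)$'s, the $E_0/w_n$, the $\ii\zeta_n^*/E_0$, and the various complex conjugations induced by the first symmetry -- collapse precisely to the compact coefficients displayed in \eqref{e:cmbe-sol} and \eqref{e:musol}. A secondary delicate point is checking consistency with the $\zeta\to0$ behavior $\M\sim -\tfrac{\ii}{\zeta}\M_0$: one must confirm that the normalization subtracted in \eqref{e:RHP} leaves no spurious pole at the origin, so that the representation is genuinely bounded there. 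Everything beyond these two verifications is a routine application of the Cauchy projector.
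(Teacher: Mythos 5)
Your proposal follows the paper's own proof in all essentials: the field $\E(t,z)$ is read off by matching the $1/\zeta$ coefficient of the first column of \eqref{e:RHPsol} against the asymptotics \eqref{e:muinfty} (the paper compares with $\bmu_{-,1}$ rather than $\bmu_{+,1}/a_{1,1}$, but since $a_{1,1}=1+\O(\zeta^{-1})$ and the lower entries of $\bmu_{\pm,1}$ share the same $1/\zeta$ coefficient $-\ii\E(t,z)^*$, the two comparisons are interchangeable); $\bmu_-$ on the real axis is recovered columnwise as the boundary values $\big([\M(\zeta-\ii0)]_1,[\M(\zeta+\ii0)]_2,[\M(\zeta-\ii0)]_3\big)$, exactly your column-dependent $\pm\ii0$ prescription; and $\brho$ then follows from \eqref{e:rhorhopm}. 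The one step the paper makes explicit that you leave implicit is evaluating \eqref{e:RHPsol} at the discrete eigenvalues to obtain the closed algebraic system \eqref{e:Mdiscrete} determining $\M_1(w_n^*),\M_2(z_n),\dots$; since these values appear in the statement itself, treating them as known is acceptable.

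There is, however, one concrete error in your pole bookkeeping: the residue at $\hat z_n^*$ does \emph{not} feed the first column. At $\hat z_n^*\in D_3$ the singular column of $\M$ is the second one, because $a_{2,2}(\hat z_n^*)=0$ while $b_{3,3}(\hat z_n^*)\ne0$ for a type-II eigenvalue; the norming relation $\bmu_{+,2}(t,z,\hat z_n^*)=\widecheck d_n\,\m_3(t,z,\hat z_n^*)\,\e^{\ii\hat z_n^* t}$ from \eqref{e:normingdef2} shows that $\Res_{\zeta=\hat z_n^*}\M$ is proportional to the (regular) value $\M_1(\hat z_n^*)$ but sits in column two. After the symmetries convert the pair $\big(\widecheck D_n,\M_1(\hat z_n^*)\big)$ into $\big(D_n^*,\M_3(z_n^*)\big)$, this is precisely the $(\gamma(z_n^*)-1)/(\zeta-\hat z_n^*)$ term of $\hat\bmu_2$ in \eqref{e:musol}, and it contributes nothing to \eqref{e:cmbe-sol}. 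Your five-pole list is the literal reading of the residue table in RHP~\ref{rhp:inverse}, whose matrix at $\hat z_n^*$ has its nonzero entry misplaced at position $(1,1)$ instead of $(1,2)$ (compare the correctly placed $(3,2)$ entry at the analogous point $\hat\zeta_n^*$); taken at face value that reading is also internally inconsistent, since it makes the residue of column one proportional to the value of column one at the same point. As you wrote it, the extra $\hat z_n^*$ contribution would leave a spurious fifth family of terms in $E_j(t,z)$ that no amount of ``consolidation'' removes, so executing your plan requires first making this correction; with it, the rest of the argument goes through exactly as in the paper.
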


\begin{proof}
First, we use the integral equation~\eqref{e:RHPsol} to compute the solution $\M(t,z,\zeta)$ at the discrete eigenvalues.
This can be achieved by:
i) evaluating the first column of Equation~\eqref{e:RHPsol} at the discrete eigenvalues $\zeta = w_n^*$ and $\zeta = \zeta_n^*$;
ii) evaluating the second column of Equation~\eqref{e:RHPsol} at the discrete eigenvalues $\zeta = z_n$ and $\zeta =  \zeta_n$;
iii) evaluating the third column of Equation~\eqref{e:RHPsol} at the discrete eigenvalues $\zeta = w_n$ and $\zeta = z_n^*$.

The resulting equations at the discrete spectrum are given by,
\begin{equation}
\label{e:Mdiscrete}
\begin{aligned}
\M_{1}(\zeta)
 & = \Y_{-,1}(\zeta)
 + \sum_{j = 1}^{N_\I} \left(\frac{\overline C_j^*\M_3(w_j)}{\zeta - w_j}
 + \frac{\ii  w_j^*}{E_0} \frac{\widecheck C_j \M_{1}(w_j^*)}{\zeta-\hat w_j^*}\right)
+ \sum_{j = 1}^{N_\II}\frac{\widehat D_j \M_{2}(z_j)}{\zeta-\hat z_j}\\
 &\hspace{13em} + \sum_{j = 1}^{N_\III}\frac{F_j \M_{2}(\zeta_j)}{\zeta-\zeta_j}
-\frac{1}{2\pi\ii }\int_{\Sigma}\frac{[\M^- \L]_1}{\eta-\zeta}\d \eta\,, \qquad \zeta = w_{n}^*\,,\,\, \zeta_{n}^*\,,\\
\M_{2}(\zeta)
 & = \Y_{-,2}(\zeta)
  - \sum_{j=1}^{N_\II} \left(\frac{1}{\zeta-z_j^*}
  + \frac{\gamma(z_j^*)-1}{\zeta-\hat z_j^*}\right)\frac{D_j^*}{\gamma(z_j^*)}\M_3(z_j^*) \\
 &\hspace{3.7em} + \sum_{j=1}^{N_\III} \left(\frac{1}{\zeta-\zeta_j^*} + \frac{\gamma(\zeta_j^*) - 1}{\zeta-\hat\zeta_j^*}\right)\overline F_j\M_{1}(\zeta_j^*)
 -\frac{1}{2\pi\ii }\int_{\Sigma}\frac{\left[\M^- \L\right]_2}{\eta-\zeta}\d \eta\,,\qquad \zeta = z_{n}\,,\,\,\zeta_{n}\,,\\
\M_3(\zeta)
 & = \Y_{-,3}(\zeta)
 + \sum_{j=1}^{N_\I}\left(\frac{\overline C_j \M_{1}(w_j^*)}{\zeta-w_j^*}
 - \frac{\ii E_0}{w_j}\frac{\overline C_j^*\M_3(w_j)}{\zeta-\hat w_j}\right)
 + \sum_{j=1}^{N_\II}\frac{D_j \M_{2}(z_j)}{\zeta-z_j}\\
 &\hspace{13em} + \sum_{j=1}^{N_\III}\frac{\widehat F_j\M_{2}(\zeta_j)}{\zeta-\hat \zeta_j}
-\frac{1}{2\pi\ii }\int_{\Sigma}\frac{[\M^- \L]_3}{\eta-\zeta}\d \eta\,,\qquad
\zeta = w_n\,,\,\, z_n^*\,.
\end{aligned}
\end{equation}
The six equations~\eqref{e:Mdiscrete} form a \textit{closed} algebraic system for a given set of scattering data.
Therefore, the unknown vectors
$\{\M_1(w_n^*), \M_1(\zeta_n^*), \M_2(z_n), \M_2(\zeta_n), \M_3(w_n), \M_3(z_n^*)\}$ can be solved for,
after which Equation~\eqref{e:RHPsol} yields an integral equation for the solution $\M(t,z,\zeta)$.

Now we are ready to formulate the solution of CMBE~\eqref{e:cmbe}.
First, we reconstruct the electric field envelope.
By comparing the asymptotic behavior of the first column of Equation~\eqref{e:RHPsol} as
$\zeta\to\infty$ to the asymptotic behavior of
$\bmu_{-,1}(t,z,\zeta)$ in Equation~\eqref{e:muinfty},
we obtain the reconstruction formula for the envelopes $E_j(t,z)$, with
$j = 1,2$, as in Equation~\eqref{e:cmbe-sol}.

Next, we reconstruct the density matrix.
In terms of the original spectral variable $k$,
we note that $\brho(t,z,k)$ should be constructed on the real line,
which translates to $\brho(t,z,\zeta)$ with $\zeta\in\Real$ as well.
Furthermore,
due to the double $k$-sheet,
one can reconstruct $\brho(t,z,\zeta)$ on
$\zeta\in(-\infty,-E_0]\cup[E_0,\infty)$, corresponding to $k$ on the first sheet,
or $\zeta\in[-E_0,E_0]$, corresponding to $k$ on the second sheet,
or mix the two cases together.
All the results are equivalent.
After all, the solution only concerns $k\in\Real$,
not $\lambda$ or $\zeta$.

By definition~\eqref{e:defM}, we know that
\begin{equation}
\nonumber
\bmu_-(t,z,\zeta) = \left([\M(t,z,\zeta - \ii0)]_1,[\M(t,z,\zeta + \ii0)]_2,[ \M(t,z,\zeta - \ii0)]_3\right)\,,\qquad \zeta\in\Real\,,
\end{equation}
Hence, by the solution formula~\eqref{e:RHPsol} of the RHP,
we obtain the reconstruction formula for $\rho$ in Equation~\eqref{e:cmbe-sol}.

\end{proof}

\subsection{Trace formul\ae}
\label{s:trace-formula}

Similarly to the two-level MBEs~\cite{bgkl2019} and the focusing NLS equation~\cite{kbk2015,pab2006},
it is also possible to obtain the so-called ``trace" formula for CMBE~\eqref{e:cmbe},
which express the scattering data in terms of the reflection coefficients and discrete eigenvalues.
We present this ``trace'' formul\ae~below and relegate the calculations to Appendix~\ref{a:tracetheta}. The formul\ae~are
\begin{equation}
\label{e:trace-formula}
\begin{aligned}
b_{1,1}(z,\zeta) & = \e^{-\frac{1}{2\pi\ii }\int_\Sigma\frac{J}{\eta-\zeta}\d\eta}
\prod_{n=1}^{N_\I}\frac{\zeta-w_n^*}{\zeta-w_n}\frac{\zeta-\hat w_n}{\zeta-\hat w_n^*}
\prod_{n=1}^{N_\II}\frac{\zeta-\hat z_n^*}{\zeta-\hat z_n}
\prod_{n=1}^{N_\III}\frac{\zeta-\zeta_n^*}{\zeta-\zeta_n}\,,\\
a_{2,2}(z,\zeta) & = \e^{-\ii \Delta\theta}\e^{\frac{1}{2\pi\ii }\int_\Real\frac{J_0}{\eta-\zeta}\d\eta}
\prod_{n=1}^{N_\II}\frac{\zeta-z_n^*}{\zeta-z_n}\frac{\zeta-\hat z_n^*}{\zeta-\hat z_n}
\prod_{n=1}^{N_\III}\frac{\zeta-\zeta_n^*}{\zeta-\zeta_n}\frac{\zeta-\hat\zeta_n^*}{\zeta-\hat\zeta_n}\,,\\
b_{2,2}(z,\zeta) & = \e^{\ii \Delta\theta}\e^{-\frac{1}{2\pi\ii }\int_\Real\frac{J_0}{\eta-\zeta}\d \eta}
\prod_{n=1}^{N_\II}\frac{\zeta-z_n}{\zeta-z_n^*}\frac{\zeta-\hat z_n}{\zeta-\hat z_n^*}
\prod_{n=1}^{N_\III}\frac{\zeta-\zeta_n}{\zeta-\zeta_n^*}\frac{\zeta-\hat\zeta_n}{\zeta-\hat\zeta_n^*}\,,
\end{aligned}
\end{equation}
where the integrands are defined by
\begin{equation}
\label{e:J0J-def}
\begin{aligned}
J_0 & \coloneq \log\left[1+\gamma(\zeta)(\gamma(\zeta)-1)r_3(\hat \zeta)r_3^*(\hat\zeta^*)
+\gamma(\zeta) r_3(\zeta)r_3^*(\zeta^*)\right]\,,\\
J & \coloneqq \begin{cases}
J_1\,,\qquad & \zeta\in\Sigma_1\,,\\
J_2\,,\qquad & \zeta\in\Sigma_2\,,\\
J_3\,,\qquad & \zeta\in\Sigma_3\,,\\
J_4\,,\qquad & \zeta\in\Sigma_4\,,
\end{cases}\\
J_1
 & \coloneqq -\log\left(1 + \frac{1}{\gamma(\zeta)}r_1(\zeta)r_1^*(\zeta^*) + r_2(\zeta)r_2^*(\zeta^*)\right)\,,\\
J_2
 & \coloneqq \frac{1}{2\pi\ii }\int_\Sigma\frac{J_0}{\eta-\zeta}\d\eta - \log\left(1-r_2^*(\zeta^*)r_2^*(\hat\zeta^*)\right)\,,\\
J_3
 & \coloneqq -\log\left(r_2(\hat\zeta)r_2^*(\hat\zeta^*) + \frac{1}{\gamma(\zeta)(\gamma(\zeta)-1)}
r_1(\hat\zeta)r_1^*(\hat\zeta^*)+1\right)\,,\\
J_4
 & \coloneqq \frac{1}{2\pi\ii }\int_\Sigma\frac{J_0}{\eta-\zeta}\d\eta - \log\left(1-r_2(\hat\zeta)r_2(\zeta)\right)\,.
\end{aligned}
\end{equation}

Also, in Appendix~\ref{a:tracetheta}, we show how to derive the ``theta'' formula,
which relates the asymptotic phase difference of the solution as $t\to\pm\infty$,
\begin{equation}
\nonumber
\Delta\theta
 \coloneq \theta_+-\theta_-
 = \frac{1}{2\pi}\int_\Sigma\frac{J}{\eta}\d\eta
 - 4\sum_{n=1}^{N_\I}\arg w_n
 + 2\sum_{n=1}^{N_\II}\arg z_n
 - 2\sum_{n=1}^{N_\III}\arg \zeta_n\,,
\end{equation}
where $\theta_\pm$ are defined in the asymptotic conditions
$\E_\pm(z)= \E_0\e^{\ii \theta_\pm}$ with $\|\E_0\| = E_0$.

\subsection{Reflectionless potentials}
\label{s:reflectionless}

In this section,
we discuss reflectionless potentials corresponding to pure soliton solutions.
\begin{remark}
Similarly to other cases of MBEs,
reflectionless potentials require not only $b_j(0,\zeta) \equiv 0$,
i.e., reflectionless input data $\E(t,0)$,
but also special asymptotic data,
i.e., $\bvarrho_-(z,\zeta)$ must be diagonal,
[cf. the ODEs~\eqref{e:timeevoeqnB1} and~\eqref{e:timeevoeqnB2}].
\end{remark}
Following this discussion,
we take the asymptotic conditions from Equation~\eqref{e:BC} as
\begin{equation}
\label{e:reflectionlessBC}
\bvarrho_-(z,\zeta) = \diag(\varrho_{-,1,1},\varrho_{-,2,2},\varrho_{-,3,3})\,,\qquad
\zeta\in \Sigma\,,
\end{equation}
with $\bvarrho_{-,j,j}(z,\zeta)\ge0$ for $j = 1,2,3$.
Importantly,
the above matrix satisfies the mandatory symmetry~\eqref{e:BC2}.

We assume that there are total $N_\I\ge0$ discrete eigenvalues of the first kind $w_n$,
$N_\II\ge0$ discrete eigenvalues of the second kind $z_n$,
and $N_\III\ge0$ discrete eigenvalues of the third kind $\zeta_n$.

Because all three reflection coefficients vanish,
the jump matrix~\eqref{e:jumpmatrix} is identically zero.
Therefore,
the RHP~\ref{rhp:inverse} can be solved explicitly,
via the closed algebraic system~\eqref{e:Mdiscrete} with all the integrals vanishing.
We define the following vector notation for $j = 1,2,3$,
in order to further simplify the system~\eqref{e:Mdiscrete}:
\begin{equation}
\begin{aligned}
\ul{M_{j,1}(w_n^*)}
 & \coloneq \bpm M_{j,1}(w_1^*),\dots, M_{j,1}(w_{N_\I}^*)\epm\,,\qquad
\ul{M_{j,1}(\zeta_n^*)}
 \coloneq \bpm M_{j,1}(\zeta_1^*),\dots, M_{j,1}(\zeta_{N_\III}^*)\epm\,,\\[1ex]
\ul{M_{j,2}(z_n)}
 & \coloneq \bpm M_{j,2}(z_1),\dots, M_{j,2}(z_{N_\II})\epm\,,\qquad
\ul{M_{j,2}(\zeta_n)}
 \coloneq \bpm M_{j,2}(\zeta_1),\dots, M_{j,2}(\zeta_{N_\III})\epm\,,\\[1ex]
\ul{M_{j,3}(z_n^*)}
 & \coloneq \bpm M_{j,3}(z_1^*),\dots,M_{j,3}(z_{N_\II}^*)\epm\,,\qquad
\ul{M_{j,3}(w_n)}
 \coloneq \bpm M_{j,3}(w_1),\dots,M_{j,3}(w_{N_\I})\epm\,,\\[1ex]
\ul X_j
 & \coloneq  \bpm \ul{M_{j,1}(w_n^*)},\ul{M_{j,1}(\zeta_n^*)}, \ul{M_{j,2}(z_n)},\ul{M_{j,2}(\zeta_n)},
\ul{M_{j,3}(z_n^*)},\ul{M_{j,3}(w_n)}\epm^\top\,.
\end{aligned}
\end{equation}
Then,
the system~\eqref{e:Mdiscrete} can be rewritten in the following compact form,
\begin{equation}
\nonumber
(\bbI - \A)\ul X_j = \ul B_j\,,
\end{equation}
where $\bbI$ is the $3\times3$ identity matrix,
and the block matrices $\A$ and $\ul B_j$ are defined as
\begin{equation}
\begin{aligned}
\A & \coloneq \big(\A_{i,j}\big)_{6\times6}\,,\\
\ul B_j & \coloneq \big(
\ul{Y_{-,j,1}(w_n^*)},\ul{Y_{-,j,1}(\zeta_n^*)},\ul{Y_{-,j,2}(z_n)},\ul{Y_{-,j,2}(\zeta_n)},
\ul{Y_{-,j,3}(z_n^*)},\ul{Y_{-,j,3}(w_n)}
\big)^\top\,,
\end{aligned}
\end{equation}
with $Y_{-,i,j}$ being the $(i,j)$-component of the matrix $\Y_-$ in Equation~\eqref{e:eigen} and the underline denoting
\begin{equation}
\begin{aligned}
\ul{Y_{-,j,1}(w_n^*)}
 & \coloneq \big(Y_{-,j,1}(w_1^*),\dots,Y_{-,j,1}(w_{N_\I}^*)\big)\,,&\qquad
\ul{Y_{-,j,1}(\zeta_n^*)}
 & \coloneq \big(Y_{-,j,1}(\zeta_1^*),\dots,Y_{-,j,1}(\zeta_{N_\III}^*)\big)\,,\\[1ex]
\ul{Y_{-,j,2}(z_n)}
 & \coloneq \big(Y_{-,j,2}(z_1),\dots,Y_{-,j,2}(z_{N_\II})\big)\,,&\qquad
\ul{Y_{-,j,2}(\zeta_n)}
 & \coloneq \big(Y_{-,j,2}(\zeta_1),\dots,Y_{-,j,2}(\zeta_{N_\III})\big)\,,\\[1ex]
\ul{Y_{-j,3}(z_n^*)}
 & \coloneq \big(Y_{-,j,3}(z_1^*),\dots,Y_{-,j,3}(z_{N_\II}^*)\big)\,,&\qquad
\ul{Y_{-,j,3}(w_n)}
 & \coloneq  \big(Y_{-,j,3}(w_1),\dots,Y_{-,j,3}(w_{N_\I})\big)\,.
\end{aligned}
\end{equation}
The blocks $\A_{i,j}$ are defined as ($m$ is the row index and $n$ is the column index)
\bse
\begin{equation}
\begin{aligned}
\A_{1,1} & \coloneq \bigg(\frac{\ii  w_n^*}{E_0}\frac{\widecheck C_n}{w_m^*-\hat w_n^*}\bigg)_{N_\I\times N_\I}\,,&
\A_{1,3} & \coloneq \bigg(\frac{\widehat D_n}{w_m^*-\hat z_n}\bigg)_{N_\I\times N_\II}\,,\\
\A_{1,4} & \coloneq \bigg(\frac{F_n}{w_m^*-\zeta_n}\bigg)_{N_\I\times N_\III}\,,&
\A_{1,6} & \coloneq \bigg(\frac{-\overline C_n^*}{w_m^*-w_n}\bigg)_{N_\I\times N_\I}\,,\\
\A_{2,1} & \coloneq \bigg(\frac{\ii  w_n^*}{E_0}\frac{\widecheck C_n}{\zeta_m^*-\hat w_n^*}\bigg)_{N_\III\times N_\I}\,,&
\A_{2,3} & \coloneq \bigg(\frac{\widehat D_n}{\zeta_m^*-\hat z_n}\bigg)_{N_\III\times N_\II}\,,\\
\A_{2,4} & \coloneq \bigg(\frac{F_n}{\zeta_m^*-\zeta_n}\bigg)_{N_\III\times N_\III}\,,&
\A_{2,6} & \coloneq \bigg(\frac{-\overline C_n}{\zeta_m^*-w_n}\bigg)_{N_\III\times N_\I}\,,\\
\A_{3,2} & \coloneq \bigg(\frac{\overline F_n}{z_m-\zeta_n^*}+\frac{\ii\zeta_n^*}{E_0}\frac{\widecheck F_n}{z_m-\hat \zeta_n^*}\bigg)_{N_\II\times N_\III}\,,&
\A_{3,5} & \coloneq \bigg(-\frac{D_n^*}{\gamma(z_n^*)}\bigg(\frac{1}{z_m-z_n^*} + \frac{\gamma(z_n^*) - 1}{z_m-\hat z_n^*}\bigg)\bigg)_{N_\II\times N_\II}\,,\\
\end{aligned}
\end{equation}
as well as
\begin{equation}
\begin{aligned}
\A_{4,2} & \coloneq \bigg(\frac{\overline F_n}{\zeta_m-\zeta_n^*}+\frac{\ii\zeta_n^*}{E_0}\frac{\widecheck F_n}{\zeta_m-\hat \zeta_n^*}\bigg)_{N_\III\times N_\III}\,,&
\A_{4,5} & \coloneq \bigg(- \frac{D_n^*}{\gamma(z_n^*)}\bigg(\frac{1}{\zeta_m-z_n^*} + \frac{\gamma(z_n^*) - 1}{\zeta_m-\hat z_n^*}\bigg)\bigg)_{N_\III\times N_\II}\,,\\
\A_{5,1} & \coloneq \bigg(\frac{\overline C_n}{z_m^*-w_n^*}\bigg)_{N_\II\times N_\I}\,,&
\A_{5,3} & \coloneq \bigg(\frac{D_n}{z_m^*-z_n}\bigg)_{N_\II\times N_\II}\,,\\
\A_{5,4} & \coloneq \bigg(\frac{\widehat F_n}{z_m^*-\hat\zeta_n}\bigg)_{N_\II\times N_\III}\,,&
\A_{5,6} & \coloneq \bigg(-\frac{\ii E_0}{w_n}\frac{\overline C_n^*}{z_m^*-\hat w_n}\bigg)_{N_\II\times N_\I}\,,\\
\A_{6,1} & \coloneq \bigg(\frac{\overline C_n}{w_m-w_n^*}\bigg)_{N_\I\times N_\I}\,,&
\A_{6,3} & \coloneq \bigg(\frac{D_n}{w_m-z_n})\bigg)_{N_\I\times N_\II}\,,\\
\A_{6,4} & \coloneq \bigg(\frac{\widehat F_n}{w_m-\hat\zeta_n}\bigg)_{N_\I\times N_\III}\,,&
\A_{6,6} & \coloneq \bigg(-\frac{\ii E_0}{w_n}\frac{\overline C_n^*}{w_m-\hat w_n}\bigg)_{N_\I\times N_\I}\,,
\end{aligned}
\end{equation}
\ese
with the remaining blocks being identically zero matrices of appropriate sizes.
Then the unknowns $X_{i,j}$ can be expressed as
\begin{equation}
\nonumber
X_{i,j} = \frac{\det(\overline{(\bbI - \A)_{i,j}})}{\det(\bbI - \A)}\,,
\end{equation}
where $\overline{(\bbI - \A)_{i,j}}$ is the matrix $\bbI - \A$ with the $i$-th column replaced by $\ul B_j$.
Thus, the electric field envelope $E_j$ is obtained from Equation~\eqref{e:cmbe-sol} as
\bse
\begin{equation}
E_j = E_{-,j} - \ii\,\bigg(\frac{\det(\widetilde{(\bbI - \A)}_{j+1})}{\det(\bbI - \A)}\bigg)^*\,,
\end{equation}
where
\begin{equation}
\begin{aligned}
\widetilde{(\bbI - \A)}_j
 & \coloneq \bpm0 & -\ul D \\ \ul B_j & \bbI - \A\epm\,,&
\ul D & \coloneq \bpm\ul D_1,\ul D_2,\dots, \ul D_6\epm\,,&
\ul D_1 & \coloneq \bigg(\frac{\ii  w_1^*}{E_0}\widecheck C_1,\dots,\frac{\ii  w_{N_\I}^*}{E_0}\widecheck C_{N_\I}\bigg)\,,\\
\ul D_3 & \coloneq \big(\widehat D_1,\dots, \widehat D_{N_\II}\big)\,,&
\ul D_4 & \coloneq \big(F_1,\dots, F_{N_\III}\big)\,,&
\ul D_6 & \coloneq \big(-\overline C_1^*,\dots, -\overline C_{N_\I}^*\big)\,,
\end{aligned}
\end{equation}
\ese
with $\ul D_2$ and $\ul D_5$ being zero vectors of appropriate sizes.

We can also reconstruct the eigenfunction matrix from the solution~\eqref{e:musol} as
\bse
\label{e:reflectionless-mu-sol}
\begin{equation}
\bmu_-(t,z,\zeta)
 = \Y_- +\left[\frac{\det(\Z_{i,j})}{\det(\bbI - \A)}\right]_{3\times3}\,,
\end{equation}
with
\begin{equation}
\begin{aligned}
\Z_{i,j}
 & \coloneq \bpm0 & -\@G_j \\ \B_i & \bbI - \A\epm \,,\\[1ex]
\@G_1 & \coloneq \bigg(\frac{\ii}{E_0}\ul{\frac{w_n^*\widecheck C_n}{\zeta-\hat w_n^*}},\@0,
\ul{\frac{\widehat D_n}{\zeta-\hat z_n}},\ul{\frac{F_n}{\zeta-\zeta_n}},
\@0,-\ul{\frac{\overline C_n^*}{\zeta-w_n}}\bigg)\,,\\[1ex]
\@G_2 & \coloneq \bigg(\@0,\ul{\frac{\overline F_n}{\zeta-\zeta_n^*}+\frac{\ii\zeta_n^*}{E_0}\frac{\widecheck F_n}{\zeta-\hat\zeta_n^*}},
\@0,\@0,-\ul{\left(\frac{1}{\zeta-z_n^*}
 + \frac{(\gamma(z_n^*)-1)}{\zeta-\hat z_n^*}\right)\frac{D_n^*}{\gamma(z_n^*)}},\@0\bigg)\,,\\[1ex]
\@G_3 & \coloneq \bigg(\ul{\frac{\overline C_n}{\zeta-w_n^*}},\@0,\ul{\frac{D_n}{\zeta-z_n}},
\ul{\frac{\widehat F_n}{\zeta-\hat\zeta_n}},\@0,-\ul{\frac{\ii E_0}{w_n}\frac{\overline C_n^*}{\zeta-\hat w_n}}\bigg)\,,
\end{aligned}
\end{equation}
\ese
where all $\@0$ in the above expression are zero vectors of appropriate sizes,
and the underline notation $\ul{x_n}$ denotes vectors $(x_1,\dots,x_N)$ of appropriate sizes $N$ for each type of discrete eigenvalues.

Finally, the reconstruction formula~\eqref{e:cmbe-sol} yields the density matrix $\brho(t,z,\zeta)$
\begin{equation}
\label{e:reflectionlessrhosol}
\brho(t,z,\zeta) = \bmu_-(t,z,\zeta)\,\bvarrho_-(z,\zeta)\,\bmu_-^{-1}(t,z,\zeta)\,,\qquad
\zeta\in\Real\,.
\end{equation}

\section{Solitons and their behavior, part I: general description}
\label{s:generalsoliton}

We are ready to calculate explicit soliton solutions of the MBEs~\eqref{e:cmbe} with NZBG,
and we present the discussion in two sections.
In this section,
we calculate all three kinds of one-soliton solutions of CMBE~\eqref{e:cmbe},
after which we compute various limiting cases.
In order to achieve maximal generality,
we present descriptions of soliton solutions without imposing explicit assumptions on the spectral line shape $g(k)$.
As a result, the $z$-dependence of soliton solutions cannot be calculated fully.
In Section~\ref{s:IB},
we analyze all aforementioned soliton solutions with two particular spectral-line shapes:
Lorentzian and sharp-line.
With an explicit expression for the shape of the spectral line,
we are able to discuss the $z$-dependence of solutions,
and also perform stability analysis on soliton solutions.
Recall the discussion in Section~\ref{s:reflectionless};
we take the diagonal asymptotic condition~\eqref{e:reflectionlessBC} throughout this section.

Recall the discussion of the asymptotic conditions in Section~\ref{s:intro}.
Without loss of generality,
we take the asymptotic input condition~\eqref{e:bc-q}.
Therefore, the asymptotic conditions for the electric-filed envelopes are
\begin{equation}
\label{e:soliton-backgroundE}
E_{-,1}(z) = E_0 \e^{2 \ii W_-(z)}\cos\alpha\,,\qquad
E_{-,2}(z) = E_0 \e^{2 \ii W_-(z)}\sin\alpha\,,
\end{equation}
where $W_-(z)$ is defined in Equation~\eqref{e:QpmQ0} and $\alpha\in[0,\pi/2]$.
Since there are three kinds of discrete eigenvalues (cf. Section~\ref{s:eigen}),
there will be three kinds of one-soliton solutions.
Because we discuss them separately,
for simplicity and for all three types of one-soliton solutions,
we take the discrete eigenvalue in the identical form:
\begin{equation}
\label{e:wzzeta}
w_1 = z_1 = \zeta_1 = E_0\, \eta\,\e^{\ii \beta} \in D_1\,,
\end{equation}
where $\eta>1$ and $\beta\in(0,\pi)$.
We also define the following quantities for later usage:
\begin{equation}
\Delta_\pm \coloneqq \eta \pm 1/\eta\,,\qquad
c_o \coloneqq \sqrt{\eta^4+2\eta^2\cos2\beta+1}\,,\qquad
c_\pm \coloneqq \eta^2 \pm 1/\eta^2\,.
\end{equation}
Furthermore,
it will be convenient to use the subscripts $\I$, $\II$ and $\III$ to denote quantities corresponding to soliton solutions of
the first kind,
the second kind and the third kind, respectively.
Correspondingly,
subscripts $\I$ and $\II$ in Sections~\ref{s:solitonI} and~\ref{s:solitonII} do not denote evaluations on the first and second complex $k$-planes from the direct problem, respectively.
It is also worth noting that, due to the complexity of the explicit expressions for the $3\times3$ density matrix $\brho(t,z,k)$, we mainly focus on analyzing the electric field envelope $\E(t,z)$ in Sections~\ref{s:generalsoliton} and~\ref{s:IB}.
On the other hand, it should be pointed out that $\brho(t,z,k)$ can always be constructed explicitly from Equations~\eqref{e:reflectionless-mu-sol} and~\eqref{e:reflectionlessrhosol}.

\subsection{One-soliton solution: type I}
\label{s:solitonI}

We take the discrete eigenvalue $w_1$ in Equation~\eqref{e:wzzeta},
and take the corresponding norming constant as
\begin{equation}
\label{e:solitonC1bar}
\overline C_1(t,z) = \exp(2\ii  \lambda_1^* t+ \xi_\I(z) + \ii\,\psi_\I(z))\,,
\end{equation}
where $\lambda_1\coloneq\lambda(\zeta = w_1) \coloneq E_0\cosh(\ln\eta + \ii\beta)$ and $\xi_\I(z)\in\Real$,
$\psi_\I(z)\in\Real$.
The $z$-propagation of $\overline C_1(t,z)$ is governed by Equation~\eqref{e:normingtimeevolution}.
It currently cannot be determined further due to the missing information about the shape of the spectral line, $g(k)$. Using the symmetries~\eqref{e:mnormingsymmetry},
one can compute the other three norming constants:
$C_1(t,z)$, $\widehat C_1(t,z)$ and $\widecheck C_1(t,z)$, as well.

Following Section~\ref{s:reflectionless},
we obtain the one-soliton solution formula,
\bse
\label{e:soliton1}
\begin{equation}
E_j(t,z) \coloneq q_{\I,j}(t,z)\,,\qquad  j = 1,2,
\end{equation}
where it can be shown that $q_{\I,j}(t,z)$ is a one-soliton solution of the two-level MBEs with NZBG
$q_{\I,j} \to E_{-j}$ as $t\to-\infty$,
namely,
\begin{equation}
q_{\I,j}(t,z)
 \coloneq \e^{-2 \ii \beta} E_{-,j}
\frac{\cosh (\chi_\I - 2\ii \beta) -  A \big[c_+ \big(\eta^2 \sin (s_\I + 2\beta)+\sin s_\I\big) - \ii c_- \big(\eta^2 \cos (s_\I + 2\beta) + \cos s_\I\big)\big]}
{\cosh\chi_\I + 2 A \big[\eta^2 \sin (s_\I + 2 \beta)+\sin s_\I\big]}\,,
\end{equation}
where we omit both $t$ and $z$ dependence, $j = 1,2$ and
\begin{gather}
\chi_\I(t,z) \coloneq 2t \Im(\lambda_1) + \xi_\I(z) + \ln\frac{\Delta_+\csc\beta}{2E_0 c_o}\,,\qquad
s_\I(t,z) \coloneq 2t \Re(\lambda_1) + \psi_\I(z)\,,\qquad
A \coloneq \frac{\sin\beta}{\Delta_+ c_o}\,.
\end{gather}
\ese
Importantly,
Equations~\eqref{e:soliton1} imply that the polarization state of the light is preserved under propagation onto the medium.

\subsubsection{Purely imaginary discrete eigenvalue.}

In particular,
for a purely imaginary discrete eigenvalue,
i.e., $\beta = \pi/2$,
the soliton solution simplifies to
\bse
\begin{equation}
\label{e:soliton1pureimaginary}
E_j(t,z) = E_{-,j}\frac{\cosh\widetilde\chi_\I - (c_+ \sin \psi_\I - \ii c_- \cos \psi_\I)/\Delta_+}
{\cosh\widetilde\chi_\I - 2/\Delta_+ \sin \psi_\I}\,,\qquad
j = 1,2,
\end{equation}
with
\begin{equation}
\widetilde\chi_\I(t,z)
 = E_0 \Delta_- t +\xi_\I(z) + \ln \frac{\Delta_+}{2E_0 \eta\Delta_-}\,,\qquad
\psi_\I = \psi_\I(z)\,.
\end{equation}
\ese

\subsubsection{Soliton velocity and amplitude.}

The soliton~\eqref{e:soliton1} is localized along the line $\chi_\I(t,z) = $ constant,
which is $2t \Im(\lambda_1) + \xi_\I(z) = $ constant.
Taking the asymptotic condition~\eqref{e:reflectionlessBC},
we know that $\xi_\I(z) = \xi_{\I}^{(1)} z + \xi_\I(0)$ where $\xi_{\I}^{(1)} $ is independent of $z$.
Then the soliton velocity is defined as
\begin{equation}
\label{e:soliton-typeI-velocity}
V_\I \coloneq -2\Im(\lambda_1)/\xi_\I^{(1)}\,.
\end{equation}
The soliton amplitude compared to the background is defined as
\begin{equation}
A_j(z)
 \coloneq \max_{t\in\Real}|E_j(t,z)| - |E_{-,j}(z)|\,.
\end{equation}

\subsection{Periodic solution}
\label{s:periodic-solution}

Firstly, we focus on the case where the discrete eigenvalue is in the first quadrant, i.e., $\beta\in(0,\pi/2)$.
Then, one can take the limit of Equation~\eqref{e:soliton1} as $\eta\to1$ and obtain the following:
\begin{equation}
\begin{aligned}
E_j(t,z) = \e^{-2\ii \beta}E_{-,j}
\frac{\cosh(\chi_\I-2\ii \beta) - \big[\sin(s_\I+2\beta) + \sin s_\I \big]/(2\tan\beta)}
{\cosh\chi_\I + \big[\sin(s_\I+2\beta) + \sin s_\I\big]/(2\tan\beta)}\,,\\
\chi_\I(z) = \xi_\I(z) - \ln (E_0\sin2\beta)\,,\qquad
s_\I(t,z) = 2E_0 t \cos\beta + \psi_\I(z)\,.
\end{aligned}
\end{equation}
This case is the analogue of the Akhmediev breather of the focusing NLS equation,
and is the vectorization of the periodic solution of the two-level MBE.

A detailed discussion of the periodic solution and why it bears this name is presented in Section~\ref{s:periodic-ih}.

\subsection{Rational solutions}
\label{s:rational-solution}

We use the soliton solution~\eqref{e:soliton1pureimaginary} with a purely imaginary discrete eigenvalue $\zeta = \ii E_0\eta$ to produce a rational solution,
which is similar to the one obtained from two-level MBE with NZBG or from the focusing NLS equation with NZBC,
i.e., the famous Peregrine soliton.
The idea for producing such a solution is to compute the nontrivial limit of Equation~\eqref{e:soliton1pureimaginary} as $\eta\to1$.

To do so, we first compute the limit of $R_{-1,1}(z,\zeta)$ and $R_{-3,3}(z,\zeta)$ as $\eta\to1$ to analyze the propagation of the soliton solution~\eqref{e:soliton1pureimaginary}.
Note that $R_{-2,2}(z,\zeta)$ is not needed in type $\I$ soliton solution.
It can be shown that
\begin{equation}
\begin{aligned}
R_{-1,1}(z,\ii E_0\eta) & = R^{(0)}(z) - (\eta-1)R^{(1)}(z) + \O(\eta-1)^2\,,\qquad \eta\to1^+\,,\\
R_{-3,3}(z,\ii E_0\eta) & = R^{(0)}(z) + (\eta-1)R^{(1)}(z) + \O(\eta-1)^2\,,\qquad \eta\to1^+\,,
\end{aligned}
\end{equation}
where
\begin{equation}
\begin{aligned}
R^{(0)}(z) & \coloneq \int_{-\infty}^\infty \varrho_-^+(z,\zeta(k')) g(k') \frac{\d k'}{k' + \ii E_0} + 2w_-(z)\,,\\
R^{(1)}(z) & \coloneq \int_{-\infty}^\infty \frac{\ii  E_0}{\lambda(k')}\varrho_-^-(z,\zeta(k')) \,g(k')\frac{\d k'}{k' + \ii E_0}\,,
\end{aligned}
\end{equation}
and $\varrho_-^-$ is defined in Equation~\eqref{e:rhopmpm}.
Therefore, we can write
\begin{equation}
R_{-1,1}(z,\ii E_0\eta) - R_{-3,3}(z,\ii E_0\eta)
 = -2(\eta-1)R^{(1)}(z)  + \O(\eta-1)^2\,,\qquad \eta\to1^+\,.
\end{equation}
Using Equations~\eqref{e:normingtimeevolution},~\eqref{e:solitonC1bar} and the above expansions,
we obtain the following expansion for the ODEs governing $\xi_\I(z)$ and $\psi_\I(z)$ as $\eta\to1^+$:
\begin{equation}
\partial_z \xi_\I(z) = (\eta-1) R_{\Im}^{(1)}(z) + \O(\eta-1)^2\,,\qquad
\partial_z \psi_\I(z) = - \ii(\eta-1) R_{\Re}^{(1)}(z) + \O(\eta-1)^2\,,
\end{equation}
where $R_{\Re}^{(1)}(z)$ and $R_{\Im}^{(1)}(z)$ are the real and imaginary parts of $R^{(1)}(z)$, respectively.
Thereafter,
we solve for $\xi_\I(z)$ and $\psi_\I(z)$ easily in the limit $\eta\to1^+$:
\begin{equation}
\begin{aligned}
\xi_\I(z) & = \widetilde\xi_0 + \bigg(\int_0^z R_{\Im}^{(1)}(z)\d z + \widetilde\xi_1\bigg)(\eta-1) + \O(\eta-1)^2\,,\\
\psi_\I(z) & = \widetilde\psi_0 +\bigg(- \ii \int_0^z R_{\Re}^{(1)}(z)\d z + \widetilde\psi_1\bigg)(\eta-1) + \O(\eta-1)^2\,,
\end{aligned}
\end{equation}
where $\widetilde\xi_0$, $\widetilde\xi_1$, $\widetilde\psi_0$ and $\widetilde\psi_1$ are all independent of $z$.
In particular, these four constants depend on the ``initial data" $\xi_\I(0)$ and $\psi_\I(0)$,
i.e., $\overline C_1(0,0)$,
so that they can be chosen arbitrarily.

Next, we compute the nontrivial limit of Equation~\eqref{e:soliton1pureimaginary} as $\eta\to1^+$.
Choosing $\widetilde\xi_0 =  -\ln({\Delta_+}/{2E_0 \eta\Delta_-})$ and $\widetilde\psi_0 = \pi/2$
yields the limit as $\eta\to1^+$ as follows:
\begin{equation}
\begin{aligned}
\widetilde\chi_\I(t,z)
 & = E_0 \Delta_- t +\bigg(\int_0^z R_{\Im}^{(1)}(z) \d z + \widetilde\xi_1\bigg)(\eta-1)+ \O(\eta-1)^2\,,\\
\psi_\I(z)
 & = \pi/2  + \bigg(- \ii \int_0^z R_{\Re}^{(1)}(z) \d z + \widetilde\psi_1\bigg)(\eta-1) + \O(\eta-1)^2\,.
\end{aligned}
\end{equation}
Correspondingly, the nontrivial limit of the soliton solution~\eqref{e:soliton1pureimaginary} as $\eta\to1^+$ is revealed as
\begin{equation}
E_j(t,z) = E_{-,j}(z) \frac{X^2 + s^2 - 4\ii s -3}{X^2 + s^2 +1}\,,
\label{e:rationalsolution}
\end{equation}
with
\begin{equation}
X(t,z) \coloneq 2E_0t + \int_0^z R_{\Im}^{(1)}(z)\d z + \widetilde\xi_1\,,\qquad
s(t,z) \coloneq - \ii \int_0^z R_{\Re}^{(1)}(z)\d z + \widetilde\psi_1\,.
\end{equation}
Note that the two real constants $\widetilde\xi_1$ and $\widetilde\psi_1$ are arbitrary.
They determine the displacement of this solution in the $(t,z)$ plane.
Again, this rational solution is the analogue to the one for the classic two-level MBE with NZBG and the one for the focusing NLS equation with NZBCs.
Note however that, while the corresponding solutions for the focusing NLS equation are rational both in $t$ and $z$,
in this case the solution \eqref{e:rationalsolution} is rational in $t$, but it need not necessarily be rational in $z$.

A detailed discussion under the assumption that the spectral line is Lorentzian is presented in Section~\ref{s:rational-ih}.


\subsection{One-soliton solution: type II}
\label{s:solitonII}

Let us discuss the one-soliton solution with a discrete eigenvalue $z_1$ given in Equation~\eqref{e:wzzeta}.
We take the modified norming constant
\begin{equation}
D_1(t,z) \coloneq \exp(-\ii\hat z_1 t + \xi_\II(z) + \ii\psi_\II(z))\,,
\end{equation}
where we recall that $D_1(t,z)$ can be obtained from the ODE~\eqref{e:normingtimeevolution} once the spectral line shape $g(k)$ is known.
The corresponding components of the soliton solution are
\begin{gather}
\label{e:soliton2}
E_j(t,z) = \e^{\ii \beta}\big[ E_{-,j} \cosh (\chi_\II + \ii\beta) +
(-1)^j \e^{-\ii  s_\II} E_{-,3-j}^*\sqrt{\Delta_+/\eta} \sin\beta  \big]\sech\chi_\II\,,\qquad
j = 1,2,
\end{gather}
where $E_{-,j} = E_{-,j}(z)$ from Equation~\eqref{e:soliton-backgroundE},
and
\begin{equation}
\label{e:chisII}
\begin{aligned}
\chi_\II(t,z)
 & \coloneq (E_0 t \sin\beta)/\eta + \xi_\II(z) -\ln(2E_0 \sqrt{\Delta_+\eta}\sin\beta)\,,\\
s_\II(t,z)
 & \coloneq (E_0 t \cos\beta)/\eta + \psi_\II(z)\,.
\end{aligned}
\end{equation}
Obviously, the electric field envelopes~\eqref{e:soliton2} are concentrated along the line $\chi_\II(t,z) = \,$constant.
With the chosen asymptotic condition~\eqref{e:reflectionlessBC}, the ODE~\eqref{e:normingtimeevolution} implies the form $\xi_\II(z) = \xi_\II^{(1)}z + \xi_\II(0)$.
As a result, the soliton velocity is defined as
\begin{equation}
\label{e:soliton2-velocity}
V_\II = -\frac{E_0\sin\beta}{\eta\xi_\II^{(1)}}\,.
\end{equation}
It is evident that the velocity $V_\II$ is affected by the location of the discrete eigenvalue and the asymptotic condition~\eqref{e:reflectionlessBC} via the quantities $\xi_\II^{(1)}$ and $\beta$.
As such, one expects more interesting and complex phenomena,
as we demonstrate using explicit examples in the next section.

\subsubsection{Purely imaginary discrete eigenvalue.}
In particular, for a purely imaginary discrete eigenvalue,
i.e., $\beta = \pi/2$,
the above one-soliton solution becomes
\begin{equation}
E_j(t,z) = - E_{-,j}(z) \tanh\chi_\II(t,z) + (-1)^j \sqrt{\Delta_+/\eta} E_{j-3}^* \e^{-\ii \psi_\II(z)} \sech\chi_\II(t,z)\,,
\end{equation}
with
\begin{equation}
\chi_\II(t,z) \coloneq E_0 t/\eta + \xi_\II(z) -\ln(2E_0 \sqrt{\Delta_+\eta})\,.
\end{equation}

\subsection{A nontrivial plane-wave solution}

Unlike for the soliton solution of type $\I$,
we discover a special limiting case as $\eta\to\infty$ of the second type of soliton solution~\eqref{e:soliton2}.

To uncover this nontrivial limiting solution,
we first examine the matrix $\R_{-,\d}(z,E_0\eta\e^{\ii \beta})$ defined in Equation~\eqref{e:Rpm} by computing its limit as $\eta\to\infty$ with $\beta\in(0,\pi)$.
It can be shown that
\begin{equation}
\lim_{\eta\to\infty}R_{-,1,1}(z,E_0\eta\e^{\ii \beta}) = 0\,,\quad
\lim_{\eta\to\infty}R_{-,2,2}(z,E_0\eta\e^{\ii \beta})
 = - \lim_{\eta\to\infty}R_{-,3,3}(z,E_0\eta\e^{\ii \beta}) = - 4w_-(z)\,.
\end{equation}
Therefore,
the propagation equation~\eqref{e:normingtimeevolution} yields that $\lim_{\eta\to\infty}D_1(t,z) = \e^{-4\ii W_-(z)}\lim_{\eta\to\infty}D_1(t,0)$ with $W_-(z)\in\Real$,
so
\begin{equation}
\lim_{\eta\to\infty}\xi_\II(z) = \lim_{\eta\to\infty}\xi_\II(0)\,,\qquad
\lim_{\eta\to\infty}\psi_\II(z) = -4W_-(z) + \lim_{\eta\to\infty}\psi_\II(0)\,.
\end{equation}
This calculation shows that, in the limit $\eta\to\infty$, the quantity $\xi_\II$ becomes independent of $z$.
Consequently,
if we choose the initial condition $\xi_\II(0) = \ln(2E_0 \sqrt{\Delta_+\eta}\sin\beta) + \widetilde\xi_\II$,
from Equation~\eqref{e:chisII}, we obtain the following expressions for all $t$ and $z$:
\begin{equation}
\lim_{\eta\to\infty}\chi_\II(t,z)
 = \widetilde\xi_\II\,,\qquad
\lim_{\eta\to\infty}s_\II(t,z)
 = \lim_{\eta\to\infty}\psi_\II(z) = -4W_-(z) + \widetilde\psi_\II\,,
\end{equation}
where $\widetilde\psi_\II \coloneq \lim_{\eta\to\infty}\psi_\II(0)$,
and both $\widetilde\xi_\II$ and $\widetilde\psi_\II$ are real constants.

The limit of Equation~\eqref{e:soliton2} can be computed directly and obtained as
\begin{equation}
\begin{aligned}
E_j(z)
 & = \e^{\ii  \beta }\big[E_{-,j}(z) \cosh (\widetilde\xi_\II + \ii\beta) +
 (-1)^j \e^{- \ii\widetilde\psi_\II} E_{-,3-j}(z) \sin\beta  \big]\sech\widetilde\xi_\II \,,\quad
j = 1,2.
\end{aligned}
\end{equation}
Clearly,
this soliton solution is independent of $t$.
Moreover,
the modulus of this soliton is constant, meaning that it is a plane-wave solution.

Recall that two solutions of the CMBE~\eqref{e:cmbe} may be equivalent by Lemma~\ref{thm:CMBE-U-invariance}. It is thus necessary to discuss if the two sets $\{E_1(z), E_2(z)\}$ and $\{E_{\bg,1}(z),E_{\bg,2}(z)\}$ from Equation~\eqref{e:background} are equivalent, the latter of which is the background solution used in the formulation of IST, i.e., chosen as the asymptotic condition~\eqref{e:soliton-backgroundE}. It is easy to check that the two sets of solutions are related by
\begin{equation}
\bpm E_1(z)\\ E_2(z) \epm = \V \bpm E_{\bg,1}(z) \\ E_{\bg,2}(z)\epm\,,\qquad
\V \coloneqq \e^{\ii\beta}\sech\widetilde\xi_\II \bpm \cosh(\widetilde\xi_\II + \ii\beta) & -\e^{-\ii\widetilde\psi_\II}\sin\beta \\ \e^{-\ii\widetilde\psi_\II}\sin\beta & \cosh(\widetilde\xi_\II + \ii\beta)\epm\,.
\end{equation}
One can verify that the matrix $\V$ is not a unitary matrix in general, meaning that the solutions $\{E_1(z), E_2(z)\}$ and $\{E_{\bg,1}(z),E_{\bg,2}(z)\}$ are not equivalent by Lemma~\ref{thm:CMBE-U-invariance}. Hence, we consider the new solution $\{E_1(z), E_2(z)\}$ a nontrivial plane-wave solution (compared to the background solution $\{E_{\bg,1}(z),E_{\bg,2}(z)\}$).


\subsection{One-soliton solution: type III}
\label{s:solitonIII}

Recall that type $\III$ discrete eigenvalue $\zeta_1$ is given in Equation~\eqref{e:wzzeta}.
The form of the norming constant is chosen as
\begin{equation}
\overline F_1(t,z) = \exp(\ii\zeta_1^*t + \xi_\III(z) + \ii\psi_\III(z))\,,
\end{equation}
where $\xi_\III(z)$ and $\psi_\III(z)$ are real quantities as in the previous cases.
The electric field envelopes are given by
\begin{equation}
\label{e:plane-wave}
E_j(t,z) = \e^{-\ii \beta }\big[E_{-,j}(z) \cosh (\chi_\III - \ii\beta) + \ii(-1)^{j+1} B\, E_{-,3-j}^*(z) \big(\eta^2 + \e^{2 \ii \beta }\big)\e^{\ii  s_\III} \big]\sech\chi_\III\,,\qquad
j = 1,2,
\end{equation}
where the parameters are given as
\begin{equation}
\begin{aligned}
\chi_\III(t,z) & \coloneq E_0\eta t \sin\beta + \xi_\III(z) - \ln(2E_0 \eta^2 B)\,,\\
s_\III(t,z) & \coloneq E_0\eta t \cos\beta + \psi_\III(z) + \beta\,,\\
B & \coloneq \frac{\sqrt{\eta\Delta_+}}{c_o}\sin\beta\,.
\end{aligned}
\end{equation}

\subsubsection{Purely imaginary discrete eigenvalue.}
In particular,
for a purely imaginary discrete eigenvalue,
i.e., $\beta = \pi/2$,
the soliton solution in Equation~\eqref{e:plane-wave} becomes
\begin{equation}
E_j(t,z)
 = - E_{-,j}(z) \tanh\widetilde\chi_\III(t,z) + \ii(-1)^{j+1}E_{-,3-j}^*(z) \e^{\ii \psi_\III(z)} \sqrt{\eta\Delta_+} \sech\widetilde\chi_\III(t,z)\,,
\end{equation}
with
\begin{equation}
\widetilde\chi_\III(t,z) \coloneq E_0\eta t + \xi_\III(z) - \ln(2E_0 \eta\sqrt{\eta\Delta_+}\big/\Delta_-)\,.
\end{equation}

\section{Solitons and their behavior, Part II: particular spectral line shape}
\label{s:IB}

We discuss a widely used case, in which the spectral-line shape, $g(k)$, is chosen as a Lorentzian
\begin{equation}
\label{e:Lorentzian}
g(k;\epsilon) \coloneq \frac{\epsilon}{\pi}\frac{1}{k^2+\epsilon^2}\,,\qquad
\epsilon>0\,.
\end{equation}
We take the asymptotic condition~\eqref{e:reflectionlessBC} with the assumption that the diagonal entries of $\bvarrho_-$ are independent of both $z$ and $k$.
Therefore, by examining Equation~\eqref{e:wpm} and noticing that $\lambda(k)$ and $g(k;\epsilon)$ are respectively odd and even functions of $k$,
one concludes $w_-(z) \equiv 0$.

Besides the quantity $w_-$,
the propagation of solutions is determined by the norming constant,
which is governed by Equation~\eqref{e:normingtimeevolution}.
In Appendix~\ref{a:Rnd},
we show how to compute the auxiliary matrix $\R_{-,\d}(z,\zeta)$,
which is actually independent of $z$.
The results are as follows:
\begin{equation}
\label{e:R-dexplicit}
\begin{aligned}
R_{-,1,1}(\zeta)
 & = \begin{cases}
\displaystyle
- \frac{\rho_-^+}{k + \ii\epsilon} - \varrho_-^-\,g(k)
\left[\log \left(\frac{E_0 - \lambda }{E_0 + \lambda}\right) +
\frac{\lambda}{\sqrt{E_0^2 - \epsilon^2}} \log\left(\frac{E_0 + \sqrt{E_0^2 - \epsilon^2}}{E_0 - \sqrt{E_0^2 - \epsilon ^2}}\right) \right]\,, & k\in\Complex^+\,,\\
\displaystyle
 -\frac{k\rho_-^+}{k^2 + \epsilon ^2} - \varrho_-^-\,g(k)
\left[\log \left(\frac{\lambda - E_0}{\lambda + E_0}\right) +
\frac{\lambda}{\sqrt{E_0^2 - \epsilon^2}} \log \left(\frac{E_0 + \sqrt{E_0^2 - \epsilon^2}}{E_0 - \sqrt{E_0^2 - \epsilon ^2}}\right)\right]\,, & k\in\Real\,,\\
\displaystyle
- \frac{\rho_-^+}{k - \ii \epsilon} - \varrho_-^- \,g(k)
\left[\log\left(\frac{E_0 - \lambda }{E_0 + \lambda}\right) +
\frac{\lambda}{\sqrt{E_0^2 - \epsilon^2}} \log\left(\frac{E_0 + \sqrt{E_0^2 - \epsilon^2}}{E_0 - \sqrt{E_0^2 - \epsilon ^2}}\right) \right]\,, & k\in\Complex^-\,,
\end{cases}\\
R_{-,2,2}(\zeta)
 & = \begin{cases}
\displaystyle
-\varrho_{-,2,2}/(k + \ii \epsilon)\,, & k\in\Complex^+\,,\\
\displaystyle
 -k\varrho_{-,2,2}/(k^2 + \epsilon^2)\,, & k\in\Real\,,\\
 \displaystyle
- \varrho_{-,2,2}/(k - \ii \epsilon)\,, & k\in\Complex^-\,,
\end{cases}\\
R_{-,3,3}(\zeta)
 & = \begin{cases}
\displaystyle
- \frac{\rho_-^+}{k + \ii \epsilon} + \varrho_-^-\,g(k)
\left[ \log \left(\frac{E_0-\lambda }{E_0 + \lambda}\right) +
\frac{\lambda}{\sqrt{E_0^2-\epsilon^2}} \log \left(\frac{E_0 + \sqrt{E_0^2-\epsilon^2}}{E_0-\sqrt{E_0^2-\epsilon ^2}}\right) \right]\,, & k\in\Complex^+\,,\\
 \displaystyle
 -\frac{k\rho_-^+}{k^2+\epsilon ^2} + \varrho_-^-\,g(k)
\left[ \log \left(\frac{\lambda -E_0}{\lambda + E_0}\right) +
\frac{\lambda}{\sqrt{E_0^2-\epsilon^2}} \log \left(\frac{E_0 + \sqrt{E_0^2-\epsilon^2}}{E_0-\sqrt{E_0^2-\epsilon ^2}}\right)\right]\,, & k\in\Real\,,\\
\displaystyle
- \frac{\rho_-^+}{k - \ii \epsilon} + \varrho_-^- \,g(k)
\left[ \log \left(\frac{E_0-\lambda }{E_0 + \lambda}\right) +
\frac{\lambda}{\sqrt{E_0^2 - \epsilon^2}} \log \left(\frac{E_0 + \sqrt{E_0^2-\epsilon^2}}{E_0-\sqrt{E_0^2-\epsilon^2}}\right) \right]\,, & k\in\Complex^-\,,
\end{cases}
\end{aligned}
\end{equation}
where we recall that $\varrho_-^\pm$ are defined in Equation~\eqref{e:rhopmpm}.

\begin{remark}
Similarly to the two-level MBE with NZBG,
we can show that the two limits $E_0\to0$ and $\epsilon\to0$ do not commute.
\end{remark}

\begin{proof}
\textit{$\bullet$ Zero background limit first.}
We first take the limit $E_0\to0$.
It is obvious that $E_{-,j}(z)\to0$ for $j=1,2$.
The matrix $\R_{-,\d}(\zeta)$ in Equation~\eqref{e:R-dexplicit} then becomes
\begin{equation}
\label{e:R-dexplicitZBC}
\begin{aligned}
R_{-,1,1}(\zeta) & = \begin{cases}\displaystyle
(-k\varrho_{-,1,1} + \ii\epsilon\varrho_{-,3,3})/(k^2 + \epsilon^2)\,, & k\in\Complex^+\,,\\
\displaystyle
 - k\varrho_{-,1,1}/(k^2 + \epsilon ^2)\,, & k\in\Real\,,\\
 \displaystyle
- \varrho_{-,1,1}/(k - \ii \epsilon)\,, & k\in\Complex^-\,,
\end{cases}\\
R_{-,2,2}(\zeta) & = \begin{cases}\displaystyle
-\varrho_{-,2,2}/(k + \ii \epsilon)\,, & k\in\Complex^+\,,\\
\displaystyle
 -k\varrho_{-,2,2}/(k^2 + \epsilon ^2)\,, & k\in\Real\,,\\
 \displaystyle
- \varrho_{-,2,2}/(k - \ii \epsilon)\,, & k\in\Complex^-\,,
\end{cases}\\
R_{-,3,3}(\zeta) & = \begin{cases}\displaystyle
(-k\varrho_{-,3,3} + \ii\epsilon\varrho_{-,1,1})/(k^2+\epsilon^2)\,, & k\in\Complex^+\,,\\
\displaystyle
 - k\varrho_{-,3,3}/(k^2 + \epsilon ^2)\,, & k\in\Real\,,\\
 \displaystyle
- \varrho_{-,3,3}/(k - \ii \epsilon)\,, & k\in\Complex^-\,.
\end{cases}
\end{aligned}
\end{equation}
Consequently, we take the limit $\epsilon\to0$ and obtain
\begin{equation}
\label{e:R-dZBCSL}
\R_{-,\d}(\zeta) = -\bvarrho_{-,\d}/k\,,\qquad
\zeta\in\Complex\,.
\end{equation}

$\bullet$
\textit{Sharp-line limit first.}
One the other hand,
after taking the limit $\epsilon\to0$ first,
the quantities in Equation~\eqref{e:R-dexplicit} become
\begin{equation}
\label{e:R-dSLZBC}
R_{-,1,1}(\zeta) = - \rho_-^+/k\,,\qquad
R_{-,2,2}(\zeta) = - \varrho_{-,2,2}/k\,,\qquad
R_{-,3,3}(\zeta) = - \rho_-^+/k\,,
\end{equation}
where $\zeta\in\Complex\backslash\Real$.
Evidently, $\R_{-,\d}$ is independent of $E_0$,
so the limit $E_0\to0$ does not affect the matrix $\R_{-,\d}$.

As a result,
by comparing the two matrices $\R_{-,\d}$ from the two cases~\eqref{e:R-dZBCSL} and~\eqref{e:R-dSLZBC},
we conclude that the two limits $\epsilon\to0$ and $E_0\to0$ \textit{do not commute}.

\end{proof}

\subsection{Solution parameters}

We point out that all solutions discussed in Section~\ref{s:generalsoliton} contain large numbers of free parameters,
including the discrete eigenvalue, the norming constant, the initial state of the density matrix, the background optical field, and more.
Hence, it is impossible to present solutions covering all possible combinations of parameters due to the space limitation of the manuscript.
Thus, in this section, we only consider solutions with a fixed subset of parameters, including a fixed background amplitude, a fixed width of the spectral line, and a fixed initial state of the norming constant:
\begin{equation}
E_0 = 1,\qquad
\epsilon = 2,\qquad
\xi(0) = \psi(0) = 0.
\end{equation}
We then vary other, more illustrative parameters, as shown in Table~\ref{tab:numerics}.

\begin{table}[h]
\caption{Six settings of parameters for all solutions.}
\begin{tabular}{|c|c|c|c|c|c|c|c|c|c|}
\hline
Setting & $\vphantom{\Big|}\varrho_{-1,1}$ & $\varrho_{-2,2}$ & $\varrho_{-3,3}$ & $\alpha$ & Discrete Eigenvalue \\
\hline\hline
(a) & $\vphantom{\Big|}0$  & $0.4$ & $0.6$ & $\pi/8$ & $2\ii$ \\
\hline
(A) & $\vphantom{\Big|}0$  & $0.4$ & $0.6$ & $\pi/8$ & $1+\ii$ \\
\hline
(b) & $\vphantom{\Big|}0.2$  & $0.1$ & $0.7$ & $\pi/4$ & $2\ii$ \\
\hline
(B) & $\vphantom{\Big|}0.2$  & $0.1$ & $0.7$ & $\pi/4$ & $1+\ii$ \\
\hline
(c) & $\vphantom{\Big|}0.7$ & $0.2$ & $0.1$ & $3\pi/8$ & $2\ii$ \\
\hline
(C) & $\vphantom{\Big|}0.7$ & $0.2$ & $0.1$ & $3\pi/8$ & $1+\ii$ \\
\hline
\end{tabular}
\label{tab:numerics}
\end{table}

The six parameter settings cover different situations based on:
i) the distribution of the background amplitude $E_0$ in the two electric-field envelope components ($E_0\cos\alpha$ versus $E_0\sin\alpha$);
ii) the initial state of the medium $D_{-,j}$ from $\varrho_{-,j,j}$ via Equation~\eqref{e:rho-D} [uninverted (a/A) versus partially inverted (b/B) versus fully inverted (c/C)];
iii) the location of the discrete eigenvalue [purely imaginary (a/b/c) versus generic complex number (A/B/C)].

In particular, the initial medium states are complicated due to the $k$ dependence. The graphs of $D_{-,j}$ are shown in Figure~\ref{f:initialstate}.
Recall that $D_{-,1}$ denotes the initial population in the excited state,
and $D_{-,2}$ and $D_{-,3}$ denote the initial populations in the ground states.

\begin{figure}[h]
\centering
\includegraphics[scale=0.42]{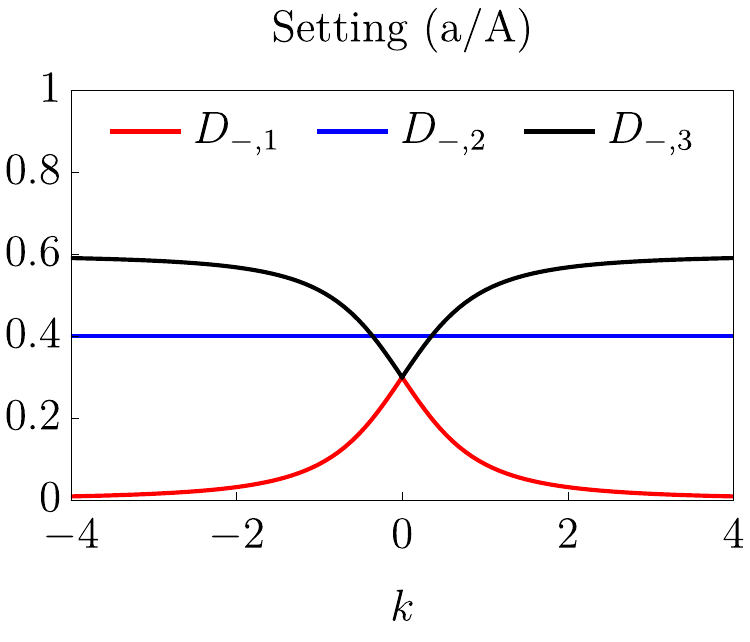}
\includegraphics[scale=0.42]{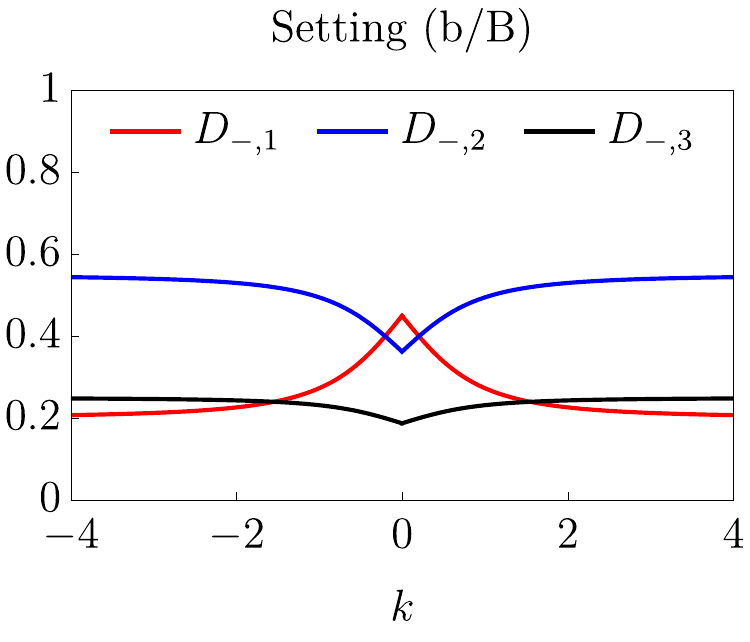}
\includegraphics[scale=0.42]{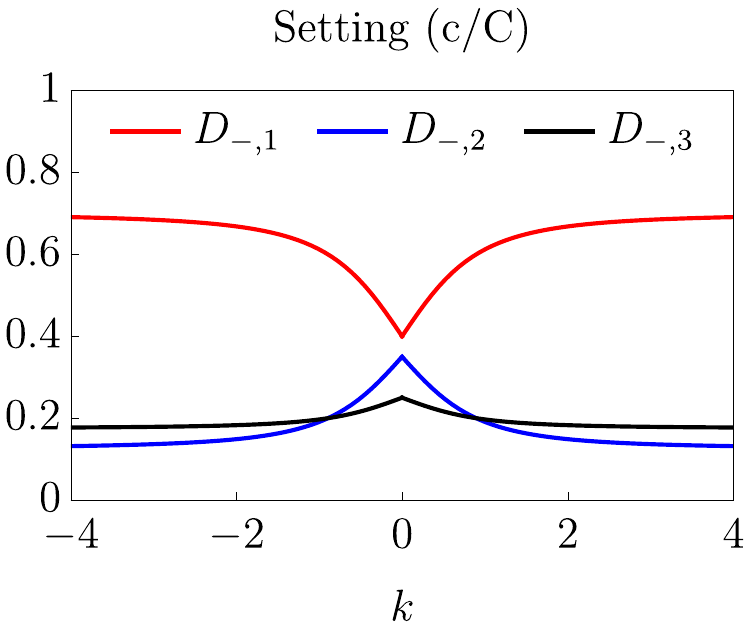}
\caption{The initial medium states in the six settings.
Recall that $D_{-,1}$ is the initial population of the excited state while $D_{-,2}$ and $D_{-,3}$ are those for the two ground states.
Settings (a/A) denote an uninverted medium as $D_{-,1}$ is always less than $D_{-,2}$ and $D_{-,3}$.
Settings (b/B) denote a partially inverted medium because $D_{-,1}$ is less/greater than $D_{-,2}$ and $D_{-,3}$ depending on the value of $k$.
Settings (c/C) denote a fully inverted medium since $D_{-,1}$ is always greater than $D_{-,2}$ and $D_{-,3}$.
}
\label{f:initialstate}
\end{figure}

\subsection{Solitary, periodic and rational solutions}

Since we now know the spectral-line shape $g(k)$ explicitly,
we are able to present all the formulas for the solutions and to investigate their properties.

\subsubsection{One-Soliton solutions: type I}

Equation~\eqref{e:normingtimeevolution} indicates that we need to substitute $w_1^*$ into $\R_{-,\d}(\zeta)$.
We therefore define the following quantity for $\zeta\in D_2$:
\begin{equation}
\label{e:RI}
R_\I(\zeta) \coloneq \frac{\ii}{2}(R_{-,1,1}(\zeta) - R_{-,3,3}(\zeta))
= \ii\varrho_-^- \,g(k)
\left[\log \left(\frac{E_0 + \lambda }{E_0 - \lambda}\right) -
\frac{\lambda}{\sqrt{E_0^2-\epsilon^2}}
\log \left(\frac{E_0 + \sqrt{E_0^2-\epsilon^2}}{E_0-\sqrt{E _o^2-\epsilon ^2}}\right) \right]\,.
\end{equation}
Then, by Equation~\eqref{e:normingtimeevolution}, we know that
\begin{equation}
\nonumber
\xi_\I(z) = \Re(R_\I(w_1^*))z + \xi_\I(0)\,,\qquad
\psi_\I(z) = \Im(R_\I(w_1^*))z + \psi_\I(0)\,.
\end{equation}
Substituting the above quantities into the soliton formula~\eqref{e:soliton1},
we obtain type I soliton completely.

Six type-I soliton solutions are shown in Figure~\ref{f:solitontypeI1}.
In particular, the top row includes solutions with settings (a) and (A);
the middle row contains solutions with settings (b) and (B);
and the bottom row contains solutions with settings (c) and (C).
The left two columns in Figure~\ref{f:solitontypeI1} contain settings (a/b/c),
whereas the right two columns contain (A/B/C).
Furthermore, the first and third columns show plots of $|E_1(t,z)|$ and the other two columns show plots of $|E_2(t,z)|$.

It is easy to observe that the solitons presented in Figure~\ref{f:solitontypeI1} are subluminal in the uninverted medium, i.e., settings (a/A), and in the partially inverted medium, i.e., settings (b/B),
but are superluminal in the fully inverted medium, i.e., settings (c/C).
It thus appears that the partially inverted case is similar to the uninverted case.
However, as we will see below, this rule does not hold for all cases of solitons.

For a purely imaginary discrete eigenvalue, i.e., settings (a/b/c), the solions are traveling waves, but for a general complex value, i.e., settings (A/B/C), the solitons exhibit internal oscillations.
For all cases, the group velocities are determined by Equation~\eqref{e:soliton-typeI-velocity},
while the phase velocities in settings (A/B/C) seemingly coincide with the velocities of the traveling waves in settings (a/b/c).

\begin{remark}
It is worth pointing out that type-I solitons are similar to their counterparts in the two-level case~\cite{bgkl2019},
because these types of solitons share similar structure and properties between the two Maxwell-Bloch systems with a nonzero background and with inhomogeneous broadening of the spectral line.
\end{remark}

\begin{figure}[t!]
\centering
\includegraphics[width=0.24\textwidth]{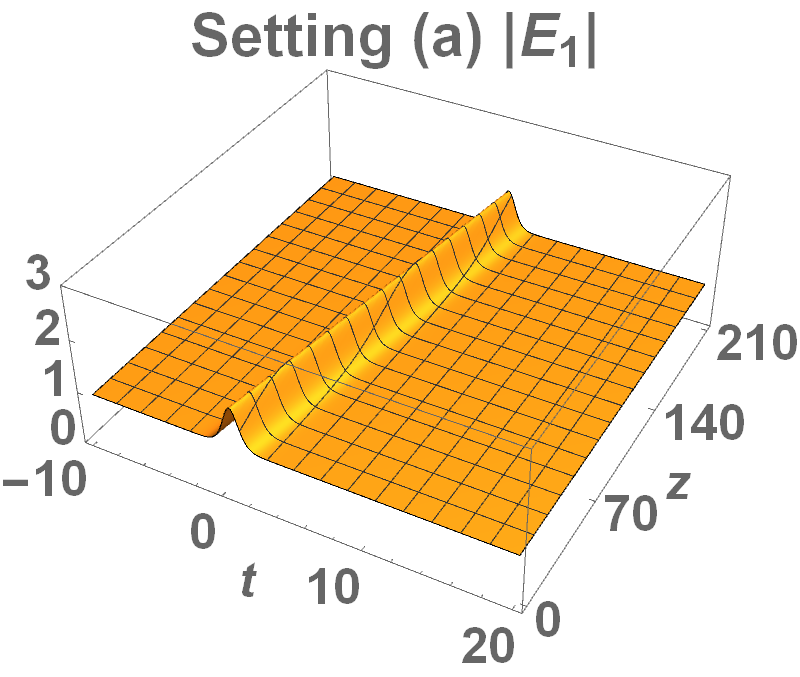}
\includegraphics[width=0.24\textwidth]{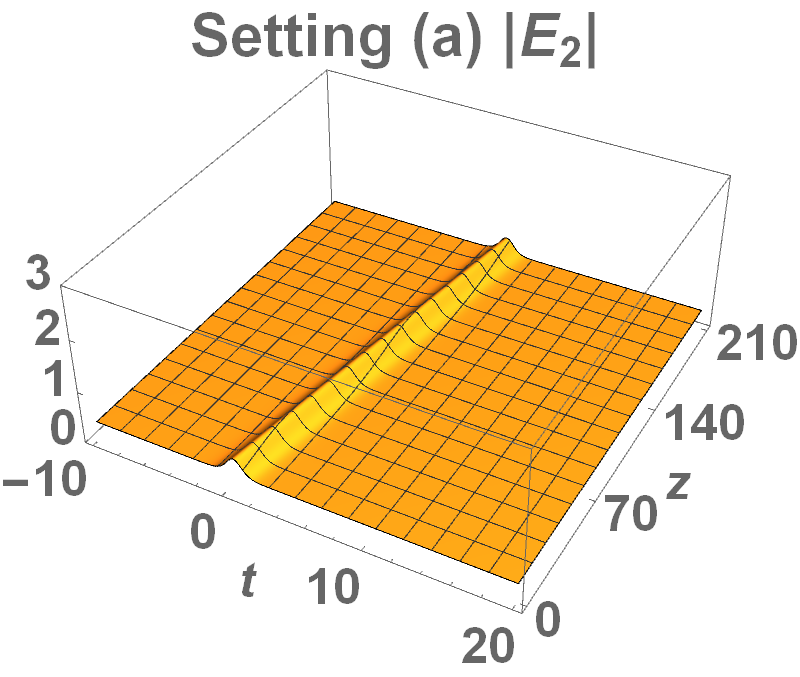}\quad
\includegraphics[width=0.24\textwidth]{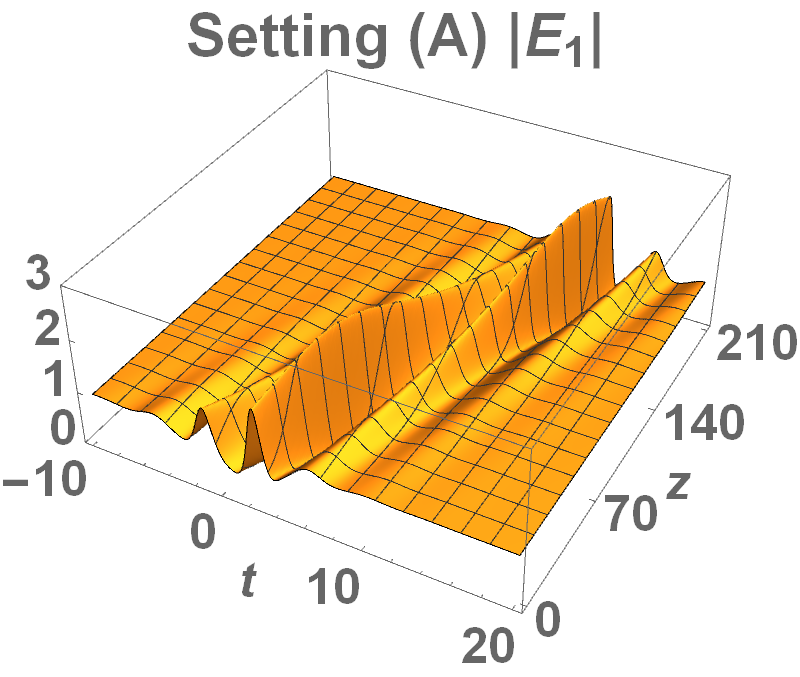}
\includegraphics[width=0.24\textwidth]{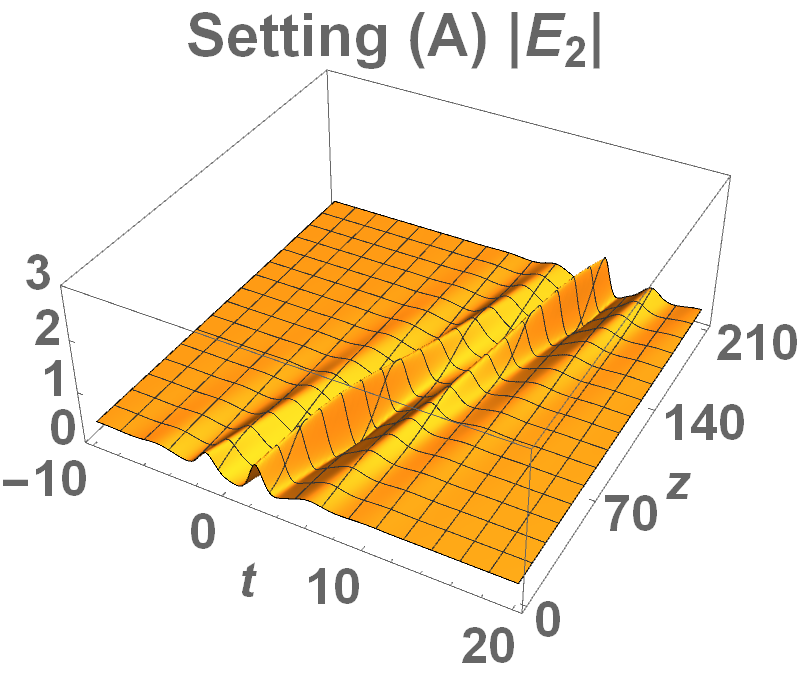}\\[2ex]
\includegraphics[width=0.24\textwidth]{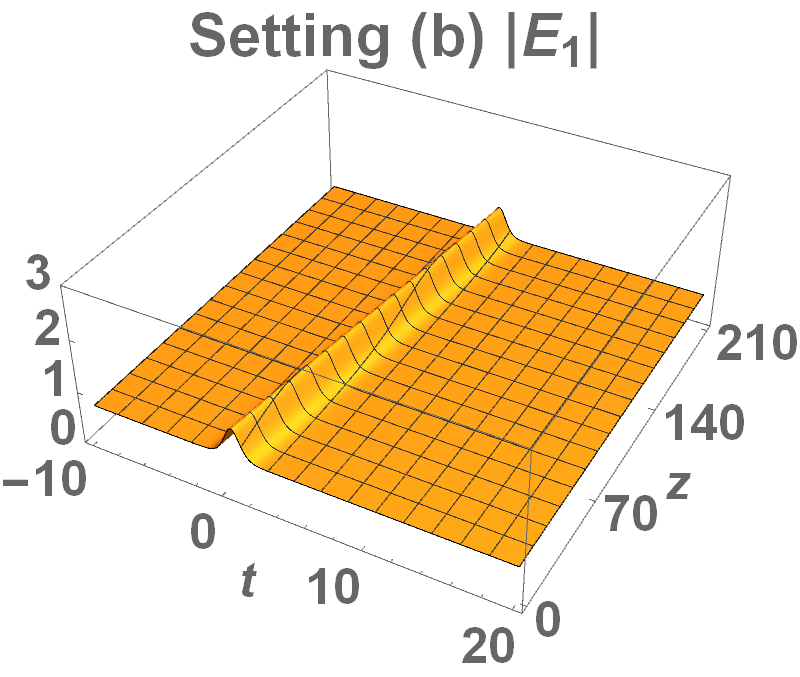}
\includegraphics[width=0.24\textwidth]{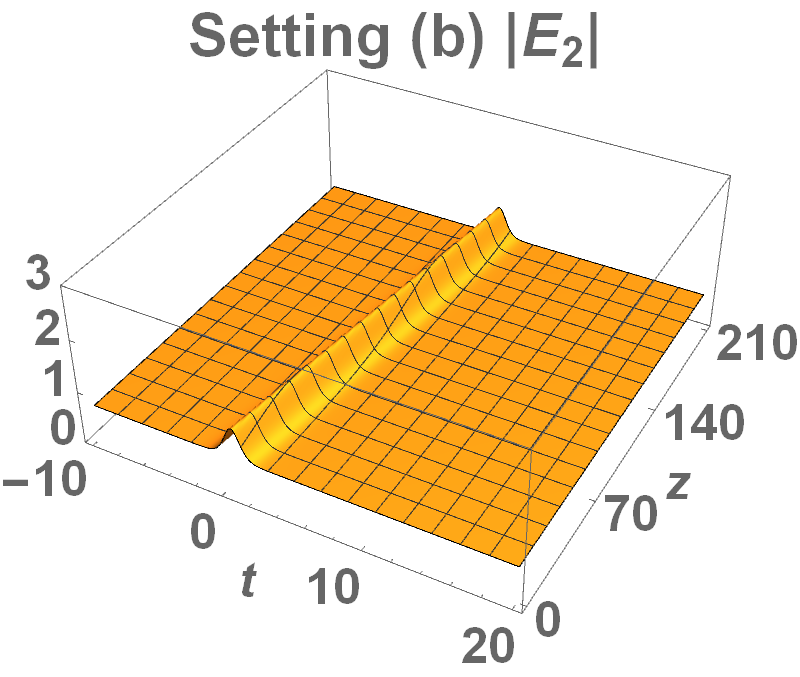}\quad
\includegraphics[width=0.24\textwidth]{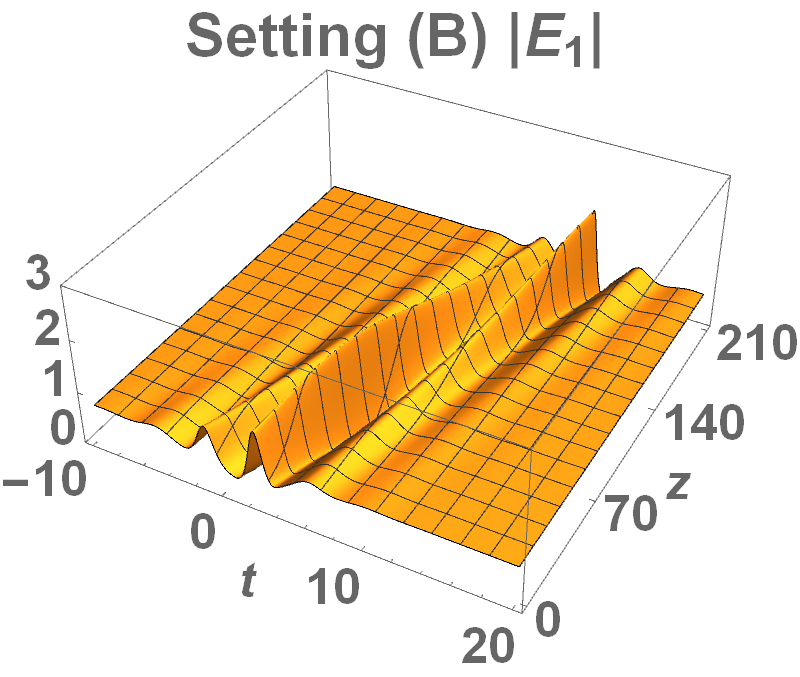}
\includegraphics[width=0.24\textwidth]{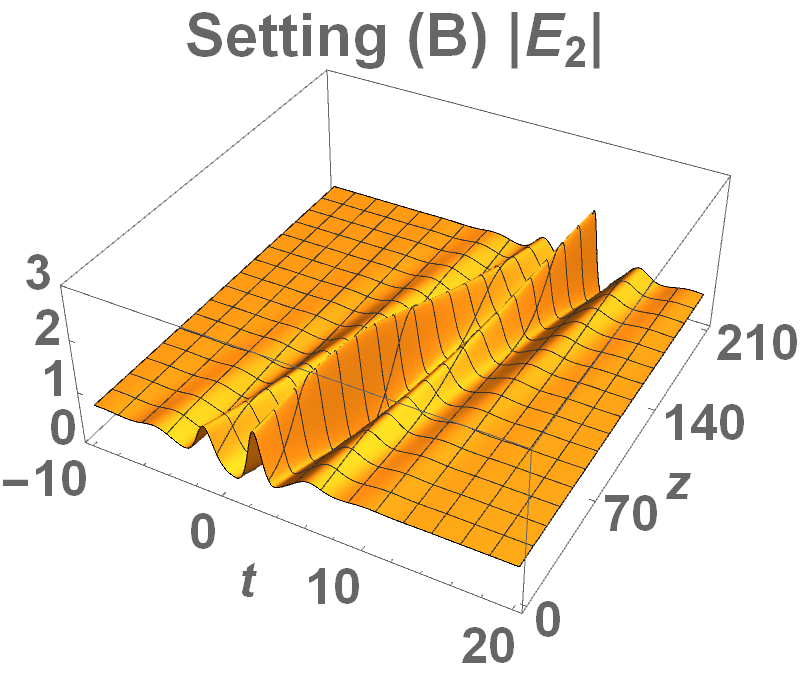}\\[2ex]
\includegraphics[width=0.24\textwidth]{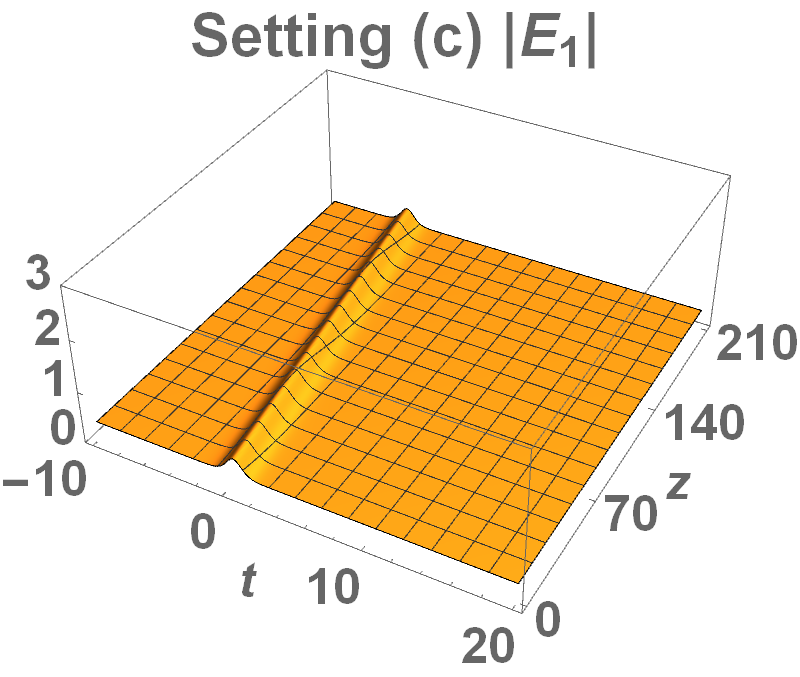}
\includegraphics[width=0.24\textwidth]{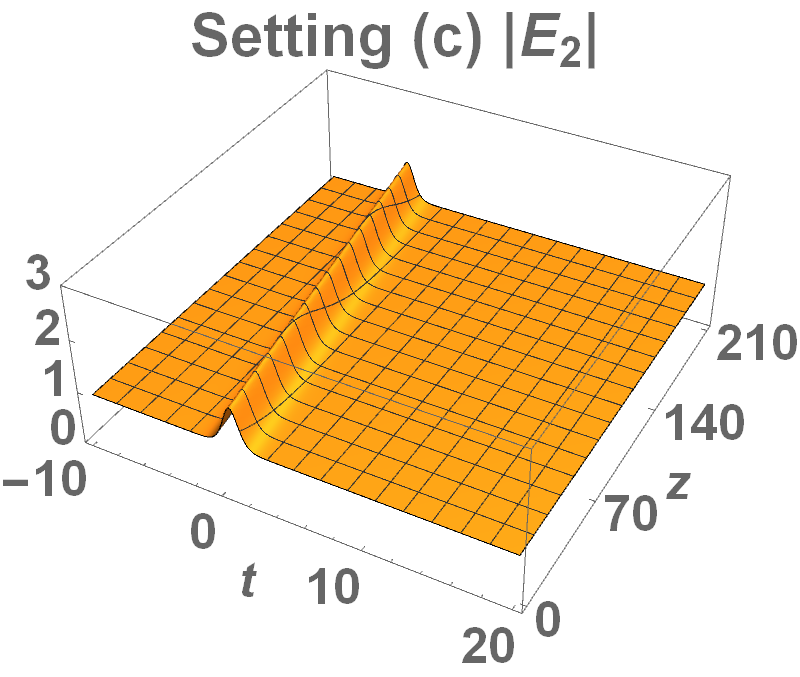}\quad
\includegraphics[width=0.24\textwidth]{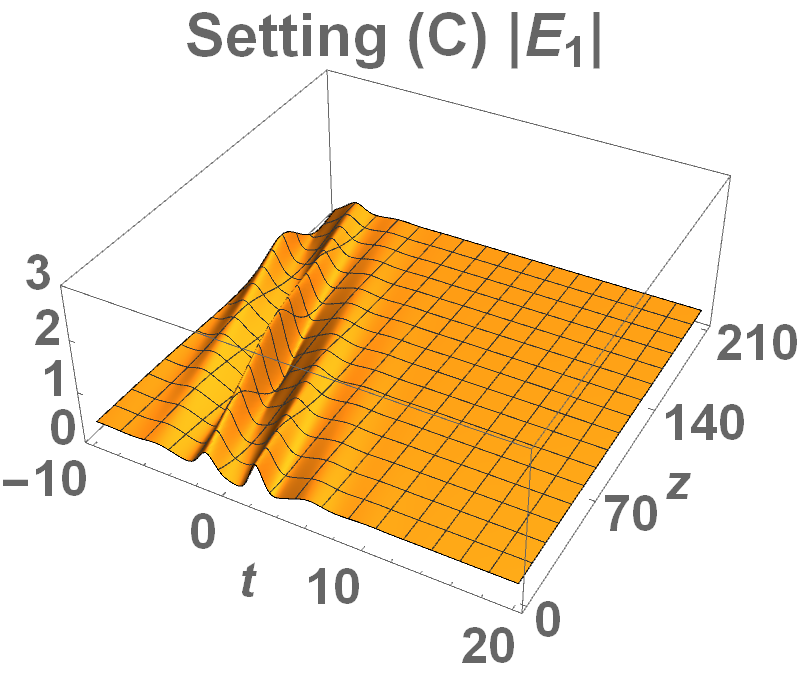}
\includegraphics[width=0.24\textwidth]{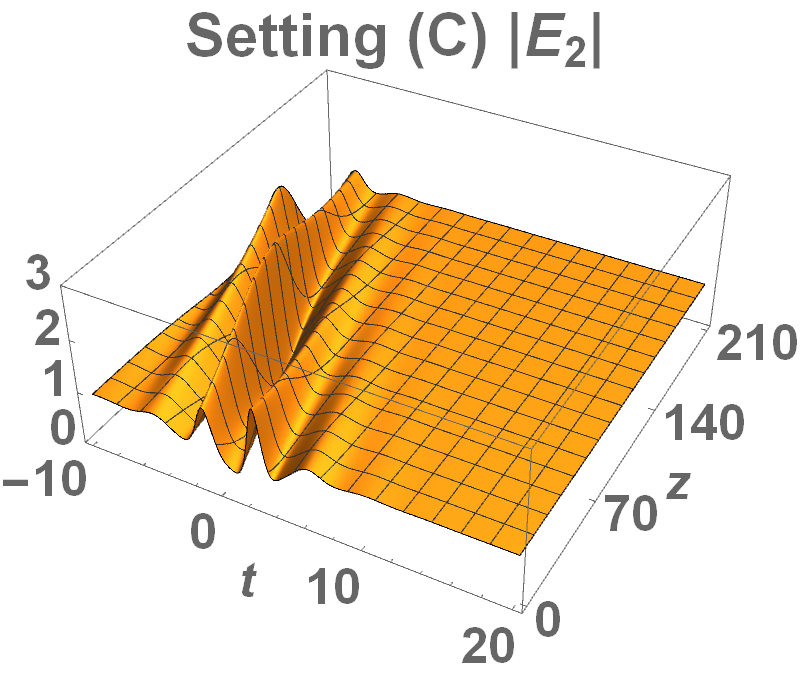}
\caption{Type I one-soliton solutions with settings (a/A).
First column: $|E_1(t,z)|$ with setting (a/b/c).
Second column: $|E_2(t,z)|$ with setting (a/b/c).
Third column: $|E_1(t,z)|$ with setting (A/B/C).
Last column: $|E_2(t,z)|$ with setting (A/B/C).
First row: solitons with settings (a/A).
Second row: solitons with settings (b/B).
Bottom row: solitons with settings (c/C).
The density matrix $\brho(t,z,\zeta)$ is omitted due to space constraint.}
\label{f:solitontypeI1}
\end{figure}

\subsubsection{Periodic solutions}
\label{s:periodic-ih}

Recall that one is able to derive periodic solutions from type-I solitons,
as shown in Section~\ref{s:periodic-solution}.
With the known spectral-line shape~\eqref{e:Lorentzian}, two such solutions are shown in Figure~\ref{f:periodic1}, where settings (A) and (C) are used, except that the discrete eigenvalues are chosen as $\e^{\ii\pi/4}$ on $\Sigma_\circ$. We call them settings (A') and (C').
The periodic solution with the setting (B'), which has the same parameters as setting (B) but a discrete eigenvalue $\e^{\ii\pi/4}$, is omitted for brevity, as it mimics the one with the setting (A').

We can see that setting (A'), corresponding to an uninverted medium, comprises waves traveling slower than the speed of light, but setting (C'), corresponding to a fully inverted medium, comprises waves traveling faster than the speed of light, indicating essential differences based on the initial state of the medium.
The solutions also show both temporal and spatial periodicity. Hence, these types of solutions exist inside the medium forever, and thus can be regarded as a type of background, just as the elliptic solutions of the focusing nonlinear Schr\"{o}dinger equation.

\begin{figure}[t!]
\centering
\includegraphics[width=0.24\textwidth]{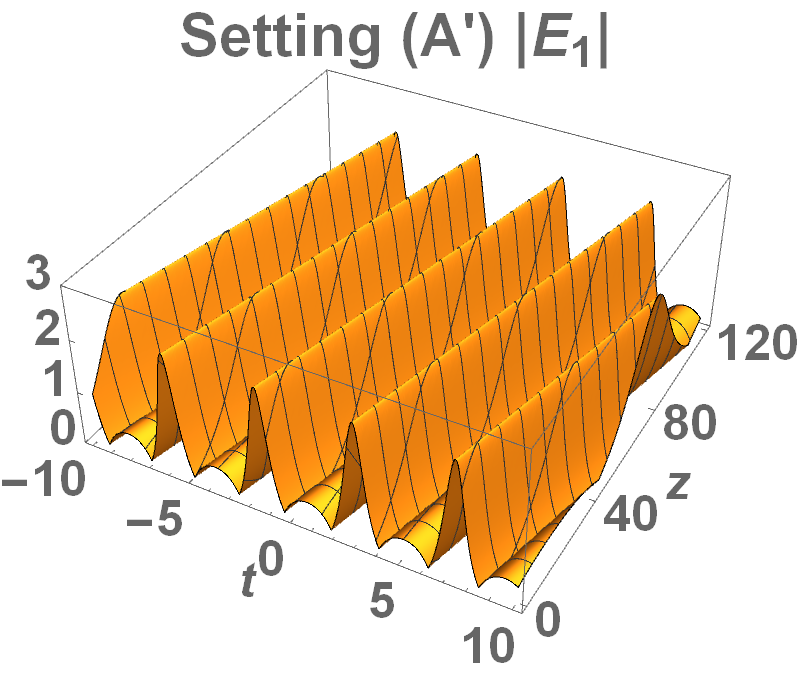}
\includegraphics[width=0.24\textwidth]{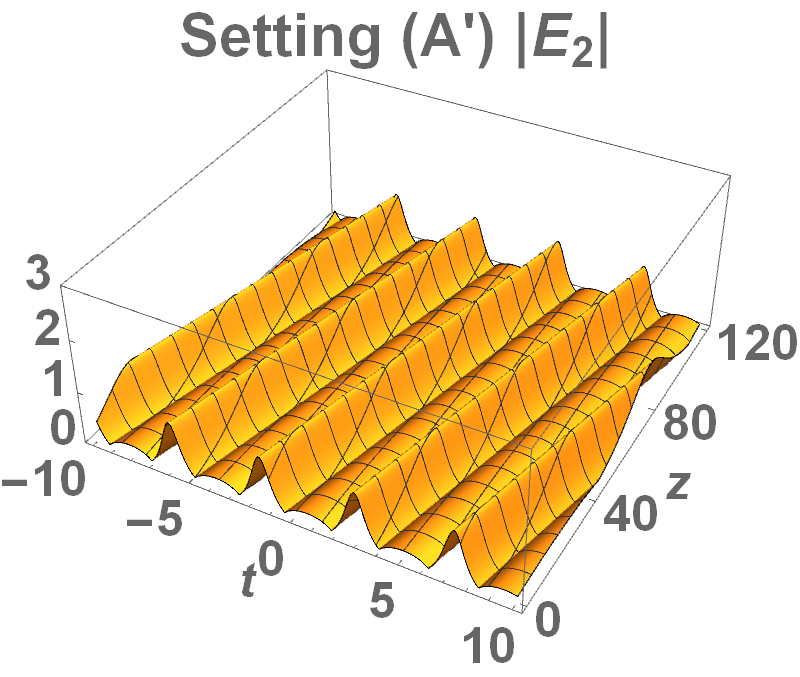}\quad
\includegraphics[width=0.24\textwidth]{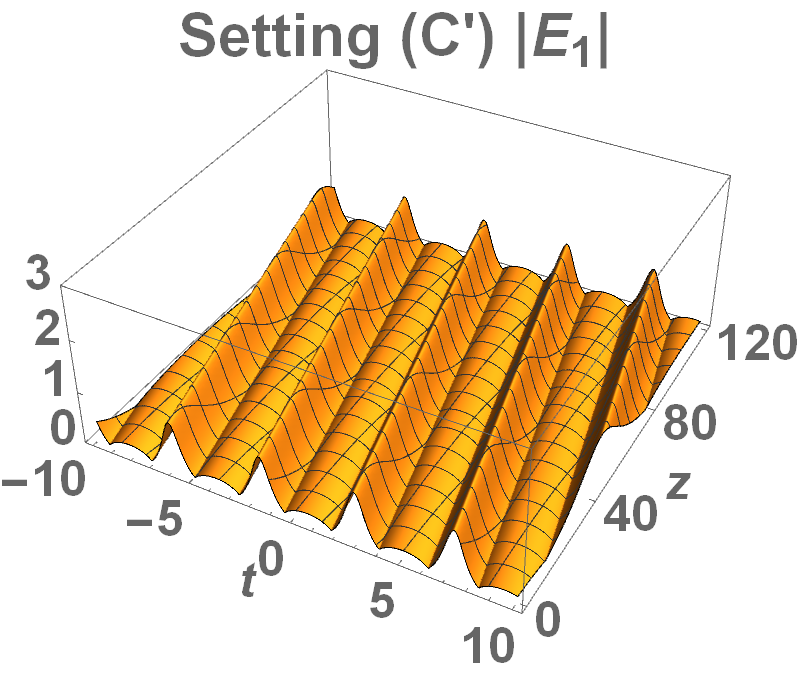}
\includegraphics[width=0.24\textwidth]{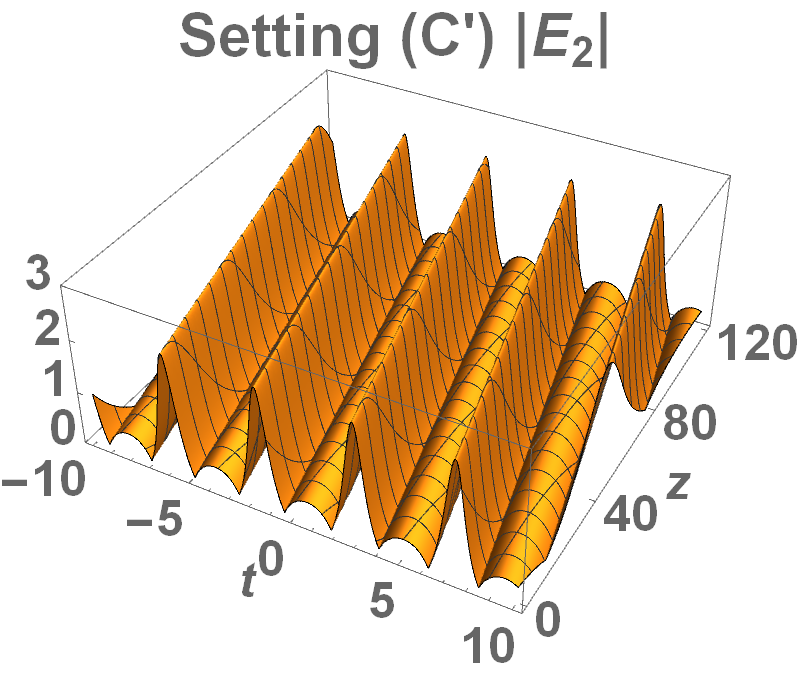}
\caption{Two periodic solutions with settings (A) and (C),
except that the discrete eigenvalues are chosen as $\e^{\ii\pi/4}$ on $\Sigma_\circ$. Therefore, the two cases are called settings (A') and (C').
The solution with setting (B), in which the discrete eigenvalue is similarly changed to $\e^{\ii\pi/4}$, exhibits rather similar behavior to the one with setting (A'), and is omitted here for brevity.
}
\label{f:periodic1}
\end{figure}

\subsubsection{Rational solutions}
\label{s:rational-ih}

We can also derive rational solutions from type-I soliton solutions as shown in Section~~\ref{s:rational-solution}.
Two such solutions are shown in Figure~\ref{f:rational1}.

Similarly to the classic two-level case~\cite{bgkl2019}, the rational solutions of CMBE exhibit traveling-wave instead of isolated rogue-wave structure.
For an initially uninverted medium, such as the setting (a'') in Figure~\ref{f:rational1}, the solution travels more slowly than the speed of light,
whereas for an initially fully inverted medium, such as the setting (c'') in Figure~\ref{f:rational1}, the solution travels faster than the speed of light.
The rational solution in a partially inverted medium (setting (b'')) also travels more slowly than the speed of light, and so is omitted.

It is also worth noting that rational solutions of CMBE look quite similar to the ones in the classic two-level case, so the solutions here can be regarded as a ``vectorization" of the rational solutions discussed in~\cite{bgkl2019}. However, note that another famous rational solution, but of the focusing nonlinear Schr\"{o}dinger equation, i.e., the Peregrine soliton, is an isolated rogue wave. This shows a fundamental difference between the focusing nonlinear Schr\"{o}dinger equation and the Maxwell-Bloch system. It also implies that rational solutions are not necessarily equivalent to rogue waves in a general integrable system.

\begin{figure}[t!]
\centering
\includegraphics[width=0.24\textwidth]{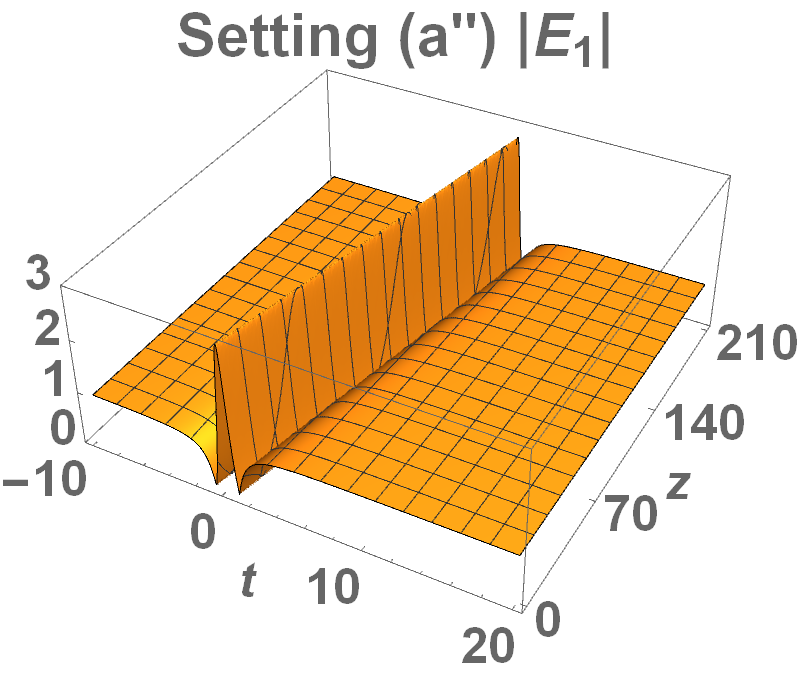}
\includegraphics[width=0.24\textwidth]{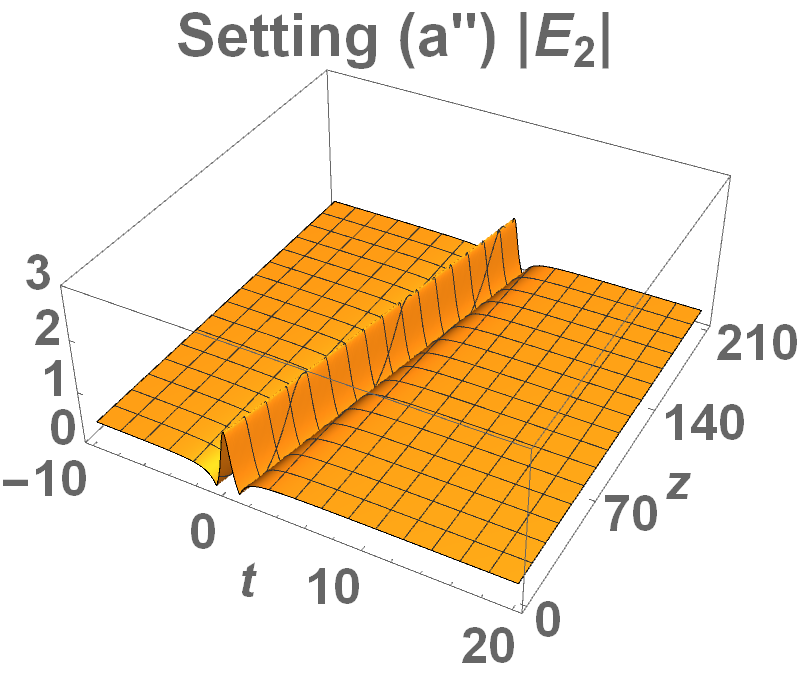}\quad
\includegraphics[width=0.24\textwidth]{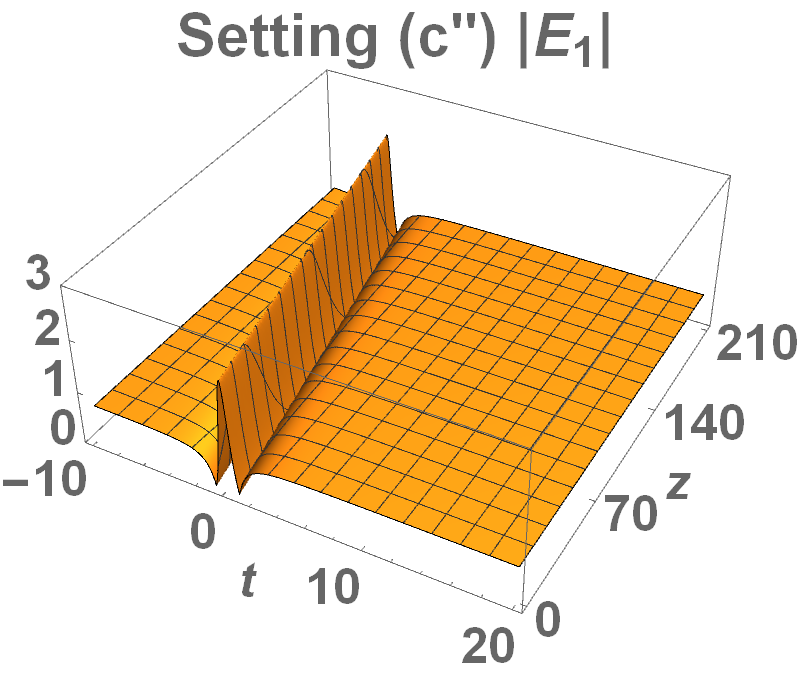}
\includegraphics[width=0.24\textwidth]{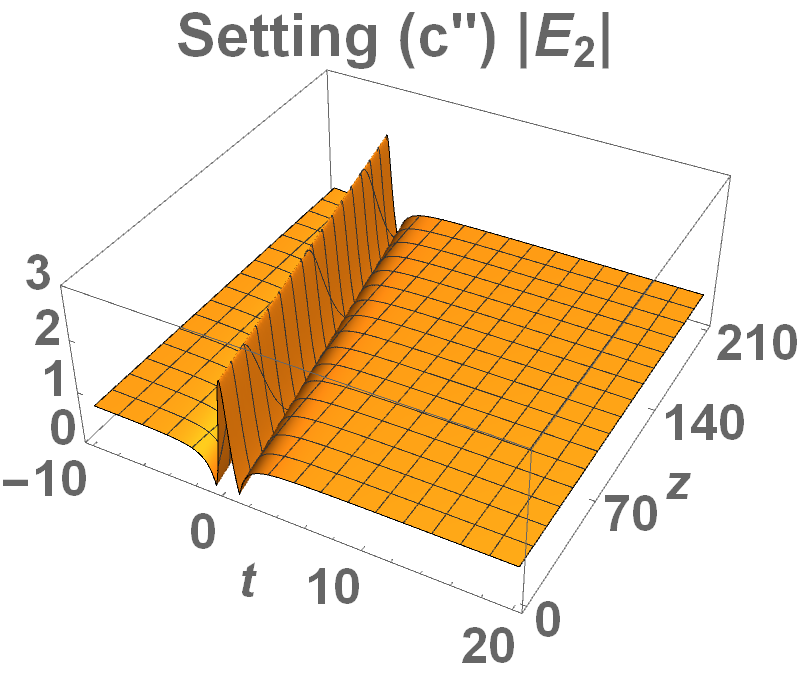}
\caption{Two rational solutions with settings (a) and (c),
except that the discrete eigenvalues are chosen as $\ii$. Therefore, the two cases are called settings (a'') and (c'').
Similarly to what happens to the periodic solutions in Figure~\ref{f:periodic1}, the setting analogous to the (b) case is omitted, because it is quite similar to (a'').
}
\label{f:rational1}
\end{figure}

\subsubsection{One-Soliton solutions: type II}

In this case, Equation~\eqref{e:normingtimeevolution} indicates that we need to substitute $z_1$ into $\R_{-,\d}(\zeta)$ and, therefore,
we define the following expression for $\zeta\in D_1$:
\begin{multline}
\label{e:RII}
R_\II(\zeta) \coloneq \frac{\ii}{2}(R_{-,2,2}(\zeta) - R_{-,3,3}(\zeta))  \\
= \frac{\ii}{4}\frac{\varrho_{-,1,1}-2\varrho_{-,2,2}+\varrho_{-,3,3}}{k+\ii\epsilon} - \frac{\ii}{2} \varrho_-^- \,g(k)
\left[ \log \left(\frac{E_0+\lambda }{E_0 - \lambda}\right) +
\frac{\lambda}{\sqrt{E_0^2-\epsilon^2}}
\log \left(\frac{E_0 + \sqrt{E_0^2-\epsilon^2}}{E_0-\sqrt{E _o^2-\epsilon ^2}}\right) \right]\,.
\end{multline}
Equation~\eqref{e:normingtimeevolution} yields
\begin{equation}
\xi_\II(z) = \Re(R_\II(z_1))z + \xi_\II(0)\,,\qquad
\psi_\II(z) = \Im(R_\II(z_1)z + \psi_\II(0)\,.
\end{equation}
The above equations govern the propagation of type II soliton~\eqref{e:soliton2}.
Examples are shown in Figures~\ref{f:solitontypeII1}, which contains all six settings from Table~\ref{tab:numerics}.
Similarly to the solitons of type I, type-II solitons with a purely imaginary discrete eigenvalue are traveling waves, illustrated in settings (a/b/c) in the first two columns. For a general complex value, i.e., settings (A/B/C), the solitons exhibit internal oscillations, and can be called breathers.

Moreover, if the medium is initially in an uninverted/fully inverted medium, the soliton travels more slowly/faster than the speed of light, respectively, just as type-I solitons discussed in previous sections.
However, when the medium initially is in a partially inverted medium, i.e., settings (b/B), it is evident that type-II solitons are traveling superluminally, contrary to the ones in Figure~\ref{f:solitontypeI1}.
Thus, we conclude that type-I and -II solitons differ in an essential way. 

\begin{figure}[t!]
\centering
\includegraphics[width=0.24\textwidth]{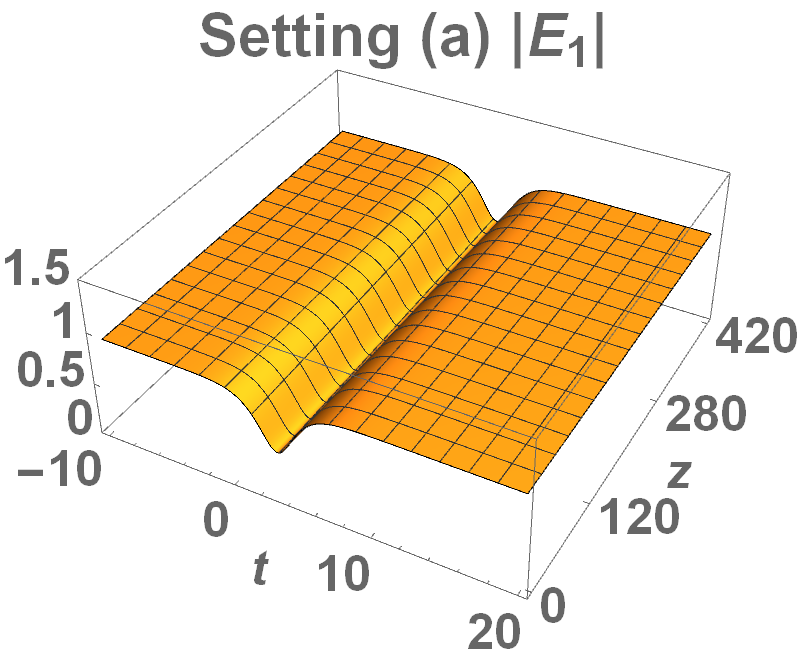}
\includegraphics[width=0.24\textwidth]{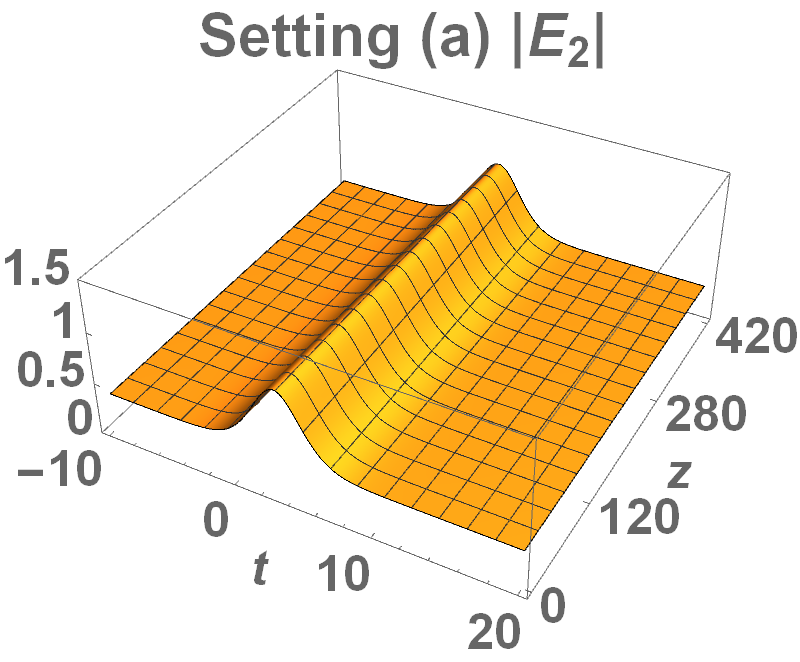}\quad
\includegraphics[width=0.24\textwidth]{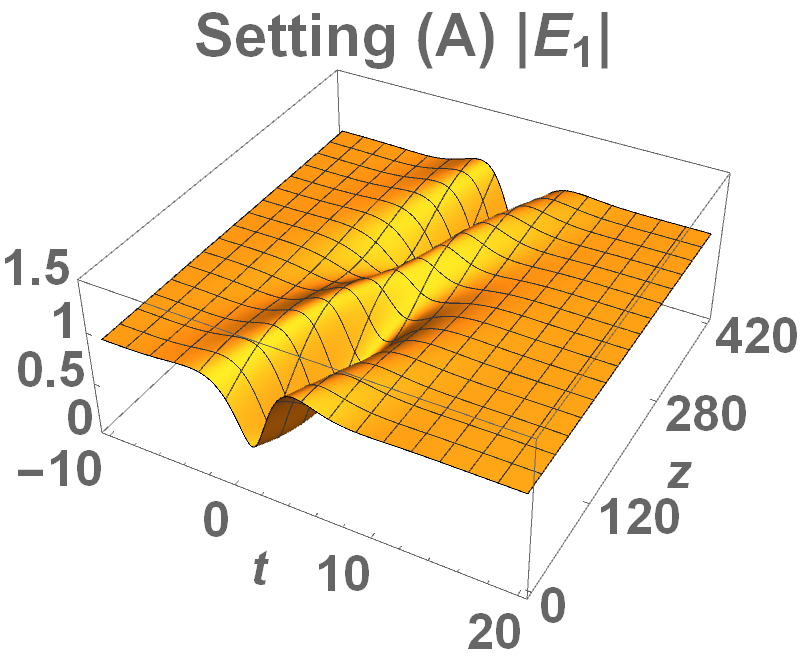}
\includegraphics[width=0.24\textwidth]{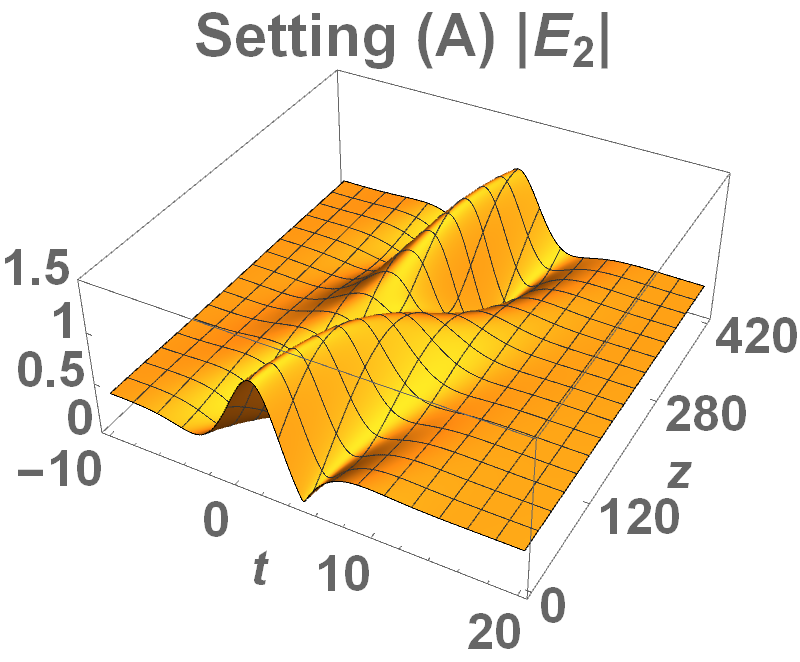}\\[2ex]
\includegraphics[width=0.24\textwidth]{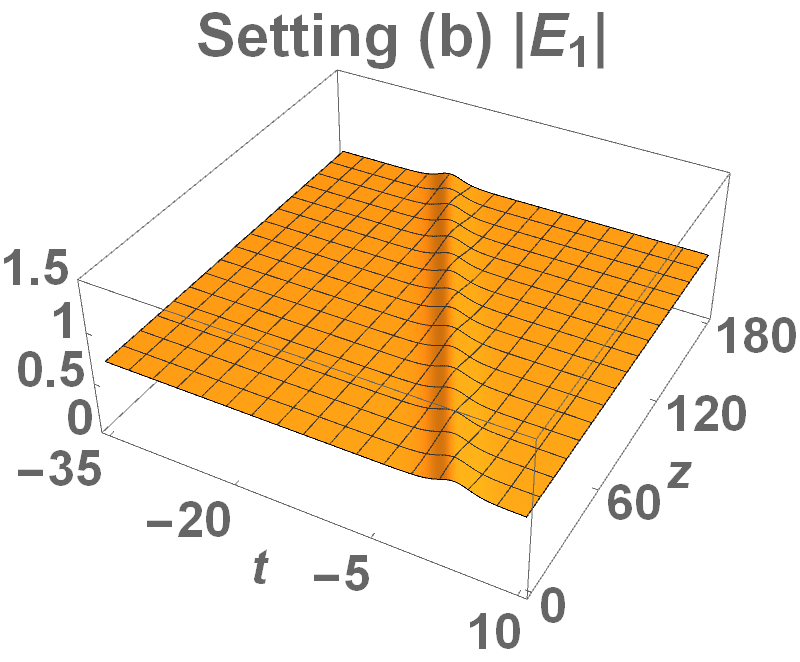}
\includegraphics[width=0.24\textwidth]{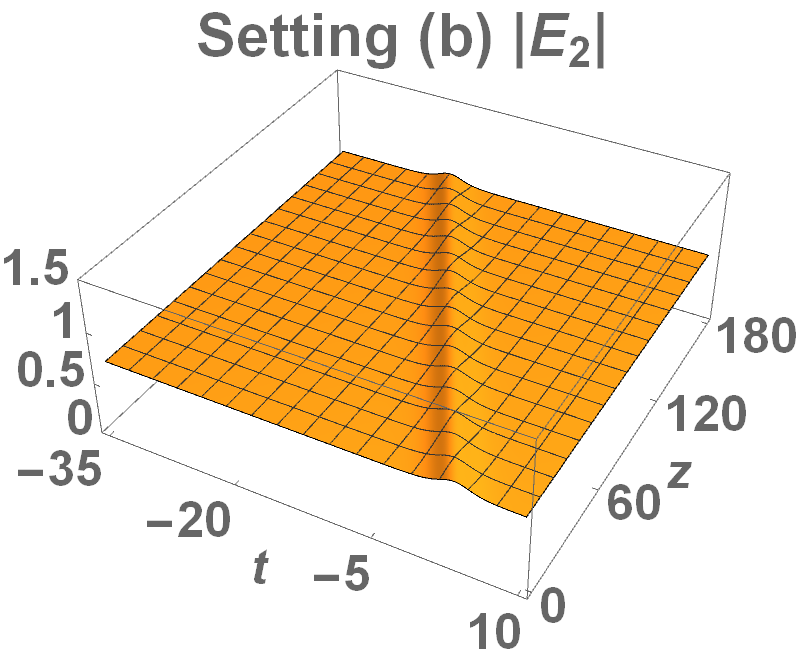}\quad
\includegraphics[width=0.24\textwidth]{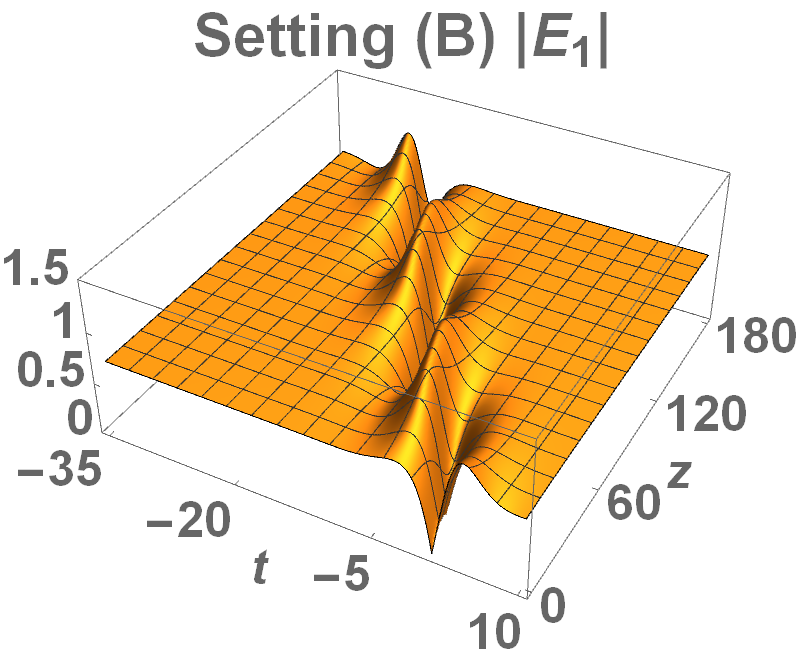}
\includegraphics[width=0.24\textwidth]{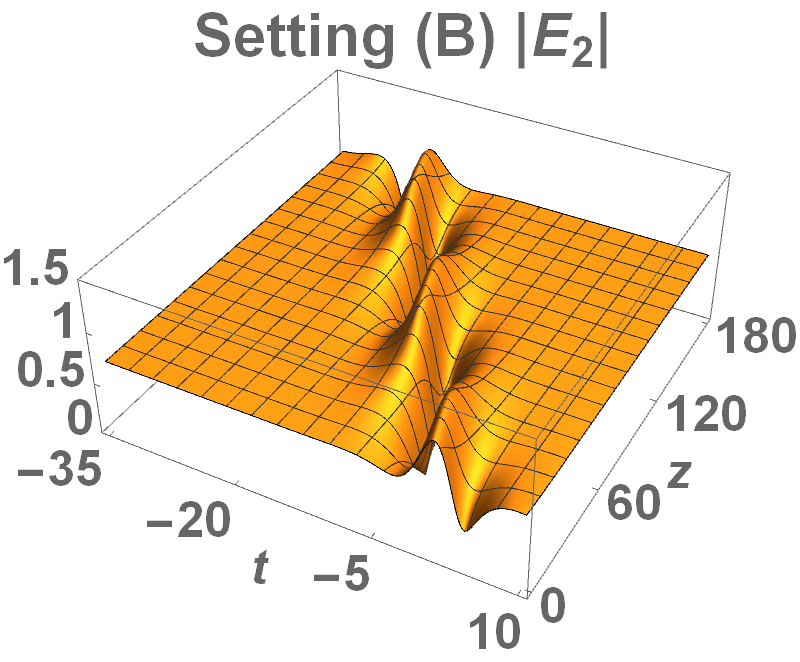}\\[2ex]
\includegraphics[width=0.24\textwidth]{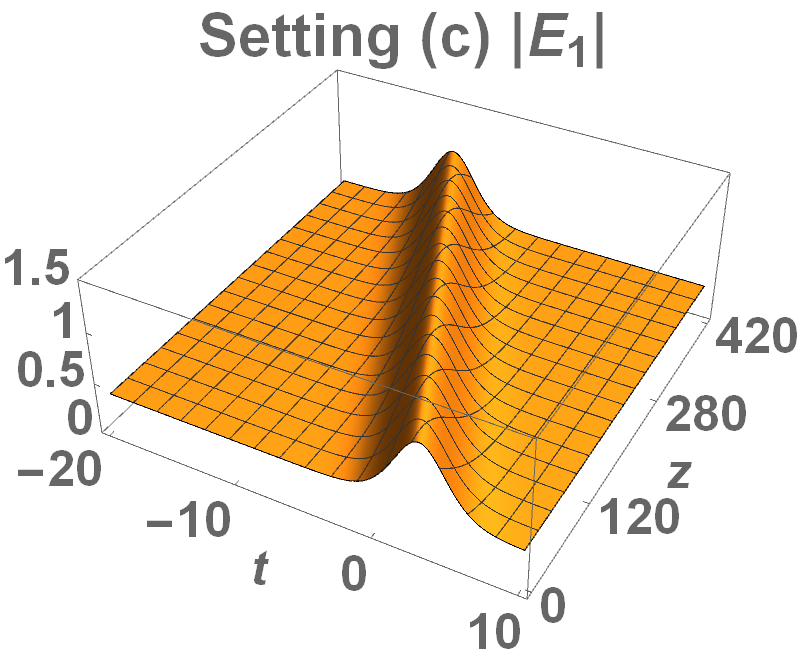}
\includegraphics[width=0.24\textwidth]{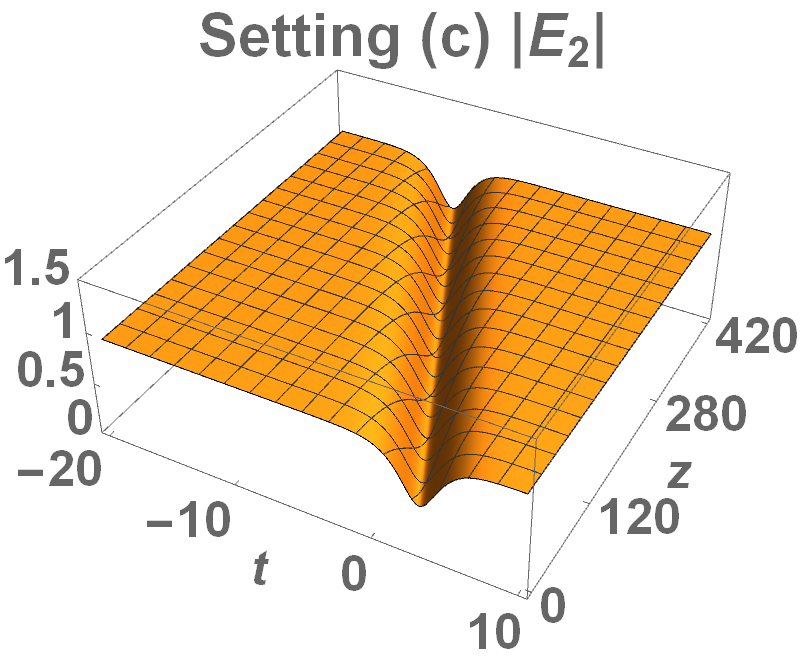}\quad
\includegraphics[width=0.24\textwidth]{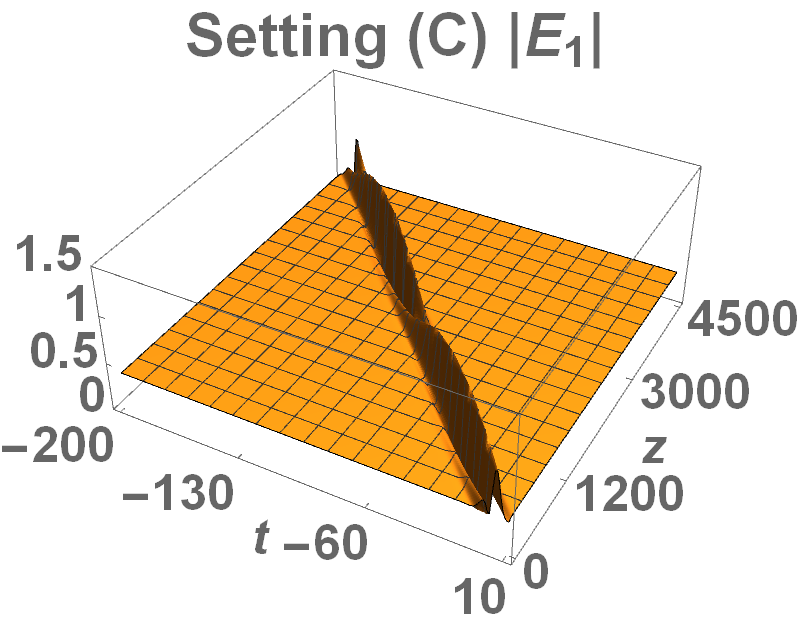}
\includegraphics[width=0.24\textwidth]{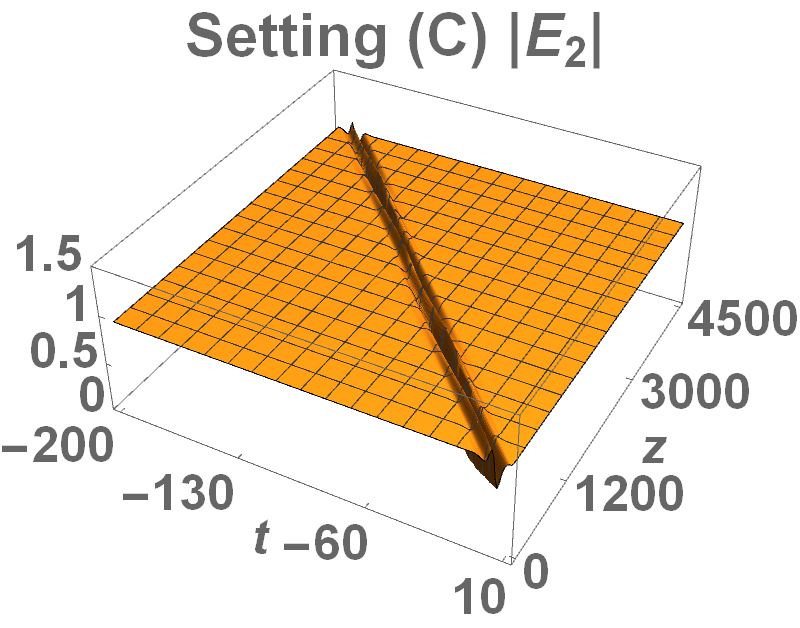}
\caption{Similar situation as in Figure~\ref{f:solitontypeI1}, but with type II solitons of settings (a/A/b/B/c/C). The density matrix $\brho(t,z,\zeta)$ is omitted due to space constraint.}
\label{f:solitontypeII1}
\end{figure}

\subsubsection{One-Soliton solutions: type III}

In this case, Equation~\eqref{e:normingtimeevolution} indicates that we need to substitute $\zeta_1^*$ into $\R_{-,\d}(\zeta)$ and therefore,
we define the following quantity for $\zeta\in D_2$:
\begin{multline}
\label{e:RIII}
R_\III(\zeta) \coloneq \frac{\ii}{2}(R_{-,1,1}(\zeta) - R_{-,2,2}(\zeta))  \\
= -\frac{\ii}{4}\frac{\varrho_{-,1,1}-2\varrho_{-,2,2}+\varrho_{-,3,3}}{k-\ii\epsilon} - \frac{\ii}{2} \varrho_-^- \,g(k)
\left[ \log \left(\frac{E_0+\lambda }{E_0 - \lambda}\right) +
\frac{\lambda}{\sqrt{E_0^2-\epsilon^2}}
\log \left(\frac{E_0 + \sqrt{E_0^2-\epsilon^2}}{E_0-\sqrt{E _o^2-\epsilon ^2}}\right) \right]\,.
\end{multline}

Six examples are shown in Figure~\ref{f:solitontypeIII1}.
It is evident that type-III solitons are somewhat similar to those of type II. They exhibit oscillatory internal structure in general and are sensitive to the initial state of the medium, because they only travel subluminally in an initially uninverted medium.

\begin{figure}[t!]
\centering
\includegraphics[width=0.24\textwidth]{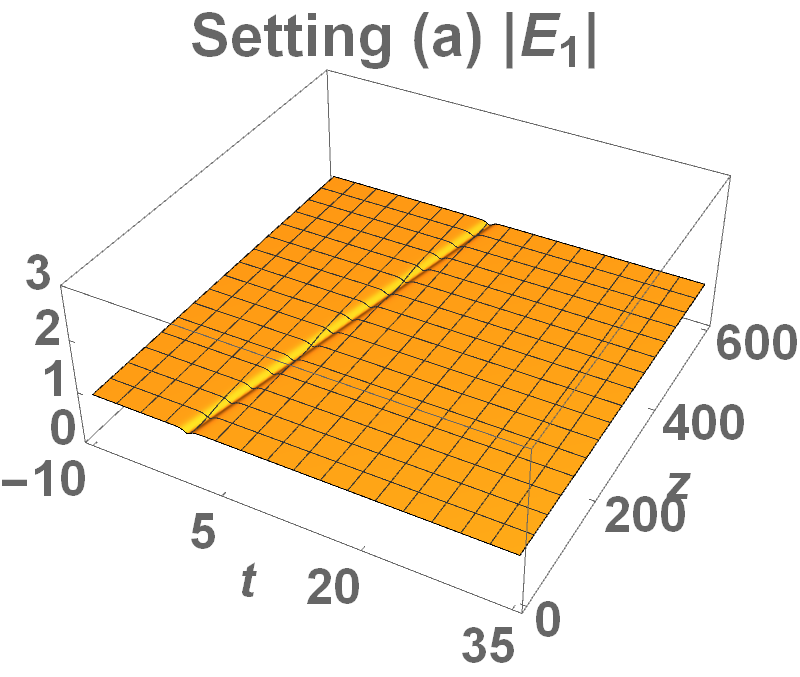}
\includegraphics[width=0.24\textwidth]{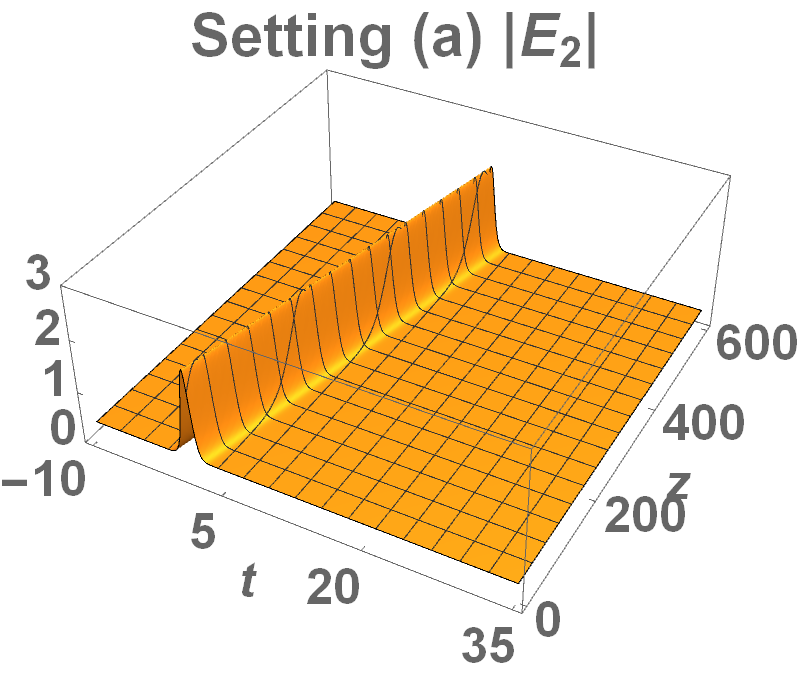}\quad
\includegraphics[width=0.24\textwidth]{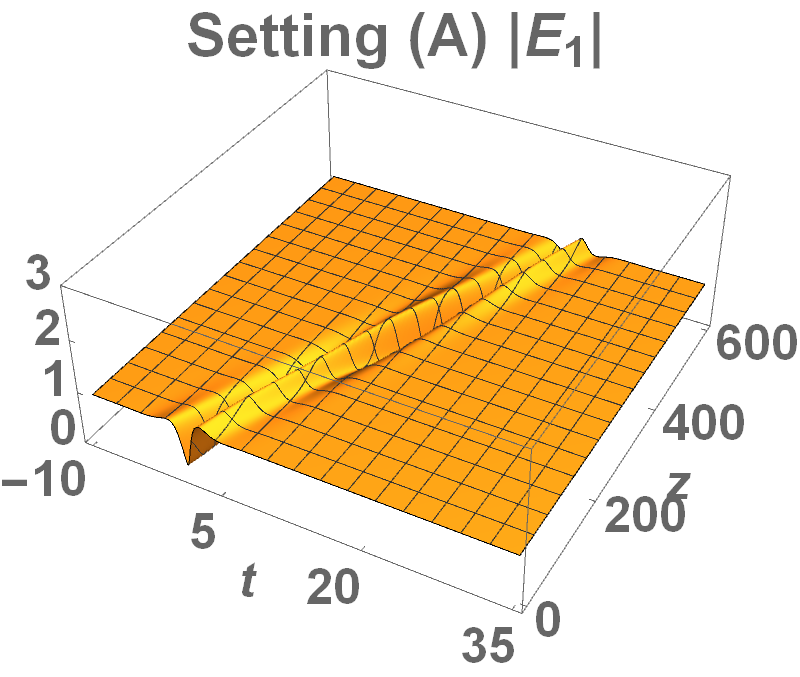}
\includegraphics[width=0.24\textwidth]{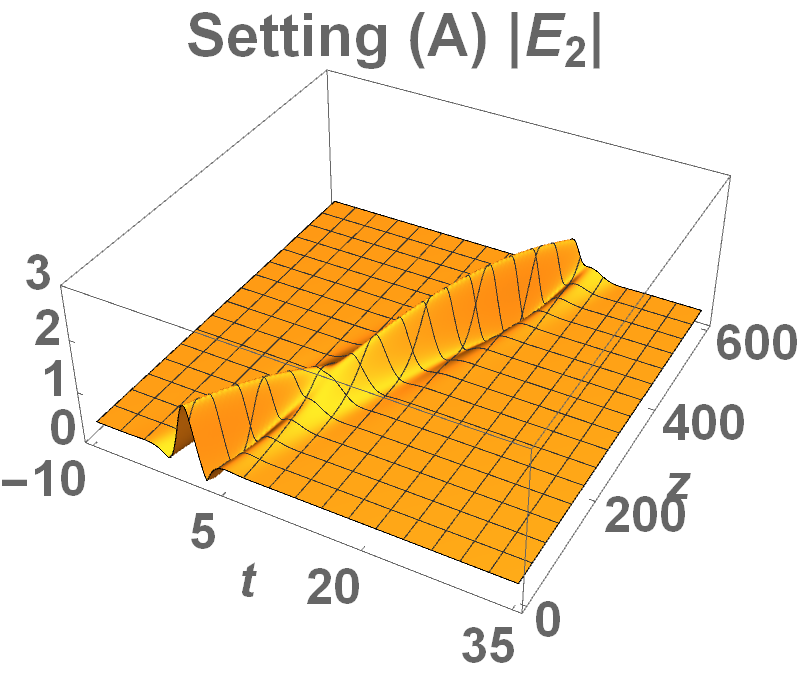}\\[2ex]
\includegraphics[width=0.24\textwidth]{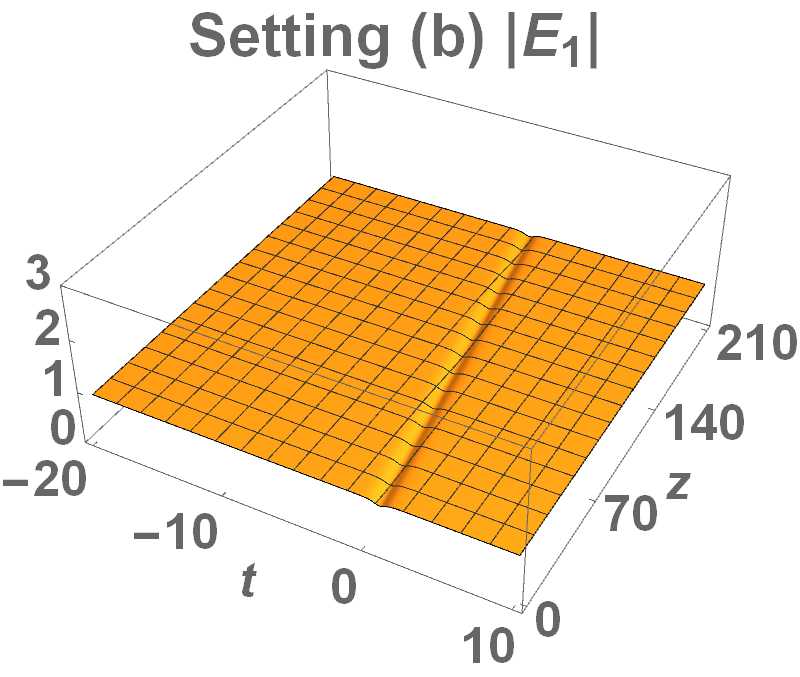}
\includegraphics[width=0.24\textwidth]{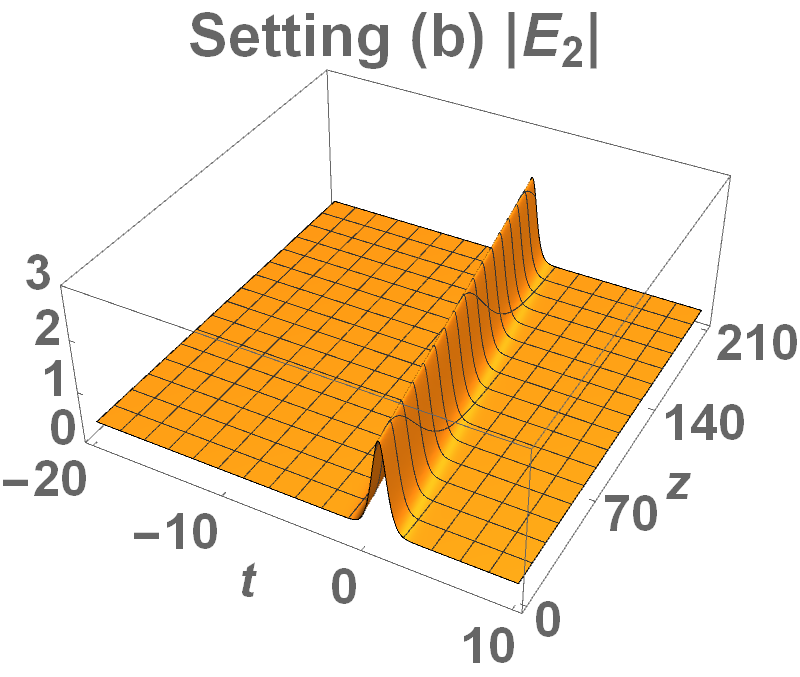}\quad
\includegraphics[width=0.24\textwidth]{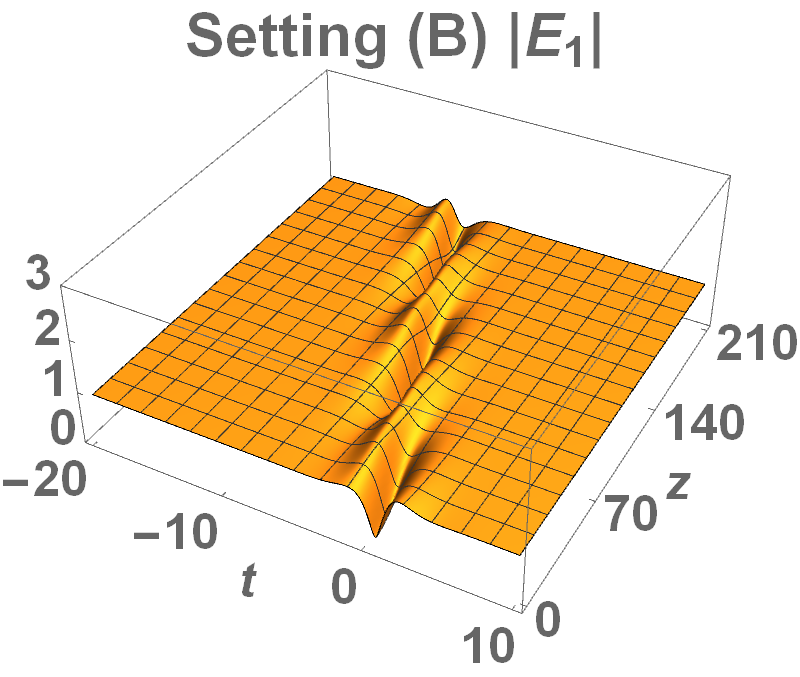}
\includegraphics[width=0.24\textwidth]{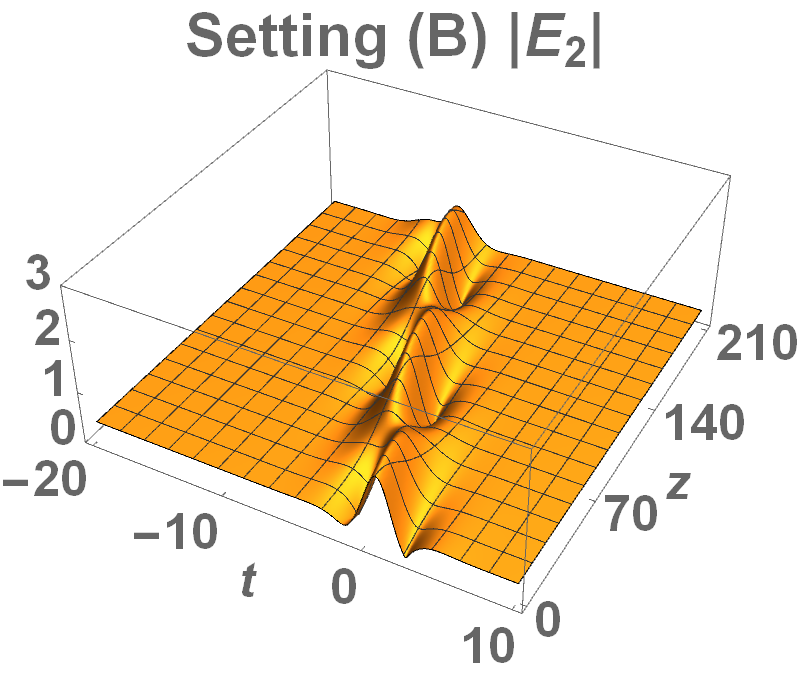}\\[2ex]
\includegraphics[width=0.24\textwidth]{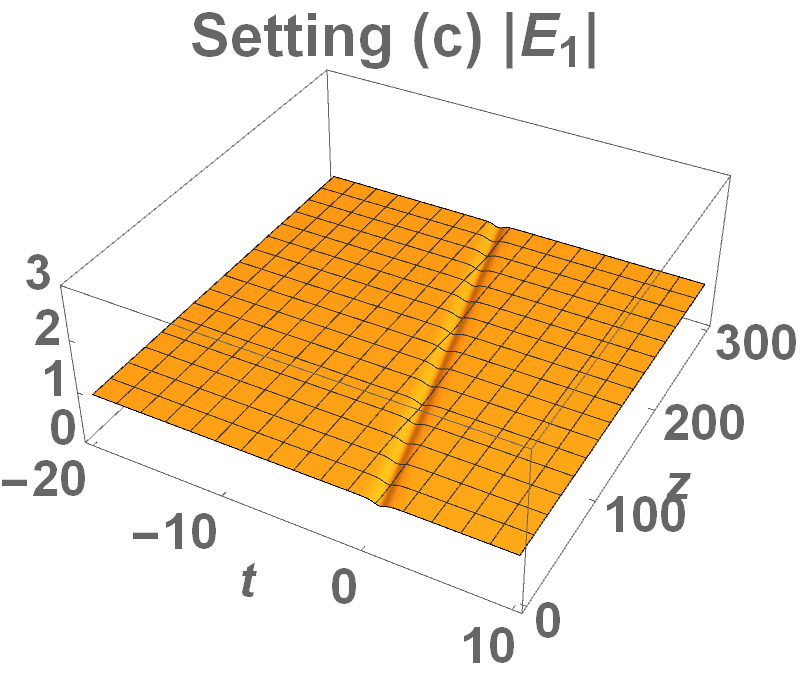}
\includegraphics[width=0.24\textwidth]{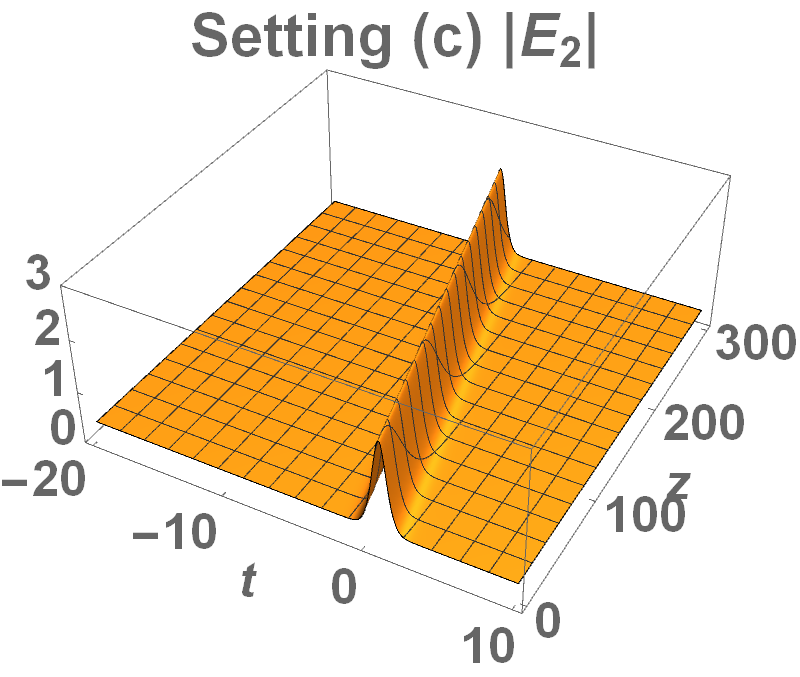}\quad
\includegraphics[width=0.24\textwidth]{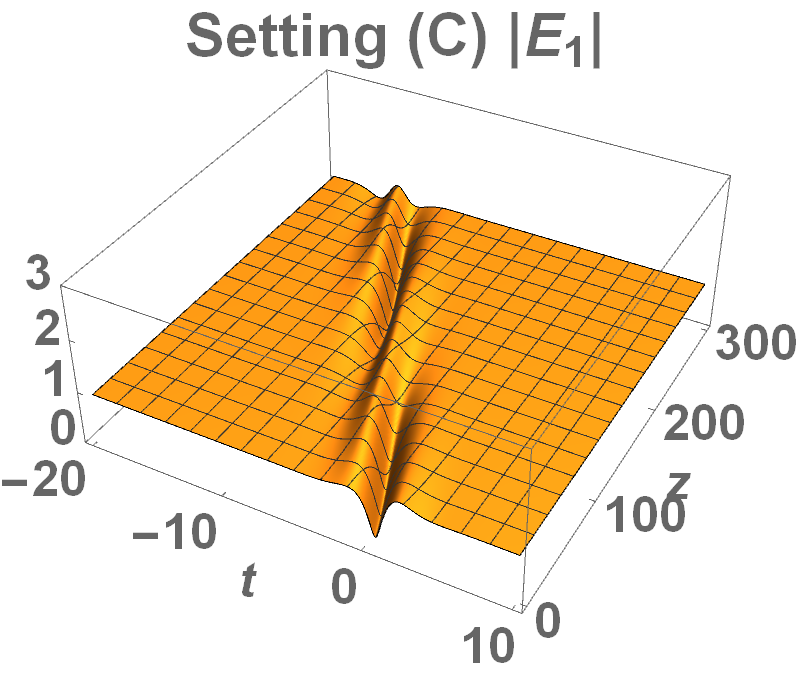}
\includegraphics[width=0.24\textwidth]{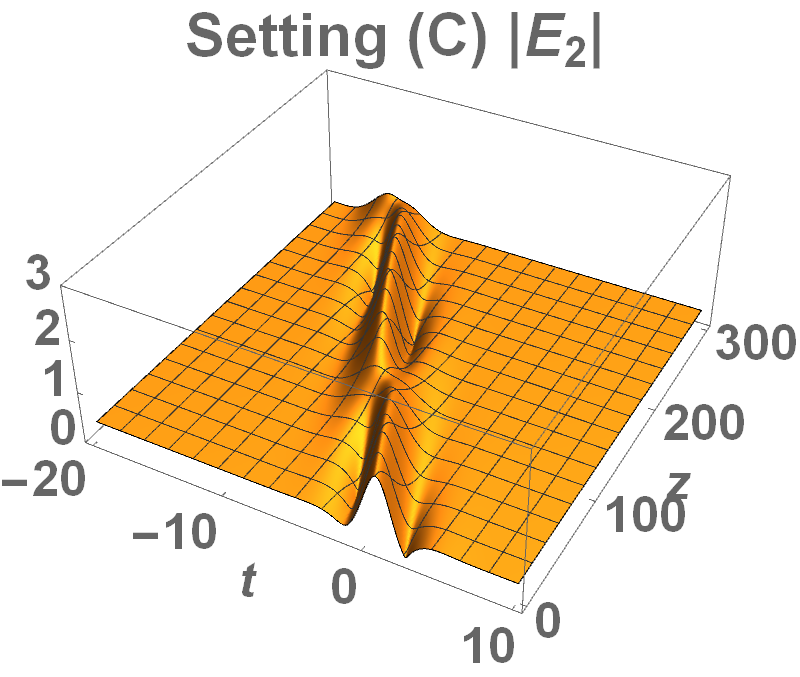}
\caption{Similar situation as in Figure~\ref{f:solitontypeI1}, but for type III solitons with settings (a/A/b/B/c/C). The density matrix $\brho(t,z,\zeta)$ is omitted due to space constraint.}
\label{f:solitontypeIII1}
\end{figure}

\subsection{Dark-state solutions}
\label{s:dark-state}

In this section, we describe several solutions in which the medium is in a dark state for the impinging light beam.
Inspired by the dark-state background solution in Section~\ref{s:dark-state-background}, we impose the following condition
\begin{equation}
\label{e:dark-state-condition}
\varrho_{-,1,1} = \varrho_{-,3,3} = 0\,,\qquad
\varrho_{-,2,2} \ne 0\,,
\end{equation}
and re-investigate all three types of soliton solutions.
(Note that the above condition implies that $\varrho_-^\pm = 0$ in Equation~\eqref{e:rhopmpm}).

For type-I soliton and its derivatives, Equation~\eqref{e:RI} yields a trivial solution, i.e., $R_\I(\zeta) = 0$. Hence, the corresponding solutions  all propagate trivially.
One can then verify that $\rho_{1,1}(t,z,\zeta)\equiv0$ which indeed shows that the medium is in a dark-state for the optical pulse involved.
One example is shown in Figure~\ref{f:darkstate}.

\begin{figure}[t!]
\centering
\includegraphics[width=0.24\textwidth]{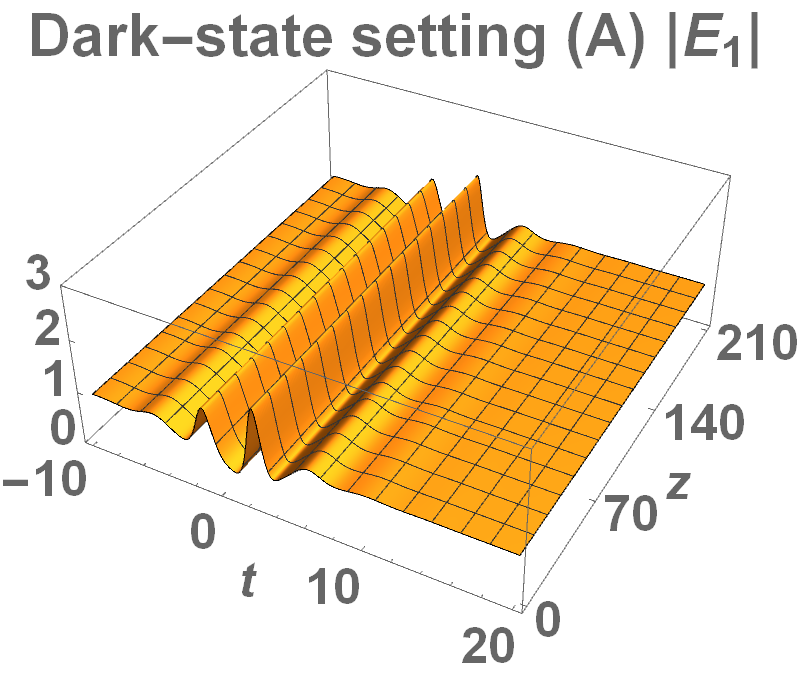}\quad
\includegraphics[width=0.24\textwidth]{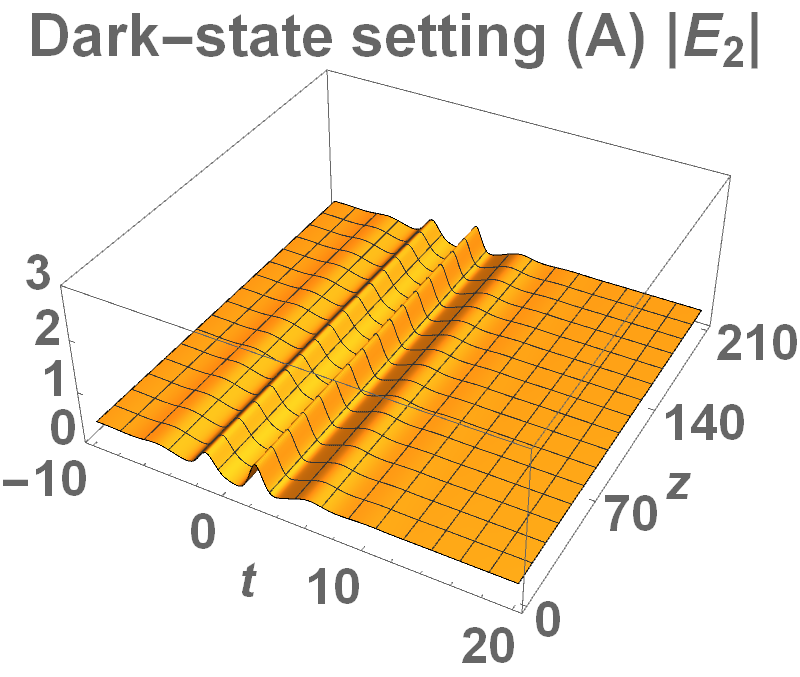}
\caption{Envelopes of electric fields that do not interact with the medium in type-I soliton with setting (A). The asymptotic boundary value $\varrho_-$ is chosen from Equation~\eqref{e:dark-state-condition} instead of Table~\ref{tab:numerics}.}
\label{f:darkstate}
\end{figure}

On the other hand, condition~\eqref{e:dark-state-condition}, together with Equations~\eqref{e:RII} and~\eqref{e:RIII}, implies that
\begin{equation}
R_\II(\zeta) = -\frac{\ii}{2}\frac{\varrho_{-,2,2}}{k+\ii\epsilon}\,,\qquad
R_\III(\zeta) = \frac{\ii}{2}\frac{\varrho_{-,2,2}}{k-\ii\epsilon}\,.
\end{equation}
So, type-II and type-III solitons retain nontrivial propagation. In other words, there are interactions between the electric field and the medium. The medium is thus not in a dark state, and the corresponding solutions are not dark-state solutions.

\subsection{Stability of solitons}
\label{s:stability}

In this section, we discuss the stability of soliton solutions of MBEs~\eqref{e:cmbe}.
We restrict our discussion to the case of the spectral-line shape described by $g(k;\epsilon)$ given in Equation~\eqref{e:Lorentzian}.
For greater generality,
we use the general asymptotic condition~\eqref{e:BC}.
The stability of solitons is determined by decaying/growing radiation as $z$ increases,
which in turn is determined by the jump matrix in RHP~\eqref{e:RHP} or its solution~\eqref{e:RHPsol}.
It is evident that the $z$-dependence of the jump matrix is given by all three reflection coefficients.
Thus, we turn to characterizing the reflection coefficients by analyzing the two systems~\eqref{e:timeevoeqnB1} and~\eqref{e:timeevoeqnB2}.

First, we rewrite the two systems as one,
\bse
\label{e:drdz}
\begin{equation}
\begin{aligned}
\frac{\partial\@r(z,\zeta)}{\partial z}
 & = \frac{\ii}{2} \A(z,\zeta)\,\@r(z,\zeta) + \@b(z,\zeta)\,,\qquad
\zeta\in\Sigma\backslash(-E_0,E_0)\,,\\
\@r(z,\zeta) & \coloneq (r_1(z,\zeta),r_2(z,\zeta),r_3(z,\zeta),r_3(z,\hat{\zeta}))^\top\,,
\end{aligned}
\end{equation}
\begin{equation}
\label{e:drdzA}
\begin{aligned}
\A(z,\zeta) & \coloneq \bpm\displaystyle
Y_{-,2,2} - Y_{-,1,1} & Y_{-,2,3} & 0 & 0 \\
0 & Y_{-,3,3} - Y_{-,1,1} & 0 & 0 \\
0 & 0 & \widetilde Y_{-,3,3} - \widetilde Y_{-,2,2} & 0 \\
0 & 0 & \ii \zeta\widetilde Y_{-,1,3}/E_0 & \widetilde Y_{-,1,1} - \widetilde Y_{-,2,2}
\epm\,,\\
\@b & \coloneq \nu_0 \pi g(k)\bpm\displaystyle
-\varrho_{-,2,1} \\ - \varrho_{-,3,1} \\ \varrho_{-,3,2} \\ \ii\zeta\varrho_{-,1,2}/E_0
\epm\,,\\
\Y_-(z,\zeta)
 & = (Y_{-,i,j})_{3\times3}(z,\zeta)
 \coloneq \R_-(z,\zeta) + \ii\nu_0\pi g(k) \bvarrho_-\,,\\
\widetilde \Y_-(z,\zeta)
 & = (\widetilde Y_{-,i,j})_{3\times3}(z,\zeta)
 \coloneq \R_-(z,\zeta) - \ii\nu_0\pi g(k) \bvarrho_-\,,
\end{aligned}
\end{equation}
\ese
where the quantity $\nu_0$ is defined in Equation~\eqref{e:nu0-def}.

%
The situation is complicated due to the fact that, in general, both $\A(z,\zeta)$ and $\@b(z,\zeta)$ depend on $z$.
For simplicity,
we assume that $\bvarrho_-$ in Equation~\eqref{e:rhopminfty} is independent of $z$,
which implies that the two quantities $\A$ and $\@b$ are independent of $z$ as well. This assumption is reasonable, because it is equivalent to assuming that the optical medium is homogeneous throughout $z\in[0,+\infty)$ in the distant past.
In this simpler case, we can write
\begin{equation}
\label{e:rsolconstant}
\@r(z,\zeta) = 2\ii \A^{-1} \@b + \e^{\ii z\A/2}\big[\@r(0,\zeta) - 2\ii \A^{-1}\@b\big]\,.
\end{equation}
Since the $z$ dependence of $\@r(z,\zeta)$ comes from the exponential $\e^{\ii z\A/2}$,
we next turn to analyzing the matrix $\A$.
Because $\A$ is block-diagonal and each block is triangular,
the diagonal entries are the eigenvalues,
which are denoted as
\begin{equation}
\nonumber
\lambda_1 \coloneq Y_{-,2,2} - Y_{-,1,1}\,,\qquad
\lambda_2 \coloneq Y_{-,3,3} - Y_{-1,1}\,,\qquad
\lambda_3 \coloneq \widetilde Y_{-,3,3} - \widetilde Y_{-,2,2}\,,\qquad
\lambda_4 \coloneq \widetilde Y_{-,1,1} - \widetilde Y_{-,2,2}\,,
\end{equation}
where $\zeta\in\Sigma\backslash(-E_0,E_0)$, and the matrices $\Y$ and $\widetilde \Y$ are defined in Equation~\eqref{e:drdz}.
Explicitly, we write the eigenvalues as
\begin{equation}
\begin{aligned}
\lambda_1
 & = - \pi\lambda\H_k[\varrho_-^-g/\lambda'] - \pi\H_k[\varrho_-^+g] + \pi\H_k[\varrho_{-,2,2}g] - \ii\nu_0\pi g(k)\varrho_{-,1,1} + \ii\nu_0\pi g(k)\varrho_{-,2,2} - 6\omega_-\,,\\
\lambda_2
 & = -2\pi\lambda\H_k[\varrho_-^-g/\lambda']- \ii\nu_0\pi g(k)\varrho_{-,1,1} + \ii\nu_0\pi g(k)\varrho_{-,3,3} \,,\\
\lambda_3
 & = -\pi\lambda\H_k[\varrho_-^-g/\lambda'] + \pi\H_k[\varrho_-^+ g] - \pi \H_k[\varrho_{-,2,2}g] + \ii\nu_0\pi g(k)\varrho_{-,2,2} - \ii\nu_0\pi g(k)\varrho_{-,3,3} + 6\omega_-\,,\\
\lambda_4
 & = \pi\lambda\H_k[\varrho_-^-g/\lambda'] + \pi\H_k[\varrho_-^+g] - \pi\H_k[\varrho_{-,2,2}g] - \ii\nu_0\pi g(k)\varrho_{-,1,1} + \ii\nu_0\pi g(k)\varrho_{-,2,2} + 6\omega_-\,.
\end{aligned}
\end{equation}
Evidently, all four eigenvalues are distinct in general. Therefore, there exists a constant invertible matrix $\B$ such that
\begin{equation}
\nonumber
\B^{-1}\A\B = \diag(\lambda_1,\lambda_2,\lambda_3,\lambda_4)\,,
\end{equation}
and that
\begin{equation}
\nonumber
\e^{\ii z\A/2} = \B^{-1}\diag(\e^{\ii z\lambda_1/2},\e^{\ii z\lambda_2/2},\e^{\ii z\lambda_3/2},\e^{\ii z\lambda_4/2}) \B\,.
\end{equation}
So, $\@r(z,\zeta)$ is bounded as $z\to\infty$ if and only if $\Im\lambda_j\ge0$ for all $j = 1,\dots,4$,
and $\@r(z,\zeta)$ is unbounded as $z\to\infty$ if and only if there exists at least one eigenvalue such that its imaginary part is negative.
We next investigate all eigenvalues of the matrix $\A$,
and then characterize the boundedness of the reflection coefficients.

One should recall that in the jump condition in RHP~\ref{rhp:inverse}:
i) $r_1(z,\zeta)$ only appears when $\zeta\in\Sigma_1$;
ii) $r_2(z,\zeta)$ appears when $\zeta\in\Sigma\backslash\Sigma_3$,
because $r_2(z,\hat{\zeta})$ in $\Sigma_3$ is equivalent to $r_2(z,\zeta)$ in $\Sigma_1$;
iii) $r_3(z,\zeta)$ appears when $\zeta\in\Real$.

By the block nature of $\A$ in Equation~\eqref{e:drdzA},
we only need to consider four situations:
\begin{enumerate}[label=\roman*),align=right]
\item
$\lambda_1$ on $\zeta\in\Sigma_1$;
\item
$\lambda_2$ on $\zeta\in\Sigma\backslash\Sigma_3$;
\item
$\lambda_3$ on $\zeta\in\Sigma_1$;
\item
$\lambda_4$ on $\zeta\in\Sigma_1$.
\end{enumerate}

\subsubsection{Calculation of eigenvalues.}

Recall that the auxiliary matrix $\R_\pm$ in Equation~\eqref{e:R-dexplicit} contains $\sqrt{E_0^2 - \epsilon^2}$ in the case of nontrivial inhomogeneous broadening.
It is easy to show that the quantity $1/\sqrt{E_0^2-\epsilon^2} \log\big[(E_0 + \sqrt{E_0^2-\epsilon^2})/(E_0-\sqrt{E_0^2-\epsilon ^2})\big]$
is always real no matter if $E_0<\epsilon$ or $E_0>\epsilon$.
Thus, we compute the eigenvalues in the two cases $E_0\lg \epsilon$ together.
Note that if $E_0<\epsilon$ the quantity $\sqrt{E_0^2-\epsilon^2}$ is purely imaginary,
which introduces an additional branch cut in the complex plane.
However,
as we discussed before,
we are only interested in the imaginary parts of the eigenvalues $\lambda_j$,
which will be shown to be independent of the choice of the new branch cut.

Without showing all the tedious calculations
(which are straightforward using Equations~\eqref{e:timeevoeqnB1},
\eqref{e:timeevoeqnB2} and~\eqref{e:R-dexplicit}),
we present the explicit expressions for all eigenvalues as follows:
\begin{equation}
\label{e:eigenvalues}
\begin{aligned}
\Im\lambda_1 & = \pi g(k)(\varrho_{-,2,2} - \varrho_{-,1,1})\,, \quad \zeta\in \Sigma_1\,,\\
\Im\lambda_2 & = \begin{cases}
  \pi g(k)(\varrho_{-,3,3} - \varrho_{-,1,1})\,, & \zeta\in \Sigma_1\,,\\
  0\,, & \zeta\in \Sigma_\circ \,,
\end{cases}\\
\Im\lambda_3 & = \pi g(k)(\varrho_{-,2,2} - \varrho_{-,3,3})\,,\qquad \zeta\in \Sigma_1\,,\\
\Im\lambda_4 & = \pi g(k)(\varrho_{-,2,2} - \varrho_{-,1,1})\,,\qquad \zeta\in \Sigma_1\,,
\end{aligned}
\end{equation}
where $\Sigma_1 = (-\infty,-E_0]\cup[E_0,\infty)$ as in Section~\ref{s:RHP}.

\subsubsection{Results.}

As we mentioned earlier,
all the reflection coefficients are bounded as $z\to\infty$ if and only if
all four eigenvalues have non-negative imaginary parts.
The expression~\eqref{e:eigenvalues} yields that all the reflection coefficients are bounded if and only if
the eigenvalues~\eqref{e:BC} satisfy
\begin{equation}
\label{e:stabilitycondition}
\varrho_{-,2,2} \ge \varrho_{-,3,3} \ge \varrho_{-,1,1}\,.
\end{equation}
Again, we stress that the eigenvalues $\varrho_{-,j,j}$ do not have a direct physical meaning. Instead, one should look at $D_{-,j}$ in Equation~\eqref{e:rho-D}, as they indicate the initial state of the medium.
Thus, the system~\eqref{e:Drho-} yields the following system of inequalities:
\begin{equation}
\label{e:stability}
\begin{aligned}
D_{-2}\sin^2\alpha
    - D_{-,3}\cos^2\alpha
 & > - \frac{\hat\zeta}{2 k}D_{-,1} \cos2\alpha
    - \frac{\zeta\cos^2\alpha}{2k} D_{-,2}
    + \frac{\zeta\sin^2\alpha}{2k} D_{-,3}&\\
 & > - \frac{\zeta}{2k} D_{-,1}\cos2\alpha
    - \frac{\hat\zeta\cos^2\alpha}{2k} D_{-,2}
    + \frac{\hat\zeta\sin^2\alpha}{2k} D_{-,3}\,,\qquad&
\mbox{if } 0\le \alpha< \frac{\pi}{4}\,,\\
D_{-,3}\cos^2\alpha - D_{-2}\sin^2\alpha
 & > \frac{\hat\zeta}{2 k}D_{-,1} \cos2\alpha
    + \frac{\zeta\cos^2\alpha}{2k} D_{-,2}
    - \frac{\zeta\sin^2\alpha}{2k} D_{-,3}&\\
 & > \frac{\zeta}{2k}D_{-,1} \cos2\alpha
    + \frac{\hat\zeta\cos^2\alpha}{2k} D_{-,2}
    - \frac{\hat\zeta\sin^2\alpha}{2k} D_{-,3}\,,\qquad&
\mbox{if } \frac{\pi}{4}< \alpha\le \frac{\pi}{2}\,,
\end{aligned}
\end{equation}
where we take $|E_{-,1}| = E_0\cos\alpha$ and $|E_{-,2}| = E_0\sin\alpha$ with $0\le\alpha\le\pi/2$ in Equation~\eqref{e:E-j} and the fact that $w_-(z) = 0$ due to $g(k)$ being a Lorentzian.
Therefore, we conclude that
\textit{solitons of CMBE with NZBG
are stable (radiation decays) as $z\to\infty$ if and only if,
initially, the population of atoms in each state satisfies the inequality~\eqref{e:stability}.}

\begin{remark}
i) As one expects, the initial population of atoms in the two ground states
($D_{-,2}$ and $D_{-,3}$) are symmetric in the inequality~\eqref{e:stability}.
This shows that the formulation of IST does not have any preference between the two ground states,
meaning that the IST is consistent with physics.
ii) However, unlike the case of ZBG or unlike the two-level MBE with ZBG/NZBG,
the inequality~\eqref{e:stability} is coupled and cannot be solved easily.
\end{remark}

\section*{Acknowledgments}

We thank Ildar Gabitov for mentioning this problem to us and for many insightful conversations over the years.
Gregor Kova\v{c}i\v{c} was partially supported by the U.S.\ National Science Foundation under grant DMS-1615859.
Gino Biondini was partially supported by the U.S.\ National Science Foundation under grant DMS-2009487.
Sitai Li was partially supported by the National Natural Science Foundation of China (No.\ 12201526), the Fundamental Research Funds for the Central Universities (No.\ 20720220040), and the Natural Science Foundation of Fujian Province of China (No.\ 2022J01032).

\appendix
\section*{Appendix: Proofs and calculations}
\def\thesubsection{\Alph{subsection}}
\def\theequation{A.\arabic{equation}}
\setcounter{section}1
\setcounter{equation}0

\subsection{Background solutions}
\label{s:background}

A background solution of CMBE~\eqref{e:cmbe} is defined as one for which the electric field envelopes are independent of $t$.
Therefore, we write $E_j(t,z) = E_{\bg,j}(z)$ for $j = 1,2$ and $\Q(t,z) = \Q_\bg(z)$.
Next, we derive such a solution.

Let us define a $t$-independent matrix
$\X_\bg(z,k) \coloneq \ii k \J + \Q_\bg(z)$.
Then, CMBE~\eqref{e:cmbe} implies that
\begin{equation}
\label{e:rho_odef}
\brho_\bg(t,z,k)
 \coloneq \e^{t\X_\bg(z,k)} \, \widetilde \C(z,k)\,\e^{-t\X_\bg(z,k)}\,,
\end{equation}
where $\widetilde \C(z,k)$ is an as yet undetermined $3\times3$ matrix.
The matrix $\X_\bg$ has eigenvalues $\pm \ii\lambda$ and $ - \ii k$,
where $\lambda$ is defined in Equation~\eqref{e:lambda-def}.
In this appendix,
only the case $k\in\Real$ is needed,
so $\lambda\in\Real$ as well.
The eigenvector matrix of $\X_\bg$ and its inverse are computed to be
\begin{equation}
\nonumber
\Y_\bg(z,k) \coloneq \
\bpm1 & 0 & -\ii E_0/\zeta \\
-i\E_\bg^*/\zeta & (\E_\bg^\bot)^*/E_0 & \E_\bg^*/E_0
\epm\,,\qquad
\Y_\bg^{-1}(z,k) = \frac{1}{\gamma(\zeta)}
\bpm
1 & \ii\E_\bg^\top/\zeta \\
0 & (\E_\bg^\bot)^\top \gamma(\zeta)/E_0 \\
i E_0/\zeta & \E_\bg^\top/E_0
\epm\,,
\end{equation}
where $\gamma(\zeta)$ is given in Equation~\eqref{e:gamma-def},
the superscript $\bot$ is defined in Equation~\eqref{e:perp},
and one defines
\begin{equation}
\E_\bg(z) \coloneq (E_{\bg,1}, E_{\bg,2})\,.
\end{equation}
The above eigenvalues and eigenvectors yield the following identity
\begin{equation}
\X_\bg \,\Y_\bg = \ii\Y_\bg \,\bLambda\,,\qquad
\mbox{with }\bLambda \coloneq \diag(\lambda,-k,-\lambda)\,.
\end{equation}
Consequently, the density matrix~\eqref{e:rho_odef} can be rewritten as
\begin{equation}
\label{e:rho_otilderho}
\brho_\bg(t,z,k)
 = \Y_\bg(z,k)\,\e^{\ii \bLambda t}\,\bvarrho_\bg(z,k) \,\e^{-\ii \bLambda t} \,\Y_\bg^{-1}(z,k)\,,
\end{equation}
where $\bvarrho_\bg(z,k) \coloneq \Y_\bg^{-1}(z,k)\,\widetilde \C(z,k)\, \Y_\bg(z,k)$ is another as yet undetermined matrix independent of $t$.

\begin{remark}
\label{rmk:varrhobg-properties}
The matrix $\bvarrho_\bg(z,k)$ must satisfy three conditions:
i) $\tr \bvarrho_\bg(z,k) = \tr\brho_\bg(t,z,k) = 1$;
ii) $\bvarrho_\bg(z,k)$ ensures that $\brho_\bg^\dagger = \brho_\bg$;
iii) $\bvarrho_\bg(z,k)$ ensures that the integral $\int[\J,\brho_\bg(t,z,k)]\,g(k)\,\d k$ is independent of $t$.
The first two conditions follow naturally from the properties of the density matrix $\brho_\bg$,
and the last ensures the consistency of the background solution, i.e.,
that $\Q_\bg$ is independent of $t$ by CMBE~\eqref{e:cmbe}.
\end{remark}

We next look for such a matrix $\bvarrho_\bg(z,k)$.
It is worth noting that, in general, $\bvarrho_\bg(z,k)$ itself is not a Hermitian matrix.
In order to simplify the situation,
we decompose the unknown matrix $\bvarrho_\bg$ as
\begin{equation}
\nonumber
\bvarrho_\bg(z,k) =\sum_{j = 1}^4 \bvarrho_\bg^{(j)}(z,k)\,,
\end{equation}
where
\begin{equation}
\begin{aligned}
\bvarrho_\bg^{(1)} & \coloneq \bpm \varrho_{\bg,1,1} & 0 & 0 \\ 0 & \varrho_{\bg,2,2} & 0 \\ 0 & 0 & \varrho_{\bg,3,3} \epm\,,\qquad&
\bvarrho_\bg^{(2)} & \coloneq \bpm 0 & \varrho_{\bg,1,2} & 0 \\ \varrho_{\bg,2,1} & 0 & 0 \\ 0 & 0 & 0 \epm\,,\\
\bvarrho_\bg^{(3)} & \coloneq \bpm 0 & 0 & \varrho_{\bg,1,3} \\ 0 & 0 & 0 \\ \varrho_{\bg,3,1} & 0 & 0 \epm\,,\qquad&
\bvarrho_\bg^{(4)} & \coloneq \bpm 0 & 0 & 0 \\ 0 & 0 & \varrho_{\bg,2,3} \\ 0 & \varrho_{\bg,3,2} & 0 \epm\,.
\end{aligned}
\end{equation}
Correspondingly,
Equation~\eqref{e:rho_otilderho} yields the decomposition of the density matrix,
\begin{equation}
\brho_\bg(t,z,k) = \sum_{j=1}^4 \brho_{\bg}^{(j)}(t,z,k)\,,\qquad
\brho_{\bg}^{(j)}(t,z,k) \coloneqq \Y_\bg(z,k)\,\e^{\ii \bLambda t}\,\bvarrho_\bg^{(j)}(z,k) \,\e^{-\ii \bLambda t} \,\Y_\bg^{-1}(z,k)\,.
\end{equation}
Following CMBE~\eqref{e:cmbe},
one decomposes the electric field envelope $\Q_\bg$ as
\begin{equation}
\label{e:Qbgj-def}
\frac{\partial \Q_\bg^{(j)}}{\partial z} = -\frac{1}{2}\int\big[\J,\brho_\bg^{(j)}\big]\,g(k)\d k\,.
\end{equation}
After tedious but straightforward calculations,
we obtain explicit expressions for each $\brho_{\bg}^{(j)}$.
The formulas are omitted here for brevity.
As we discussed before,
the background solution requires a $t$-independent $\Q_\bg$.
Hence, from Equation~\eqref{e:Qbgj-def},
one needs to investigate the quantity $[\J,\brho_\bg^{(j)}]$,
which should ensure that $\int[\J,\brho_\bg^{(j)}]g(k)\d k$ is $t$-independent.
Here, we present the explicit expressions for $[\J,\brho_\bg^{(j)}]$:
\begin{equation}
\label{e:[j,rhooj]}
\begin{aligned}
[\J,\brho_{\bg}^{(1)}]
 & = \frac{\ii}{\lambda} (\varrho_{\bg,1,1} - \varrho_{\bg,3,3})\,\Q_\bg\,\J \,,\\
[\J,\brho_{\bg}^{(2)}]
 & = \bpm
 0 & 2 \e^{\ii  (k+\lambda)t} \varrho_{\bg,1,2} (\E_\bg^\bot)^\dagger/E_0 \\
 - \e^{-\ii  (k + \lambda)t} \varrho_{\bg,2,1} \zeta \E_\bg^\bot/( E_0 \lambda) & 0\\
\epm\,,\\
[\J,\brho_{\bg}^{(3)}]
 & = \frac{1}{\zeta\lambda}\bpm
  0 & (E_0^2 \varrho_{\bg,3,1}\e^{-2\ii \lambda t} + \e^{2\ii \lambda t}\varrho_{\bg,1,3}\zeta^2)\E_\bg^\top\\
  -(\e^{2\ii \lambda t}E_0^2 \varrho_{\bg,1,3} + \e^{-2\ii \lambda t} \varrho_{\bg,3,1}\zeta^2)\E_\bg^* & 0
\epm\,,\\
[\J,\brho_{\bg}^{(4)}]
 & = \bpm
  0 & -\frac{2\ii }{\zeta}\e^{\ii (k-\lambda)t}\varrho_{\bg,3,2}(\E_\bg^\bot)^\dagger\\
  -\frac{\ii}{\lambda}\e^{-\ii (k-\lambda)t}\varrho_{\bg,2,3}\E_\bg^\bot & 0
\epm\,,
\end{aligned}
\end{equation}
where, of course, some $\varrho_{\bg,i,j}$ are not arbitrary,
because they have to ensure that Remark~\ref{rmk:varrhobg-properties} holds.
Nonetheless, we know that all $\varrho_{\bg,i,j}$ are independent of $t$,
meaning that all the $t$ dependence of the quantities in Equation~\eqref{e:[j,rhooj]} are expressed in the exponential functions.
The latter three matrices in Equation~\eqref{e:[j,rhooj]} are not independent of $t$ in general even upon integration and summation.
Therefore, the electric field envelope $\Q_\bg^{(j)}$, with $j = 2,3,4$, depends on $t$ and is not a background solution.
It is obvious that the first commutator in Equation~\eqref{e:[j,rhooj]} is independent of $t$.
Hence, following Remark~\ref{rmk:varrhobg-properties},
we conclude that $\bvarrho_\bg$ must be a real diagonal matrix.
Thereafter,
we obtain the differential equation
\begin{equation}
\label{e:Qbg-z}
\frac{\partial \Q_\bg}{\partial z} = -\frac{\ii}{2}\Q_\bg \J\int_{-\infty}^\infty (\varrho_{\bg,1,1} - \varrho_{\bg,3,3})\frac{g(k)}{\lambda(k)}\d k\,.
\end{equation}

\subsection{Computation of the auxiliary matrix \texorpdfstring{$\R_\pm(z,\zeta)$}{R±(z,ζ)}}
\label{a:computeRpm}

The following lemma will be useful in the later calculations.
\begin{lemma}
\label{thm:integralidentities}

If $k\in\Real$,
i.e.,
$\zeta\in\Real$,
and if $k$ and $k'$ are on the same Riemann sheet,
then the following identities hold:
\begin{gather*}
\lim_{t\to\pm\infty}\!\dashint_{-\infty}^\infty\e^{\pm \ii(\lambda'-\lambda)t}f(\zeta,\zeta')\frac{\d k'}{k'-k}
 = \pm \ii\nu\pi f(\zeta,\zeta)\,,\qquad
\lim_{t\to\pm\infty}\!\dashint_{-\infty}^\infty\e^{\pm \ii(\hat\zeta'-\hat\zeta)t}f(\zeta',\zeta)\frac{\d k'}{k'-k}
 = \pm \ii\pi f(\zeta,\zeta)\,,\\
 \lim_{t\to\pm\infty}\!\dashint_{-\infty}^\infty\e^{\pm \ii(\zeta'-\zeta)t}f(\zeta',\zeta)\frac{\d k'}{k'-k}
 = \pm \ii\pi f(\zeta,\zeta)\,,\\
\lim_{t\to\pm\infty}\dashint_{-\infty}^\infty\e^{\pm \ii(\hat\zeta'-\zeta)t}f(\zeta,\zeta')\frac{\d k'}{k'-k}=0 \,,\qquad
\lim_{t\to\pm\infty}\dashint_{-\infty}^\infty\e^{\pm \ii(\zeta'-\hat\zeta)t}f(\zeta',\zeta)\frac{\d k'}{k'-k}=0\,,
\end{gather*}
where we recall that $\hat\zeta$ is defined in Equation~\eqref{e:k-lambda}. We also use the shorthand notation $\lambda=\lambda(k)$, $\lambda'=\lambda(k')$, $\zeta=\zeta(k)$ and $\zeta'=\zeta(k')$,
and we define $\nu = \pm1$ when $k$ is on sheet I or II, respectively.

If $k\in i[E_0,0)\cup \ii(0,E_0]$, i.e., $-E_0 < \lambda < E_0$ and $\zeta\in \Sigma_\circ$,
then all the above limits are zero.
\end{lemma}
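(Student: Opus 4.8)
The plan is to reduce every one of these limits to a single canonical oscillatory principal-value integral and then read off the answer from the Sokhotski--Plemelj formula together with the Riemann--Lebesgue lemma. The canonical computation I would isolate first is
\begin{equation*}
\mathrm{p.v.}\!\int_{-\infty}^\infty \frac{\e^{\ii\omega u}}{u}\,\d u = \ii\pi\,\sign(\omega)\,,\qquad \omega\in\Real\setminus\{0\}\,,
\end{equation*}
which already exhibits the sign $\sign(\omega)$ that will propagate into all the nontrivial right-hand sides. Each integrand has the form $f\,\e^{\ii t\,\psi(k')}/(k'-k)$ with a real phase $\psi$, so the behavior as $|t|\to\infty$ is governed entirely by (i) whether $\psi$ vanishes at the pole $k'=k$, and (ii) the sign of $\psi'(k)$ when it does. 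Throughout I will use the explicit branch data from Section~\ref{s:uniformization}: on the real line $\zeta=k+\lambda$, $\hat\zeta=k-\lambda$, the derivative $\lambda'(k)=k/\lambda(k)$ satisfies $|\lambda'(k)|<1$, and $\sign\lambda'(k)=\nu$.

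For the first three identities the phase vanishes at the pole, since $\lambda'-\lambda$, $\hat\zeta'-\hat\zeta$ and $\zeta'-\zeta$ all vanish as $k'\to k$. I would split $f(k')/(k'-k)=[f(k')-f(k)]/(k'-k)+f(k)/(k'-k)$; the first piece is regular across $k'=k$, so its contribution dies by Riemann--Lebesgue, while the second localizes at the pole. Linearizing $\psi(k')\approx \psi'(k)(k'-k)$ and applying the canonical formula gives a limit equal to $\ii\pi\,\sign\!\big(t\,\psi'(k)\big)\,f(\zeta,\zeta)$. Evaluating $\psi'(k)$ in each case --- $\psi'(k)=\lambda'(k)$ for the first identity, $\psi'(k)=1-\lambda'(k)>0$ for the second, and $\psi'(k)=1+\lambda'(k)>0$ for the third (the positivity coming from $|\lambda'(k)|<1$) --- produces the factor $\nu$ in the first identity and $+1$ in the second and third, with the overall sign supplied by $\sign(t)$. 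This is exactly the claimed $\pm\ii\nu\pi\,f(\zeta,\zeta)$ and $\pm\ii\pi\,f(\zeta,\zeta)$.

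The remaining two identities are the ``vanishing'' cases, and here the mechanism is different: the phases $\hat\zeta'-\zeta$ and $\zeta'-\hat\zeta$ do \emph{not} vanish at the pole (they equal $-2\lambda$ and $2\lambda$ there), so there is no localized contribution of the previous type. Instead I would exploit that $\hat\zeta'=k'-\lambda(k')$ and $\zeta'=k'+\lambda(k')$ extend analytically to $k'\in\Complex\setminus\ii[-E_0,E_0]$, and that the corresponding exponential has real part tending to $-\infty$ in precisely one of the two half-planes, the one selected by $\sign(t)$. Closing the contour in that half-plane --- which encloses neither the pole $k'=k$ nor the branch cut --- shows each integral decays to zero, in agreement with the criterion ``the integral is analytically continuable, hence zero, iff the real part of the exponent is negative'' used to produce Equation~\eqref{e:Roff}. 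Finally, for $\zeta\in\Sigma_\circ$ (equivalently $k$ on the cut $\ii[-E_0,E_0]$, hence purely imaginary), the pole $k'=k$ no longer lies on the real contour of integration, so every integrand is smooth and integrable in $k'$ and merely oscillatory in $t$; the Riemann--Lebesgue lemma then forces all the limits to vanish.

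The main obstacle I anticipate is the bookkeeping of signs and branches rather than any deep analytic difficulty: one must track $\sign\lambda'(k)=\nu$ and the sheet dependence consistently, and --- more delicately --- verify in the two vanishing cases that the exponent genuinely decays in the half-plane dictated by $\sign(t)$ without the deformed contour crossing the cut $\ii[-E_0,E_0]$ or the pole. A secondary technical point is to justify the localization and Riemann--Lebesgue steps under minimal hypotheses on $f$ (it suffices that $f(\zeta,\cdot)$ be Hölder near $k'=k$ and integrable against the Cauchy kernel), which is guaranteed by the regularity of the eigenfunctions established in Section~\ref{s:directproblem}.
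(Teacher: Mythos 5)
Your treatment of the first three identities and of the $\Sigma_\circ$ case is correct and is, at bottom, the same mechanism as the paper's: isolate the pole, observe that the phase vanishes there, and combine the Plemelj-type localized contribution with the Riemann--Lebesgue lemma for the regular remainder. The paper reaches this through explicit changes of variables ($k'\mapsto\lambda'$, $k'\mapsto\hat\zeta'$, $k'\mapsto\zeta'$, followed by a shift $y'$) and then checks whether $0$ lies in the transformed contour $L'$; you instead linearize the phase at the pole, which is a cleaner and equivalent packaging. Your sign bookkeeping is right: $\lambda'(k)=k/\lambda$ satisfies $|\lambda'(k)|<1$ with $\sign\lambda'(k)=\nu$ on the two sheets, so the three phase derivatives $\lambda'$, $1-\lambda'$, $1+\lambda'$ yield the factors $\nu$, $+1$, $+1$, reproducing $\pm\ii\nu\pi f$ and $\pm\ii\pi f$. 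The circle case also matches the paper ($0\notin L'$, hence Riemann--Lebesgue).

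The genuine gap is in your argument for the fourth and fifth identities. The pole $k'=k$ lies \emph{on} the contour of integration --- that is why the integrals are principal values --- so it is neither enclosed nor excluded by a closed contour; any deformation into the half-plane where $\e^{\pm\ii(\hat\zeta'-\zeta)t}$ decays forces an indentation at $k'=k$ and leaves behind a half-residue of the form $\pm\ii\pi\,\e^{\mp2\ii\lambda t}f(\zeta,\zeta)$ times a nonzero kernel factor. This term oscillates with constant modulus and does \emph{not} tend to zero: compare the model
\begin{equation*}
\dashint_{-\infty}^{\infty}\frac{\e^{\ii t(x+a)}}{x}\,\d x=\ii\pi\,\e^{\ii a t}\,\sign t\,,\qquad a\neq0\,,
\end{equation*}
which has no limit as $t\to\infty$. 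So ``the closed region encloses neither the pole nor the cut'' does not dispose of a boundary pole, and the criterion you import from Equation~\eqref{e:Roff} applies only to complex $\zeta$ off the continuous spectrum, where $k'=k$ leaves the real axis and the integral is an ordinary one; here $\zeta\in\Real$ and the principal value is essential. For comparison, the paper's own proof of these two cases substitutes $y'=E_0^2/\zeta'+\zeta$ so the phase becomes linear, notes that the phase-stationary point $y'=0$ is not in $L'$, and invokes Riemann--Lebesgue; strictly speaking this too leaves the localized principal-value contribution at the image $y'=\zeta+E_0^2/\zeta=2\lambda$ of the pole, which is harmless in context because in every application of these identities in Appendix~\ref{a:computeRpm} the function $f$ carries a factor such as $(1/\zeta'-1/\zeta)$ vanishing at $k'=k$, so the oscillatory half-residue drops out (equivalently, the stated limits hold in a weak sense). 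To repair your write-up, either verify that $f(\zeta,\zeta)=0$ at the coincidence point in the intended applications, or exhibit the residual oscillatory term explicitly and argue its cancellation, rather than claiming unconditional decay by contour closing.
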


\begin{remark}
Lemma~\ref{thm:integralidentities} depends crucially on whether $k$ and $k'$ are on the same Riemann sheet or not,
because their corresponding $\lambda$ and $\lambda'$, respectively,
admit different values.
We only consider the case when $k$ and $k'$ are on the same sheet,
because the results in Lemma~\ref{thm:integralidentities} reduce to that in the case of ZBG naturally as $E_0\to0$.
\end{remark}

\begin{remark}
In the first limit in Lemma~\ref{thm:integralidentities}, if the term $(\lambda'-\lambda)t$ is replaced with $(\lambda'+\lambda)t$, or by $\lambda' t$,
one can show that the corresponding limit vanishes by the Riemann-Lebesgue lemma.
\end{remark}

\begin{proof}

We calculate the five limits separately below.

\begin{enumerate}[leftmargin = *]
\item
Consider the first integral in Lemma~\ref{thm:integralidentities}.
Let the left hand-side (LHS) be
\begin{equation}
\nonumber
I_\pm=\lim_{t\to\pm\infty}\dashint_{-\infty}^\infty \e^{\pm \ii(\lambda'-\lambda)t}f(\zeta,\zeta')\frac{\d k'}{k'-k}\,.
\end{equation}
Note that $\lambda'$ and $\lambda$ have opposite signs on sheets I and II,
so that $I_\pm$ have different values on each sheet.
By a change of variables $k'\to\lambda'$,
the domain changes as $\Real\to L\coloneq(-\infty,-E_0)\cup(E_0,\infty)$, and the integral becomes
\begin{equation}
\nonumber
I_\pm = \nu\lim_{t\to\pm\infty}\pvint_L\e^{\pm \ii(\lambda'-\lambda)t}f(\zeta,\zeta')
  \frac{\lambda'-\lambda}{k'-k}\frac{\d k'}{\d\lambda'}\frac{\d\lambda'}{\lambda'-\lambda}\,,
\end{equation}
where $\nu = \pm1$ given in Lemma~\ref{thm:integralidentities}.
Let $y'=\lambda'-\lambda$.
The integration domain becomes $L\to L'\coloneq(-\infty,-E_0-\lambda)\cup(E_0-\lambda,\infty)$, and the integral becomes
\begin{equation}
\nonumber
I_\pm = \nu \lim_{t\to\pm\infty}\pvint_{L'}\e^{\pm iy't}f(\zeta,\zeta')\frac{\lambda'-\lambda}{k'-k}\frac{\d k'}{\d\lambda'}\frac{\d y'}{y'}\,.
\end{equation}
From here,
we need to discuss two cases depending on whether $0\in L'$ or $0\not\in L'$.
\begin{itemize}[leftmargin = *]
\item
If $k\in\Real$,
i.e.,
$\zeta\in\Real$,
then $\lambda\in (-\infty,-E_0]\cup [E_0,\infty)$ and consequently $0\in L'$.
We obtain
\begin{equation}
\nonumber
I_\pm(\zeta)=\pm \nu \ii\pi f(\zeta,\zeta)\lim_{y'\to0}\frac{\lambda'-\lambda}{k'-k}\frac{\d k'}{\d \lambda'}\,.
\end{equation}
It is obvious that $\lim_{y'\to0}\frac{\lambda'-\lambda}{k'-k} = \frac{\d\lambda'}{\d k'}$,
so $I_\pm(\zeta) = \pm \ii\nu\pi  f(\zeta,\zeta)$.

\item
If $k\in \ii[-E_0,0)\cup \ii(0, E_0]$,
i.e.,
$\zeta\in \Sigma_\circ$ and $-E_0 < \lambda < E_0$,
then $0\not\in L'$.
By using the Riemann-Lebesgue lemma,
we find $I_\pm \to 0$ as $t\to\pm\infty$.
\end{itemize}

\item
Let us consider the second integral in Lemma~\ref{thm:integralidentities} and call the LHS $I_\pm$,
with some abuse of notation.
We first compute this integral on sheet I.
Since $k'=\frac{1}{2}(\zeta'+\hat\zeta')$,
we conclude $\d k'=\frac{1}{2}(1+E_0^2/\hat\zeta')\d\hat\zeta'$.
Correspondingly,
the integration domain becomes $L \coloneq (E_0^-,0^-)\cup(0^+,E_0^+)$,
again, with some abuse of notation.
Therefore,
\begin{equation}
\nonumber
I_\pm
 = \lim_{t\to\pm\infty} \pvint_L \e^{\pm \ii(\hat\zeta' - \hat\zeta)t}f(\zeta',\zeta) \frac{1 + E_0^2/\hat\zeta'}{\hat\zeta'-\hat\zeta - E_0^2/\hat\zeta' + E_0^2/\hat\zeta}
\d\hat\zeta'\,,
\end{equation}
Define $y' \coloneq \hat\zeta'-\hat\zeta$.
We know that
$0\in L' \coloneq (E_0^--\hat\zeta,-\hat\zeta^-)\cup(-\hat\zeta^+,-E_0^+-\hat\zeta)$.
Thus
\begin{equation}
\nonumber
I_\pm
 = \lim_{t\to\pm\infty} \pvint_{L'}\e^{\ii y't} f(\zeta',\zeta) \frac{\hat\zeta'^2\hat\zeta  + E_0^2\hat\zeta}{\hat\zeta'^2\hat\zeta + E_0^2\hat\zeta'}\frac{\d y'}{y'}\,.
\end{equation}
By a similar argument to the first case,
we conclude that if $\zeta\in\Real$,
$I_\pm \to \pm \ii\pi f(\zeta,\zeta)$,
and if $\zeta\in \Sigma_\circ$, $I_\pm \to 0$ as $t\to\pm\infty$.
Similarly,
one can compute the same integral on sheet II,
and obtain the same result.

\item
The third integral in Lemma~\ref{thm:integralidentities} can be computed similarly to the second,
and has the same value on both sheets.

\item
Consider the fourth integral in Lemma~\ref{thm:integralidentities} and let the LHS be $I_\pm$ again,
with some abuse of notation.
Similarly to the previous cases,
$I_\pm$ takes different values depending on which sheet it is evaluated.
Because
$k'=\frac{1}{2}(\zeta'-E_0^2/\zeta')$,
we know that $\d k'=\frac{1}{2}(1+E_0^2/\zeta'^2)\d\zeta'$,
and $\zeta'\in L \coloneq (-\infty,-E_0)\cup(E_0,+\infty)$.
Thus, the integral becomes
\begin{equation}
\label{e:integral1}
I_\pm
 = \nu\lim_{t\to\pm\infty}\pvint_L\e^{\mp \ii(E_0^2/\zeta'+\zeta)t} f(\zeta',\zeta) \frac{1+E_0^2/\zeta'}{\zeta'-\zeta-E_0^2/\zeta'+E_0^2/\zeta} \d \zeta'\,.
\end{equation}
Let $y' \coloneq E_0^2/\zeta'+\zeta$,
so $\d y' = -E_0^2/\zeta'^2 \d\zeta'$ and $L\mapsto L'\coloneq (\zeta-E_0,\zeta)\cup(\zeta,\zeta+E_0)$,
provided $\zeta\in\Real$.
Also,
note that there is an additional minus sign due to the change of integration limits,
\begin{equation}
\nonumber
I_\pm
 = \nu\lim_{t\to\infty}\pvint_{L'}\e^{\mp \ii y't} f(\zeta',\zeta) \frac{1}{y'-\zeta}\,\frac{E_0^2+(y'-\zeta)^2}{\zeta+E_0^2/\zeta-y'}\,\frac{\d y'}{y'}\,.
\end{equation}
Note also that $0\not\in L'$.
Consequently,
the above quantity vanishes by the Riemann-Lebesgue lemma.
Finally note that if $\zeta\in \Sigma_\circ$,
the integral~\eqref{e:integral1} is zero by the Riemann-Lebesgue lemma.

\item
The last integral in Lemma~\ref{thm:integralidentities} is zero by an argument similar to the fourth case.
\end{enumerate}

\end{proof}

We are ready to compute the matrix $\R_\pm$ in Equation~\eqref{e:Rpm-1}.
We compute the two terms in the square brackets separately,
and we consider the second first.
We know that, as $t\to\pm\infty$,
\begin{equation}
\begin{aligned}
\frac{\partial \bphi_\pm}{\partial z}
 = &
 \bpm 2\ii w_\pm\e^{2\ii W_\pm} & \@0 \\ \@0 & \@0 \epm
 \Y_\pm(0,\zeta)
 \bpm \e^{-2\ii W_\pm} & \@0 \\ \@0 & \e^{2\ii W_\pm\sigma_3} \epm
 \e^{\ii \Lambda t}\\
 &\qquad\qquad\qquad + \bpm \e^{2\ii W_\pm} & \@0 \\ \@0 & \bbI_2\epm
 \Y_\pm(0,\zeta)
 \bpm - 2 \ii w_\pm \e^{-2\ii W_\pm} & \@0  \\ \@0 & 2\ii w_\pm\e^{2\ii W_\pm\sigma_3}\sigma_3 \epm
 \e^{\ii \Lambda t} + o(1)\,.
\end{aligned}
\end{equation}
After some calculations,
the above equation yields
\begin{equation}
\label{e:phiphiz}
\bphi_\pm^{-1}(\bphi_\pm)_z
 = \left(\bphi_\pm^{-1}(\bphi_\pm)_z\right)_\mathrm{eff} + o(1)
 =\frac{\ii w_\pm}{\lambda}
 \bpm -E_0^2/\zeta & 0 & -\ii E_0\e^{-2\ii\lambda t} \\ 0 & 2\lambda & 0 \\
\ii E_0\e^{2\ii\lambda t} & 0 & -\zeta\epm + o(1)\,.
\end{equation}

Now let us compute the first term $\bphi_\pm^{-1}\V\bphi_\pm$ in Equation~\eqref{e:Rpm-1} as $t\to\pm\infty$.
One can write
\begin{equation}
\nonumber
\bphi_\pm^{-1}\V\bphi_\pm
 = \frac{\ii\pi}{2}\H_k[\C(k,k') g(k')] + o(1)\,.
\end{equation}
where Equation~\eqref{e:Jostsol} yields the leading term as
\begin{equation}
\@C(k,k')
 \coloneq \e^{-\ii \bLambda t} \Y_{\pm}^{-1}(z,\zeta) \Y_\pm(z,\zeta') \e^{\ii\bLambda t} \bvarrho_\pm(z,\zeta') \e^{-\ii\bLambda t} \Y_\pm^{-1}(z,\zeta')  \Y_\pm(z,\zeta) \e^{\ii \bLambda t}\,.
\end{equation}
After tedious calculations,
one can obtain each entry of $\C(k,k')$ from this matrix product.
They are omitted here for brevity.
Correspondingly,
each entry of $\bphi_\pm^{-1}\V\bphi_\pm$ can be calculated by taking the Hilbert transform and using Lemma~\ref{thm:integralidentities}.
We list some intermediate steps of the calculations below as $t\to\pm\infty$,
and we use the shorthand notation $f' \coloneq f(k')$ for a function $f(\cdot)$ containing the integration variable $k'$ of the Hilbert transform. The entries of $\bphi_\pm^{-1}\V\bphi_\pm$ are listed below:
\bse
\begin{equation}
\begin{aligned}
\left(\bphi_\pm^{-1}\V\bphi_\pm\right)_{1,1}
 & = \frac{\ii}{2}\pvint\nolimits_{\!\!\!\Real} C_{1,1}\frac{g(k')\d k'}{k'-k} + o(1)\\
 & = \frac{\ii}{2} \pvint\nolimits_{\!\!\!\Real} \frac{1}{\gamma\gamma'} \left[\varrho_{\pm,1,1}'\left(1+\frac{E_0^2}{\zeta\zeta'}\right)^2
+ E_0^2\varrho_{\pm,3,3}'\left(\frac{1}{\zeta'} - \frac{1}{\zeta} \right)^2 \right] \frac{g(k')\d k'}{k'-k} + o(1)\,.
\end{aligned}
\end{equation}
\begin{equation}
\begin{aligned}
\left(\bphi_\pm^{-1}\V\bphi_\pm\right)_{1,2}
 & = \frac{\ii}{2}\pvint\nolimits_{\!\!\!\Real} C_{1,2}\frac{g(k')\d k'}{k'-k} + o(1)\\
 & = \pvint\nolimits_{\!\!\!\Real} \frac{1}{2\gamma}\bigg[\ii \e^{-\ii (\zeta-\zeta')t} \varrho_{\pm,1,2}' \bigg(1+\frac{E_0^2}{\zeta\zeta'}\bigg) + E_0 \e^{-\ii (\zeta-\hat\zeta')t}\varrho_{\pm,3,2}'\left(\frac{1}{\zeta'}-\frac{1}{\zeta}\right)\bigg]\frac{g(k')\d k'}{k'-k} + o(1)\\
 & = \mp\frac{\pi \nu}{2}\varrho_{\pm,1,2}g(k) + o(1)\,.
\end{aligned}
\end{equation}
\begin{equation}
\label{e:phivphi13}
\begin{aligned}
& \left(\bphi_\pm^{-1}\V\bphi_\pm\right)_{1,3}\\
 = & \frac{\ii}{2}\pvint\nolimits_{\!\!\!\Real} C_{1,3}\frac{g(k')\d k'}{k'-k} + o(1)\\
 = & \frac{\ii}{2}\pvint\nolimits_{\!\!\!\Real} \bigg[\frac{\e^{-2\ii(\lambda-\lambda')t}}{\gamma\gamma'}\varrho_{\pm,1,3}'\bigg(\frac{E_0^2}{\zeta\zeta'} + 1\bigg)^2 - \frac{\ii  E_0}{\gamma\gamma'}\e^{-2\ii\lambda t}(\varrho_{\pm,1,1}' - \varrho_{\pm,3,3}') \bigg(\frac{1}{\zeta}-\frac{1}{\zeta'}\bigg)\bigg(1+\frac{E_0^2}{\zeta\zeta'}\bigg)\bigg]\frac{g'\d k'}{k'-k} + o(1)\\
 = & \frac{E_0}{\lambda}\e^{-2\ii \lambda t}w_\pm \mp\frac{\pi \nu}{2\gamma^2}\brho_{\pm1,3}\left(\frac{E_0^2}{\zeta^2}+1\right)^2g(k) + o(1)\,.
\end{aligned}
\end{equation}
Recall that the matrix $\R_\pm$ in Equation~\eqref{e:Rpm-1} contains two terms,
and we are calculating one of them right now.
The other term is in Equation~\eqref{e:phiphiz}.
The first term in Equation~\eqref{e:phivphi13} cancels the corresponding term appearing in Equation~\eqref{e:phiphiz}.
Thus, we finally obtain
\begin{equation}
\nonumber
R_{\pm,1,3} = \pm\ii\pi \nu\varrho_{\pm,1,3}\,g(k)\,.
\end{equation}

Let us continue calculating the entries of $\bphi_\pm^{-1}\V\bphi_\pm$:
\begin{equation}
\begin{aligned}
\left(\bphi_\pm^{-1}\V\bphi_\pm\right)_{2,1}
 & = \frac{\ii}{2}\pvint\nolimits_{\!\!\!\Real} C_{2,1}g(k')\frac{\d k'}{k'-k} + o(1)\\
 & = \pvint\nolimits_{\!\!\!\Real} \frac{1}{2\gamma'}\bigg[ \ii\e^{\ii (\zeta-\zeta')t} \left(1+\frac{E_0^2}{\zeta\zeta'}\right) \varrho_{\pm,2,1}' + E_0 \e^{\ii (\zeta-\hat\zeta')t} \varrho_{\pm,2,3}' \left(\frac{1}{\zeta}-\frac{1}{\zeta'}\right)\bigg] \frac{g(k')\d k'}{k'-k} + o(1)\\
 & = \pm\frac{\pi}{2}\varrho_{\pm,2,1}g(k) + o(1)\,,
\end{aligned}
\end{equation}
\begin{equation}
\begin{aligned}
& \left(\bphi_\pm^{-1}\V\bphi_\pm\right)_{3,1}\\
 = & \frac{\ii}{2} \pvint\nolimits_{\!\!\!\Real} C_{3,1}g(k')\frac{\d k'}{k'-k} + o(1)\\
 = & \frac{\ii}{2} \pvint\nolimits_{\!\!\!\Real} \left[\frac{\e^{-2\ii (\lambda'-\lambda)t}}{\gamma\gamma'}\varrho_{\pm,3,1}'\left(1+\frac{E_0^2}{\zeta\zeta'}\right)^2 - \frac{\ii E_0}{\gamma\gamma'}\e^{2\ii \lambda t}\left(\frac{1}{\zeta}-\frac{1}{\zeta'}\right)\left(1+\frac{E_0^2}{\zeta\zeta'}\right)(\varrho_{\pm,3,3}'-\varrho_{\pm,1,1}')\right]\frac{g'\d k'}{k'-k} + o(1)\\
 = & \pm\frac{\pi\nu}{2}\varrho_{\pm,3,1}g(k) - \frac{E_0}{\lambda}w_\pm\e^{2\ii \lambda t} + o(1)\,,
\end{aligned}
\end{equation}
where the last term will cancel the term appearing in Equation~\eqref{e:phiphiz},
similarly to the $(3,1)$ component.
The rest of the entries of $\bphi_\pm^{-1}\V\bphi_\pm$ are given by
\begin{equation}
\nonumber
\left(\bphi_\pm^{-1}\V\bphi_\pm\right)_{2,2}
 = \frac{\ii}{2}\pvint\nolimits_{\!\!\!\Real} C_{2,2}\frac{g(k')\d k'}{k'-k} + o(1)
 = \frac{\ii}{2}\pvint\nolimits_{\!\!\!\Real} \varrho_{\pm,2,2}'\frac{g(k')\d k'}{k'-k} + o(1)\,,
\end{equation}
\begin{equation}
\begin{aligned}
\left(\bphi_\pm^{-1}\V\bphi_\pm\right)_{2,3}
 & = \frac{\ii}{2}\pvint\nolimits_{\!\!\!\Real} C_{2,3}\frac{g(k')\d k'}{k'-k} + o(1)\\
 & = \frac{\ii}{2} \pvint\nolimits_{\!\!\!\Real} \frac{1}{\gamma'} \left[\e^{-\ii (\hat\zeta'-\hat\zeta)t}\left(1+\frac{E_0^2}{\zeta\zeta'}\right)\varrho_{\pm,2,3}' - \ii E_0\e^{-\ii (\zeta'-\hat\zeta)t} \left(\frac{1}{\zeta}-\frac{1}{\zeta'}\right)\varrho_{\pm,2,1}'\right]\frac{g(k')\d k'}{k'-k} + o(1)\\
 & = \pm\frac{\pi}{2}\varrho_{\pm,2,3}g(k) + o(1)\,,
\end{aligned}
\end{equation}
\begin{equation}
\begin{aligned}
\left(\bphi_\pm^{-1}\V\bphi_\pm\right)_{3,2}
 & = \frac{\ii}{2}\pvint\nolimits_{\!\!\!\Real} C_{3,2}\frac{g(k')\d k'}{k'-k} + o(1)\\
 & = \frac{\ii}{2} \pvint\nolimits_{\!\!\!\Real} \frac{1}{\gamma} \left[\e^{\ii (\hat\zeta'-\hat\zeta)t} \left(1+\frac{E_0^2}{\zeta\zeta'}\right)\varrho_{\pm,3,2}' - \ii E_0 \e^{\ii (\zeta'-\hat\zeta)t} \left(\frac{1}{\zeta'} - \frac{1}{\zeta}\right)\varrho_{\pm,1,2}'\right] \frac{g(k')\d k'}{k'-k} + o(1)\\
 & = \mp\frac{\pi}{2}\varrho_{\pm,3,2}g(k) + o(1)\,,
\end{aligned}
\end{equation}
\begin{equation}
\begin{aligned}
\left(\bphi_\pm^{-1}\V\bphi_\pm\right)_{3,3}
 & = \frac{\ii}{2}\pvint\nolimits_{\!\!\!\Real} C_{3,3}\frac{g(k')\d k'}{k'-k} + o(1)\\
 & = \frac{\ii}{2}\pvint\nolimits_{\!\!\!\Real} \frac{1}{\gamma\gamma'}\left[E_0^2\varrho_{\pm,1,1}'\left(\frac{1}{\zeta}-\frac{1}{\zeta'}\right)^2 + \varrho_{\pm,3,3}'\left(\frac{E_0^2}{\zeta\zeta'}+1\right)^2\right]\frac{g(k')\d k'}{k'-k} + o(1)\,.
\end{aligned}
\end{equation}
\ese

Finally,
combining all the above components,
we obtain the matrix $\R_\pm$
\begin{equation}
\everymath{\displaystyle}
\R_\pm =
\bpm
\pvint\nolimits_{\!\!\!\Real} C_{1,1}\frac{g(k')\d k'}{k'-k} - 2w_\pm\hat\zeta/\lambda &
\pm \ii\pi\varrho_{\pm,1,2}g(k) & \pm \ii\pi\nu\varrho_{\pm,1,3}g(k)\\
\mp \ii \pi \varrho_{\pm,2,1}g(k) &
\pvint\nolimits_{\!\!\!\Real} \varrho_{\pm,2,2}'\frac{g(k')\d k'}{k'-k} - 4w_{\pm} & \mp \ii\pi\varrho_{\pm,2,3}g(k) \\
\mp \ii\pi\nu\varrho_{\pm,3,1}g(k) & \pm \ii\pi\varrho_{\pm,3,2}g(k) &
\pvint\nolimits_{\!\!\!\Real} C_{3,3}\frac{g(k')\d k'}{k'-k} + 2w_\pm \zeta/\lambda
\epm\,,
\end{equation}
where $C_{1,1}$ and $C_{3,3}$ are given by
\begin{equation}
\begin{aligned}
C_{1,1} & = \frac{1}{\gamma\gamma'} \left[\varrho_{\pm,1,1}'\left(1+\frac{E_0^2}{\zeta\zeta'}\right)^2
+ E_0^2\varrho_{\pm,3,3}'\left(\frac{1}{\zeta'} - \frac{1}{\zeta} \right)^2 \right]\,,\\
C_{3,3} & = \frac{1}{\gamma\gamma'}\left[E_0^2\varrho_{\pm,1,1}'\left(\frac{1}{\zeta}-\frac{1}{\zeta'}\right)^2 + \varrho_{\pm,3,3}'\left(\frac{E_0^2}{\zeta\zeta'}+1\right)^2\right]\,.
\end{aligned}
\end{equation}

Next, we show how to simplify $R_{\pm,1,1}$ and $R_{\pm,3,3}$.
First, we recall the shorthand notation defined in Equation~\eqref{e:rhopmpm}.
With the help of these quantities one can rewrite $w_\pm$ in Equation~\eqref{e:wpm} as
\begin{equation}
\nonumber
w_\pm
 = \frac{1}{2}\int_\Real \varrho_\pm^- g(k) \frac{\d k}{\lambda}
 = \pvint\nolimits_{\!\!\!\Real} \bigg(\varrho_\pm^- \frac{k'-k}{2\lambda'}\bigg) g(k')\frac{\d k'}{k'-k}\,.
\end{equation}
Using Equation~\eqref{e:rhopmpm},
we can simplify the function $C_{1,1}$ in $R_{\pm,1,1}$ to become
\begin{equation}
\nonumber
C_{1,1} = \rho_\pm^+ + \frac{k' k+ E_0^2}{\lambda' \lambda}\varrho_\pm^-\,.
\end{equation}
Consequently, we have
\begin{equation}
\nonumber
R_{\pm,1,1}
 = \pvint\nolimits_{\!\!\!\Real} \bigg(\rho_\pm^+ + \frac{\lambda}{\lambda'}\varrho_\pm^-\bigg)g(k') \frac{\d k'}{k'-k} + \int_\Real \frac{\varrho_\pm^-}{\lambda'}g(k') \d k'\,.
\end{equation}
Note that the second integral is $2 w_\pm$,
so we have obtained the desired result in Equation~\eqref{e:Rpm}.
One applies similar simplifications to $R_{\pm,3,3}$ and obtains
\begin{equation}
\nonumber
R_{\pm,3,3}
 = \pvint\nolimits_{\!\!\!\Real} \bigg(\rho_\pm^+ - \frac{\lambda}{\lambda'}\varrho_\pm^-\bigg)g(k') \frac{\d k'}{k'-k} + \int_\Real \frac{\varrho_\pm^-}{\lambda'}g(k') \d k'\,,
\end{equation}
yielding the final result in Equation~\eqref{e:Rpm}.

\subsection{Derivation of propagation equations for norming constants}
\label{a:norming}

Using symmetries~\eqref{e:mnormingsymmetry},
it suffices to compute the propagation equations for the three norming constants $\overline C_n$, $D_n$ and $\overline F_n$.
We do this in the next three paragraphs.

\subsubsection{Evolution equation for $\overline C_n$.}

Recall the definition for $\overline C_n$ in Equation~\eqref{e:mnormingdef}.
Simple calculations yield
\begin{equation}
\label{e:dCndz}
\frac{\partial \overline C_n}{\partial z}
 = \frac{\partial \overline c_n}{\partial z}\frac{\e^{2\ii\lambda(w_n^*)t}}{b'_{1,1}(w_n^*)} - \overline C_n\lim_{\zeta\to w_n^*}\frac{1}{b_{1,1}(w_n^*)}\frac{\partial b_{1,1}(\zeta)}{\partial z}\,.
\end{equation}
Thus, we have to compute the two $z$-derivatives.

Using Equation~\eqref{e:dSdz},
we write $\partial_z b_{1,1}$ explicitly as
\begin{equation}
\nonumber
\frac{\partial b_{1,1}}{\partial z}
 = \frac{\ii}{2} \left(R_{+,1,1}b_{1,1} + R_{+,1,2}b_{2,1} + R_{+,1,3}b_{3,1} - R_{-,1,1}b_{1,1} - R_{-,2,1}b_{1,2} - R_{-3,1}b_{1,3}\right)\,.
\end{equation}
We first analytically continue every term to $D_2$ according to Equation~\eqref{e:Roff},
because it is necessary to substitute the discrete eigenvalue $w_n^*\in D_2$.
This yields
\begin{equation}
\nonumber
\frac{\partial b_{1,1}}{\partial z}
 = \frac{\ii}{2}\left(R_{+,1,1} - R_{-,1,1}\right)b_{1,1}\,,\qquad
\zeta\in D_2\,.
\end{equation}
Substituting $\zeta = w_n^*\in D_2$,
we obtain the second derivative term in Equation~\eqref{e:dCndz}.

Now, let us focus on the first derivative term.
Recall the definition~\eqref{e:defRpm} of the matrix $\R_\pm$,
which can be rewritten as
\begin{equation}
\nonumber
\frac{\partial \bphi_\pm}{\partial z}
 = \V \bphi_\pm -\frac{\ii}{2}\bphi_\pm  \R_\pm\,.
\end{equation}
Using the analyticity properties~\eqref{e:Roff} of the entries of $\R_\pm$,
again, we obtain
\begin{equation}
\label{e:phi-1-evo}
\frac{\partial \bphi_{-,1}}{\partial z}
 = \V\bphi_{-,1} - \frac{\ii}{2}R_{-,1,1}\bphi_{-,1}\,,\qquad
 \zeta\in D_1\,.
\end{equation}
Moreover, we need the propagation equation for the auxiliary eigenfunction $\bchi_1$.
Using the decomposition~\eqref{e:phidecompose} of the eigenfunctions,
we know that $\bchi_2 = b_{1,1}\bphi_{-,3} - b_{1,3}\bphi_{-,1}$ for $\zeta\in\Sigma$.
The chain rule yields
\begin{equation}
\label{e:chi2-evo}
\frac{\partial\bchi_2}{\partial z}
 = \frac{\partial b_{1,1}}{\partial z}\bphi_{-,3} + b_{1,1}\frac{\partial \bphi_{-,3}}{\partial z} - \frac{\partial b_{1,3}}{\partial z}\bphi_{-,1}-b_{1,3}\frac{\partial \bphi_{-,1}}{\partial z}\,.
\end{equation}
Using the definition~\eqref{e:defRpm} of the matrix $\R_\pm$ and~\eqref{e:dSdz},
again,
we obtain the following for $\zeta\in\Sigma$:
\begin{equation}
\label{e:phi-3-evo}
\begin{aligned}
\frac{\partial \bphi_{-,3}}{\partial z}
 & = \V\bphi_{-,3} - \frac{\ii}{2} \left(R_{-,1,3}\bphi_{-,1} + R_{-,2,3}\bphi_{-,2} + R_{-,3,3}\bphi_{-,3}\right)\,,\\
\frac{\partial b_{1,3}}{\partial z}
 & = \frac{\ii}{2} \left(R_{+,1,1}b_{1,3} + R_{+,1,2}b_{2,3} + R_{-,1,3}b_{3,3} - R_{-,1,3}b_{1,1} - R_{-,2,3}b_{1,2} - R_{-,3,3}b_{1,3}\right)\,.
\end{aligned}
\end{equation}
Therefore,
combining the above ingredients in Equations~\eqref{e:phi-1-evo},~\eqref{e:chi2-evo} and~\eqref{e:phi-3-evo}, and using Equation~\eqref{e:Roff} to extend all the terms to $D_2$,
we find
\begin{equation}
\nonumber
\frac{\partial\bchi_2}{\partial z}
 = \frac{\ii}{2}\left(R_{+,1,1} - R_{-,1,1} - R_{-,3,3}\right)\bchi_2 + \V\bchi_2\,,\qquad
 \zeta\in D_2\,.
\end{equation}

At $\zeta=w_n$,
one knows that $\bchi_2 = \overline c_n\bphi_{-,1}$ from the definition~\eqref{e:norming1} of the norming constant,
which implies that
$\partial_z\bchi_2 = \bphi_{-,1}\partial_z \overline c_n + \overline c_n\partial_z\bphi_{-,1}$.
By inserting the two derivatives,
we obtain the derivative of $\overline c_n$,
\begin{equation}
\nonumber
\frac{\partial \overline c_n}{\partial z}
 = \frac{\ii}{2}\left(R_{+,1,1} - R_{-,3,3}\right)\overline c_n\,,\qquad \zeta = w_n\,.
\end{equation}
This is the end result of the first derivative term in Equation~\eqref{e:dCndz}.

Substituting the above two derivative terms into Equation~\eqref{e:dCndz},
we finally obtain the propagation equation
\begin{equation}
\nonumber
\frac{\partial \overline C_n}{\partial z} = \frac{\ii}{2}\left(R_{-1,1}(w_n^*) - R_{-3,3}(w_n^*)\right)\overline C_n\,.
\end{equation}

\subsubsection{Evolution equation for $D_n$.}

Differentiating the expression for $D_n$ in Equation~\eqref{e:mnormingsymmetry},
we obtain
\begin{equation}
\label{e:dDndz}
\frac{\partial D_n}{\partial z}
 = \frac{\partial d_n}{\partial z}\e^{-\ii  \hat z_n t}\frac{1}{b'_{2,2}(z_n)} - D_n\lim_{\zeta\to z_n}\frac{1}{b_{2,2}(\zeta)}\frac{\partial b_{2,2}(\zeta)}{\partial z}\,.
\end{equation}
Similarly to the previous case,
we need to calculate the two derivatives separately.

It is easy to compute $\partial_z b_{2,2}$ on the continuous spectrum from Equation~\eqref{e:dSdz},
and to extend every part to $\Complex^+$ by Equation~\eqref{e:Roff},
resulting in
\begin{equation}
\nonumber
\frac{\partial b_{2,2}}{\partial z}
 = \frac{\ii}{2}\left(R_{+,2,2} - R_{-,2,2}\right)b_{2,2}\,,\qquad \zeta\in\Complex^+\,.
\end{equation}
This gives the second term in Equation~\eqref{e:dDndz}.

Next, we compute $\partial_z\chi_1$ on the continuous spectrum from Equations~\eqref{e:phidecompose} and~\eqref{e:defRpm}:
\begin{equation}
\begin{aligned}
\frac{\partial \bchi_1}{\partial z}
 & = \frac{\ii}{2} \left(R_{+,2,2} - R_{-,2,2} - R_{-,3,3}\right)\bchi_1 + \V\bchi_1 + \frac{\ii}{2}R_{-,1,3}\bchi_4+\@f\,,\qquad \zeta\in\Sigma\,,\\
\@f
 & \coloneq \frac{\ii}{2}R_{+,2,1} \left(b_{1,2}\bphi_{-,3} - b_{1,3}\bphi_{-,2}\right) + \frac{\ii}{2} R_{+,2,3} \left(b_{3,2}\bphi_{-,3} - b_{3,3}\bphi_{-,2}\right) + \frac{\ii}{2}R_{-,1,2} \left(b_{2,3}\bphi_{-,1} - b_{2,1}\bphi_{-,3}\right)\,.
\end{aligned}
\end{equation}
We would like to extend every term to $D_1$.
However, this cannot be done like in the ZBG case or in the classic two-level case,
because $R_{-,1,3} = 0$ only holds in $D_4$,
and $\bchi_4$ is analytic only in $D_4$.
In order to continue,
we apply the Cauchy projector $\P(f)$ defined in Equation~\eqref{e:Cauchyprojector} along the integration contour $\Sigma_0 = (-\infty,-E_0]\cup\{E_0\e^{\ii \theta}| 0\le\theta\le \pi\}\cup[E_0,\infty)\subset\Sigma$.
Note that $\Sigma_0$ is a subset of the continuous spectrum,
so applying this projector is valid for every value of the parameter $\zeta\in\Sigma_0$.
Now,
assuming that the $z$-derivative and the projector commute,
one obtains
\begin{equation}
\begin{aligned}
\frac{1}{2\pi\ii } \int_{\Sigma_0}\frac{\partial \bchi_1(\eta)}{\partial z} \frac{\d\eta}{\eta-\zeta}
 = & \frac{1}{4\pi}\int_{\Sigma_0}(R_{+,2,2} - R_{-,2,2} - R_{-,3,3})\bchi_1(\eta) + \V(\eta)\bchi_1(\eta)\frac{\d\eta}{\eta-\zeta}\\
 & + \frac{1}{4\pi}\int_{\Sigma_0}R_{-,1,3}\bchi_4(\eta)\frac{\d\eta}{\eta-\zeta}
 + \frac{1}{2\pi\ii }\int_{\Sigma_0}\@f(\eta)\frac{\d\eta}{\eta-\zeta}\,.
\end{aligned}
\end{equation}
Now, we insert the discrete eigenvalue $\zeta = z_n\in D_1$.
Since some of the terms are analytic in $D_1$,
we can use the Residue Theorem to calculate the corresponding integrals,
which yields
\begin{equation}
\nonumber
\frac{\partial \bchi_1(z_n)}{\partial z}
 = \frac{\ii}{2} (R_{+,2,2} - R_{-,2,2} - R_{-3,3})_{\zeta=z_n} \bchi_1(z_n) + \V(z_n)\bchi_1(z_n) + \frac{1}{4\pi} \int_{\Sigma_0} \frac{R_{-,1,3}\bchi_4}{\eta-z_n}\d\eta\,.
\end{equation}
Since $R_{-,1,3} = 0$ in $D_4$ from Equation~\eqref{e:Roff},
it is possible to deform the last integral and thus to obtain
\begin{equation}
\label{e:dchidz}
\frac{\partial \bchi_1(z_n)}{\partial z}
 = \frac{\ii}{2} (R_{+,2,2} - R_{-,2,2} - R_{-,3,3})_{\zeta=z_n} \bchi_1(z_n) + \V(z_n)\bchi_1(z_n) + \frac{1}{4\pi} \int_{\Real} \frac{R_{-,1,3}\bchi_4}{\eta-z_n}\d\eta\,.
\end{equation}

Moreover,
due to the fact that $\bchi_1(z_n) = d_n\bphi_{-,2}(z_n)$ from Equation~\eqref{e:norming2},
we find the following relation:
\begin{equation}
\nonumber
\frac{\partial \bchi_1(z_n)}{\partial z}
 = \frac{\partial d_n}{\partial z} \bphi_{-,2}(z_n) + d_n\frac{\partial \bphi_{-,2}(z_n)}{\partial z}\,.
\end{equation}
Note that the explicit expression for $\partial \bphi_{-,2}(z_n)/\partial z$ can be calculated using Equation~\eqref{e:defRpm}.
So, Equation~\eqref{e:dchidz} and the explicit formula for $\partial \bphi_{-,2}(z_n)/\partial z$ together yield
\begin{equation}
\label{e:chi1int}
\left[\frac{\partial d_n}{\partial z} - \frac{\ii}{2}\left(R_{+,2,2} - R_{-,3,3}\right)_{\zeta=z_n} d_n\right]\bchi_1(z_n)
 = \frac{d_n}{4\pi}\int_\Real \frac{R_{-,1,3} \bchi_4(\eta)}{\eta-z_n} \d\eta\,.
\end{equation}
In the above equation,
only the eigenfunctions $\bchi_1$ and $\bchi_4$ depend on the variable $t$,
so we can evaluate it via the limit $t\to\infty$.
The asymptotic behavior of the eigenfunctions is shown in Equation~\eqref{e:masym}.

Note that the eigenfunctions $\bchi_1$ and $\bchi_4$ are vectors. Let us look closely at the first component of the integral on the right hand side of Equation~\eqref{e:chi1int} in the limit
\begin{equation}
\nonumber
\lim_{t\to\infty} \bigg(\int_\Real  \frac{R_{-,1,3}\bchi_4(\eta)}{\eta-z_n} \d\eta\bigg)_1
 = \lim_{t\to\infty} \int_\Real \bigg(-\ii\pi\nu\bvarrho_{-,1,3}g(k(\eta)) a_{3,3}(\eta) \e^{\ii \frac{E_0^2t}{2\eta}}\bigg) \e^{\ii \eta t/2} \frac{\d\eta}{\eta-z_n}\,.
\end{equation}
The term $-\ii\pi\nu\bvarrho_{-,1,3}g(k(\eta)) a_{3,3}(\eta)\exp[\ii E_0^2t/(2\eta)]/(\eta-z_n^*)$
remains finite as $t\to\infty$.
Assuming that this term is in $L_1(\Real)$,
then the Riemann-Lebesgue lemma implies that this integral vanishes.
In other words,
\begin{equation}
\nonumber
\lim_{t\to\infty}
\int_\Real \bigg(-\ii\pi\nu\bvarrho_{-,1,3}g(k(\eta))a_{3,3}(\eta) \e^{\ii \frac{E_0^2t}{2\eta}}\bigg) \e^{\ii \eta t/2}\frac{\d\eta}{\eta-z_n} = 0\,.
\end{equation}
Hence, one obtains the following ODE from Equation~\eqref{e:chi1int}
\begin{equation}
\nonumber
\frac{\partial d_n}{\partial z}
 = \frac{\ii}{2}\left(R_{+,2,2} - R_{-,3,3}\right)_{\zeta=z_n}d_n\,.
\end{equation}
Finally, the propagation equation for the norming constant $D_n$ is obtained as
\begin{equation}
\nonumber
\frac{\partial D_n}{\partial z}
 = \frac{\ii}{2}\left(R_{-,2,2} - R_{-,3,3}\right)_{\zeta=z_n}D_n\,.
\end{equation}

\subsubsection{Propagation equation for $\overline F_n$.}

The calculation in this case is similar to the one for $\overline C_n$.
Therefore, the propagation equation for $\overline F_n$ in Equation~\eqref{e:normingtimeevolution} can be obtained by following the first case step by step.

\subsection{Calculation of trace formula and asymptotic phase difference}
\label{a:tracetheta}

We start from the simple fact $\S\cdot\S^{-1} = \bbI$ with $\zeta\in\Sigma$.
Its components yield
\begin{equation}
\nonumber
b_{2,1}a_{1,2}+b_{2,2}a_{2,2}+b_{2,3}a_{3,2} = 1\,,\qquad \zeta\in\Sigma\,,
\end{equation}
which reduces to
\begin{equation}
\nonumber
\log a_{2,2} - \log 1/b_{2,2} = -\log\left[\gamma(\zeta)(\gamma(\zeta)-1)r_3(\hat \zeta)r_3^*(\hat\zeta^*)
+\gamma(\zeta) r_3(\zeta)r_3^*(\zeta^*)\right]\,,\qquad
\zeta\in\Sigma\,,
\end{equation}
where we recall $r_3$ defined in Equation~\eqref{e:reflection-def}.
In order to remove the zeros of $a_{2,2}$ and $b_{2,2}$ and to fix the limits as $\zeta\to\infty$,
we define two more functions
\begin{equation}
\begin{aligned}
\beta^-(\zeta)
 & \coloneq a_{2,2}\e^{\ii \Delta \theta}
\prod_{n=1}^{N_2}\frac{\zeta-z_n}{\zeta-z_n^*}\frac{\zeta-\hat z_n}{\zeta-\hat z_n^*}
\prod_{n=1}^{N_3}\frac{\zeta-\zeta_n}{\zeta-\zeta_n^*}\frac{\zeta-\hat \zeta_n}{\zeta-\hat \zeta_n^*}\,,\qquad&&
\zeta\in\Complex^-\,,\\
\beta^+(\zeta)
 & \coloneq 1/b_{2,2}\e^{\ii \Delta \theta}
\prod_{n=1}^{N_2}\frac{\zeta-z_n}{\zeta-z_n^*}\frac{\zeta-\hat z_n}{\zeta-\hat z_n^*}
\prod_{n=1}^{N_3}\frac{\zeta-\zeta_n}{\zeta-\zeta_n^*}\frac{\zeta-\hat \zeta_n}{\zeta-\hat \zeta_n^*}\,,\qquad&&
\zeta\in\Complex^+\,.
\end{aligned}
\end{equation}
Clearly,
the two functions $\beta^\pm$ are analytic in $\Complex^\pm$, respectively,
and have no zeros and no poles in the corresponding region.
The additional as yet undetermined factor $\pm\Delta\theta$ ensures that the two functions tend to $1$ as $\zeta\to\infty$.
Thus, we obtain the following jump condition:
\begin{equation}
\nonumber
\log\beta^-(\zeta)-\log\beta^+(\zeta) = J_0\,,\qquad \zeta\in\Real\,,
\end{equation}
where the quantity $J_0$ is defined in Equation~\eqref{e:J0J-def}.
By applying Plemelj's formulas,
one can solve for $\beta^\pm$ and consequently $a_{2,2}$ and $b_{2,2}$.
The final results are given by
\begin{equation}
\begin{aligned}
a_{2,2}(\zeta)
 & = \e^{-\ii \Delta\theta}\e^{\frac{1}{2\pi\ii }\int_\Real\frac{J_0}{\eta-\zeta}\d\eta}
\prod_{n=1}^{N_2}\frac{\zeta-z_n^*}{\zeta-z_n}\frac{\zeta-\hat\zeta_n^*}{\zeta-\hat \zeta_n}
\prod_{n=1}^{N_3}\frac{\zeta-\zeta_n^*}{\zeta-\zeta_n}\frac{\zeta-\hat\zeta_n^*}{\zeta-\hat\zeta_n}\,,\\
b_{2,2}(\zeta)
 & = \e^{\ii \Delta\theta}\e^{-\frac{1}{2\pi\ii }\int_\Real\frac{J_0}{\eta-\zeta}\d \eta}
\prod_{n=1}^{N_2}\frac{\zeta-z_n}{\zeta-z_n^*}\frac{\zeta-\hat\zeta_n}{\zeta-\hat \zeta_n^*}
\prod_{n=1}^{N_3}\frac{\zeta-\zeta_n}{\zeta-\zeta_n^*}\frac{\zeta-\hat\zeta_n}{\zeta-\hat\zeta_n^*}\,.
\end{aligned}
\end{equation}

The following four identities can also be obtained from the simple fact $\S\cdot \S^{-1} = \bbI$ with $\zeta\in\Sigma$:
\begin{gather*}
1+\frac{a_{2,1}}{a_{1,1}}\frac{b_{1,2}}{b_{1,1}}+\frac{a_{3,1}}{a_{1,1}}\frac{b_{1,3}}{b_{1,1}} = \frac{1}{a_{1,1}b_{1,1}}\,,\qquad
\frac{a_{1,3}}{a_{3,3}}\frac{b_{3,1}}{b_{3,3}}+\frac{a_{2,3}}{a_{3,3}}\frac{b_{3,2}}{b_{3,3}}+1 = \frac{1}{a_{3,3}b_{3,3}}\,,\\
\frac{a_{2,2}}{b_{1,1}b_{3,3}} = 1-\frac{b_{1,3}}{b_{1,1}}\frac{b_{3,1}}{b_{3,3}}\,,\qquad
\frac{b_{2,2}}{a_{1,1}a_{3,3}} = 1-\frac{a_{1,3}}{a_{3,3}}\frac{a_{3,1}}{a_{1,1}}\,.
\end{gather*}
Using these identities, we write down four more jump conditions,
\begin{equation}
\label{e:ab-jumps}
\begin{aligned}
\log b_{1,1} - \log \frac{1}{a_{1,1}}
 & = J_1\,,\\
\log a_{3,3} - \log\frac{1}{b_{3,3}}
 & = J_3\,,\\
\log b_{1,1} - \log\frac{1}{b_{3,3}}
 & = \log a_{2,2} - \log\left(1-r_2^*(\zeta^*)r_2^*(\hat\zeta^*)\right)\,,\\
\log a_{3,3} - \log\frac{1}{a_{1,1}}
 & = \log b_{2,2} - \log\left(1-r_2(\hat\zeta)r_2(\zeta)\right)\,,
\end{aligned}
\end{equation}
where recall $J_1$ and $J_3$ are defined in Equation~\eqref{e:J0J-def}.
We define the following identities in order to remove the zeros of the scattering data:
\begin{equation}
\begin{aligned}
\beta_1(\zeta)
 & \coloneq \frac{1}{a_{1,1}} \prod_{n=1}^{N_1}\frac{\zeta-w_n}{\zeta-w_n^*}\frac{\zeta-\hat w_n^*}{\zeta-\hat w_n} \prod_{n=1}^{N_2}\frac{\zeta-\hat z_n}{\zeta-\hat z_n^*} \prod_{n=1}^{N_3}\frac{\zeta-\zeta_n}{\zeta-\zeta_n^*}\,,\qquad&&
\zeta\in D_1\,,\\
\beta_2(\zeta)
 & \coloneq b_{1,1} \prod_{n=1}^{N_1}\frac{\zeta-w_n}{\zeta-w_n^*}\frac{\zeta-\hat w_n^*}{\zeta-\hat w_n} \prod_{n=1}^{N_2}\frac{\zeta-\hat z_n}{\zeta-\hat z_n^*} \prod_{n=1}^{N_3}\frac{\zeta-\zeta_n}{\zeta-\zeta_n^*}\,,\qquad&&
\zeta\in D_2\,,\\
\beta_3(\zeta)
 & \coloneq \frac{1}{b_{3,3}}\e^{\ii \Delta\theta} \prod_{n=1}^{N_1}\frac{\zeta-w_n}{\zeta-w_n^*}\frac{\zeta-\hat w_n^*}{\zeta-\hat w_n} \prod_{n=1}^{N_2}\frac{\zeta-z_n^*}{\zeta-z_n} \prod_{n=1}^{N_3}\frac{\zeta-\hat\zeta_n^*}{\zeta-\hat\zeta_n}\,,\qquad&&
\zeta\in D_3\,,\\
\beta_4(\zeta)
 & \coloneq a_{3,3}\e^{\ii \Delta\theta} \prod_{n=1}^{N_1}\frac{\zeta-w_n}{\zeta-w_n^*}\frac{\zeta-\hat w_n^*}{\zeta-\hat w_n}  \prod_{n=1}^{N_2}\frac{\zeta-z_n^*}{\zeta-z_n} \prod_{n=1}^{N_3}\frac{\zeta-\hat\zeta_n^*}{\zeta-\hat\zeta_n}\,,\qquad&&
\zeta\in D_4\,,
\end{aligned}
\end{equation}
where the regions $D_j$ are defined in Definition~\ref{def:DSigmaL}.
The four functions $\beta_j(\zeta)$, with $j=1,2,3,4$, are analytic in $D_j$, respectively,
and have no zeros and no poles in their corresponding analyticity regions.
Moreover, they tend to $1$ as $\zeta\to0$ and $\zeta\to\infty$ in these analytic regions.
With new functions
\begin{equation}
\nonumber
\bar\beta(\zeta)
 \coloneqq \begin{cases}
 \beta_1(\zeta)\,, &\zeta\in D_1\,,\\
 \beta_2(\zeta)\,, &\zeta\in D_2\,,\\
 \beta_3(\zeta)\,, &\zeta\in D_3\,,\\
 \beta_4(\zeta)\,, &\zeta\in D_4\,,
 \end{cases}
\end{equation}
we can consolidate all the relations in Equation~\eqref{e:ab-jumps} as the jump conditions
\begin{equation}
\nonumber
\log\bar\beta^--\log\bar\beta^+ = J_j\,,\qquad \zeta\in\Sigma_j\,,
\end{equation}
where all the jumps are defined in Equation~\eqref{e:J0J-def} for $j = 1,\dots,4$.
Plemelj's formula yields
\begin{gather*}
\bar\beta(\zeta)=\e^{-\frac{1}{2\pi\ii }\int_\Sigma \frac{J}{\eta-\zeta}\d\eta}\,,
\end{gather*}
where $J$ is also defined in Equation~\eqref{e:J0J-def}.
In particular, the scattering data $b_{1,1}(\zeta)$ has the following explicit expression,
\begin{equation}
\nonumber
b_{1,1}(\zeta)
 = \e^{-\frac{1}{2\pi\ii }\int_\Sigma\frac{J}{\eta-\zeta}\d\eta}
\prod_{n=1}^{N_1}\frac{\zeta-w_n^*}{\zeta-w_n}\frac{\zeta-\hat w_n}{\zeta-\hat w_n^*}
\prod_{n=1}^{N_2}\frac{\zeta-\hat z_n^*}{\zeta-\hat z_n}
\prod_{n=1}^{N_3}\frac{\zeta-\zeta_n^*}{\zeta-\zeta_n}\,.
\end{equation}
Next, we let $\zeta\to0$,
compare the leading order of the asymptotic behavior of $b_{1,1}$, and obtain
\begin{equation}
\nonumber
\e^{-\frac{1}{2\pi\ii }\int_\Sigma\frac{J}{\eta}\d\eta}
\prod_{n=1}^{N_1}\frac{w_n^*}{w_n}\frac{\hat w_n}{\hat w_n^*}
\prod_{n=1}^{N_2}\frac{\hat z_n^*}{\hat z_n}\prod_{n=1}^{N_3}\frac{\zeta_n^*}{\zeta_n}
 = \E_+^\top\E_-^*/E_0^2\,.
\end{equation}
Simplifying the above expression yields the phase difference of the solution at the boundaries,
\begin{equation}
\nonumber
\Delta\theta
 = \theta_+-\theta_-
 = \frac{1}{2\pi}\int_\Sigma\frac{J}{\eta}\d\eta
 - 4\sum_{n=1}^{N_1}\arg w_n
 + 2\sum_{n=1}^{N_2}\arg z_n - 2\sum_{n=1}^{N_3}\arg \zeta_n\,.
\end{equation}

\subsection{Calculation of $\texorpdfstring{\R_{-,\dd}(z,\zeta)}{R{-,dd} (z,ζ)}$ with inhomogeneous broadening}
\label{a:Rnd}

According to Equation~\eqref{e:normingtimeevolution},
we must compute the diagonal part of the auxiliary matrix $\R_{-,\dd}(z,\zeta)$ from Equation~\eqref{e:Rpm} in order to compute the propagation of the norming constants.
Thus, this appendix focuses on the explicit computation of $\R_{-,\dd}(z,\zeta)$ with a known shape of the inhomogeneously broadened spectral line, given in Equation~\eqref{e:Lorentzian}.

Recall that it is necessary to pick $\bvarrho_-$ with opposite signs on $k$-sheets I and II.
The resulting auxiliary matrix $\R_-$ is uniquely defined on each sheet (cf. discussion in Section~\ref{s:reflectionnorming}).
Consequently,
without loss of generality,
we perform all the calculations on $k$-sheet I in this Appendix.

Upon inspecting Equation~\eqref{e:Rpm},
we see that we must compute the Hilbert transforms of $\varrho_-^\pm g(k)$.
Recall that we take $\bvarrho_{-}$ to be diagonal and independent of $k$ on either sheet from Section~\ref{s:reflectionless},
so that the quantity $\varrho_-^\pm$ in Equation~\eqref{e:rhopmpm} is also independent of $k$, and thus can be taken outside the integrals.
Therefore, it is sufficient to calculate the following two integrals to compute $\R_{-,\dd}$ with $g(k)$ in Equation~\eqref{e:Lorentzian}:
\begin{equation}
\label{e:I1I2-def}
I_1(k)
 \coloneq \pvint\nolimits_{\hspace{-0.4em}\Real} g(k') \frac{\d k'}{k' - k}\,,\qquad
I_2(k)
 \coloneq \pvint\nolimits_{\hspace{-0.4em}\Real} \frac{g(k')}{\lambda}\frac{\d k'}{k' - k}\,.
\end{equation}
Note that the above integrals differ when the parameter $k$ takes two possible values: $\Real$ or $\Complex\backslash\Real$.
Therefore,
we need to compute them separately.
We first compute the easier case in which $k\in\Complex\backslash\Real$.
Then, we use the following relation to compute the case in which $k\in\Real$:
\begin{equation}
\label{e:pvint}
\pvint\nolimits_{\hspace{-0.4em}\Real} f(k') \frac{\d k'}{k' - k}
 = \int_L f(k') \frac{\d k'}{k' - k} + \pi \ii \Res_{k' = k} \frac{f(k')}{k' - k}\,,
\end{equation}
where the contour is $L = (-\infty,k - r)\cup\{k' = r\,\e^{\ii \theta}|\,\theta\in[0,\pi]\}\cup(k + r,\infty)$,
provided that $r$ is sufficiently small and there are no singularities on $L$.

\subsubsection{Calculation of the integral $I_1$ in Equation~\eqref{e:I1I2-def} with $k\in\Complex\backslash\Real$.}

Explicitly, this integral is
\begin{equation}
\nonumber
I_1  = \int_\Real \frac{\epsilon }{\pi  (k' - k) (k'^2+\epsilon ^2 )} \d k'\,,\qquad
k\in\Complex\backslash\Real\,.
\end{equation}
It is easily evaluated using the Residue theorem.
The result is
\begin{equation}
\label{e:I1complex}
I_1
 = \begin{cases}
 - 1 /(k+i \epsilon)\,, & k\in\Complex^+\,,\\
 - 1/(k-i \epsilon)\,, & k\in\Complex^-\,.
\end{cases}
\end{equation}

\subsubsection{Calculation of the integral $I_1$ in Equation~\eqref{e:I1I2-def} with $k\in\Real$.}

The identity~\eqref{e:pvint} yields the integral $I_1$, with $k\in\Real$,
from Equation~\eqref{e:I1complex}:
\begin{equation}
\label{e:I1real}
I_1
 = - \frac{1}{k + \ii \epsilon} + \pi \ii \,\Res_{k' = k}\frac{\epsilon }{\pi  (k' - k) (k'^2+\epsilon ^2 )}
 = -\frac{k}{k^2+\epsilon ^2}\,,\qquad
k\in\Real\,.
\end{equation}

\subsubsection{Calculation of the integral $I_2$ in Equation~\eqref{e:I1I2-def} with $k\in\Complex\backslash\Real$.}

Explicitly,
with $k\in\Complex\backslash\Real$,
this integral is
\begin{equation}
\nonumber
I_2
 = \int_\Real \frac{\epsilon}{\pi}\frac{1}{\lambda(\epsilon^2+k'^2)}\frac{\d k'}{k' - k}\,.
\end{equation}
Similarly to what has been done in~\cite{bgkl2019},
it is convenient to compute this integral in the uniformization variable $\zeta$.
Recalling Equation~\eqref{e:k-lambda},
we can write $k' = (\zeta' - E_0^2/\zeta')/2$,
$\lambda' = (\zeta'+E_0^2/\zeta')/2$,
$k = (\zeta-E_0^2/\zeta)/2$,
and $\lambda = (\zeta + E_0^2/\zeta)/2$.
Thus, we find $\d k' = (1 + E_0^2/\zeta'^2)/2\d\zeta'$ with a new integration contour: $L \coloneq (-\infty,-E_0)\cup(E_0,\infty)$.
We substitute all these parts into $I_2$,
and find
\begin{equation}
\label{e:I2f}
I_2(k)
 = -\int_L f(\zeta,\zeta')\d\zeta'\,,\qquad
f(\zeta,\zeta')
 \coloneq \frac{8\epsilon \zeta  \zeta'^2  }{\pi  (\zeta - \zeta') (\zeta  \zeta'+E_0^2)
\left[\,\zeta'^4 + E_0^4 - 2\zeta'^2 (E_0^2 - 2 \epsilon^2)\,\right]}\,.
\end{equation}
The denominator of $f(\zeta,\zeta')$ has six simple roots $r_j$ with $j = 1,\dots,6$,
whose explicit expressions are omitted for brevity,
so the rational function $f(\zeta,\zeta')$ can be decomposed as
\begin{equation}
\nonumber
f(\zeta,\zeta')
 = \sum_{j = 1}^6 \frac{R_j}{\zeta'-r_j}\,,\qquad
R_j \coloneq \Res_{\zeta' = r_j} f(\zeta,\zeta')\,.
\end{equation}
Correspondingly, the integral can be rewritten using the Residue theorem as
\begin{equation}
\nonumber
I_2 = -\sum_{j=1}^6 R_j \log(\zeta' - r_j)\big|_L\,.
\end{equation}
After some tedious but straightforward calculations,
we finally obtain
\begin{equation}
\label{e:I2complex}
I_2
 = - \frac{g(k)}{\lambda} \left[ \log \left(\frac{E_0-\lambda }{E_0 + \lambda}\right) + \frac{\lambda}{\sqrt{E_0^2-\epsilon^2}} \log \left(\frac{E_0 + \sqrt{E_0^2-\epsilon^2}}{E_0-\sqrt{E_0^2-\epsilon ^2}}\right) \right]\,,\qquad
k\in\Complex\backslash\Real\,.
\end{equation}

\subsubsection{Calculations of the integral $I_2$ in Equation~\eqref{e:I1I2-def} with $k\in\Real$.}

We apply the relation~\eqref{e:pvint} again, and compute
\begin{align}
\nonumber
I_2   = & - \frac{g(k)}{\lambda}
\left[ \log \left(\frac{E_0-\lambda }{E_0 + \lambda}\right) +
\frac{\lambda}{\sqrt{E_0^2-\epsilon^2}} \log \left(\frac{E_0 + \sqrt{E_0^2-\epsilon^2}}{E_0-\sqrt{E_0^2-\epsilon ^2}}\right) \right]
 + \pi \ii\,\Res_{\zeta' = \zeta}f(\zeta,\zeta')\\
 = &  - \frac{g(k)}{\lambda}
\left[ \log \left(\frac{\lambda -E_0}{\lambda + E_0}\right) +
\frac{\lambda}{\sqrt{E_0^2-\epsilon^2}} \log \left(\frac{E_0 + \sqrt{E_0^2-\epsilon^2}}{E_0-\sqrt{E_0^2-\epsilon ^2}}\right)\right]\,,\qquad
k\in\Real\,.
\label{e:I2real}
\end{align}

\subsubsection{Calculation of $\R_{-,\dd}$.}

Using the values of the above two special integrals,
we are able to compute the matrix $\R_{-,\dd}$.
By examining Equation~\eqref{e:Rpm},
we obtain the following relations:
\begin{equation}
\nonumber
R_{-,1,1}(z,\zeta) = \rho_-^+ I_1 + \lambda \varrho_-^- I_2\,,\qquad
R_{-,2,2}(z,\zeta) = \rho_{-2,2}I_1\,,\qquad
R_{-,3,3}(z,\zeta) = \rho_-^+ I_1 - \lambda \varrho_-^- I_2\,,
\end{equation}
where $\varrho_-^\pm$ are defined in Equation~\eqref{e:rhopmpm}.

\bibliographystyle{unsrt}
\bibliography{bibliography/references.bib,bibliography/books,bibliography/optics,%
bibliography/newoptics}

\end{document}